\title{Grid Recognition: Classical and Parameterized Computational Perspectives} 
\titlerunning{Grid Recognition} 
\author{Siddharth Gupta}{Ben-Gurion University of the Negev, Israel}{siddhart@post.bgu.ac.il}{}{}
\author{Guy Sa'ar}{Ben-Gurion University of the Negev, Israel}{saag@post.bgu.ac.il}{}{}
\author{Meirav Zehavi}{Ben-Gurion University of the Negev, Israel}{meiravze@bgu.ac.il}{}{}
\authorrunning{S.Gupta, G. Sa'ar and M. Zehavi} 
\keywords{Grid Recognition, Grid Graph, Parameterized Complexity} 
\begin{document}

\maketitle

\begin{abstract}
Grid graphs, and, more generally, $k\times r$ grid graphs, form one of the most basic classes of geometric graphs. Over the past few decades, a large body of works studied the (in)tractability of various computational problems on grid graphs, which often yield substantially faster algorithms than general graphs. Unfortunately, the recognition of a grid graph (given a graph $G$, decide whether it is a grid graph) is particularly hard---it was shown to be NP-hard even on trees of pathwidth 3 already in 1987. Yet, in this paper, we provide several positive results in this regard in the framework of parameterized complexity (additionally, we present new and complementary hardness results). Specifically, our contribution is threefold.
First, we show that the problem is fixed-parameter tractable (FPT) parameterized by $k+\mathsf {mcc}$ where $\mathsf{mcc}$ is the maximum size of a connected component of $G$. This also implies that the problem is FPT parameterized by $\td+k$ where $
\td$ is the treedepth of $G$ (to be compared with the hardness for pathwidth 2 where $k=3$). (We note that when $k$ and $r$ are unrestricted, the problem is trivially FPT parameterized by $\td$.) Further, we derive as a corollary that strip packing is FPT with respect to the height of the strip plus the maximum of the dimensions of the packed rectangles, which was previously only known to be in XP. 
Second, we present a new parameterization, denoted $a_G$, relating graph distance to geometric distance, which may be of independent interest. We show that the problem is para-NP-hard parameterized by $a_G$, but FPT parameterized by $a_G$ on trees, as well as FPT parameterized by $k+a_G$.
Third, we show that the recognition of $k\times r$ grid graphs is NP-hard on graphs of pathwidth 2 where $k=3$. Moreover,  when $k$ and $r$ are unrestricted, we show that the problem is NP-hard on trees of pathwidth 2, but trivially solvable in polynomial time on graphs of pathwidth 1.
\end{abstract}

\clearpage
\pagenumbering{arabic} 

 
\section{Introduction}\label{sec:intro}

Geometrically, a {\em grid graph} is a graph that can be drawn on the Euclidean plane so that all vertices are drawn on points having positive integer coordinates, and  all edges are drawn as axis-parallel straight line segments of length 1;\footnote{Some papers in the literature use the term grid graphs to refer to {\em induced} grid graphs, where we require also that every pair of vertices at distance $1$ from each other have an edge between them.} when the maximum $x$-coordinate is at most $r$ and the maximum $y$-coordinate is at most $k$, we may use the term {\em $k\times r$ grid graph} (see Figure~\ref{fi:rectangleGraph}). Grid graphs form one of the simplest and most intuitive classes of geometric graphs.  Over the past few decades, algorithmic research of grid graphs yielded a large body of works on the tractability or intractability of various computational problems when restricted to grid graphs (e.g., see \cite{DBLP:conf/stoc/ChuzhoyKN18,DBLP:conf/icalp/ChuzhoyKN18,allender2006grid,itai1982hamilton,umans1997hamiltonian,DBLP:conf/stoc/BergBKMZ18,DBLP:journals/siamcomp/ArkinBDFMS05} for a few examples). Even for problems that remain \NP-hard\ on grid graphs, we know of practical algorithms for instances of moderate size (e.g., the {\sc Steiner Tree} problem on grid graphs is \NP-hard~\cite{garey1977rectilinear}, but admits practical algorithms~\cite{ganley1999computing,zachariasen2001catalog}). Thus, the recognition of a graph as a grid graph unlocks highly efficient tools for its analysis. In practice, grid graphs can represent layouts or environments, and have found applications in several fields, such as VLSI design \cite{sait1999vlsi}, motion planning \cite{handbookMovement} and routing~\cite{sturtevant2012benchmarks}.  Indeed, grid graphs naturally arise to represent entities and the connections between them in {\em existing} layouts or environments. However, often we are given just a (combinatorial) graph $G$---i.e., we are given entities and the  connections desired to have between them, and we are to construct the layout or environment; specifically, we wish to test whether $G$ is a grid graph (where if it is so, realize it as such a graph).  Equivalently, the recognition of a grid graph can be viewed as an embedding problem, where a given graph is to be embedded within a rectangular solid grid.

Accordingly, the problem of recognizing (as well as realizing) grid graphs is a basic recognition problem in Graph Drawing. In what follows, we discuss only recognition---however, it would be clear that all of our results hold also for realization (with the same time complexity in case of algorithms). Formally, in the \gridEm\ problem, we are given a (simple, undirected) $n$-vertex graph $G$, and need to decide whether it is a grid graph. In many cases, taking into account physical constraints, compactness or visual clarity, we would like to not only have a grid graph, but also restrict its dimensions. This yields the \bGridEm\ problem, where given an $n$-vertex graph $G$ and positive integers $k,r\in\mathbb{N}$, we need to decide whether $G$ is a $k\times r$ grid graph. Notice that \gridEm\ is the special case of \bGridEm\ where $k=r=n$ (which virtually means that no dimension restriction is posed).

The \gridEm\ problem has been proven to be \NP-hard\ already in 1987, even on trees of pathwidth $3$~\cite{bhatt1987complexity}. Shortly afterwards, it has been proven to be \NP-hard\ even on binary trees~\cite{gregori1989unit}. On the positive side, there is research on practical algorithms for this problem~\cite{beck2020puzzling}. The related
upward planarity testing and rectilinear planarity testing problems are also known to be \NP-hard~\cite{garg2001computational}, as well as HV-planarity testing even on graphs of maximum degree $3$~\cite{didimo2014complexity}. We remark that when the embedding is fixed, i.e., the clockwise order of the edges is given for each vertex, the situation becomes drastically easier computationally; then, for example, a rectangular drawing of a plane graph of maximum  degree $3$, as well as an orthogonal drawing without bends of a plane graph of maximum  degree $3$, were shown to be computable in linear time in \cite{rahman1998rectangular} and \cite{rahman2006orthogonal}, respectively.

In this paper, we study the classical and parameterized complexity of the \gridEm\ and \bGridEm\ problems. To the best of our knowledge, this is the first time that these problems are studied from the perspective of parameterized complexity. Let $\Pi$ be an NP-hard problem. In the framework of parameterized complexity, each instance of $\Pi$ is associated with a {\em parameter} $k$. Here, the goal is to confine the combinatorial explosion in the running time of an algorithm for $\Pi$ to depend only on $k$. 
Formally, we say that $\Pi$ is {\em fixed-parameter tractable (\FPT{})} if any instance $(I, k)$ of $\Pi$ is solvable in time $f(k)\cdot |I|^{\OO(1)}$, where $f$ is an arbitrary computable function of $k$.  Notably, this means that whenever $f(k)=|I|^{\OO(1)}$, the problem is solvable in polynomial time.
Nowadays, Parameterized Complexity supplies a rich toolkit to design \FPT{} algorithms as well as to prove that some problems are unlikely to be \FPT{} \cite{DBLP:series/txcs/DowneyF13,DBLP:books/sp/CyganFKLMPPS15,fomin2019kernelization}. In particular, the term para-\NP-hard refers to problems that are \NP-hard\ even when the parameter is fixed (i.e., a constant, which is not part of the input), which implies that they are not \FPT\ unless P=\NP.

Research at the intersection of graph drawing and parameterized complexity (and parameterized algorithms in particular) is in its infancy. Most (in particular, the early efforts) have been directed at variants of the classic {\sc Crossing Minimization} problem, introduced by Turán in 1940~\cite{doi:10.1002/jgt.3190010105}, parameterized by the number of crossings (see, e.g., \cite{DBLP:journals/jcss/Grohe04,DBLP:conf/stoc/KawarabayashiR07,DBLP:journals/algorithmica/DujmovicFKLMNRRWW08,DBLP:conf/compgeom/HlinenyD16,hs-ecnpvc-19,DBLP:conf/compgeom/KluteN18}). However, in the past few years, there is an increasing interest in the analysis of a variety of other problems in graph drawing from the perspective of parameterized complexity (see, e.g., \cite{DBLP:conf/gd/BhoreGMN19,DBLP:conf/compgeom/AgrawalGM0Z19,DBLP:conf/wads/HalldorssonKST07,DBLP:conf/esa/Chan04,DBLP:journals/ijfcs/HealyL06,DBLP:journals/jgaa/BannisterCE18,DBLP:conf/isaac/DidimoL98,DBLP:journals/corr/abs-1908-05015,DBLP:journals/algorithmica/BlasiusKRW14,DBLP:journals/jcss/DidimoLP19,DBLP:conf/iwpec/LozzoEG019,DBLP:conf/wg/LozzoEG018} and the upcoming Dagstuhl seminar~\cite{DagstuhlGD}).

\subsection*{Our Contribution and Main Proof Ideas} 

\noindent{\bf\em I. Parameterized Complexity: Maximum Connected Component Size.} Our contribution is threefold. First,
we prove that \bGridEm\ is \FPT\ parameterized by $\mathsf{mcc}+k$. Here, the idea of the proof is first to recognize all possible embeddings of any choice of connected components or parts of connected components of $G$ into $k\times \cc(G)$ grids, called blocks. These blocks then serve as vertices of a new digraph, where there is an arc from one vertex to another if and only if the corresponding blocks can be placed one after the other. After that, we also guess which blocks should occur at least once in the solution, as well as a spanning tree of the underlying undirected graph of the graph induced on them. This then leads us to a formulation of an Integer Linear Program (ILP), where we ensure that each connected component is used as many times as it is in the input, and that overall we get an Eulerian trail in the graph---having such a trail allows us to place the blocks one after the other, so that every pair of consecutive blocks are compatible. The ILP can then be solved using known tools.

\begin{restatable}{theorem}{mccK}\label{thm:mccK}
\bGridEm\ is \FPT\ parameterized by $\mathsf{mcc}+k$ where $\mathsf{mcc}$ is the maximum size of a connected component in the input graph.
\end{restatable}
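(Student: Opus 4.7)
The plan is to sweep the hypothetical $k\times r$ grid embedding of $G$ left-to-right in vertical slices of width $\cc(G)$ and to encode the existence of such a sweep as an Integer Linear Program of bounded dimension. First I would define a \emph{block} as a grid drawing inside a $k\times \cc(G)$ rectangle of some vertex subset of $G$, decorated with (i) the set of rightmost-column vertices that will be connected by horizontal edges to the next block, (ii) the analogous set on the left, and (iii) for each partial piece inside the block, which of its vertices lie on the left and right interfaces. Since a block fits in $k\cdot \cc(G)$ cells and has $\OO(k\cdot \cc(G))$ possible edges, the total number of block types up to the relevant isomorphism is bounded by some function $g(k,\cc(G))$, and similarly there are at most $g(k,\cc(G))$ isomorphism types of connected components of $G$ since each has at most $\cc(G)$ vertices.

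Next I build a digraph $D$ with one vertex per block type and an arc $B\to B'$ whenever the right interface of $B$ agrees with the left interface of $B'$. A $k\times r$ grid embedding of $G$ then corresponds to a directed walk of the appropriate length in $D$ whose concatenation realizes every connected component of $G$ with the right multiplicity. To find such a walk I would enumerate a subset $S\subseteq V(D)$ of block types that appear at least once, together with a spanning tree $T$ of the underlying undirected graph of $D[S]$; the number of such pairs is a function of $k+\cc(G)$. For each guess $(S,T)$ I set up an ILP with one variable $x_a\in \mathbb{Z}_{\ge 0}$ for each arc of $D[S]$, subject to: (a) flow-conservation identities forcing an Eulerian trail in the multigraph whose arc-multiplicities are given by the $x_a$; (b) $x_a\ge 1$ on every arc of $T$, so that the supported multigraph is connected and the classical Euler theorem yields an actual trail; (c) $\sum_a x_a$ equals the required number of blocks so that the total width is $r$; and (d) for every isomorphism type $C$ of a connected component of $G$, the weighted sum $\sum_a c_{a,C}\,x_a$, where $c_{a,C}$ counts the copies of $C$ completed across arc $a$, equals the multiplicity of $C$ in $G$. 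The number of variables is at most $g(k,\cc(G))^2$, a function of $k+\cc(G)$ only, so Lenstra's algorithm for ILP decides the feasibility in time $f(k,\cc(G))\cdot n^{\OO(1)}$.

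The hardest part will be to make constraint (d) well-defined: a single connected component of $G$ may be assembled from partial pieces spanning several consecutive arcs, so naively counting it per arc leads to over- or undercounting. I plan to charge each completed component to the unique arc at which its rightmost piece is attached, so that $c_{a,C}$ depends only on the head and tail blocks of $a$, making (d) genuinely linear in the $x_a$. A minor issue is that $r$ need not be divisible by $\cc(G)$; this can be handled by prepending or appending a bounded number of dedicated short blocks of smaller width (still bounded by $g(k,\cc(G))$ up to isomorphism), or equivalently by treating the two endpoints of the walk specially in the ILP. Taking the disjunction over all branches $(S,T)$ and feasible ILP outcomes then yields the desired \FPT\ algorithm for \bGridEm\ parameterized by $k+\cc(G)$.
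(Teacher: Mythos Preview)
Your proposal is correct and follows essentially the same approach as the paper: block decomposition of width $\cc(G)$, a type digraph with compatibility arcs, enumeration of the support plus a spanning tree to force connectivity, and an ILP in $f(k,\cc(G))$ variables with Eulerian/flow and component-multiplicity constraints solved via Lenstra. The only differences are bookkeeping choices---the paper uses overlapping blocks sharing a boundary column and splits the component count into per-vertex ``fully contained'' and per-arc ``boundary'' terms plus special start/end terms, whereas you charge everything to arcs---but these are equivalent implementations of the same idea.
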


One almost immediate corollary of this theorem concerns the {\sc 2-Strip Packing} problem. In this problem, we are given a set of $n$ rectangles $S$, and positive integers $k,W\in\mathbb{N}$, and the objective is to decide whether all the rectangles in $S$ can be packed in a rectangle (called a {\em strip}) of dimensions $k\times W$. In \cite{DBLP:journals/tcs/AshokKMS17}, it was shown that if  the maximum of the dimensions of the input rectangles, denoted by $\ell$, is fixed (i.e., a constant independent of the input), then the problem is \FPT\ parameterized by $k$. Specifically, running time of $8^{k\ell}n^{\OO(\ell^2)}W$ was attained, which is not \FPT\ with respect to $k+\ell$. Thus, the question whether the problem is \FPT\ parameterized by $k+\ell$ remained open. By a simple reduction from \bGridEm, we resolve this question as a corollary of our Theorem \ref{thm:mccK}.

\begin{restatable}{corollary}{stripPacking}
{\sc 2-Strip Packing} is \FPT\ parameterized by $\ell+k$ where $\ell$ is the maximum of the dimensions of the input rectangles.
\end{restatable}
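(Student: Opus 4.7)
My plan is to reduce \textsc{2-Strip Packing} to \bGridEm\ and then invoke Theorem~\ref{thm:mccK}. Given an instance $(S, k, W)$ of \textsc{2-Strip Packing} with maximum side length $\ell$, I would first scale every rectangle and the strip by a factor of $2$, producing a new instance with rectangles of dimensions $2a_i \times 2b_i$ in a strip of dimensions $2k \times 2W$; after scaling, every rectangle has both sides at least $2$. This preserves the yes/no answer: the forward direction (multiplying all positions by $2$) is trivial, while for the reverse I would process the rectangles in order of increasing $x$-coordinate and shift each leftward to the smallest feasible position. Since this position equals either $0$ or the rightmost edge of a previously placed rectangle, and all widths are even, induction yields that all shifted $x$-coordinates are even; an analogous down-shifting gives even $y$-coordinates, and the resulting even-positioned packing unscales to an integer packing of the original instance.

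On the scaled instance I would construct a graph $G$ as the disjoint union of one connected component per rectangle: for a rectangle of dimensions $a \times b$, the corresponding component is the solid $a \times b$ grid graph, whose vertex set is $\{1,\ldots,a\} \times \{1,\ldots,b\}$ and whose edges join lattice points at unit Euclidean distance. The question asked of \bGridEm\ is then whether $G$ is a $2k \times 2W$ grid graph.

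The crux of correctness is the following rigidity property: for $a, b \ge 2$, every grid embedding of the solid $a \times b$ grid graph places its vertices exactly at the lattice points of an axis-aligned rectangle of the same dimensions, up to the symmetries of a rectangle. I would prove this by starting at a degree-$2$ corner vertex $v$, whose two graph-neighbors must be placed at two cardinal neighbors of the image of $v$; they cannot occupy opposite cardinal neighbors, since then their common graph-neighbor (the vertex at graph-coordinates $(2,2)$) would have no admissible image at grid-distance $1$ from both. Hence the two neighbors land in perpendicular cardinal directions, and the embedding then propagates deterministically along a row and a column, because every interior vertex already has all four grid neighbors present in the graph and there is no flexibility. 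Given rigidity, any grid embedding of $G$ decomposes into non-overlapping axis-aligned rectangles of the correct dimensions (which is exactly a packing), and conversely a packing assembles directly into a grid embedding.

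Each component of $G$ has at most $(2\ell)^2 = 4\ell^2$ vertices, so $\mathsf{mcc}(G) \le 4\ell^2$, and the target grid has height $2k$. Theorem~\ref{thm:mccK} therefore solves the constructed instance in time $f(\ell + k) \cdot n^{\OO(1)}$, establishing that \textsc{2-Strip Packing} is \FPT\ parameterized by $\ell + k$. The hardest part of the argument will be a clean proof of the rigidity property, especially in boundary cases such as $a = 2$ or $b = 2$ where many vertices are simultaneously corners; the left-packing/scaling step is delicate but otherwise routine.
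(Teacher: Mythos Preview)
Your proposal is correct and follows essentially the same route as the paper: scale by $2$ to avoid unit-width rectangles, replace each rectangle by its solid rectangular grid graph, take the disjoint union, and invoke Theorem~\ref{thm:mccK} with $\mathsf{mcc}(G)$ bounded by a function of $\ell$ and grid height bounded by a function of $k$. You supply considerably more detail than the paper does---in particular, the paper asserts both the scaling equivalence and the rigidity of solid grid graphs without proof---so your write-up actually fills gaps the paper leaves implicit.
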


We remark that in case $k$ and $r$ are unrestricted, the problem is trivially \FPT\ with respect to $\mathsf{mcc}$, since one can embed each connected component (using brute-force) individually.

\begin{restatable}{observation}{mccUnrestricted}
\gridEm\ is \FPT\ parameterized by $\mathsf{mcc}$ where $\mathsf{mcc}$ is the maximum size of a connected component in the input graph.
\end{restatable}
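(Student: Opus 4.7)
The plan is to reduce recognition of $G$ to the recognition of each of its connected components separately, and then to solve each component by brute force in \FPT\ time.

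\textbf{Reduction to components.} Let $C_1,\ldots,C_t$ be the connected components of $G$, with $|C_i|\le\cc$ for every $i$. I first claim that $G$ is a grid graph if and only if every $C_i$ is a grid graph. The forward direction is immediate: restricting a grid embedding of $G$ to the vertices of $C_i$ yields a grid embedding of $C_i$. For the converse, suppose each $C_i$ admits a grid embedding $\varphi_i$ that fits inside an $a_i\times b_i$ bounding box with $a_i,b_i\le \cc$. I would then place these boxes side by side along the $x$-axis, separated by at least two empty columns so that no two embeddings share a grid point and no spurious unit-length axis-parallel segment arises between distinct components. Concretely, shifting $\varphi_i$ by $\sum_{j<i}(b_j+2)$ units in the $x$-direction gives a valid grid embedding of $G$ in a grid of size at most $\cc\times t(\cc+2)$.

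\textbf{Recognizing one component.} It remains to decide, in \FPT\ time in $\cc$, whether a connected graph $C$ on $s\le\cc$ vertices is a grid graph. I would pick any vertex $v\in V(C)$, fix its coordinate at $(1,1)$ without loss of generality, and then enumerate candidate embeddings by a BFS-style procedure: process the vertices of $C$ in an order extending a spanning tree rooted at $v$, and for each newly processed vertex $u$ (which has an already-placed neighbour $u'$), guess one of the four unit-grid positions adjacent to $u'$. This produces at most $4^{s-1}$ partial placements; for each, I would verify in polynomial time that (i) no two vertices are mapped to the same grid point, and (ii) the resulting set of unit-length axis-parallel segments among placed vertices is exactly $E(C)$. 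If some guess succeeds, $C$ is a grid graph; otherwise, since every grid embedding (up to the choice of $v$'s position) is produced by some guess sequence, $C$ is not. The running time is $4^{\cc}\cdot\cc^{\OO(1)}$ per component.

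\textbf{Putting it together.} Iterating over the $t\le n$ components yields an algorithm with running time $4^{\cc}\cdot n^{\OO(1)}$, establishing \FPT. The only subtlety worth checking carefully is the gluing step of the converse direction above, since in principle one must ensure that concatenating embeddings does not create unintended edges (pairs of vertices from distinct components ending up at distance exactly one in the grid); the two-column padding buffer avoids this and is the main (and only mild) obstacle in the argument.
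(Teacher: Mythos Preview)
Your high-level approach---handle the components independently by brute force and place the resulting embeddings side by side---is exactly what the paper does (in one sentence). There is, however, a genuine error in your verification step. In condition~(ii) you demand that the set of unit-distance pairs be \emph{exactly} $E(C)$; this is the \emph{induced} grid-graph notion, whereas Definition~\ref{def:Grid graph embedding} only requires that every edge have endpoints at grid distance~$1$ and says nothing about non-edges. The distinction matters: the $6$-cycle is a grid graph (embed it as the boundary of a $1\times 2$ domino), but \emph{every} grid embedding of $C_6$ is such a domino boundary, and the two middle vertices of the long sides are then at grid distance~$1$ while being non-adjacent in $C_6$. Your algorithm would therefore incorrectly reject $C_6$.

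The fix is immediate: replace~(ii) by the one-sided check ``every edge of $C$ has its endpoints at grid distance~$1$.'' For the same reason, the worry in your gluing step about ``spurious unit-length segments'' between different components is moot---all you need is injectivity, so even a single empty column of separation already suffices. With these adjustments your argument is correct and coincides with the paper's.
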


As a corollary of our theorem and this observation, we obtain that \bGridEm\ is \FPT\ parameterized by $\td+k$, and \gridEm\ is \FPT\ parameterized by $\td$, where $\td$ is the treedepth of the input graph. This finding is of interest when contrasted with the hardness of these problems when pathwidth (and hence also treewidth) equals $2$ and $k=3$ or unrestricted. Thus, this also charts a tractability border between pathwidth and treedepth.

\begin{restatable}{corollary}{treedepthK}
\bGridEm\ is \FPT\ parameterized by $\td+k$, and \gridEm\ is \FPT\ parameterized by $\td$, where $\td$ is the treedepth of the input graph.
\end{restatable}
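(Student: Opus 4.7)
The plan is to reduce these two statements to Theorem~\ref{thm:mccK} and the preceding observation on $\mathsf{mcc}$, respectively, by showing that in graphs of bounded maximum degree, treedepth already controls the size of every connected component. Since any grid graph has maximum degree at most $4$, I would first inspect $G$: if some vertex has degree greater than $4$, immediately output \no. From now on, assume $\Delta(G) \leq 4$.

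The main claim to establish is: every connected graph $H$ with $\Delta(H) \leq 4$ and $\td(H) \leq d$ satisfies $|V(H)| \leq g(d)$, where $g(d) = O(4^d)$. I would prove this by induction on $d$. For $d = 1$, the graph $H$ is a single vertex and $g(1)=1$. For $d \geq 2$, fix an optimal treedepth decomposition of $H$; since $H$ is connected, it may be assumed to be a rooted tree $F$ with some root $v$. Then $H - v$ decomposes into components, each contained in the closure of one of the subtrees hanging below $v$ in $F$, and hence of treedepth at most $d-1$ and maximum degree at most $4$. Moreover, because $H$ is connected, every component of $H-v$ must contain a neighbor of $v$, so the number of such components is at most $\deg_H(v) \leq 4$. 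The inductive hypothesis then yields $|V(H)| \leq 1 + 4\,g(d-1)$, giving $g(d) = O(4^d)$.

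Applying the claim to each connected component of $G$ (whose treedepth is bounded by $\td(G)$) gives $\mathsf{mcc}(G) \leq g(\td(G)) = 2^{O(\td(G))}$. Substituting this bound into Theorem~\ref{thm:mccK} turns its running time $f(\mathsf{mcc}+k)\cdot |G|^{O(1)}$ into $f'(\td(G)+k)\cdot |G|^{O(1)}$, establishing that \bGridEm\ is \FPT\ parameterized by $\td+k$. Substituting the same bound into the preceding observation shows that \gridEm\ is \FPT\ parameterized by $\td$. The only nontrivial step is the degree-versus-treedepth induction sketched above; note that the algorithm never needs to compute $\td(G)$ explicitly, since $\mathsf{mcc}(G)$ is trivially computable and is automatically bounded by a function of $\td(G)$ once the degree check has passed. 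Everything else is a direct composition with the results already stated.
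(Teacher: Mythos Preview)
Your proposal is correct and follows essentially the same approach as the paper: observe that grid graphs have maximum degree at most $4$, use that bounded degree together with bounded treedepth forces bounded component size, and then invoke Theorem~\ref{thm:mccK} and the $\mathsf{mcc}$ observation. The only difference is that you spell out an explicit inductive proof (with the $O(4^d)$ bound) of the fact the paper simply cites as Observation~\ref{obs:treedepthBounded}.
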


\medskip
\noindent{\bf\em II. Parameterized Complexity: Difference Between Graph and Geometric Distances.} Secondly, we introduce a new parameterization that relates graph distance to geometric distance, and may be of independent interest. Roughly speaking, the rationale behind this parameterization is to bound the difference between them, so that graph distances may act as approximate indicators to geometric distances. In particular, vertices that are close in the graph, are to be close in the embedding, and vertices that are distant in the graph, are to be distant in the embedding as well. Specifically, with respect to an embedding $f$ of $G$ in a grid, we define the {\em grid distance} between any two vertices as the distance between them in $f$ in L1 norm. Then, we define the measure of {\em distance approximation} of $f$ as the maximum of the difference between the graph distance (in $G$) and the grid distance of two vertices, taken over all pairs of vertices in $G$.  Here, it is implicitly assumed that $G$ is connected. Then, the parameter $a_G$ is the minimum distance approximation $a_f$ of any embedding $f$ of $G$ in a (possibly $k\times r$) grid, defined as $|V(G)|$ if no such embedding exists. A more formal definition as well as motivation is given in Section \ref{sec:prelims}.

We first prove that the problems are para-\NP-hard\ parameterized by $a_G$. This reduction is quite technical. On a high level, we present a construction of ``blocks'' that are embedded in a grid-like fashion, where we place an outer ``frame'' of the form of a grid to guarantee that the boundary (which is a cycle) of each of these blocks must be embedded as a square. Each variable is associated with a column of blocks, and each clause is associated with a row of blocks. Within each block, we place two gadgets, one which transmits information in a row-like fashion, to ensure that the clause corresponding to the row has at least one literal that is satisfied, and the other (which is very different than the first) transmits information in a column-like fashion, to ensure consistency between all blocks corresponding to the same variable (i.e., that all of them will be embedded internally in a way that represents only truth, or only false). For the sake of clarity, we split the reduction into two, and use as an intermediate problem a new problem that we call the {\sc Batteries} problem.

\begin{restatable}{theorem}{distanceHard}\label{thm:distanceHard}
\gridEm\ (and hence also \bGridEm) is para-\NP-hard\ parameterized by $a_G$.
\end{restatable}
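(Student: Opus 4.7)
The plan is to reduce from a suitable \NP-hard satisfiability problem, via the intermediate \textsc{Batteries} problem suggested by the authors, in such a way that the constructed graph $G$ admits a valid grid embedding of distance approximation at most $c$, for some absolute constant $c$, if and only if the input is positive. Polynomial-time reducibility then yields \NP-hardness at a fixed value of $a_G$, i.e.\ para-\NP-hardness.

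The first ingredient is a rigid \emph{frame}: a large outer subgraph shaped like a solid grid of small cells, built from short rectangular cycles that, up to reflection and rotation, admit only the intended grid-shaped embedding. Each cell is bounded by a short cycle forced to embed as a constant-sized square, pinning the global layout and giving each cell a definite coordinate. Inside each cell I would place a constant-sized \emph{block} containing two internal gadgets.

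Associate variable $x_i$ with the $i$-th column of cells and clause $C_j$ with the $j$-th row, so that the block at position $(i,j)$ is where $x_i$ and $C_j$ meet. Inside this block, a \emph{column-consistency} gadget admits exactly two symmetric embeddings, labelled \textsc{True} and \textsc{False}; its attachment to the block boundary forces it to agree with the column-consistency gadgets in the blocks directly above and below, so every block in column $i$ encodes the same truth value for $x_i$. A second \emph{row-satisfaction} gadget carries a one-bit signal horizontally across row $j$; this signal can be raised only by a column-consistency gadget whose state matches the sign of literal $x_i$ in $C_j$. A terminating gadget at the end of row $j$ admits an embedding only if the signal has been raised at least once along the row, forcing $C_j$ to be satisfied.

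The main obstacle is bounding $a_G$ in the positive case. Since each edge contributes length one in L1 norm, grid distance never exceeds graph distance, so only the reverse inequality needs work: vertices that are geometrically close could a priori be far in the graph. I would endow the frame and blocks with enough connectivity that any two vertices at grid distance $d$ are also at graph distance at most $d+c$. Concretely, each row and each column of the frame should form a path with only $\OO(1)$ subdivisions between consecutive cells, and each internal gadget must be wired to the surrounding frame along both axes so that detours within or across a block cost only $\OO(1)$ extra steps. A case analysis, bucketing vertex pairs by whether they lie in the same block, in adjacent blocks, in a common frame row or column, or in distant blocks, then gives the uniform bound $a_f \le c$ for the intended embedding, whence $a_G \le c$. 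The analogous statement for \bGridEm\ follows by setting $k$ and $r$ to the exact dimensions of the constructed frame.
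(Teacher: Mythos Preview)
Your plan matches the paper's approach closely: a two-stage reduction through \textsc{Batteries}, a rigid outer frame forcing a grid-like layout, per-cell gadgets that synchronise column-wise (the paper's ``synchronisation edges'' and positive/negative sides) and propagate a row-wise signal (the paper's ``wire vertices'' and voltage edges), and a final case analysis bounding $a_f$ by an absolute constant (the paper obtains $a_f\le 234$ via essentially the bucketing argument you sketch). The paper supplies the concrete $13\times 9$ battery gadget and the $m\times n$ grid frame that realise these components, but the strategy is the same.
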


When we enrich the parameterization by $k$, then \bGridEm\ problem becomes \FPT. (Recall that parameterized by $k$ alone, the problem is para-\NP-hard). The idea of the proof is to partition a rectangular solid $k \times r$ grid in which we embed our graph into blocks of size $k \times (a_G+k)$, and ``guess'' one vertex that is to be embedded in the leftmost column of the leftmost block. Then, the crux is in the observation that, for every vertex, the block in which it should be placed is ``almost'' fixed---that is, we can determine two consecutive blocks in which the vertex may be placed, and then we only have a choice of one among them. This, in turn, leads us to the design of an iterative procedure that traverses the blocks from left to right, and stores, among other information, which vertices were used in the previous block.

\begin{restatable}{theorem}{distanceFPT}\label{thm:distanceFPT}
\bGridEm\ is \FPT\ parameterized by $a_G+k$.
\end{restatable}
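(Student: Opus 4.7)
The plan is to guess the position of a single ``anchor'' vertex and then run a left-to-right dynamic program over vertical blocks of the target grid, exploiting the fact that $a_G$ directly bounds how far graph distance can deviate from the L1 distance in the embedding.

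First I would assume $G$ is connected (otherwise $a_G=|V(G)|$ by definition, which trivializes the parameter) and enumerate the $nk$ choices of an anchor $v_0\in V(G)$ together with a row $y_0\in[k]$, attempting to build an embedding $f$ with $f(v_0)=(1,y_0)$. After a BFS from $v_0$ in $G$, the definition of $a_G$ yields, for $f(u)=(x_u,y_u)$, that $(x_u-1)+|y_u-y_0|\in[d_G(u,v_0)-a_G,\,d_G(u,v_0)+a_G]$, confining $x_u$ to an interval $I_u$ of $2a_G+k$ consecutive columns that depends only on $d_G(u,v_0)$.

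I would then partition the columns into blocks of width $\Theta(a_G+k)$ so that every $I_u$ meets only $\OO(1)$ consecutive blocks; choosing width $2(a_G+k)$, for example, yields the two-block ``binary choice'' claimed by the authors. A second observation is that only a bounded number of vertices can be ``active'' at any single block: for every value of $d$, the grid cells at L1 distance $d$ from $(1,y_0)$ inside a $k$-tall strip number at most $k$, so a valid embedding can place at most $(2a_G+1)k$ vertices at any prescribed graph distance $d$ from $v_0$. Combined with the fact that each block sees only $\OO(a_G+k)$ distinct distances, the number of candidate vertices per block is bounded by some $g(a_G,k)$. Now one can run a DP processing blocks from left to right, whose state at block $B_i$ is a complete injective partial map from the at most $k(a_G+k)$ cells of $B_i$ into the at most $g(a_G,k)$ active vertices of $B_i$, augmented by a bit per active vertex recording whether it has already been placed in the previous block. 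The transition from $B_i$ to $B_{i+1}$ enumerates compatible arrangements of $B_{i+1}$ and verifies that every edge of $G$ between the two blocks is realized as an adjacent pair of grid cells across their common boundary.

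The main obstacle will be defining the state compactly enough to remain \FPT\ while still capturing enough information both to prevent any vertex of $G$ from being duplicated or omitted among the blocks its $I_u$ spans, and to guarantee that every edge of $E(G)$ is realized by a pair of adjacent grid cells, including those crossing block boundaries. Careful alignment of the block partition, together with bookkeeping at the leftmost and rightmost blocks, are additional but routine tasks. Correctness then follows because any valid embedding induces a valid DP trace and vice versa; and since we may assume $r\leq n$ (empty columns can be dropped), the total running time is $nk\cdot r\cdot h(a_G,k)$ for some computable $h$, hence \FPT\ parameterized by $a_G+k$.
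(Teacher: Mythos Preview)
Your proposal is correct and follows essentially the same approach as the paper: guess a leftmost anchor vertex, use the key observation that graph distance pins each vertex to an interval of $O(a_G+k)$ consecutive columns (the paper's Lemma~4.14), partition the grid into blocks of width $\Theta(a_G+k)$, and run a left-to-right dynamic program over blocks with state bounded by a function of $a_G+k$. The paper differs only in minor details---it iterates over candidate values of $a_G$ explicitly and keeps a slimmer DP state (just the rightmost two columns and the set of already-used vertices from the next block rather than the full block embedding)---but the core idea is identical.
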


Lastly, we prove that when restricted to trees, the problems become \FPT\ parameterized by $a_G$ alone. Here, a crucial ingredient is to understand the structure of the tree, including a bound on the number of vertices of degree at least 3 in the tree that split it to ``large'' subtrees. For this, one of the central insights is that, with respect to an internal vertex $v$ and any two ``large'' subtrees attached to it (there can be up to four subtrees attached to it), in order not to exceed the allowed difference between the graph and geometric distances, one of the subtrees must be embedded in the ``opposite'' direction of the other (so, both are embedded roughly on the same vertical or horizontal line in opposite sides). Now, for an internal vertex of degree at least 3, there must be two attached subtrees that are not embedded in this fashion (as a line can only accommodate two subtrees), which leads us to the conclusion that all but two of the attached subtrees are small. Making use of this ingredient, we argue that a dynamic programming procedure (somewhat similar to the one mentioned for the previous theorem but much more involved) can be used.

\begin{restatable}{theorem}{distanceTrees}\label{thm:distanceTrees}
\bGridEm\ (and hence also \gridEm) on trees is \FPT\ parameterized by $a_G$.
\end{restatable}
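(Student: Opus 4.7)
The plan is to extract a rigid structural decomposition of $T$ forced by the bound $a_G$, and then to run a dynamic program along the resulting backbone. The starting geometric observation is that for any $v,u\in V(T)$ with $d_T(v,u)=d$, in any valid embedding $f$ with $a_f\leq a_G$ we have $d-a_G \leq \|f(v)-f(u)\|_1 \leq d$, so $u$ is pinned near the $L_1$ sphere of radius $d$ about $f(v)$. I will upgrade this into a \emph{spine lemma}: either $|V(T)|\leq h(a_G)$ and we solve by brute force, or else there is a path $P = u_1 u_2 \cdots u_L$ in $T$ such that every vertex of $T$ lies within graph distance $O(a_G)$ of $P$, and the subtree hanging off each $u_i$ away from the spine has at most $h(a_G)$ vertices. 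The key ingredient, following the overview, is that at every $v\in V(T)$ at most two of the subtrees incident to $v$ contain a vertex at graph-depth exceeding some $C\cdot a_G$, and the two ``large'' ones must be embedded in opposite cardinal directions from $f(v)$. This uses both the distance-approximation constraint (to pin far leaves onto specific cardinal rays from $f(v)$) and the non-overlap property of the embedding (to rule out two perpendicular long branches once they are fatter than a single path). Iterating along $T$ yields the spine $P$.

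With the spine in hand, I will sweep it from $u_1$ to $u_L$ by dynamic programming. At step $i$, the DP state records: an anchor for $f(u_i)$ relative to the previous spine position, the step direction $f(u_{i+1})-f(u_i)$ (one of four cardinal unit vectors), the subset of grid cells in an $O(a_G)\times O(a_G)$ window around $f(u_i)$ already occupied by previously embedded vertices, and which of the boundedly many possible embeddings of the small subtree attached at $u_i$ is used. Applying the $a_G$-bound between far-apart spine vertices shows that the embedded spine cannot deviate transversely by more than $O(a_G)$ from the straight line connecting its endpoints, so only a bounded window around the current position ever interacts with later decisions. Each transition updates the window, checks that the newly placed small subtree does not collide with the stored occupied cells, and verifies that the new spine edge satisfies the local $a_G$-constraints. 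Since the number of states and transitions is bounded by a computable function of $a_G$, the DP runs in time $f(a_G)\cdot n^{O(1)}$.

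The hard part will be the spine lemma itself. The ``opposite cardinal directions'' claim is intuitive when each incident subtree is a single long path, but becomes delicate when a subtree is itself a branching structure whose extents point along several cardinal axes simultaneously; handling this requires a careful inductive argument that tracks which branches contribute to which cardinal axis and propagates the opposite-directions property along $T$. Combined with the transverse-deviation bound for the spine, which is exactly what keeps the DP window finite, this structural analysis carries the bulk of the technical work. Once these structural claims are established, the dynamic program reduces to a standard sliding-window recursion over $P$, and all combinatorial explosion is absorbed into the function $f(a_G)$.
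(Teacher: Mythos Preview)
Your spine lemma is false as stated, and this is a genuine gap. Consider the ``plus'' tree: a single center vertex with four paths of length $n$ attached. This embeds with $a_G=0$ by sending each arm along a distinct cardinal direction. Any path $P$ in this tree passes through at most two of the four arms, so the other two arms (each of size $n$) hang off a single spine vertex and are not bounded by any function of $a_G$. More generally, the paper shows that a vertex can have \emph{three or four} subtrees each containing $\Omega(a_G^2)$ vertices (these are the $81a_f^2$-split vertices), so the claim ``at most two subtrees at $v$ reach depth $C\cdot a_G$'' is simply wrong. Your justification that two large perpendicular branches are ruled out ``once they are fatter than a single path'' does not apply here, since long thin paths are exactly what the counterexample uses.

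The paper's fix is to classify these branching points: there are at most two one-split vertices, or one double-split vertex, and removing them leaves at most four pieces, each of which \emph{is} a $(P,81a_G^2)$-path. For each such piece one proves it moves in a single vertical and a single horizontal direction (not along a straight line --- an L-shaped embedding of a path has $a_G=0$, so your transverse-deviation claim also needs care). The algorithm then guesses a bounded-size window of the embedding around each split vertex, and runs a DP along each of the (at most four) $(P,t)$-path pieces separately, checking consistency with the guessed window. Your DP sweep is the right idea for a single $(P,t)$-path, but you need the split-vertex classification and the window-guessing step to glue the pieces together; without them the approach does not handle the plus-shaped tree or its two-center variants.
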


\medskip
\noindent{\bf\em III. Classical Complexity.} Lastly, we extend current knowledge of the classical complexity of \gridEm\ and \bGridEm\ at several fronts. Here, we begin by developing a refinement the classic reduction from {\sc Not-All-Equal 3SAT} in \cite{bhatt1987complexity} (which asserted hardness on trees of pathwidth 3) to derive the following result. While the reduction itself is similar, our proof is more involved and requires, in particular, new inductive arguments. 

\begin{restatable}{theorem}{hardnessUnrestricted}\label{thm:hardnessUnR}
\gridEm\ is \NP-hard even on trees of pathwidth 2. Thus, it is \paraH\ parameterized by $\pw$, where $\pw$ is the pathwidth of the input graph.
\end{restatable}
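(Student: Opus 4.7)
The plan is to refine the classical reduction of Bhatt and Cosmadakis from \textsc{Not-All-Equal 3SAT} so that the tree it produces has pathwidth $2$ rather than $3$. Given a \textsc{NAE-3SAT} instance $\phi$ with variables $x_1,\dots,x_n$ and clauses $C_1,\dots,C_m$, I would construct in polynomial time a tree $T_\phi$ such that (i) $\pw(T_\phi)=2$ and (ii) $T_\phi$ is a grid graph if and only if $\phi$ admits a NAE-satisfying assignment; the theorem follows, since the parameter is then fixed at $2$.

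First, I would redesign the variable and clause gadgets as ``thinner'' trees than in \cite{bhatt1987complexity}. Each variable gadget is a small sub-caterpillar with a single branching point of degree $3$ whose two possible orientations in any grid embedding encode \texttt{true} and \texttt{false}. Each clause gadget is an even simpler tripod-like tree that admits a grid embedding precisely when the three literals it reads are not all of the same polarity. The gadgets are then strung along a single long path (the spine), with one occurrence of a variable's gadget placed at every point where it feeds some clause gadget. The lengths of dummy pendants and of spine segments are chosen to be large enough that any valid grid embedding must draw the spine as an essentially straight horizontal segment and each gadget as a short pendant attached to it. Since every individual gadget is a caterpillar (pathwidth $1$) and they all hang off a single path, the overall pathwidth stays at $2$, which I would verify by exhibiting an explicit path decomposition of width $2$.

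For correctness, the forward direction is routine: from a NAE-satisfying assignment $\tau$ one writes down a grid drawing in which each variable gadget is flipped above/below the spine according to $\tau(x_i)$, and each clause gadget routes its three arms in a way that is feasible exactly because of the NAE condition. The reverse direction is where the main effort lies. Compared to the pathwidth-$3$ construction, a pathwidth-$2$ tree carries less built-in geometric rigidity, so the embedding has a priori more freedom; one cannot argue purely locally that each gadget is drawn uniquely up to the expected symmetries.

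The main obstacle is therefore to prove, by a new inductive analysis over the nested structure of pendant paths and sub-caterpillars of $T_\phi$, structural lemmas that pin down the embedding globally. Concretely, I would show inductively that (a) every sufficiently long sub-path of the spine must be drawn as a straight segment, because any bend forces a pendant to collide with a neighbouring gadget or with the implicit outer frame formed by other pendants; (b) each variable gadget is therefore forced into exactly one of two orientations, consistently for all of its copies along the spine, yielding a well-defined truth assignment; and (c) each clause gadget admits a completion in the grid iff its three arms do not all have the same orientation, i.e.\ iff the NAE condition holds for that clause. Combining (a)--(c) extracts a NAE-satisfying assignment from any grid embedding of $T_\phi$, closing the reduction and giving \paraH{}ness parameterized by $\pw$.
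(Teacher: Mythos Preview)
Your proposal is a plan, not a proof: the concrete construction of $T_\phi$ is never given, and the inductive lemmas (a)--(c) are only named, not proved. For a hardness result the reduction must be fully specified, and here the specification is exactly where the difficulty lies. In particular, two architectural choices you make raise questions that you do not answer. First, you place ``one occurrence of a variable's gadget at every point where it feeds some clause gadget'' and later assert that all copies are forced into the same orientation; but with pathwidth $2$ you have no obvious consistency mechanism linking distant copies of the same variable, and you do not describe one. Second, you speak of a separate ``tripod-like'' clause gadget that ``reads'' three literal orientations; how this gadget is wired to three variable gadgets through a single spine while keeping the tree a caterpillar-of-caterpillars (hence pathwidth $2$) is left entirely open.

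The paper's construction avoids both issues by a different design. There are no clause gadgets at all: for each variable $x_i$ two caterpillars $P_i,\overline{P}_i$ (one for each literal) of length $2m{+}1$ hang from the same spine vertex $b_i$, and the clause $c_j$ is encoded \emph{transversally} as the presence or absence of a pendant leaf at level $2j{-}1$ along every caterpillar. The truth value of $x_i$ is which of $P_i,\overline{P}_i$ goes above the spine, so consistency is automatic (there is a single gadget per variable, not one per occurrence). Rigidity is enforced by four ``frame'' caterpillars $P_0,\overline{P}_0,P_{n+1},\overline{P}_{n+1}$ with degree-$4$ internal vertices at the two ends of the spine, together with end-stars; the reverse direction is then an explicit induction along the spine (their Claim~5.1) showing that once $P_{i-1}$ is straight with its even leaves to the right, the spine cannot bend at $b'_{i-1}$ or $b_i$ and $P_i$ inherits the same shape. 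Your outline gestures at an induction of this flavour but supplies none of the case analysis; given that the whole point of the theorem is to push pathwidth from $3$ down to $2$, that case analysis \emph{is} the content, and it is missing.
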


Because \gridEm\ is a special case of \bGridEm, the above theorem has the following result as an immediate corollary.

\begin{corollary}
\bGridEm\ is \NP-hard even on trees of pathwidth 2.
\end{corollary}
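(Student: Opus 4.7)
The plan is to reduce from Not-All-Equal 3SAT (NAE-3SAT), following the skeleton of the classic Bhatt--Cosmadakis argument but replacing their branching gadgets by slimmer caterpillar-like gadgets so that the constructed tree $T(\varphi)$ has pathwidth $2$ rather than $3$. Given an NAE-3SAT formula $\varphi$ with variables $x_1,\dots,x_n$ and clauses $C_1,\dots,C_m$, I would design $T(\varphi)$ so that any grid embedding is essentially forced to be a long U-shape (or serpentine) of prescribed dimensions along which a variable gadget is placed for each $x_i$ and a clause gadget is placed for each $C_j$; the only remaining freedom is one binary choice per variable gadget, encoding the truth value, and the clause gadget's geometry is compatible with these choices precisely when the clause is not monochromatic.

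First I would construct a rigid \emph{spine}: a long path decorated with short pendant pieces of carefully tuned arm length, with the property that in any grid embedding this subtree must be realized as a path of a fixed shape. This is the role that the ``comb'' structure plays in \cite{bhatt1987complexity}; the refinement I need is that the spine is a caterpillar (hence of pathwidth~$1$) and that each variable-gadget and each clause-gadget is itself a caterpillar attached to the spine at a single anchor vertex. A path decomposition of width~$2$ for $T(\varphi)$ can then be given explicitly by sliding a size-$3$ bag along the spine, at each step keeping the current spine vertex, one spine neighbor, and one vertex from the attached caterpillar that is currently being processed; since at any time only one attached caterpillar is ``open'' and each attached caterpillar itself admits a width-$1$ path decomposition, the combined decomposition has width~$2$. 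The trickiest design constraint here is that the original construction attaches three pendant subtrees at certain spine vertices, which forces pathwidth~$3$, so I would have to redistribute those attachments along the spine (adding length, not branching) while preserving their geometric effect.

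Next I would prove correctness in both directions. For (soundness of yes-instances) a satisfying NAE assignment yields an embedding: walk along the forced U-shape and at each variable gadget choose the left/right (or up/down) orientation prescribed by the truth value; since the clause gadget is satisfied in the NAE sense, at least one of its literals is ``up'' and at least one is ``down'', which is exactly the condition under which the clause gadget's pendant pieces fit into the available grid cells without collision. For the converse direction I would first show by an inductive argument along the spine---this is the new work alluded to in the excerpt---that the spine is embedded (up to global symmetry) in its unique forced shape; each variable gadget is then embedded in one of exactly two local configurations, encoding a truth value; finally a case analysis shows that the leaner clause gadget has no legal embedding when all three associated variables are in the same configuration, which is exactly the NAE condition. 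The \paraH\ statement follows immediately since pathwidth $2$ is a constant.

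The main obstacle will be the second ingredient of the reverse direction: in the original reduction the clause gadget has enough internal vertices that ruling out monochromatic embeddings is almost immediate, but after thinning the gadget down to keep pathwidth at~$2$ I will have very little geometric slack, so the case analysis that excludes the all-true and all-false local configurations at a clause gadget has to be carried out carefully, and it is here that inductive reasoning (on position along the spine, and on the number of already-embedded gadgets) appears to be unavoidable. A secondary obstacle is the usual bookkeeping of ensuring that the dimensions of the spine, the spacing between variable anchors, and the arm lengths of the pendant combs are all chosen consistently so that the global U-shape exists if and only if all local gadget choices are jointly consistent with an NAE assignment.
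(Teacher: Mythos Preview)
You are proving the wrong statement. The corollary asks about \bGridEm, but the paper has already established (in the immediately preceding theorem) that \gridEm\ is \NP-hard on trees of pathwidth~$2$. Since \gridEm\ is literally the special case of \bGridEm\ with $k=r=|V(G)|$, the corollary follows in one line: take the reduction that witnesses hardness of \gridEm\ and append the trivial choice $k=r=|V(G)|$. No new gadgets, no new inductive argument, nothing.

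What you have written is a plan for (re)proving the preceding \emph{theorem}, not the corollary. Even viewed that way it is only a sketch: you never actually specify the spine, the variable gadget, or the clause gadget, and you correctly identify that the crux --- ruling out monochromatic embeddings at a clause after thinning the gadgets to pathwidth~$2$ --- is still an obstacle you have not overcome. The paper's own proof of the theorem does give an explicit construction (caterpillars $P_i,\overline{P}_i$ attached along a base path $B$, bracketed by rigid ``frame'' caterpillars $P_0,\overline{P}_0,P_{n+1},\overline{P}_{n+1}$ and two degree-$4$ stars), and the reverse direction is handled by an inductive claim that propagates straightness of $B$ and the orientation of even leaves from one caterpillar to the next. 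If your goal were the theorem, that is the level of concreteness you would need; for the corollary as stated, you only need to observe the special-case relationship between the two problems.
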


In particular, now the hardness result is {\em tight} with respect to pathwidth due to the following simple observation.

\begin{observation}
\gridEm\ is solvable in polynomial time on graphs of pathwidth~1.
\end{observation}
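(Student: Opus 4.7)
The plan is to observe that a graph has pathwidth at most $1$ if and only if each connected component is a \emph{caterpillar}---a tree whose non-leaf vertices all lie on a single path (called the spine). This is a standard structural characterization that I would simply cite. Since \gridEm\ imposes no bound on the dimensions of the target grid, I can embed each connected component in its own horizontal strip, separated from the others by a large empty horizontal offset so that no two strips interact. Hence it suffices to decide the problem for a single caterpillar and then glue the embeddings together.

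For a single caterpillar $T$, I claim that $T$ is a grid graph if and only if every vertex has degree at most $4$. Necessity is immediate, since a point of the integer grid has at most four grid-neighbors, so any grid graph has maximum degree $4$. For sufficiency, I would embed the spine $v_1, v_2, \ldots, v_m$ at the points $(1,0), (2,0), \ldots, (m,0)$, and attach the leaves of each spine vertex to its still-unused grid neighbors. Each internal spine vertex $v_i$ has spine-degree $2$, so at most two leaves, which fit in the free neighbors $(i,1)$ and $(i,-1)$. Each endpoint of the spine has spine-degree $1$, so at most three leaves, which fit in its three remaining grid neighbors. The degenerate cases in which $T$ is a single vertex, a single edge, or a star on at most five vertices are handled analogously by placing the leaves at the up to four cardinal neighbors of the center. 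A direct check confirms that all listed grid positions are pairwise distinct and every edge is realized as an axis-parallel unit segment.

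The only non-routine step is the caterpillar characterization of pathwidth-$1$ graphs; the rest is elementary. Since testing whether every vertex has degree at most $4$ runs in linear time, and the explicit embedding described above can be written down in linear time as well, this yields a polynomial-time (in fact linear-time) algorithm for \gridEm\ on graphs of pathwidth $1$, establishing the observation.
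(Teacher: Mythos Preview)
Your argument is correct. The paper does not actually supply a proof for this observation; it merely states it as a ``simple observation'' to complement the hardness result for pathwidth $2$. Your justification via the standard characterization of pathwidth-$1$ graphs as disjoint unions of caterpillars, together with the explicit spine-plus-leaves embedding, is exactly the kind of elementary argument the authors presumably had in mind. One cosmetic point: the paper's definition of a grid graph embedding maps into $[n]\times[n]$ with $n=|V(G)|$, so your coordinates $(i,-1),(i,0),(i,1)$ should be shifted to lie in the positive quadrant, and you should briefly note that the concatenation of all component embeddings still fits inside an $n\times n$ box (it does, since each component on $n_i$ vertices occupies at most $n_i$ columns and $3$ rows). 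These are routine adjustments and do not affect the validity of your approach.
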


Additionally, we show that \bGridEm\ is \NP-hard\ on graphs of pathwidth 2 even when $k=3$. Here, we give a reduction from {\sc 3-Partition} (whose objective is to partition a set of numbers encoded in unary into sets of size 3 that sum up to the same number), where the idea is to encode ``containers'' by special identical connected components whose embedding is essentially fixed, and then each number as a simple path on a corresponding number of vertices.

\begin{restatable}{theorem}{hardnessK}\label{thm:hardnessK}
\bGridEm\ is \NPH\ even on graphs of pathwidth $2$ when $k=3$. Thus, it is \paraH\ parameterized by $k + \pw$, where $\pw$ is the pathwidth of the input graph.
\end{restatable}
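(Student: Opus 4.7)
The plan is to reduce from the strongly \NPH{} problem \partition: given $3m$ positive integers $a_1,\dots,a_{3m}$ encoded in unary with $\sum_i a_i=mB$ and $B/4 < a_i < B/2$, decide whether they can be partitioned into $m$ triples each summing to $B$. I would construct $G$ as the disjoint union of $3m$ \emph{number paths}---where the $i$-th is a simple path on $3a_i$ vertices---together with $m-1$ identical \emph{container components}, each a constant-size graph of pathwidth $\le 2$ engineered to embed rigidly in a height-$3$ strip. The target dimensions are $k=3$ and $r = mB + (m-1)c$, where $c$ is the fixed column width of one container. Since disjoint unions preserve pathwidth, $\pw(G)\le 2$.

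Each container would be designed so that its only embeddings into a height-$3$ strip are, up to horizontal translation, vertical gadgets that span all three rows and occupy exactly $c$ consecutive columns. A candidate is a small graph built around a degree-$4$ vertex (which, in a $3$-row strip, must sit in the middle row with neighbors in all four cardinal directions), together with attached short paths reaching rows $1$ and $3$ plus additional decorations removing residual ambiguity; rigidity would be verified by a finite case analysis. With rigid containers, the $m-1$ copies partition the strip into $m$ \emph{slots} whose widths sum to $mB$, and because each container spans all three rows, no number path can straddle a slot boundary. Since the total vertex count $\sum_i 3a_i=3mB$ equals the combined slot area, the number paths exactly fill the slots: if slot $j$ has width $W_j$ and holds the paths indexed by $S_j$, then $\sum_{i\in S_j}a_i = W_j$ and $\sum_j W_j=mB$. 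Combined with the \partition\ gap $B/4<a_i<B/2$ and, if necessary, a small anchor gadget that fixes container positions, this forces $|S_j|=3$ and $W_j=B$ for every slot, which is a $3$-partition. Conversely, given a $3$-partition, the containers are placed evenly and each slot is filled by three adjacent $3\times a_{i_j}$ boustrophedon snakes on $3a_{i_j}$ vertices, producing a valid embedding.

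The main obstacle is twofold. First, proving rigidity of the container: long paths in a height-$3$ strip can snake freely, so most small graphs admit many spurious embeddings, and the gadget must be chosen carefully with its finitely many embeddings enumerated by hand while keeping pathwidth $\le 2$. Second, ruling out unequal slot widths in the backward direction: the \partition\ gap alone does not preclude configurations such as one slot of width $1.25B$ holding four numbers and another of width $0.75B$ holding two, so the container or its surrounding structure must be augmented to pin its horizontal position in the strip (without exceeding pathwidth $2$). Since $k=3$ and $\pw=2$ are fixed in the produced instance, \paraH ness with respect to $k+\pw$ follows immediately.
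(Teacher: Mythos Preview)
Your proposal has the right skeleton---reduce from \partition, use rigid ``container'' components plus simple paths for the numbers---but the step you yourself flag as ``the main obstacle'' is not actually resolved, and it is exactly the place where the argument breaks. With $m-1$ constant-width separators in a strip of width $mB+(m-1)c$, nothing in your construction pins the horizontal positions of the separators: they can slide, producing slots of widths $W_1,\dots,W_m$ with $\sum_j W_j=mB$ but $W_j\neq B$. Your backward direction then only yields $\sum_{i\in S_j}a_i=W_j$, and the \partition\ gap $B/4<a_i<B/2$ does not rule out, e.g., one slot of width $5B/4$ holding four numbers and another of width $3B/4$ holding two. You acknowledge this and say the container ``must be augmented to pin its horizontal position,'' but you never supply such an augmentation---and doing so while keeping the graph disconnected (so containers and number paths are separate components) and of pathwidth $\le 2$ is not obviously possible. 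A second, smaller issue: your number paths have $3a_i$ vertices and are meant to fill a $3\times a_i$ box, but a long path admits many non-rectangular embeddings in a height-$3$ strip, so even with fixed slot widths you would need additional work to control how paths share a slot.

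The paper sidesteps both difficulties by a different design choice: instead of $m-1$ small separators, it uses $m$ containers each of width exactly $B+4$, and sets $r=m(B+4)$. Each container has a row of $B+4$ degree-$\ge 3$ vertices that are forced into the middle row; since their total length equals $r$, the containers must tile row $2$ with no gaps, which fixes all positions at once. Each container already occupies rows $2$ and $3$ (and two cells of row $1$), leaving exactly one horizontal row-$1$ segment of length $B$ per container as the slot; the number paths are then just paths on $w_i$ vertices (not $3w_i$), which must lie in row $1$ and hence cannot snake. This makes the backward direction immediate and requires no anchoring gadget.
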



\section{Preliminaries}\label{sec:prelims}
For $k\in \mathbb{N}$, denote $[k] = \{1,2,\ldots, k\}$, and for $i,j\in \mathbb{N}$, denote $[i,j] = \{i, i+1, \ldots, j\}$. Given a set $W$ of integers, $\sum W$ denotes the sum of the integers in $W$. Given two multisets $X = \{x_1, x_2, \ldots, x_k\}$ and $Y = \{y_1, y_2, \ldots, y_l\}$, their \emph{disjoint union} is the multiset $Z = X \uplus Y = \{x_1, x_2 , \ldots, x_k, y_1, y_2, \ldots, y_l\}$. Given a function $g$ defined on a set $W$, we denote the set of images of its elements by $g(W)$. 

\paragraph{\bf{Graphs.}} For other standard notations not explicitly defined here, we refer to the book~\cite{Diestelbook}. Given a graph $G$, we denote its vertex set and edge set by $V(G)$ and $E(G)$, respectively. For a vertex $v \in V(G)$, we denote the degree of $v$ in $G$ by $\degr_G(v)$. The maximum degree of a vertex in $G$ is denoted by $\Delta(G)$. Given a set $V' \subseteq V(G)$, the subgraph of $G$ induced by $V$ is denoted by $G[V']$. Given a path $P$, \emph{size} and \emph{length} denote the number of vertices and edges in $P$, respectively. Given $u,v\in V(G)$, the distance $d(u,v)$ between $u$ and $v$ in $G$ is the length of a shortest path between them in $G$. A \emph{caterpillar} is a tree in which all the vertices are within distance 1 of a central path. Given two graphs $G_1$ and $G_2$, they are called \emph{isomorphic} if there exists a bijective function $g:V(G_1) \rightarrow V(G_2)$ such that any two vertices $u, v \in V(G_1)$ are adjacent in $G_1$ if and only if $g(u)$ and $g(v)$ are adjacent in $G_2$.
Given a graph $G$ and two vertices $u, v \in V(G)$, we define the operation $\join(u,v)$ on $G$ as follows. Delete $u$ and $v$ from $G$ and add a new vertex $w$ to $G$. Attach all the neighbors of $u$ and $v$ to $w$. If $\{u,v\} \in E(G)$, add a \emph{self-loop} on $w$, i.e.~an edge whose both endpoints are $w$.
The \emph{disjoint union} of two graphs $G_1$ and $G_2$ is the graph with vertex set $V(G_1) \uplus V(G_2)$ and edge set $E(G_1) \uplus E(G_2)$. The \emph{pathwidth} and \emph{treedepth} of a graph $G$ are defined as follows.

\begin{definition}[{\bf Pathwidth}]\label{def:pathwidth}
A {\em path decomposition} of a graph $G$ is a sequence $V_1, V_2, \ldots,$ $ V_t$ of subsets of $V(G)$ with the following two properties:
\begin{enumerate}[(i)]
	\item For every edge $\{u,v\} \in E(G)$, there exists an index $i \in [t]$ such that $u, v \in V_i$, and
	\item For every vertex $u \in V(G), \{j~|~ u \in V_j\} = \{i~|~k \leq i \leq l\}$, for some $1 \leq k \leq l \leq t$.
\end{enumerate}
The \emph{width} of the decomposition is one less than the maximum size of any set $V_i$, and the pathwidth $\pw(G)$ of $G$ is the minimum width of any of its path decompositions.
\end{definition}

\begin{definition}[{\bf Treedepth}]
The {\em treedepth} of a graph $G$ is defined as the minimum height of a forest $F$ on the same vertex set as $G$ with the property that every edge in $E(G)$ connects a pair of vertices that have an ancestor-descendant relationship in $F$.
\end{definition}

Based on the definition of pathwidth, we have the following observation about disjoint union of graphs.

\begin{observation}\label{obs:pathwidthDisjoint}
The pathwidth of the disjoint union of two graphs $G_1$ and $G_2$ is max$\{\pw(G_1)$, $\pw(G_2)\}$.
\end{observation}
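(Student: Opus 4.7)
The plan is to prove the equality by establishing both inequalities separately, which is a standard two-sided argument for pathwidth of disjoint unions. Let $G = G_1 \uplus G_2$ and let $w = \max\{\pw(G_1), \pw(G_2)\}$.

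For the upper bound $\pw(G) \le w$, I would take optimal path decompositions $\mathcal{P}_1 = V_1^{(1)}, \ldots, V_{t_1}^{(1)}$ of $G_1$ and $\mathcal{P}_2 = V_1^{(2)}, \ldots, V_{t_2}^{(2)}$ of $G_2$, of widths $\pw(G_1)$ and $\pw(G_2)$ respectively, and simply concatenate them to form the sequence $V_1^{(1)}, \ldots, V_{t_1}^{(1)}, V_1^{(2)}, \ldots, V_{t_2}^{(2)}$. I would then verify the two axioms of Definition~\ref{def:pathwidth}: every edge of $G$ lies in exactly one of $E(G_1)$ or $E(G_2)$ and is therefore covered by the corresponding sub-decomposition; and every vertex of $V(G_1)$ (respectively $V(G_2)$) appears only in bags coming from $\mathcal{P}_1$ (respectively $\mathcal{P}_2$), where its occurrences already form a contiguous interval. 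Since $V(G_1) \cap V(G_2) = \emptyset$, these intervals remain contiguous in the concatenation. The width is $\max\{\pw(G_1),\pw(G_2)\} = w$.

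For the lower bound $\pw(G) \ge w$, I would take any path decomposition $\mathcal{P} = V_1, \ldots, V_t$ of $G$ of width $\pw(G)$ and define the ``restriction'' $\mathcal{P}|_{G_1}$ by the sequence $V_1 \cap V(G_1), \ldots, V_t \cap V(G_1)$. I would check that this is a valid path decomposition of $G_1$: each edge of $G_1$ is an edge of $G$ hence covered by some bag $V_i$, and both its endpoints lie in $V(G_1)$, so they lie in $V_i \cap V(G_1)$; and for each $u \in V(G_1)$, the index set where $u$ appears in $\mathcal{P}|_{G_1}$ equals the index set where $u$ appears in $\mathcal{P}$, which is contiguous. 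The width of $\mathcal{P}|_{G_1}$ is at most that of $\mathcal{P}$, so $\pw(G_1) \le \pw(G)$. Symmetrically $\pw(G_2) \le \pw(G)$, giving $w \le \pw(G)$.

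Both steps are routine; there is no real obstacle since the disjointness of $V(G_1)$ and $V(G_2)$ trivially preserves contiguity of the index sets in both directions. The only minor care needed is to note that bags in the restricted decomposition are allowed to be empty, which does not affect the validity of Definition~\ref{def:pathwidth}.
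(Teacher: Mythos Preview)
Your proof is correct. The paper does not actually prove this observation --- it is stated without proof as a basic fact following directly from the definition of pathwidth --- so there is nothing to compare against; your concatenation argument for the upper bound and restriction argument for the lower bound are exactly the standard justification one would give.
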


Based on the definition of treedepth, we have the following observation about graphs of bounded degree.

\begin{observation}[\cite{DBLP:books/daglib/0030491}]\label{obs:treedepthBounded}
A graph of bounded degree and bounded treedepth has bounded size.
\end{observation}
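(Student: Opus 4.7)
The plan is to prove the observation by induction on the treedepth $d := \td(G)$. Since the statement is vacuous without some connectivity assumption (an arbitrary disjoint union of isolated vertices has both treedepth and maximum degree equal to $1$), the natural reading is that each connected component of such a graph has bounded size; I therefore assume $G$ is connected, with $\td(G) \le d$ and $\Delta(G) \le \Delta$, and aim to bound $|V(G)|$ by a function of $d$ and $\Delta$ alone. The base case $d = 1$ forces $|V(G)| = 1$, since the optimal forest is a single isolated vertex.

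For the inductive step, I would fix an optimal treedepth forest $F$ of $G$. Since $G$ is connected and every edge of $G$ joins an ancestor-descendant pair of $F$, $F$ must consist of a single rooted tree (two distinct roots could never be linked by ancestor-descendant edges). Let $r$ be its root and let $c_1,\ldots,c_t$ be its children, with $T_i$ denoting the subtree of $F$ rooted at $c_i$. The crucial structural observation is that, because any two vertices lying in different $T_i$'s are incomparable in $F$ after $r$ is removed, every edge of $G$ avoiding $r$ lies within a single $G[V(T_i)]$. Hence each $G[V(T_i)]$ has treedepth at most $d-1$ (witnessed by $T_i$), and because $G$ is connected, $V(T_i)$ must contain at least one neighbor of $r$ in $G$. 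This last point yields $t \le \deg_G(r) \le \Delta$, which is the key quantitative ingredient.

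Applying the inductive hypothesis to each $G[V(T_i)]$ and summing over the at most $\Delta$ subtrees gives a recurrence of the form $f(d,\Delta) \le 1 + \Delta \cdot f(d-1,\Delta)$, and hence $|V(G)| \le \Delta^{d}$ (or a similar polynomial-in-$\Delta$ bound of degree $d$). The main obstacle, in my view, is the ``bounded branching'' step $t \le \Delta$: a priori the children $c_i$ of $r$ in $F$ need not even be neighbors of $r$ in $G$, and the bound relies on combining connectivity of $G$ with the fact that distinct $T_i$'s induce vertex-disjoint pieces of $G \setminus \{r\}$---one must argue carefully that each $T_i$ contributes a distinct neighbor of $r$, so that their number is controlled by $\deg_G(r)$ rather than by some unrelated branching parameter of $F$.
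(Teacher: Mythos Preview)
The paper does not prove this observation; it is stated with a citation to the literature (Ne\v{s}et\v{r}il and Ossona de Mendez, \emph{Sparsity}) and no argument is given. Your inductive proof is the standard one and is essentially correct, including the reading that the bound applies to each connected component.

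One small gap worth patching: you apply the inductive hypothesis to $G[V(T_i)]$, but for an \emph{arbitrary} optimal elimination forest $F$ the induced subgraph $G[V(T_i)]$ need not be connected, so the hypothesis (stated for connected graphs) does not apply directly, and you have no separate bound on the number of components inside a single $T_i$. The cleanest fix is to work instead with the connected components of $G\setminus\{r\}$: each such component $C$ lies inside a single $T_i$ (so $\td(C)\le d-1$), each contains a distinct $G$-neighbour of $r$ (so there are at most $\Delta$ of them), and each is connected by definition. Summing yields the same recurrence $f(d,\Delta)\le 1+\Delta\cdot f(d-1,\Delta)$. Alternatively, one may invoke the recursive characterisation of treedepth to choose $F$ so that every rooted subtree induces a connected subgraph of $G$, after which your argument goes through verbatim.
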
 

\paragraph{\bf{Grid Embedding.}} 

Let $f:V(G) \rightarrow \mathbb{N} \times \mathbb{N}$ be a function that maps each vertex $v$ of $G$ to a point $(i,j)$ of an integer grid; then, $i$ and $j$ are also denoted as $\fr(v)$ and $\fc(v)$, respectively, that is, $f(v) = (\fr(v), \fc(v))$.
We now define some basic notions that are needed to define the \bGridEm\ and \gridEm\ problems.

\begin{definition}[{\bf Grid Graph Distance}] \label{def:Grid graph distance}
Let $G$ be an undirected graph. Let $f:V(G)\rightarrow \mathbb{N} \times \mathbb{N}$. Let $u,v\in V(G)$. The {\em grid graph distance of $u$ and $v$ induced by $f$}, denoted by $d_f(u,v)$, is defined to be $d_f(u,v)=|\fr(u)-\fr(v)|+|\fc(u)-\fc(v)|$.
\end{definition}

\begin{definition}[{\bf Grid Graph Embedding}] \label{def:Grid graph embedding}
Let $G$ be an undirected graph. Let $k,r\in \mathbb{N}$ such that $1\leq k,r\leq |V(G)|$. A {\em $k\times r$ grid graph embedding} of $G$ is an injection $f:V(G)\rightarrow [k] \times [r]$, such that for every $\{u,v\}\in E(G)$ it follows that $d_f(u,v)=1$. Moreover, a {\em grid graph embedding} of $G$ is a $|V(G)| \times |V(G)|$ grid graph embedding of $G$. 
\end{definition}

\begin{definition} [{\bf Grid Graph}] \label{def:Grid graph}
Let $G$ be an undirected graph. Let $k,r\in \mathbb{N}$. Then, $G$ is a {\em $k\times r$ grid graph} if there exists a $k\times r$ grid graph embedding of $G$. Moreover, $G$ is a {\em grid graph} if there exists a grid graph embedding of $G$. 
\end{definition}

The \bGridEm\ and the \gridEm\ problems are defined as follows.

\begin{definition} [{\bf Grid Embedding Problem}]\label{def:GridEmbedding}
Given a graph $G$ and two positive integers $k, r$, the \bGridEm\ and \gridEm\ problems ask whether $G$ is a $k\times r$ grid graph or a grid graph, respectively.
\end{definition}

\paragraph{\bf{Distance Approximation Parameter.}} Before we discuss motivation, let us first define the parameter formally. Towards this, we first present the following simple observation.

\begin{observation}\label{lem:dfLeqd}
Let $G$ be a grid graph with a grid embedding $f$, and let $u,v\in V(G)$. Then, $d_f(u,v)\leq d(u,v)$.
\end{observation}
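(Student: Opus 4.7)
The plan is a direct argument using the triangle inequality for the $L^1$ norm. First, I would fix a shortest path in $G$ from $u$ to $v$, say $u = w_0, w_1, \ldots, w_\ell = v$, where $\ell = d(u,v)$. Since every $\{w_{i-1}, w_i\}$ is an edge of $G$ and $f$ is a grid graph embedding, Definition~\ref{def:Grid graph embedding} guarantees that $d_f(w_{i-1}, w_i) = 1$ for every $i \in [\ell]$.

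Next, I would observe that $d_f$ satisfies the triangle inequality. This is immediate from Definition~\ref{def:Grid graph distance}, since for any three vertices $a,b,c \in V(G)$ the ordinary triangle inequality for absolute values gives
\[
|\fr(a) - \fr(c)| \leq |\fr(a) - \fr(b)| + |\fr(b) - \fr(c)|,
\]
and analogously for $\fc$; summing these two inequalities yields $d_f(a,c) \leq d_f(a,b) + d_f(b,c)$.

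Applying this inequality inductively along the chosen path, I would conclude
\[
d_f(u,v) \;\leq\; \sum_{i=1}^{\ell} d_f(w_{i-1}, w_i) \;=\; \ell \;=\; d(u,v),
\]
which is the desired bound. There is no real obstacle here; the only thing to be careful about is to invoke the path-based triangle inequality rather than attempting any coordinate-level computation, and to note that the argument implicitly assumes $u$ and $v$ lie in the same connected component of $G$ so that $d(u,v)$ is finite (otherwise the statement holds trivially under the convention $d(u,v)=\infty$).
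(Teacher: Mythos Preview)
Your proof is correct and is essentially the same argument as the paper's: both pick a shortest $u$--$v$ path, use that each edge has $d_f$-length $1$, and apply the triangle inequality for $d_f$ along the path (the paper phrases this as induction on $d(u,v)$, you phrase it as a sum, but the content is identical). Your explicit verification of the triangle inequality and the remark on connectedness are minor additions that do not change the approach.
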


\begin{proof}
We prove by induction on $d(u,v)$. If $d(u,v)=0$ then $u=v$, and we get that $d_f(u,u)=0=d(u,v)$. Now, assume that $d(u,v)=k\geq 1$. Let $P=(a_0=u,a_1\ldots,a_k=v)$ be a path of size $k$ form $u$ to $v$ in $G$. Observe that $P=(a_0=u,a_1\ldots,a_{k-1})$ is a path of size $k-1$ from $u$ to $a_{k-1}$, so we get that $d(u,a_{k-1})\leq k-1$. Therefore, by the inductive hypothesis, we get that $d_f(u,a_{k-1})\leq d(u,a_{k-1})\leq k-1$. Since $\{a_{k-1},v\}\in E(G)$ and by Definition \ref{def:Grid graph embedding}, we get that $d_f(a_{k-1},v)=1$. By the triangle inequality, we get that $d_f(u,v)\leq d_f(u,a_{k-1})+d_f(a_{k-1},v)\leq k-1+1=k=d(u,v)$. 
\end{proof}

This observation implies that when we consider differences of the form $|d(u,v)-d_f(u,v)|$, we can drop the absolute value. Keeping this in mind, we now formally define the distance approximation parameter.

\begin{definition}[{\bf $k\times r$ Distance Approximation Parameter}] \label{def:Grid graph distance app}
Let $G$ be a connected graph, and let $k,r\in \mathbb{N}$. For any $k\times r$ grid graph embedding $f$ of $G$, define $a_f=\max_{u,v\in V(G)}(d(u-v)-d_f(u,v))$. Then,
\begin{itemize}
\item If $G$ is a $k\times r$ grid graph, then $a_G(k,r)=\min\{a_f~|~f$ is a $k\times r$ grid graph embedding~of~$G\}$.
\item Otherwise, $a_G(k,r)=|V(G)|$.
\end{itemize}
\end{definition}

When $k$ and $r$ are clear from context, we write ``distance approximation parameter'' and $a_G$ rather than ``$k\times r$ distance approximation parameter'' and $a_G(k,r)$, respectively. When $k$ and $r$ are unrestricted, $a_G(k,r)=a_G(|V(G)|,|V(G)|)$. See Figure~\ref{fi:gridGraphDistance}. We also remark that whenever $G$ is a $k\times r$ grid graph, then $a_G(k,r)\leq |V(G)|-2$ (because for any grid graph embedding $f$ of $G$ and two different vertices $u,v\in V(G)$, $d(u,v)\leq |V(G)|-1$ and $d_f(u,v)\geq 1$). So, we get the following observation.

\begin{figure}
\centering
\includegraphics[width=0.5\textwidth, page=5]{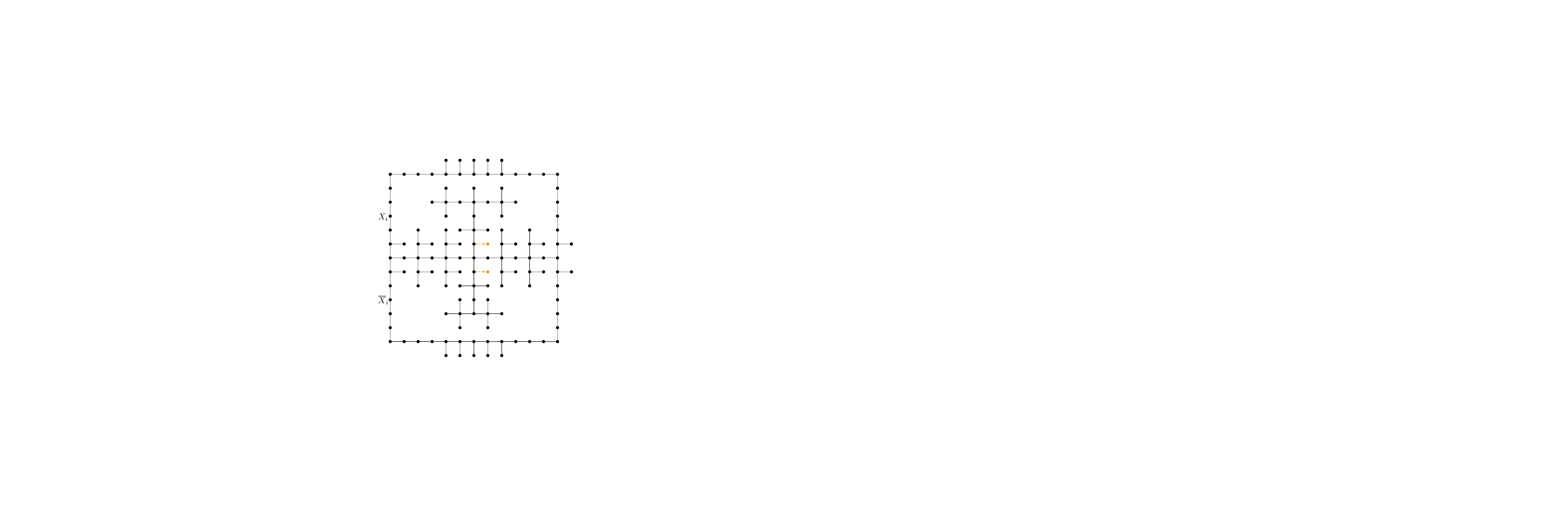}
\caption{Example of a path $P$ on $8$ vertices with three different grid graph embeddings $f_1, f_2$ and $f_3$. Since $a_{f_2} = a_{f_3} = 0$, we get that $a_P = 0$.}\label{fi:gridGraphDistance}
\end{figure}

\begin{observation} \label{obs:disapp bounded}
Let $G=(V,E)$ be a connected grid graph. Let $f$ be a $k\times r$ grid graph embedding of $G$. Then $a_f\leq |V|-2$.
\end{observation}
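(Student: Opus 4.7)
The plan is essentially to formalize the short remark given immediately before the observation statement. The key inputs are: (i) $G$ is connected on $|V|$ vertices, which bounds the graph distance; (ii) $f$ is an injection by Definition~\ref{def:Grid graph embedding}, which forces a lower bound on the grid distance between distinct vertices; and (iii) Observation~\ref{lem:dfLeqd} justifies dropping absolute values in the definition of $a_f$.

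Concretely, I would argue as follows. Fix any pair $u,v \in V(G)$ realizing the maximum in the definition of $a_f$, so $a_f = d(u,v) - d_f(u,v)$. I split into two cases. If $u = v$, then $d(u,v) = 0$ and $d_f(u,v) = 0$, so $a_f = 0 \leq |V|-2$ (and note $|V| \geq 2$ since otherwise there is nothing to check, as the maximum is taken over a trivial range). If $u \neq v$, then because $G$ is connected on $|V|$ vertices, any shortest $u$–$v$ path has at most $|V|-1$ edges, hence $d(u,v) \leq |V|-1$. On the other hand, $f$ is an injection into $[k] \times [r]$, so $f(u) \neq f(v)$, which forces $|\fr(u)-\fr(v)| + |\fc(u)-\fc(v)| \geq 1$, i.e., $d_f(u,v) \geq 1$. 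Combining these two inequalities yields
\[
a_f = d(u,v) - d_f(u,v) \leq (|V|-1) - 1 = |V|-2.
\]

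Since the pair $(u,v)$ was arbitrary, taking the maximum gives $a_f \leq |V|-2$, as required. There is no real obstacle here; the statement is essentially an immediate consequence of Definition~\ref{def:Grid graph embedding} (injectivity) and elementary connectivity, and Observation~\ref{lem:dfLeqd} is used only implicitly to ensure that the quantity $d(u,v)-d_f(u,v)$ inside the maximum is always nonnegative, so that the bound $|V|-2$ is meaningful as an upper bound on $a_f$.
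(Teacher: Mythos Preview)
Your proof is correct and follows exactly the argument the paper sketches in the remark immediately preceding the observation: bound $d(u,v)\leq |V|-1$ by connectivity and $d_f(u,v)\geq 1$ by injectivity of $f$, then subtract. The paper does not give a separate proof for this observation, so your write-up is a faithful formalization of its intended reasoning.
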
 

Clearly, the definition of this parameter is extendible to other embeddings and other distance measures (e.g., a plane embedding and the Euclidean distance). The motivation behind the consideration of such a parameterization is as follows.  First, we remark that edges between vertices often model interactions or relations between entities. So, vertices adjacent in the graph may be required to be close to each other in the embedding (e.g., so that they can interact more efficiently). On a more general note, vertices closer to each other in the graph may be required to be closer to each other in the embedding. This also works in the opposite way---vertices more distant from each other in the graph may be required to be more distant from each other in the embedding. 

This rationale behind this parameter makes sense in various scenarios. Suppose that vertices represent utilities, factories or organizations, or, very differently, components to be placed on a chip. On the one hand, those that are closer to each other in the graph might need to cooperate more often: they have direct and indirect (through other entities on the path) connections between them; the more ``links on the chain'', the less is directed interaction required. On the other hand, we may have a competitive constraint---we may want these entities to also be ``as far as possible''. In particular, if they are far in the graph, we will take advantage of this to place them far in the embedding (proportionally). For example, these entities may cause pollution, radiation or heat~\cite{bergstra2018effect,garcia2020residential}. Alternatively, in the case of utilities, we may want to cover as large area as we can. Recently, due to the COVID-19 pandemic, many governments around the world world have introduced social distancing. Briefly, social distancing means that people should be physically away from each other, if possible. According to experts, one of the most effective ways to reduce the spread of coronavirus is social distancing~\cite{SocialDis:1,SocialDis:3,SocialDis:2}. Suppose that the vertices represent people, the edges represent social (or other) relations between them, and we want to find a seating arrangement. In order to preserve the social distancing, we would like that people who do not need to be close to each other, to be relatively far away from each other. In another example, suppose that the vertices represent some facilities that ``attract'' people, like stores. Placing the stores far away from each other, if possible, contributes to social distancing.

Besides the above scenarios, there are two more motivating arguments of different flavor. The first argument is that the problem is computationally very hard (in particular, even for trees), so we want to restrict it to get tractability, and this may be a reasonable choice. So even if we do not specifically want distance preservation, seeking those that comply with it is useful since it gives tractability. The second argument is that having such a distance preserving embedding rather than any embedding can yield more efficient algorithms due to special properties that it has. One such property is that computing distances between vertices in such an embedding can be done up to a small error ($a_G$) in constant time.

We remark that the embeddings that our algorithms compute satisfy the conditions of Definition \ref{def:Grid graph embedding}, in particular, the embeddings are planar. Furthermore, we do not need to know the value of $a_G$ in advance, in order to use our algorithm, as we iterate over all the potential values for $a_G$.

\paragraph{\bf{Integer Linear Programming.}} In the {\sc Integer Linear Programming Feasibility} (ILP) problem, the input consists of $t$ variables $x_1, x_2, \ldots, x_t$ and a set of $m$ inequalities of the following form:
	\[\begin{array}{*{9}{@{}c@{}}}
		a_{1,1}x_1 & + & a_{1,2}x_1 &+ & \cdots & + & a_{1,p}x_t & \leq & b_1 \\
		a_{2,1}x_1 & + & a_{2,2}x_2 & + & \cdots & + & a_{2,p}x_t & \leq & b_2 \\
		\vdots    &   & \vdots    &   &        &   & \vdots    &   & \vdots \\
		a_{m,1}x_1 & + & a_{m,2}x_2 & + & \cdots & + & a_{m,p}x_t & \leq & b_m \\
	\end{array}\]
where all coefficients $a_{i_j}$ and $b_i$ are required to integers. The task is to check whether there exists integer values for every variable $x_i$ so that all inequalities are satisfiable. The following theorem about the running time required to solve ILP will be useful in Section~\ref{sec:rows}.

\begin{theorem}[\cite{DBLP:journals/mor/Kannan87,DBLP:journals/mor/Lenstra83,DBLP:journals/combinatorica/FrankT87}]\label{the:runningTimeILP}
An ILP instance of size $n$ with $p$ variables can be solved in time $p^{\OO(p)}\cdot n^{\OO(1)}$. 
\end{theorem}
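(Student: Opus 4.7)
The plan is to follow the Lenstra--Kannan algorithmic scheme, which reduces $p$-dimensional ILP feasibility to lower-dimensional instances via the lattice-width (flatness) theorem. First, I would reformulate the task as deciding whether the rational polytope $P=\{x\in\mathbb{R}^p : Ax\le b\}$ contains a point of the integer lattice $\mathbb{Z}^p$. Using standard polynomial-time preprocessing, namely computing an approximation to the L\"owner--John ellipsoid of $P$, one obtains a unimodular change of coordinates after which $P$ is \emph{well-rounded}: the ratio between the radii of a containing ball and a contained ball is bounded by a function of $p$ alone. The size of the encoding grows only polynomially in $n$ during this step.

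Second, I would invoke Khinchin's flatness theorem: there exists a function $\omega(p)$ such that every convex body in $\mathbb{R}^p$ that contains no lattice point has lattice-width at most $\omega(p)$. Kannan's sharpening yields $\omega(p)=p^{\OO(1)}$, and the argument can be made constructive: in polynomial time one can either exhibit an explicit lattice point inside the well-rounded $P$ (finishing the whole algorithm), or produce a nonzero integer vector $c\in\mathbb{Z}^p$ such that the set $\{\langle c,x\rangle : x\in P\}$ is an interval meeting at most $\omega(p)+1$ consecutive integers. The algorithmic ingredient here is an LLL-reduced (or generalized Korkine--Zolotarev) basis of the lattice dual to the rounded system, whose shortest vector provides the desired ``flat'' direction $c$.

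Third, I would branch on the value of $\langle c,x\rangle$: for each of the at most $\omega(p)+1$ admissible integers $k$, substitute the equation $\langle c,x\rangle=k$ into the system, eliminate one variable (possible since $c\neq 0$), and recurse on an ILP in $p-1$ variables whose encoding length $n'$ grows by only a polynomial factor. Writing $T(p,n)$ for the worst-case running time, the recursion is
\[
T(p,n)\;\le\;(\omega(p)+1)\cdot T(p-1,n')\;+\;n^{\OO(1)},
\]
with base case $T(0,n)=n^{\OO(1)}$ (a trivial consistency check). Telescoping gives
\[
T(p,n)\;\le\;\Bigl(\prod_{i=1}^{p}(\omega(i)+1)\Bigr)\cdot n^{\OO(1)}\;=\;p^{\OO(p)}\cdot n^{\OO(1)},
\]
which is the desired bound.

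The main obstacle is the flatness theorem together with the algorithmic extraction of the direction $c$. Proving the quantitative bound $\omega(p)=p^{\OO(1)}$ rests on Minkowski's convex body theorem combined with estimates on successive minima of lattices, and turning the resulting existence proof into a polynomial-time procedure requires the LLL basis reduction machinery, together with careful bit-complexity bookkeeping to ensure that the encoding of the recursive subproblems does not blow up. Once this ingredient is in place, the branching-and-elimination step described above is routine and yields the claimed $p^{\OO(p)}\cdot n^{\OO(1)}$ running time.
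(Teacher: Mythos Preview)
The paper does not prove this theorem at all: it is stated as a known result with citations to Lenstra, Kannan, and Frank--Tardos, and is then used as a black box in the ILP-based algorithm of Section~\ref{sec:rows}. So there is no ``paper's own proof'' to compare against.

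That said, your sketch is a faithful outline of the actual argument behind the cited references. Lenstra's original scheme is exactly the ellipsoid-rounding plus flatness-direction branching you describe; Kannan's contribution is the improved quantitative flatness bound and the basis-reduction machinery that brings the branching factor down to $p^{\OO(1)}$ per level, yielding $p^{\OO(p)}$ overall; and Frank--Tardos supplies the preprocessing that keeps the bit length of coefficients under control so that the $n^{\OO(1)}$ factor is genuine. Your identification of the flatness theorem (with an algorithmically extractable flat direction) as the crux is spot on, and the recursion you write is the standard one. For the purposes of this paper nothing more is needed, since the result is only invoked, not reproved.
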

\section{FPT Algorithm on General Graphs}\label{sec:rows}
In this section, we show that the \bGridEm\ problem is \FPT\ parameterized by $\cc(G) + k$. We first give the definition of a {\em \recGrid}, which will be useful throughout the section.

\begin{definition} [{\bf $k\times r$ Rectangular Grid Graph}] \label{def:rGridGraph}
An undirected graph $H$ is a {\em \recGrid} if there exists a bijection $f:V \rightarrow [k] \times [r]$, such that for every pair of vertices $u, v \in V(H)$, $\{v,u\}\in E(H)$ if and only if $d_f(u,v)=1$. 
\end{definition}

Given a \recGrid\ $H$ and a corresponding bijective function $f$, we define the {\em columns} of $H$ as follows. For every $i \in [k], j \in [r]$, $C_j(H) = \{u \in V(H)| \fc(u) = j\}$. It is easy to see that the vertex set of $H$ can be denoted as the union of columns, i.e. $V(H) = \bigcup_{j=1}^r C_j(H)$. We refer to $C_1(H)$ and $C_r(H)$ as the \emph{left boundary column} and \emph{right boundary column}, respectively, of $H$. 

Given a subgraph $S$ of a $k\times r$ rectangular grid graph $H$, we denote the set of \emph{fully contained} connected components of $S$, i.e. all the connected components of $S$ that either do not intersect the boundary columns of $H$ or intersect both boundary columns of $H$, by ${\cal FC}(S)$. Similarly, we denote the set of \emph{left contained} (\emph{right contained}) connected components of $S$, i.e. all the connected components of $S$ that intersect the left (right) boundary column of $H$ but do not intersect the right (left) boundary column of $H$, by ${\cal LC}(S)$ (${\cal RC}(S)$). Note that the three sets ${\cal FC}(S), {\cal LC}(S)$ and ${\cal RC}(S)$ are pairwise disjoint and $S = {\cal FC}(S) \cup {\cal LC}(S) \cup {\cal RC}(S)$. See Figure~\ref{fi:rectangleGraph}.

\begin{figure}[!t]
\centering
\includegraphics[width=0.4\textwidth, page=2]{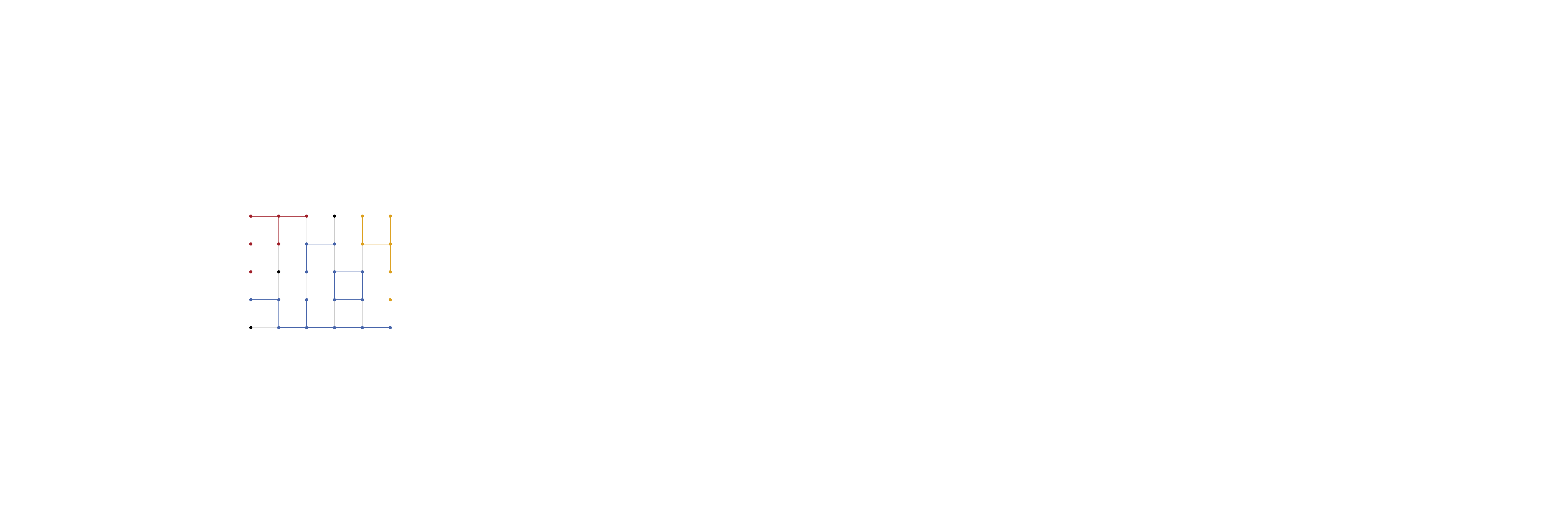}
\caption{A $5 \times 6$ rectangular grid graph $H$, its boundary columns and a subgraph $S$ of $H$ shown by colored vertices and thick colored edges. The blue, red and orange colored connected components belong to ${\cal FC}(S), {\cal LC}(S)$ and ${\cal RC}(S)$, respectively.}\label{fi:rectangleGraph}
\end{figure}

\mccK*

\begin{proof}
The \FPT\ algorithm is based on ILP. To this end, let $G$ be an instance of the \bGridEm\ problem. We first give an overview of the ideas behind the algorithm. 
\paragraph{Overview:} 
Let $H$ be a \recGrid. Let ${\cal B} = \{B_1, B_2, \ldots, B_p\}$ be a partition of $H$ into blocks of size $k \times \cc(G)$ such that $V(B_i) = \bigcup_{j = (i-1)(\cc(G) - 1) + 1}^{i(\cc(G) -1) + 1}C_j(H)$, for each $i \in [p]$ where $p = (r-1)/(\cc(G) - 1)$. We first consider the case where $r-1$ is a multiple of $\cc(G) - 1$. Note that each block $B_i$ is a $k \times \cc(G)$ rectangular grid graph, and for all $i \in [p-1], B_i$ and $B_{i+1}$ share a boundary column. See Figure~\ref{fi:blocks}.

We can restate the \bGridEm\ problem as follows: is $G$ a subgraph of $H$? As $H$ is a planar graph, $G$ must be planar. So for now, assume that $G$ is a planar subgraph of $H$. As the size of any connected component of $G$ is at most $\cc(G)$, any connected component $C$ of $G$ intersects (i) only one block $B_i$ (in particular, it does not intersect either the right or the left boundary of $B_i$), or (ii) exactly two consecutive blocks $B_i$ and $B_{i+1}$ through the right boundary column of $B_i$, or (iii) exactly three consecutive blocks $B_{i-1}, B_i$ and $B_{i+1}$ through the left and right boundary columns of $B_i$. Note that, if any connected component of $G$ intersects three consecutive blocks $B_{i-1}, B_i$ and $B_{i+1}$, then it will intersect the block $B_{i-1}$ ($B_{i+1}$) only at its right (left) boundary column. 

Based on the above observation, we compute the set $\cal S$ of all the possible {\em snapshots} of a $k \times \cc(G)$ rectangular grid graph $R$, i.e. the set of all the subgraphs of $R$, and the left and right \emph{adjacencies} between snapshots. We also find the set, denoted \source, of snapshots which may correspond to $B_1$ and the set, denoted \sink, of snapshots which may correspond to $B_p$. Note that we make this distinction, as except for blocks $B_1$ and $B_p$, all the other blocks share both boundary columns. We then make a directed graph $D$ for every possible combination of a source snapshot $start \in \source$, a sink snapshot $end \in \sink$ and a subset ${\cal S'} \subseteq {\cal S}$ of snapshots we want in our solution as follows. We add to $D$ all the snapshots in $\cal S'$ as vertices and an arc from $S$ to $S'$ if $S$ is ``left adjacent'' to $S'$, for every pair $S, S' \in \cal S'$. Note that a snapshot can be adjacent to itself, so $D$ can have loops. We add two more vertices, one for $start$ and one for $end$, and add arcs from $start$ to all its ``right adjacencies'' and from all the ``left adjacencies'' of $end$ to $end$. We then find (using ILP) the number of times each arc should be duplicated in $D$ to get a new multidigraph $D'$ on the same vertex set as $D$ such that we get a (connected) Eulerian path in $D'$ from $start$ to $end$ of length $p$ and all the connected components of $G$ are covered by the Eulerian path. Finally, we use the path to get the correspondence between the blocks of $H$ and the snapshots in $\cal S'$ with a correct placement from left to right.

Observe that, in the case where $r-1$ is not a multiple of $\cc(G) - 1$, the last block is of size $k \times \ell$, where $\ell = r - \lfloor{r/(\cc(G)-1)}\rfloor$. So, we can change the algorithm by considering \sink\ as the set of valid subgraphs of $k \times \ell$ rectangular grid graph. For sake of simplicity, we give the algorithm considering $r-1$ as a multiple of $\cc(G) - 1$.

\begin{figure}[!t]
\centering
\includegraphics[width=0.5\textwidth, page=3]{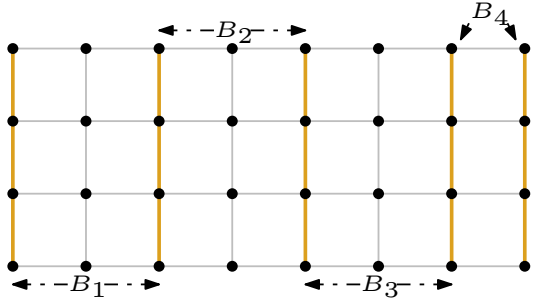}
\caption{A $k \times r$ rectangular grid graph $H$ and the corresponding blocks, where $k=4, r=8$ and $\cc(G)=3$. The boundary columns of the blocks are drawn in orange. Note that the last block is of smaller size than the rest of the blocks.}\label{fi:blocks}
\end{figure}

We now give the algorithm followed by a proof of its correctness.
\paragraph{Algorithm:}
As $G$ must be planar in case we have a yes instance, we first check if $G$ is planar in time $\OO(|V(G|)$ by any of the algorithms given in~\cite{DBLP:journals/jgaa/BoyerM04,DBLP:journals/ejc/FraysseixM12,DBLP:journals/jacm/HopcroftT74}. If the algorithm returns \no, we return \no. Otherwise, we compute the set $\C(G) = \{G_1, G_2, \ldots, G_t\}$ of all \emph{non-isomorphic} connected components of $G$ in time $\OO(|V(G)|)$ using the algorithm given by Hopcroft and Wong~\cite{DBLP:conf/stoc/HopcroftW74}. As the size of any connected component of $G$ is at most $\cc(G)$ and $G$ is planar, $t = 2^{\OO(\cc(G))}$. Let $\num(G_i)$ be the number of times $G_i$ appears in $G$, for every $i \in [t]$. We first compute the sets $\cal S, \source$ and $\sink$ (defined in the overview) in the following manner. We initialize all the three sets to the empty set. For every subgraph $S$ of $R$, if ${\cal FC}(S) \subseteq \C(G)$, then we add $S$ to $\cal S$. Note that, if there exists a connected component of $S$ in ${\cal FC}(S)$ which does not belong to $\C(G)$, then $S$ cannot contribute to a valid solution, i.e. $S$ cannot correspond to a block $B \in \cal B$. For every $S \in \cal S$, if ${\cal LC}(S) \subseteq \C(G)$, then add $S$ to \source\ and if ${\cal RC}(S) \subseteq \C(G)$, then add $S$ to \sink. Note that \source\ is the set of all possible snapshots that can correspond to $B_1$: as $B_1$ does not share its left boundary column with any other block, if there exists a connected component of $S$ in ${\cal LC}(S)$ which does not belong to $\C(G)$, then $S$ cannot contribute as a source, i.e. $S$ cannot correspond to the block $B_1$. A similar argument follows for \sink. For every snapshot $S \in \cal S$ and $i \in [t]$, let $\frc(G_i, S)$ be the number of times $G_i$ appears in ${\cal FC}(S)$. Similarly, for every snapshot $S \in \source$ ($S \in \sink$) and $i \in [t]$, let $\frl(G_i, S)$ ($\frr(G_i, S)$) be the number of times $G_i$ appears in ${\cal LC}(S)$ (${\cal RC}(S)$). Note that $\source, \sink \subseteq \cal S$. As $|E(R)| = \OO(k \cdot \cc(G))$, we have that $|{\cal S}| = 2^{\OO(k \cdot \cc(G))}$.

%
We now find the set $\ad \subseteq {\cal S} \times {\cal S}$ of all possible adjacencies between pairs of snapshots in $\cal S$ in the following manner. We initialize $\ad = \{\}$. Let $R'$ be a $k \times (2\cc(G)-1)$ rectangular grid graph. We partition $R'$ into two blocks $B'_1$ and $ B'_2$ of size $k \times \cc(G)$ such that $V(B'_1) = \bigcup_{i = 1}^{\cc(G)}C_i(R')$ and $V(B'_2) = \bigcup_{i = \cc(G)}^{2\cc(G)-1}C_i(R')$.
For every $i \in \{1,2\}$ and subgraph $S'$ of $R'$, let $S'_i = S'[V(S') \cap V(B'_i)]$ be the subgraph of $S'$ in block $B'_i$. We look at only those subgraphs $S'$ for which both $S'_1$ and $S'_2$ belong to $\cal S$. For every such $S'$, we add the pair $(S'_1, S'_2)$ to $\ad$ if all the connected components of $S'_1 \cup S'_2 = S'$ that intersect both $B'_1$ and $B'_2$ (i.e., intersect column $C_{\cc(G)}(R')$) belong to $\C(G)$. Let ${\cal BC}(S'_1, S'_2)$ be the set of \emph{boundary intersecting} connected components of the pair $(S'_1, S'_2)$, that is, all the connected components of $S'_1 \cup S'_2 = S'$ that intersect the column $C_{\cc(G)}(R')$ but intersect neither $C_1(R')$ nor $C_{2\cc(G)-1}(R')$. For every $i \in [t]$, we denote the number of times $G_i$ appears in ${\cal BC}(S'_1, S'_2)$, by $\frb(G_i, (S'_1, S'_2))$. As we consider every subgraph of $R'$, we get all the possible adjacencies between pairs of snapshots in $\cal S$. Note that ${\cal BC}(S'_1, S'_2)$ intersect neither ${\cal FC}(S'_1)$ nor ${\cal FC}(S'_2)$ but it may intersect ${\cal RC}(S'_1)$ or ${\cal LC}(S'_2)$. See Figure~\ref{fi:boundarySharing}.

\begin{figure}[!t]
\centering
\includegraphics[width=0.5\textwidth, page=4]{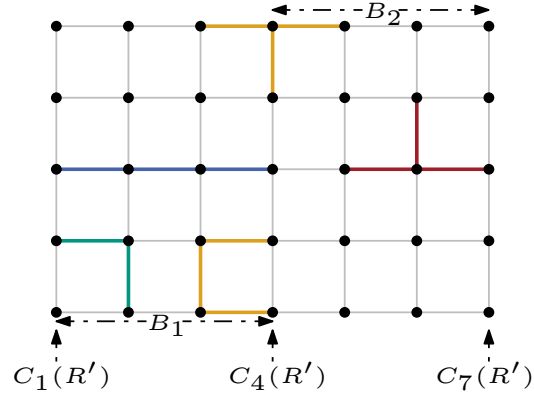}
\caption{A $k \times (2\cc(G)-1)$ rectangular grid graph $R'$, its corresponding blocks and a subgraph $S'$ of $R'$ shown by colored vertices and thick colored edges, where $k=5$ and $\cc(G)=4$. The orange colored connected components belong to ${\cal BC}(S'_1, S'_2)$.}\label{fi:boundarySharing}
\end{figure}

For every pair of snapshots $(start, end)$ such that $start \in \source$ and $end \in \sink$ and a set ${\cal S'} \subseteq {\cal S}$ of snapshots, we create a directed graph $D = D(start, end, {\cal S'})$ as follows. We add all the snapshots in $\cal S'$ as vertices of $D$ and for every pair of snapshots $S, S' \in \cal S'$, if $(S, S') \in \ad$, then add an arc from $S$ to $S'$ in $D$. We then add both $start$ and $end$ as vertices of $D$ and for every snapshot $S \in \cal S$, if $(start, S) \in \ad$, add an arc from $start$ to $S$ in $D$. Similarly, for every snapshot $S \in \cal S'$, if $(S, end) \in \ad$, add an arc from $S$ to $end$ in $D$. For every arc $(S, S') \in E(D)$, let $X(S,S')$ be a variable that corresponds to the number of times the arc $(S,S')$ is duplicated to get the multidigraph $D'$ mentioned in the overview. Then, for the directed graph $D$, the algorithm proceeds as follows.

\begin{itemize}
	\item Find the set $\cal T$ of all spanning trees of the underlying undirected graph of $D$.
	\item For every spanning tree $T \in \cal T$, solve the following ILP to find $X(S,S')$ for every edge $(S,S') \in E(D)$.
		\begin{subequations}
		\begin{gather}
	        \forall S \in V(D) \setminus \{start, end\}: \sum_{(S,S') \in E(D)} X(S,S') = \sum_{(S'',S) \in E(D)} X(S'',S). \label{eq:InDegOutDeg}\\
	        \sum_{(start,S) \in E(D)} X(start,S) = 1. \label{eq:start}\\
	        \sum_{(S,end) \in E(D)} X(S,end) = 1. \label{eq:end}\\
	        \sum_{(S,S') \in E(D)} X(S,S') = p-1. \label{eq:PathLength}\\
	        \forall i \in [t]: \frl(G_i, start) + \sum_{(S,S') \in E(D)} X(S,S') \cdot \frb(G_i, (S,S')) + \nonumber\\
			\sum_{S \in V(D)}\bigg(\sum_{(S,S') \in E(D)} X(S,S')\bigg) \cdot \frc(G_i, S) + \frr(G_i, end) = \num(G_i). \label{eq:countingCC}\\
			\forall (S,S') \in E(T): X(S,S') \geq 1. \label{eq:connectedD}\\
			\forall (S,S') \in E(D) \setminus E(T): X(S,S') \geq 0. \label{eq:normalEdges}
		\end{gather}
		\end{subequations}
		\begin{itemize}
			\item If the ILP returns a feasible solution, then return \yes.
		\end{itemize}
\end{itemize}
Recall that we run the algorithm for every possible $D$. If none of them returns \yes, we return \no.
\paragraph{Correctness:} We start by analyzing the equations. Equation~\ref{eq:connectedD} ensures that the digraph $D'$ is connected, and, in this context, recall that we go over all the possible spanning trees to check all the different possible connectivities between the vertices of $D'$. Equations~\ref{eq:InDegOutDeg},~\ref{eq:start} and~\ref{eq:end} ensure that there exists an Eulerian path in $D$ from $start$ to $end$. Equation~\ref{eq:PathLength} ensures that the total number of edges in $D'$ is $p-1$, which in turn means that the Eulerian path from $start$ to $end$ in $D'$ is of length $p$, which is equal to the number of required blocks. Given a multidigraph $D'$, each connected component of $G$ can contribute to only one set out of ${\cal LC}(start), {\cal RC}(end), {\cal FC}(S)$ and ${\cal B}(S', S'')$, for $S, S', S'' \in {\cal S'}$ such that $(S', S'') \in E(D)$, as there exists no $S \in \cal S'$ such that $(S, start) \in E(D)$ or $(end, S) \in E(D)$. So, Equation~\ref{eq:countingCC} ensures that all the connected components of $G$ are covered by the path exactly once.

Next, we prove that the algorithm is correct. In one direction, assume that the algorithm returns \yes. This means that for some directed graph $D$, ILP assigned integer values for the variables $X(S,S')$, $(S,S') \in E(D)$ such that the corresponding multidigraph $D'$ has an Eulerian path from $start$ to $end$ of size $p$ covering all the connected components of $G$ exactly once. Let $P= (S_1= start, S_2, \ldots, S_{p-1}, S_p=end)$ be the Eulerian path from $start$ to $end$ in $D'$, where $S_i \in {\cal S}'$ for every $i \in [p]$. Then, we can define $V(G) \cap V(B_i) = V(S_i)$, for every $i \in [p]$. Observe that, $G = \bigcup_{i \in [p]}(G \cap B_i)$. Thus $G$ is a subgraph of $H$, i.e. $G$ is a $k \times r$ grid graph.

Conversely, let $G$ be a $k \times r$ grid graph, i.e.~$G$ is a subgraph of $H$. For every $i \in [p]$, let ${\cal S}_H = \{H_1, H_2, \ldots, H_p\}$ be the set of graphs $H_i = G[V(G) \cap V(B_i)]$. For every $i \in [p]$ and $ j \in [p-1]$, observe that $H_i \in \cal S$ and $(H_j, H_{j+1}) \in \ad$. Now, let $P=(v_1, v_2, \ldots, v_p)$ be a directed path where $v_i$ is a vertex corresponding to the graph $H_i$. We create a multidigraph from $P$ by repeating the following procedure. If there exist two vertices $v, v' \in V(P) \setminus \{v_1, v_p\}$ such that the corresponding graphs in ${\cal S}_H$ are isomorphic, then do a $\join(v, v')$ on $P$. Let $D'_H$ be the multidigraph obtained after the above procedure. Note that $D'_H$ has an Eulerian path from $v_1$ to $v_p$ of length $v_p$. Let $D_H$ be the multigraph obtained by $D'_H$ be removing multiple arcs between any pair of vertices by a single directed edge. For every $e \in E(D)$, let $X_H(e)$ be the number of times the arc $e$ appears in $D'_H$. Thus, the algorithm return will return \yes\ for $D_H$ and the corresponding $X_H(e)$ values for every $e \in E(D)$. 

For a given directed graph $D$, $|{\cal T}| = |V(D)|^{|V(D)-2|}$ and number of variables $X(e)$, for every $e \in E(D)$, is $\OO(|V(D)|^2)$. As ${\cal S} = 2^{\OO(k \cdot \cc(G))}$, number of different directed graphs $D$ is $2^{2^{\OO(k \cdot \cc(G))}}$ and $|V(D)| = 2^{\OO(k \cdot \cc(G))}$ for any directed graph $D$. So, by Theorem~\ref{the:runningTimeILP}, the \bGridEm\ problem is \FPT\ parameterized by $\cc(G) + k$.
\end{proof}

The following claim about {\sc 2-Strip Packing} will follow from the Theorem~\ref{thm:mccK}, as we prove below.

\stripPacking*

\begin{proof}
Without loss of generality, we can assume that there does not exist any input rectangle of size $1 \times t$, where $t \leq \ell$, as otherwise we can get an equivalent instance of {\sc 2-Strip Packing} by multiplying each of the dimensions of the input rectangles and of the strip by $2$. For each input rectangle, we create a $\len \times \bre$ rectangular grid graph, where $\len$ and $\bre$ are the dimensions of the input rectangle such that $\len \leq \bre$. Let $G$ be the disjoint union of the graphs corresponding to the input rectangles. Observe that $\cc(G) = \ell^2$, as the size of any graph corresponding to a input rectangle is at most $\ell^2$. So, by Theorem~\ref{thm:mccK}, {\sc 2-Strip Packing} is \FPT\ parameterized by $\ell+k$.
\end{proof}

Note that when $k$ and $r$ are unrestricted, we can embed each connected component individually in \FPT\ time (e.g., by using brute-force), so we directly get the following observation from the Theorem~\ref{thm:mccK}.

\mccUnrestricted*

Given a graph $G$, if $G$ is a ($k \times r$) grid graph then $\Delta(G) \leq 4$. So, by the Observation~\ref{obs:treedepthBounded} and the Theorem~\ref{thm:mccK}, we get the following corollary.

\treedepthK*

\section{Distance Approximation Parameter}\label{sec:distance}

In this section, we consider the distance approximation parameter and prove Theorems~\ref{thm:distanceHard}, ~\ref{thm:distanceFPT} and~\ref{thm:distanceTrees}. We remark that the embeddings that our algorithms compute satisfy the conditions of Definition \ref{def:Grid graph embedding}, in particular, the embeddings are planar (being grid graphs).

\subsection{\textsf{Para-NP}-hardness with Respect to $a_G$ on General Graphs}

We show a reduction from {\sc SAT} to \gridEm\ where if the output is a yes-instance, then the parameter $a_G$ of the output graph is upper bounded by a constant. In order to make the reduction clearer, instead of presenting a direct reduction from {\sc SAT} to \gridEm, we present a two-stage reduction. To this end, we define a simple problem called the {\sc Batteries} problem. We first give a reduction from {\sc SAT} to the {\sc Batteries} problem, and then we give a reduction from the {\sc Batteries} problem to \gridEm. 

We start with the description of the {\sc Batteries} problem. A {\em battery} has two sides, a {\em positive side} and a {\em negative side}, denoted by {\em +} and {\em -}, respectively. Each side of a battery has {\em voltage} of one or zero volts. Formally, a battery is represented by a boolean pair, defined as follows. See Figure \ref{fi:battery} for an illustration.

\begin{figure}[!t]
\centering
\includegraphics[width=0.2\textwidth, page=1]{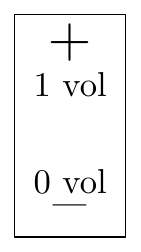}
\caption{A $(1,0)$ battery.}\label{fi:battery}
\end{figure}
             
\begin{definition}[{\bf Battery}] \label{def:battery}
A {\em battery} $B$ is a boolean pair $B=(x_1,x_2), x_1,x_2\in \{0,1\}$, where $x_1$ is the {\em voltage of the positive side} of $B$ (called the {\em positive voltage}) and $x_2$ is the {\em voltage of the negative side} of $B$ (called the {\em negative voltage}).  
\end{definition}

Intuitively, for two positive integers $r,c\in \mathbb{N}$, an {\em $(r,c)$ battery holder} is a ``device'' that holds $r\cdot c$ batteries in ``matrix-like'' cells. Batteries are laid vertically in the battery holder that, if a battery $B$ is laid in cell $(i,j)$ then there are two options: either its $+$ side is on top or its $-$ side is on top. For every $1\leq j \leq c$, the batteries that are laid in the $j$-th column of the battery holder are connected top to bottom. Moreover, for every $1\leq i \leq r$, there is a wire that connects all the top sides of the batteries that are laid in the $i$-th row of the battery holder. That wire transfers only the amounts of voltage that are on top (see Figure \ref{fi:batteriesHolder}). 

\begin{figure}[!t]
\centering
\includegraphics[width=0.3\textwidth, page=2]{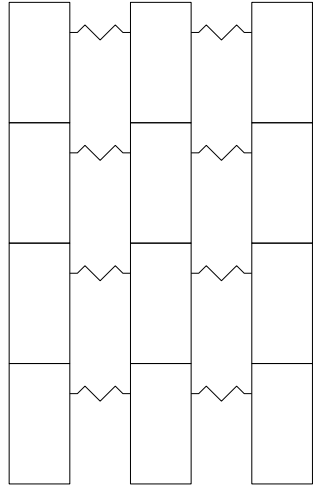}
\caption{A $(4,3)$ battery holder.}\label{fi:batteriesHolder}
\end{figure}

Given a set of batteries $\{ B_{i,j}~|~1\leq i\leq r, 1\leq j\leq c\}$, we would like to place all the batteries in an $(r,c)$ battery holder such that battery $B_{i,j}$ is placed in cell $(i,j)$. Therefore, for each battery $B_{i,j}$, we need to decide if we want to place it with its $+$ side on top or with its $-$ side on top when we place it in cell $(i,j)$. Formally, we describe our choice of the ``direction'' of the batteries, or the ``placement'', by a function $p:[c]\times [r]\to\ \{+,-\}$: if $p(i,j)=+$, then it means that battery $B_{i,j}$ is placed in cell $(i,j)$ with its $+$ side on top, and if $p(i,j)=-$, then it means that battery $B_{i,j}$ is placed in cell $(i,j)$ with its $-$ side on top. We define this function formally in the next definition.

\begin{definition}[{\bf $(r,c)$-Placement}] \label{def:direction}
Let $r,c\in \mathbb{N}$ be two positive integers. An {\em $(r,c)$-placement} is a function $p:[r]\times [c]\to\ \{+,-\}$.
\end{definition}             

There are some restrictions for the placements of the batteries. First, we want them to be placed {\em correctly}, in the sense that $+$ can only be connected to $-$ and vice versa, for every column of batteries. Now, assume that $p$ is correct. Then, for a column $1\leq j\leq c$ and for any two batteries $B_{i,j}$ and $B_{i+1,j}$ that are connected in that column for $1\leq i\leq r-1$, it follows that if the $+$ side of $B_{i+1,j}$ is on top then the $-$ side of $B_{i,j}$ is on bottom, and therefore the $+$ side of $B_{i,j}$ is on top. Similarly, if the $-$ side of $B_{i+1,j}$ is on top, then the $+$ side of $B_{i,j}$ is on bottom, and therefore the $-$ side of $B_{i,j}$ is on top. Thus, we get that $p(i,j)=p(i+1,j)$. Formally, this restriction is defined as follows.

\begin{definition}[{\bf Correct $(r,c)$-Placement}] \label{def:Correctdirection}
An $(r,c)$-placement $p:[r]\times [c]\to\ \{+,-\}$ is {\em correct} if for every $1\leq j\leq c$ and $1\leq i\leq r-1$, $p(i,j)=p(i+1,j)$.     
\end{definition} 

Another restriction is that we disallow having too much voltage transferred in the same row. So, for every row in the battery holder, we would like the sum of voltage transferred by the wire of that row to be at most $c-1$. For a set of batteries $\{ B_{i,j}~|~1\leq i\leq r, 1\leq j\leq c\}$ and an $(r,c)$-placement $p$, we denote the voltage transferred from battery $B_{i,j}$ placed in cell $(i,j)$ by $V_p(i,j)$. Therefore, it follows that if $B_{i,j}=(x_1^{(i,j)},x_2^{(i,j)})$, then if $p(i,j)= +$ then $V_p(i,j)=x_1^{(i,j)}$, and if $p(i,j)= -$ then $V_p(i,j)=x_2^{(i,j)}$. We define this restriction as follows.

\begin{definition}[{\bf Safe $(r,c)$-Placement}] \label{def:Safedirection}
Let $B=\{ B_{i,j}=(x_1^{(i,j)},x_2^{(i,j)})~|~1\leq i\leq r, 1\leq j\leq c\}$ be a set of batteries. An $(r,c)$-placement. $p:[r]\times [c]\to\ \{+,-\}$ is {\em safe with respect to $B$} if for every $1\leq i\leq r$ it follows that $\sum_{j=1}^{c}{V_p(i,j)}\leq c-1$. 
\end{definition} 

When $B$ is clear from the context, we refer to a safe placement with respect to $B$ simply as a safe placement.

We define the {\sc Batteries} problem in the next definition.

\begin{definition}[{\bf The Batteries Problem}] \label{def:BatProb}
Given a set of batteries $B=\{ B_{i,j}=(x_1^{(i,j)},x_2^{(i,j)})~|~1\leq i\leq r, 1\leq j\leq c\}$, does there exist an $(r,c)$-placement $p$ that is correct and safe?
\end{definition} 

\medskip \noindent{\bf Reduction from SAT to Batteries.} We now give a polynomial reduction, called $\mathsf{reduce}_1$, from {\sc SAT} to the {\sc Batteries} problem.

Given an instance $\pi$ of {\sc SAT} with $n$ variables $x_1,\ldots, x_n$ and $m$ clauses $\mu_1,\ldots,\mu_m$, $\mathsf{reduce}_1(\pi)= B_\pi$ where $B_\pi$ is a set of batteries $B_\pi=\{ B_{i,j}=(x_1^{(i,j)},x_2^{(i,j)})~|~1\leq i\leq m, 1\leq j\leq n\}$ defined as follows. For every $1\leq i\leq m, 1\leq j\leq n$, we set $x_1^{(i,j)}=0$ if the literal $x_j$ appears in $\mu_i$, and we set $x_1^{(i,j)}=1$ otherwise. Similarly, for every $1\leq i\leq m, 1\leq j\leq n$, we set $x_2^{(i,j)}=0$ if the literal $\bar{x}_j$ appears in $\mu_i$, and we set $x_2^{(i,j)}=1$ otherwise.

It is clear that this reduction is polynomial, as we state in the next observation.

\begin{observation} \label{obs:polytime}
Let $\pi$ be an instance of {\sc SAT}. Then the function $\mathsf{reduce}_1$ on $\pi$ can be computed in time polynomial in $|\pi|$.
\end{observation}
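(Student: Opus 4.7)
The plan is to directly analyze the computational cost of constructing $B_\pi$ from $\pi$. The output $B_\pi$ is a set of $mn$ batteries $B_{i,j}$, each of which is simply a pair of bits $(x_1^{(i,j)}, x_2^{(i,j)})$. So it suffices to bound the time needed to determine these two bits for each pair $(i,j)$.

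First, I would note that given an index pair $(i,j)$, deciding $x_1^{(i,j)}$ only requires scanning the clause $\mu_i$ for the literal $x_j$, and deciding $x_2^{(i,j)}$ only requires scanning $\mu_i$ for the literal $\bar{x}_j$. Each such scan takes time $O(|\mu_i|)$, which is at most $O(|\pi|)$.

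Summing over all $mn$ batteries, the total running time of $\mathsf{reduce}_1$ is bounded by $O(mn \cdot |\pi|)$, and since $m,n \leq |\pi|$, this is polynomial in $|\pi|$. The output size is also clearly polynomial, as it consists of $2mn$ bits, so there is no issue with the size of $B_\pi$ exceeding a polynomial bound. The main (and only) step is this straightforward bookkeeping argument; there is no combinatorial obstacle to overcome, since the construction is explicit and local in $(i,j)$.
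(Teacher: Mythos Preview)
Your proposal is correct and matches the paper's approach: the paper does not give a detailed proof, simply remarking that the reduction is clearly polynomial, and your argument supplies exactly the routine bookkeeping that justifies this claim.
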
   

Towards the proof of the correctness of this reduction, we have the next simple lemma. We show that if an $(r,c)$-placement $p$ is correct, then all the batteries in every column are placed in the same direction. Note that the converse side, that is, that if all the batteries in every column are placed in the same direction, then $p$ is correct, is trivial.  

\begin{lemma}\label{lem:oneDirection}
An $(r,c)$-placement $p$ is correct if and only if for every $1\leq j\leq c$ there exists $p_j\in \{+,-\}$ such that for every $1\leq i\leq r$ it follows that $p(i,j)=p_j$.
\end{lemma}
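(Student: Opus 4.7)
The plan is to prove the two implications directly from the definitions, relying only on Definition~\ref{def:Correctdirection}. The backward direction is essentially immediate, whereas the forward direction requires a short induction on the row index within a fixed column.

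First I would dispatch the $(\Leftarrow)$ direction. Suppose that for every $1 \leq j \leq c$ there exists $p_j \in \{+,-\}$ such that $p(i,j) = p_j$ for every $1 \leq i \leq r$. Fix any $1 \leq j \leq c$ and any $1 \leq i \leq r-1$. Then $p(i,j) = p_j = p(i+1,j)$, so the condition of Definition~\ref{def:Correctdirection} holds. Hence $p$ is correct.

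For the $(\Rightarrow)$ direction, assume $p$ is correct. Fix $1 \leq j \leq c$ and define $p_j := p(1,j) \in \{+,-\}$. I would then prove by induction on $i \in [r]$ that $p(i,j) = p_j$. The base case $i = 1$ holds by definition of $p_j$. For the inductive step, suppose $p(i,j) = p_j$ for some $1 \leq i \leq r - 1$. Since $p$ is correct, Definition~\ref{def:Correctdirection} gives $p(i,j) = p(i+1,j)$, so $p(i+1,j) = p_j$, completing the induction. Since $j$ was arbitrary, such a $p_j$ exists for every column.

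There is no genuine obstacle here; the statement merely unpacks the ``local'' condition in Definition~\ref{def:Correctdirection} into its equivalent ``global'' form on each column, and the induction on the column index is the only non-trivial step. This equivalence will later be convenient because it allows us to speak of a single direction $p_j$ per column when analyzing correct placements, rather than reasoning about consecutive pairs.
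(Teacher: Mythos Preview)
Your proof is correct and follows essentially the same approach as the paper: for the forward direction you set $p_j := p(1,j)$ and induct on the row index using Definition~\ref{def:Correctdirection}, while the backward direction is immediate. The paper's own proof is identical in structure, merely declaring the backward direction ``trivial'' rather than writing it out.
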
    

\begin{proof}
Let $p$ be a correct $(r,c)$-placement. Let $1\leq j\leq c$. We set $p_j=p(1,j)$. We show by induction that for every $1\leq i\leq r$ it follows that $p(i,j)=p_j$. The basic case is trivial since $p_j=p(1,j)$. Let $1< i\leq r$. By the inductive hypothesis, it follows that $p(i-1,j)=p_j$. Since $p$ is correct it follows that $p(i-1,j)=p(i,j)$. Therefore, we get that $p(i,j)=p_j$. The other direction of the lemma is trivial.        
\end{proof} 

From Lemma \ref{lem:oneDirection} we get that if $p$ is correct, then for every column of the battery holder, we have that all the batteries are placed with either their $+$ on top or their $-$ on top. For every $1\leq j\leq c$, we refer to $p_j$ from Lemma \ref{lem:oneDirection} as the {\em placement of column $j$}. The placement of every column $1\leq j\leq n$ corresponds to a placement for $x_j$, that is, $p_j=+$ corresponds to $x_j=T$ and $p_j=-$ corresponds to $x_j=F$. Every row $1\leq i\leq m$ of batteries in the battery holder corresponds to clause $\mu_i$ from $\pi$. In order to get a safe placement, we need at least one battery with zero voltage on its top side. This corresponds to having at least one literal that is satisfied in each clause of $\pi$. See Figure \ref{fi:red1}.

\begin{figure}[!t]
\centering
\includegraphics[width=0.35\textwidth, page=3]{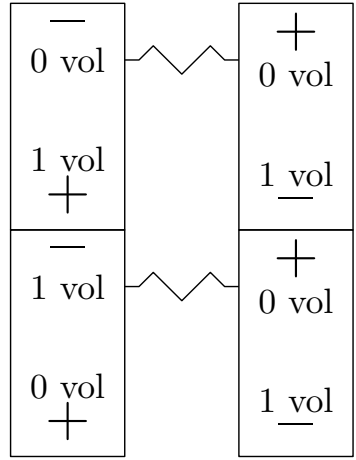}
\caption{Example of a safe and correct $(2,2)$-placement for $\mathsf{reduce}_1(\pi)=B_\pi$ where $\pi = (\bar{x}_1 \vee x_2) \wedge (x_1 \vee x_2)$. This placement corresponds to $x_1=F, x_2=T$.}\label{fi:red1}
\end{figure}

In the next lemma we prove the correctness of the reduction.

\begin{lemma} \label{lem:redproof}
Let $\pi$ be an instance of {\sc SAT} with $n$ variables $x_1,\ldots, x_n$ and $m$ clauses $\mu_1,\ldots,\mu_m$. Then, $\pi$ is a yes-instance of {\sc SAT} if and only if $\mathsf{reduce}_1(\pi)= B_\pi$ is a yes-instance of the {\sc Batteries} problem. 
\end{lemma}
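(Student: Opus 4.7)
The plan is to establish a direct bijective-style correspondence between satisfying assignments of $\pi$ and correct, safe $(m,n)$-placements for $B_\pi$, using Lemma \ref{lem:oneDirection} as the bridge. The encoding I would adopt is the natural one: column $j$ with direction $+$ represents $x_j=T$, while direction $-$ represents $x_j=F$. Under this identification, correctness of the placement corresponds exactly to a well-defined truth assignment (one direction per column), and I will argue that safety of row $i$ corresponds exactly to clause $\mu_i$ being satisfied.

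Before splitting into the two directions, I would first unpack what each voltage $V_p(i,j)$ encodes. If $p_j=+$ (i.e.\ $x_j=T$), then $V_p(i,j)=x_1^{(i,j)}$, which equals $0$ precisely when literal $x_j$ appears in $\mu_i$, that is, precisely when the chosen value of $x_j$ satisfies $\mu_i$. The mirror statement holds when $p_j=-$: then $V_p(i,j)=x_2^{(i,j)}=0$ iff $\bar{x}_j$ appears in $\mu_i$, i.e.\ iff $x_j=F$ satisfies $\mu_i$. Hence $V_p(i,j)=0$ iff the literal of $x_j$ selected by $p_j$ appears in (and therefore satisfies) $\mu_i$.

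For the forward direction, given a satisfying assignment $\alpha$, I would define $p_j\in\{+,-\}$ from $\alpha$ and put $p(i,j)=p_j$ for all $i$; correctness is immediate from Lemma \ref{lem:oneDirection}, and safety of each row $i$ follows because $\alpha$ satisfies $\mu_i$, so by the equivalence above some $V_p(i,j)=0$, while every other summand lies in $\{0,1\}$, giving $\sum_{j=1}^n V_p(i,j)\leq n-1$. Conversely, given a correct and safe $p$, I would invoke Lemma \ref{lem:oneDirection} to obtain the per-column directions $p_1,\dots,p_n$, define $\alpha$ accordingly, and for each clause $\mu_i$ use safety, i.e.\ $\sum_j V_p(i,j)\leq n-1$, together with $V_p(i,j)\in\{0,1\}$, to find some $j$ with $V_p(i,j)=0$; the equivalence above then yields a literal of $\mu_i$ satisfied by $\alpha$.

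The only nontrivial step is the tight equivalence ``at least one zero summand in $\sum_j V_p(i,j)$'' $\Leftrightarrow$ ``at least one satisfied literal in $\mu_i$,'' but I do not expect a real obstacle here because the reduction $\mathsf{reduce}_1$ was engineered so that the identification $V_p(i,j)=0 \Leftrightarrow (\text{the literal chosen by }p_j\text{ appears in }\mu_i)$ holds on the nose. The main care needed is to record that each $V_p(i,j)\in\{0,1\}$, so the numerical threshold $\leq n-1$ collapses to the purely existential statement needed to read off a satisfying literal.
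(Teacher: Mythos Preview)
Your proposal is correct and follows essentially the same approach as the paper: translate a satisfying assignment into a constant-per-column placement (correctness via Lemma~\ref{lem:oneDirection}) and use the key observation $V_p(i,j)=0 \Leftrightarrow$ the literal of $x_j$ selected by $p_j$ appears in $\mu_i$ to equate safety of row $i$ with satisfaction of $\mu_i$. If anything, your explicit upfront statement of this equivalence and the remark that $V_p(i,j)\in\{0,1\}$ makes the threshold argument slightly cleaner than the paper's presentation.
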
             
 
\begin{proof}
In the forward direction, assume that $\pi$ is a yes-instance of {\sc SAT}. Let $s:\{x_1,\ldots,x_n\}\to\ \{T,F\}$ be a satisfying assignment for $\pi$. We define an $(m,n)$-placement $p$ as follows. For every $1\leq i\leq m$ and $1\leq j\leq n$, we set $p(i,j)=+$ if $s(x_j)=T$; otherwise, we set $p(i,j)=-$. We show that $p$ is correct and safe. Notice that for every $1\leq j\leq n$ and $1\leq i,i'\leq m$, it follows that $p(i,j)=p(i',j)$. Therefore, by Lemma \ref{lem:oneDirection} we get that $p$ is correct. Now, let $1\leq i\leq m$. As $s$ is a satisfying assignment for $\pi$, there exists $1\leq j\leq n$ such that $x_j$ is satisfied in $\mu_j$. Therefore, if $s(x_j)=T$, then the literal $x_j$ appears in $\mu_j$, and if $s(x_j)=F$, then the literal $\bar{x}_j$ appears in $\mu_j$. If $s(x_j)=T$, then $p(i,j)=+$ and since the literal $x_j$ appears in $\mu_j$, by the definition of $B_\pi$, we get that $x_1^{(i,j)}=0$. Therefore it follows that $V_p(i,j)=0$, so $\sum_{k=1}^{n}{V_p(i,k)}= \sum_{k=1,k\neq j}^{n}{V_p(i,k)}+V_p(i,j)=\sum_{k=1,k\neq j}^{n}{V_p(i,k)}\leq n-1$. As the choice of $i$ was arbitrary, we get that $p$ is safe. So, we found a correct and safe placement for $B_\pi$, and therefore $B_\pi$ is a yes-instance of the {\sc Batteries} problem.

In the reverse direction, assume that $B_\pi=\{ B_{i,j}=(x_1^{(i,j)},x_2^{(i,j)})~|~1\leq i\leq m, 1\leq j\leq n\}$ is a yes-instance of the {\sc Batteries} problem. Let $p$ be an $(m,n)$-placement such that $p$ is correct and safe. From Lemma \ref{lem:oneDirection}, we get that for every $1\leq j\leq n$ there exists $p_j\in \{+,-\}$ such that for every $1\leq i\leq m$ it follows that $p(i,j)=p_j$. We define an assignment $s:\{x_1,\ldots,x_n\}\to\ \{T,F\}$ as follows. For $1\leq j\leq n$ we set $s(x_j)=T$ if $p_j=+$; otherwise, we set $s(x_j)=F$. We show that $s$ is a satisfying assignment for $\pi$. Let $1\leq i\leq m$. Since $p$ is safe, we get that $\sum_{k=1}^{n}{V_p(i,k)}\leq n-1$. Therefore, there exists $1\leq j\leq n$ such that $V_p(i,j)=0$. If $p(i,j)=p_j=+$, then $x_1^{(i,j)}=0$ and by the definition of $B_\pi$ we get that $x_j$ appears in $\mu_i$. By the definition of $s$, since $p_j=+$, we get that $s(x_j)=T$, therefore $\mu_i$ is satisfied. Similarly, if $p(i,j)=p_j=-$, then $x_2^{(i,j)}=0$ and by the definition of $B_\pi$ we get that $\bar{x}_j$ appears in $\mu_i$. By the definition of $s$, since $p_j=-$, we get that $s(x_j)=F$, therefore $\mu_i$ is satisfied. As $s$ is a satisfying assignment for $\pi$, $\pi$ is a yes-instance of {\sc SAT}.                     
\end{proof}

Combining Lemma \ref{lem:redproof} and Observation \ref{obs:polytime}, we conclude the existence of a polynomial reduction from {\sc SAT} to the {\sc Batteries} problem in the next lemma.

\begin{corollary}\label{BatisHard}
There exists a polynomial reduction from {\sc SAT} to the {\sc Batteries} problem. 
\end{corollary}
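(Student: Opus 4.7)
The plan is to simply package the two preceding results into the standard definition of a polynomial reduction. The candidate reduction is the map $\mathsf{reduce}_1$ already constructed explicitly in the text: given an instance $\pi$ of {\sc SAT} on variables $x_1,\ldots,x_n$ and clauses $\mu_1,\ldots,\mu_m$, the output is the set of batteries $B_\pi=\{B_{i,j}=(x_1^{(i,j)},x_2^{(i,j)}) \mid 1\leq i\leq m,\ 1\leq j\leq n\}$ with entries determined by which literals appear in which clauses.

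The first of the two ingredients needed is that $\mathsf{reduce}_1$ is efficiently computable, and this is exactly the content of Observation \ref{obs:polytime}: the construction just scans the clauses and for every pair $(i,j)$ writes down two bits, which clearly takes time polynomial in $|\pi|$. The second ingredient is logical equivalence of the two instances, and this is exactly Lemma \ref{lem:redproof}: $\pi \in {\sc SAT}$ iff $B_\pi$ admits an $(m,n)$-placement that is both correct and safe.

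Combining these two facts gives, by definition, a polynomial many-one reduction from {\sc SAT} to the {\sc Batteries} problem, which is the statement of the corollary. There is no technical obstacle here at all; the work has been done in Observation \ref{obs:polytime} and Lemma \ref{lem:redproof}, and the corollary merely records that these two statements together yield what is required. Consequently, since {\sc SAT} is \NP-hard, the {\sc Batteries} problem is \NP-hard as well, which is the conclusion that will be used by the second stage of the overall reduction (from {\sc Batteries} to \gridEm) in the subsequent part of the paper.
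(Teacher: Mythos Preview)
Your proposal is correct and matches the paper's approach exactly: the corollary is presented there as an immediate consequence of combining Observation~\ref{obs:polytime} (polynomial-time computability of $\mathsf{reduce}_1$) with Lemma~\ref{lem:redproof} (equivalence of the instances), with no additional argument supplied. Your write-up simply makes this packaging explicit.
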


\medskip \noindent{\bf Reduction from Batteries to Grid Embedding.} We now give a polynomial reduction from the {\sc Batteries} problem to \gridEm\ where the output graph satisfies that $a_G$ is bounded by a fixed constant (if it is a yes-instance). \footnote{So, if $a_G$ is not bounded by that constant, we can output a trivial no-instance where it is bounded by that constant and hence ensure that $a_G$ is always bounded by that constant.}

For this purpose, we present the {\em battery gadget} (see Figures \ref{fi:batgat} and \ref{fi:batgatexem}). The battery gadget is composed of a $13\times 9$ rectangle, which corresponds to a cell in the battery holder. It has a {\em positive side} and a {\em negative side}, which correspond to battery sides, and two {\em wire vertices} that correspond to the wire that transfers voltage in the battery holder. In addition, there are six {\em synchronization edges} attached to the top and bottom sides of the rectangle. As we will see, the synchronization edges make every two vertically adjacent battery gadget ``synchronized'', similarly to the $+$ and $-$ sides of a battery. 
On the top and bottom sides of the gadget we have the option to add an extra edge, called the {\em positive voltage} and the {\em negative voltage}, respectively; see Figure \ref{fi:batgat}. These edges correspond to the voltage of each side of the battery, that is, we add the positive (negative) voltage if and only if the voltage of the positive (negative) side of the battery is one.  We denote the battery gadget by $H=(x_1,x_2), x_1,x_2\in \{0,1\}$, where $x_1=1$ if and only if we added the positive voltage, and $x_2=1$ if and only if we added the negative voltage. Observe that $H=(x_1,x_2)$ corresponds to the battery $B=(x_1,x_2)$, as exemplified in Figure \ref{fi:batgatexem}. 

\begin{figure}[!t]
\centering
\includegraphics[width=0.5\textwidth, page=4]{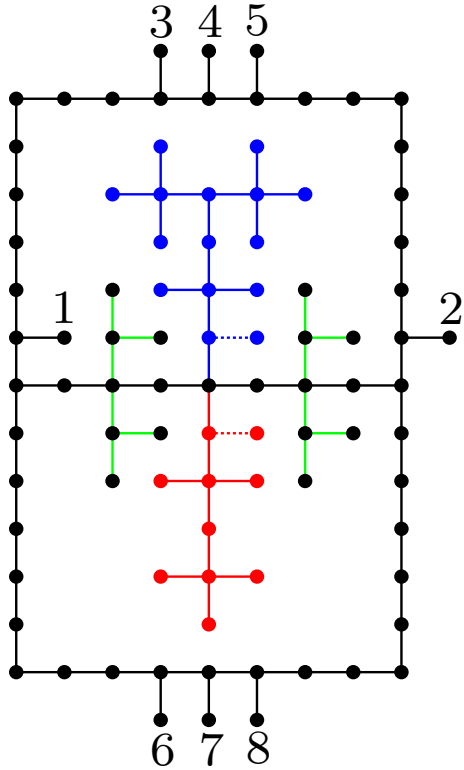}
\caption{The battery gadget. The positive side is in blue and the negative side is in red. The wire is green. The positive voltage is in dashed blue and the negative voltage is in dashed red.}\label{fi:batgat}
\end{figure}  

\begin{figure}[!t]
\centering
\includegraphics[width=0.5\textwidth, page=5]{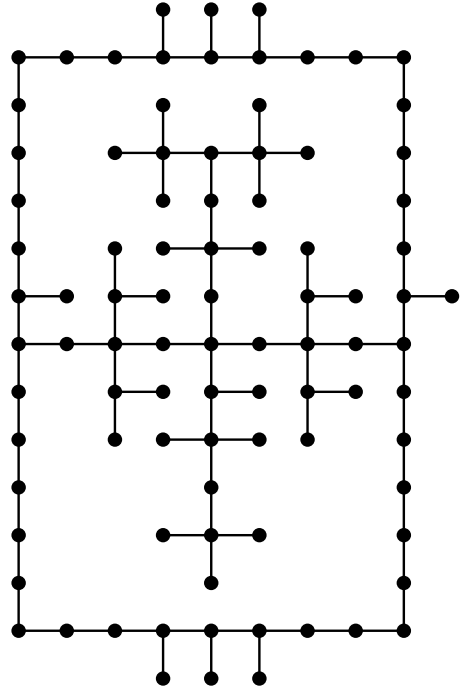}
\caption{A $(1,0)$ battery gadget.}\label{fi:batgatexem}
\end{figure}

Next, we define a graph called {\em $m\times n$-grid frame} (see Figure \ref{fi:gridFrame}).

\begin{definition}[{\bf An $m\times n$-Grid Frame}] \label{def:gridFrame}
Let $m,n\in \mathbb{N}$. An {\em $m\times n$-grid frame} is the graph $G_{m,n}=(V_{m,n},E_{m,n})$ where $V_{m,n}=\{(k,\ell)~|~0\leq k\leq 2, 0\leq \ell \leq 8\cdot n+4\}\cup \{(k,\ell)~|~12\cdot m+2 \leq k\leq 12\cdot m+4, 0\leq \ell \leq 8\cdot n+4\}\cup \{(k,\ell)~|~0 \leq k\leq 12\cdot m+4, 0\leq \ell \leq 2\} \cup \{(k,\ell)~|~0 \leq k\leq 12\cdot m+4, 8\cdot n+2\leq \ell \leq 8\cdot n+4\}$ and $E_{m,n}=\{ \{(r,c),(r',c')\}~|~(r,c),(r',c')\in V_{m,n}$ and $|r-r'|+|c-c'|=1\}$.      
\end{definition}

 \begin{figure}[!t]
\centering
\includegraphics[width=0.5\textwidth, page=6]{figures/distanceNew.pdf}
\caption{A $(1\times 2)$-grid frame.}\label{fi:gridFrame}
\end{figure}

Let $m,n\in \mathbb{N}$ be two positive integers. Given a set of battery gadgets $H=\{H_{i,j}=(x_1^{(i,j)},x_2^{(i,j)})~|~1\leq i\leq m, 1\leq j\leq n\}$, we denote by $G_H$ (when $H$ is clear from the context we refer to $G_H$ simply as $G$) the following graph. The graph $G$ is composed of $m\cdot n$ battery gadgets that are ordered in a ``matrix shape'', i.e. the battery gadget $H_{i,j}$ is located at the $i$-th ``row'' and the $j$-th ``column'' (see Figures \ref{fi:redstru} and \ref{fi:redexa}). For every $1\leq i<m$ the battery gadgets $H_{i,j}$ and $H_{i+1,j}$ share one side of their rectangles, that is, the bottom side of the rectangle of $H_{i,j}$ is the top side of the rectangle of $H_{i+1,j}$. Similarly, for every $1\leq j<n$ the battery gadgets $H_{i,j}$ and $H_{i,j+1}$ share one side of their rectangles, that is, the right side of the rectangle of $H_{i,j}$ is the left side of the rectangle of $H_{i,j+1}$. In addition, the ``matrix'' of battery gadgets is encircled by an $m\times n$-grid frame (see Figures \ref{fi:redstru} and \ref{fi:redexa}). Lastly, we delete some of the top and bottom synchronization edges we do not need, i.e. those which are not attached to an side that is shared by two rectangles. More precisely, for every $1\leq j\leq n$ we delete the three edges attached to the top side of the rectangle of $B_{1,j}$ and the three edges attached to the bottom side of the rectangle of  $B_{m,j}$. Observe that each synchronization vertex is common to two battery gadgets. That is, for every $1\leq i<m$ and $1\leq j\leq n$, the synchronization vertices $3,4,5$ (see Figure \ref{fi:batgat})of the battery gadget $H_{i+1,j}$ are the synchronization vertices $6,7,8$ of $H_{i,j}$, respectively. Therefore, when we consider the graph $G_H$ and not only an arbitrary battery gadget, we denote these synchronization vertices by $S_{i,j}(1),S_{i,j}(2),S_{i,j}(3)$ for every $1\leq i<m$ and $1\leq j\leq n$. Similarly, when we consider the graph $G_H$ and not only an arbitrary battery gadget, we denote the wire vertices $1$ and $2$ by $W(i,j)$ for every $1\leq i\leq m$ and $0\leq j\leq n$ (see Figure \ref{fi:redstrunot}).   

\begin{figure}[!t]
\centering
\includegraphics[width=0.5\textwidth, page=15]{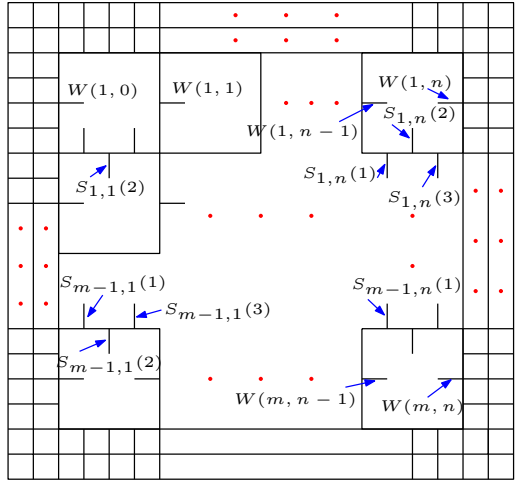}
\caption{Schematic illustration of the graph $G_B$ with notations for the wire vertices and the synchronization vertices.}\label{fi:redstrunot}
\end{figure}

\begin{figure}[!t]
\centering
\includegraphics[width=0.5\textwidth, page=7]{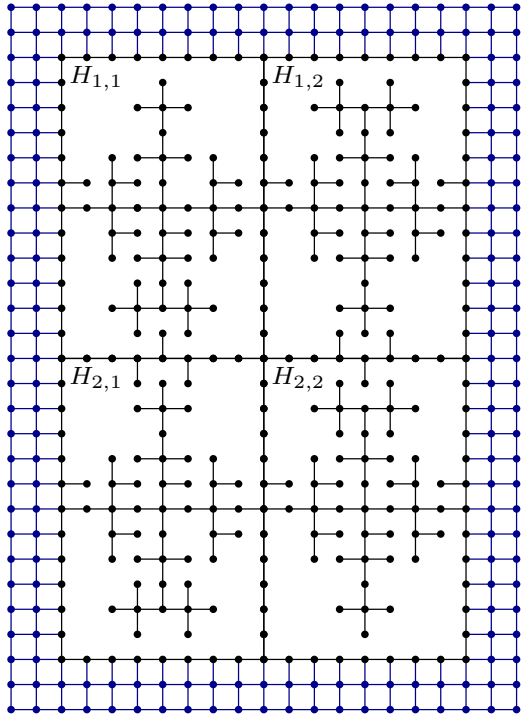}
\caption{Construction of the graph $G_B$ where $B=\{ B_{1,1}=(1,0),B_{1,2}=(0,1),B_{2,1}=(0,1),B_{2,2}=(0,1)\}$.}\label{fi:redexa}
\end{figure}      

Given an instance of the {\sc Batteries} problem $B=\{B_{i,j}=(x_1^{(i,j)},x_2^{(i,j)})~|~1\leq i\leq m, 1\leq j\leq n\}$, we denote by $\mathcal{H}_B$ the {\em corresponding set of battery gadgets of $B$}, that is $\mathcal{H}_B=\{H_{i,j}=(x_1^{(i,j)},x_2^{(i,j)})~|~1\leq i\leq m, 1\leq j\leq n\}$.  We are ready to present our reduction function $\mathsf{reduce}_2$: $\mathsf{reduce}_2(B)=G_{\mathcal{H}_B}$. See Figures \ref{fi:redstru} and \ref{fi:redexa} for illustration. To simplify the notation, we denote the graph $G_{\mathcal{H}_B}$ by $G_B$.   

\begin{figure}[!t]
\centering
\includegraphics[width=0.5\textwidth, page=8]{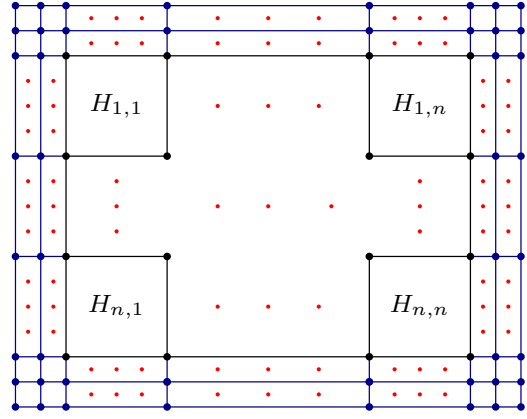}
\caption{Schematic illustration of the graph $G_B$. The grid frame is shown in blue.}\label{fi:redstru}
\end{figure}


It is easy to see that this reduction is polynomial, as we state in the next observation.

\begin{observation} \label{obs:polytimered2}
Let $B$ be an instance of the {\sc Batteries} problem. Then, the function $\mathsf{reduce}_2$ works in time polynomial in $|B|$.
\end{observation}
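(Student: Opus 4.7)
The plan is simply to bound the total size of $G_B$ and argue that each step of the construction is cheap. Writing $|B| = \Theta(mn)$ where $m$ is the number of rows and $n$ the number of columns of batteries, I would first verify that each individual battery gadget $H_{i,j}$ can be built in constant time: the underlying $13 \times 9$ rectangular grid graph has a constant number of vertices and edges, the two wire vertices and the (at most) six synchronization edges are fixed, and the only data-dependent choice — whether to add the positive voltage edge and/or the negative voltage edge on top and bottom — is determined directly by reading the two bits $(x_1^{(i,j)}, x_2^{(i,j)})$ of $B_{i,j}$. So constructing all $mn$ gadgets takes $O(mn)$ time.

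Next I would account for the gluing: composing the gadgets into the matrix shape identifies a constant number of vertices per pair of horizontally or vertically adjacent rectangles (the shared side of the rectangles, and for vertical neighbours the three synchronization vertices $S_{i,j}(1), S_{i,j}(2), S_{i,j}(3)$, as well as the wire vertices $W(i,j)$ shared across the horizontal seam). This is $O(1)$ work per adjacency, hence $O(mn)$ overall. Removing the $3n$ unused top synchronization edges of the first row and the $3n$ unused bottom synchronization edges of the last row is likewise $O(n)$. Finally, the $m \times n$-grid frame surrounding the matrix of gadgets has $O(mn)$ vertices and $O(mn)$ edges by Definition~\ref{def:gridFrame}, and can be built directly from the explicit coordinates given there.

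Summing up, both $|V(G_B)|$ and $|E(G_B)|$ are $O(mn)$, and the construction performs only $O(mn)$ work of a bookkeeping nature, so $\mathsf{reduce}_2(B)$ is computed in time polynomial in $|B|$. There is no real obstacle to overcome: the only thing requiring a little care is to make explicit the identification of shared vertices (synchronization vertices, wire vertices, and the interface between the gadget matrix and the grid frame) so that the resulting object is a well-defined graph rather than a multigraph, but since all coordinates are given explicitly in Definitions of the gadget and of the grid frame, this identification is immediate.
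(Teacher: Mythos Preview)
Your argument is correct and is exactly the kind of reasoning the paper has in mind; in fact the paper does not give a proof at all, merely stating the observation after remarking ``It is easy to see that this reduction is polynomial.'' Your elaboration (constant-size gadgets, $O(mn)$ gluing work, and an $O(m+n) \subseteq O(mn)$ grid frame) is the natural way to make this explicit.
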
 

Now we turn to prove the correctness of the reduction. First, we distinguish between some ``parts'' of the graph that have a fixed embedding to other parts that might have more than one way to be embedded. First, we show that the embedding of the $m\times n$-grid frame is ``almost fixed''. Formally, in the following lemma we show that the embedding is fixed once we choose an embedding of three vertices. Intuitively, the ``shape'' of the $m\times n$-grid frame is fixed in every embedding, but we may ``rotate'' the frame $90$, $180$ or $270$ degrees, or ``move'' it to another ``location'', but this does not matter for our purposes. 

\begin{lemma}\label{lem:fixedFrame}
Let $m,n\in \mathbb{N}$ be two positive integers. Let $f$ be a grid graph embedding of an $m\times n$-grid frame $G_{m,n}=(V_{m,n},E_{m,n})$. If $f((0,0))=(0,0)$ $f((1,0))=(1,0)$ and $f((1,1))=(1,1)$, then for every $(r,c)\in V_{m,n}$ it follows that $f((r,c))=(r,c)$.  
\end{lemma}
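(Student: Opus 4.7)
The plan is to first extend the three fixed positions into a fully correctly embedded $2 \times 2$ block at the top-left corner of the frame, and then to propagate correctness of $f$ outward along the four bands of the frame, using the edge-preservation condition of Definition \ref{def:Grid graph embedding} and the injectivity of $f$ at every step.

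First I would show that $f((0,1)) = (0,1)$. The vertex $(0,1) \in V_{m,n}$ is adjacent in $G_{m,n}$ to both $(0,0)$ and $(1,1)$, so its image must lie at grid-distance $1$ from both $f((0,0))=(0,0)$ and $f((1,1))=(1,1)$. The only two grid points satisfying both constraints are $(0,1)$ and $(1,0)$, and $(1,0)$ is already occupied as the image of $(1,0)$. Injectivity therefore forces $f((0,1)) = (0,1)$, yielding a seed $2 \times 2$ block of vertices embedded at their natural positions.

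For the propagation, the key local observation I would record is the following: if two horizontally (respectively, vertically) adjacent frame vertices $u, v$ have already been embedded at their natural positions, and $u', v'$ are the two frame vertices forming with $u, v$ a $2 \times 2$ block of vertices of $V_{m,n}$, then the three distance-$1$ constraints $d_f(u,u') = d_f(v,v') = d_f(u',v') = 1$, together with injectivity (which rules out the positions already used by $f$ at previously embedded vertices), admit only the solution $f(u') = u'$ and $f(v') = v'$. Iterating this step, one extends the correctly embedded region one column at a time rightward along the top band, then one row at a time downward along the left band, and analogously for the bottom and right bands.

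The main obstacle is the bookkeeping around the four corners of the frame, where a horizontal band meets a vertical band. Because the frame has thickness three, each corner is a solid $3 \times 3$ block of vertices of $V_{m,n}$, which gives the local propagation step sufficient neighboring embedded vertices to turn around the corner and continue into the adjacent band. A straightforward induction (for instance, on the $L^1$-distance from $(0,0)$ among vertices of $V_{m,n}$) then completes the proof, giving $f((r,c)) = (r,c)$ for every $(r,c) \in V_{m,n}$.
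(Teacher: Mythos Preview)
Your approach is essentially the same as the paper's: both arguments establish the seed block at the corner and then propagate along each of the four bands using adjacency constraints plus injectivity, handling the corners by symmetry. The paper organizes the induction slightly differently---it fixes an entire column of the band (all three rows $0,1,2$) at each step, first pinning the middle-row vertex $(1,j)$ (three of whose four neighbors are already placed) and then $(0,j)$ and $(2,j)$ via two adjacencies each---whereas you phrase the step in terms of $2\times 2$ blocks.

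The one point you should make explicit is how your $2\times 2$ step covers the third row of each band. The top band has rows $0,1,2$, but a single application of your local observation advances only two rows. Concretely, after fixing rows $0,1$ in column $j$, you still need to place $(2,j)$; this follows from its adjacency to both $(1,j)$ and $(2,j-1)$ (or by a second $2\times 2$ step on rows $1,2$), but your write-up does not say so. Once that is added, your argument and the paper's are interchangeable.
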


\begin{proof}
First we prove the lemma for the ``top edge'' of the $m\times n$-grid frame, that is $\{(k,\ell)~|~0\leq k\leq 2, 0\leq \ell \leq 8\cdot n+4\}\subseteq V_{m,n}$. We show by induction on $j$ that for every $1\leq j \leq 8\cdot n+4$ it follows that: for every $0\leq \ell \leq j$ and $0\leq k\leq 2$, we have that $f((\ell,k))=(\ell,k)$.

The base case is where $j=1$. Observe that $\{(0,1),(1,1)\},\{(0,1),(0,0)\} \in E_{m,n}$, therefore $d_f((0,1),(1,1))=|\fr(0,1)-\fr(1,1)|+|\fc(0,1)-\fc(1,1)|=|\fr(0,1)-1|+|\fr(0,1)-1|=1$, and also $d_f((0,1),(0,0))=|\fr(0,1)-\fr(0,0)|+|\fc(0,1)-\fc(1,0)|=|\fr(0,1)-0|+|\fr(0,1)-0|=1$. Therefore, we get that $\fr(0,1)=0$ and $\fc(0,1)=1$, so $f((0,1))=(0,1)$. Now, since $\{(2,0),(1,0)\},\{(2,0),(2,1)\},\{(2,1),(1,1)\}\in E_{m,n}$ and since $f((0,0))=(0,0)$ and $f$ is an injection, then it must be that $f((2,0))=(2,0)$ and $f((2,1))=(2,1)$. This proves the base case.

Now, let $1< j \leq 8\cdot n+4$. From the inductive hypothesis it follows that for every $0\leq \ell \leq j-1$ and $0\leq k\leq 2$, we get that $f((\ell,k))=(\ell,k)$. Since $\{(1,j),(1,j-1)\}\in E_{m,n}$ and $f((1,j-1))=(1,j-1)$ then it follows that $f((1,j))=(1,j)$ or $f((1,j))=(0,j-1)$ or $f((1,j))=(2,j-1)$ or $f((1,j))=(1,j-2)$. From the inductive hypothesis we get that $f((0,j-1))=(0,j-1)$, $f((1,j))=(1,j)$ and $f((1,j))=(1,j)$. Therefore, since $f$ is an injection, we get that $f((1,j))=(1,j)$. Now, since $\{(0,j),(1,j)\},\{(0,j),(0,j-1)\} \in E_{m,n}$ and $f((1,j))=(1,j),f((0,j-1))=(0,j-1)$, then we get that  $f((0,j))=(0,j)$. Similarly, since $\{(2,j),(1,j)\},\{(2,j),(2,j-1)\} \in E_{m,n}$ and $f((1,j))=(1,j),f((2,j-1))=(2,j-1)$, then we get that $f((2,j))=(2,j)$, and thus we proved the claim for the ``top edge'' of the $m\times n$-grid frame. Observe that, having completed the proof for the ``top side'', the symetric arguments hold also for the ``left side'' and the ``right side'' of the $m\times n$-grid frame, that is $\{(k,\ell)~|~0 \leq k\leq 12\cdot m+4, 0\leq \ell \leq 2\}\cup \{(k,\ell)~|~0 \leq k\leq 12\cdot m+4, 8\cdot n+2\leq \ell \leq 8\cdot n+4\}$, and, in turn, also for the ``bottom side'' of the $m\times n$-grid frame, that is, $\{(k,\ell)~|~12\cdot m+2 \leq k\leq 12\cdot m+4, 0\leq \ell \leq 8\cdot n+4\}$. This completes the proof of the lemma.                    
\end{proof}

Without loss of generality, we assume that for every grid graph embedding $f$ of $G_B$ it follows that $f((0,0))=(0,0)$ $f((1,0))=(1,0)$ and $f((1,1))=(1,1)$ where $(0,0),(1,0)$ and $(1,1)$ are vertices of the $m\times n$-grid frame embedding. From Lemma \ref{lem:fixedFrame} we get that the $m\times n$-grid frame embedding is fixed. Next, we would like to show that the embedding of the (boundary) rectangle of each battery gadget is also fixed. To this end, consider a grid graph $G$ with grid graph embedding $f$ of $G$, two vertices $u,v$ that are embedded in the same row or column, and a path $P=(u,\ldots,v)$ of size $d_f(u,v)$. Then, it is clear that the embedding of the path must be a straight line between $u$ and $v$. We prove this in the next lemma.

\begin{lemma}\label{lem:pathFix} 
Let $G$ be a grid graph and let $f$ be a grid graph embedding $G$. Let $r,c_1,c_2\in \mathbb{N}, c_2>c_1$ and let $u,v\in V(G)$ such that $f(u)=(r,c_1)$ and $f(v)=(r,c_2)$ and let $P=(a_0=u,a_2,\ldots,a_{\ell}=v)$ where $\ell=c_2-c_1$ be a simple path of length $c_2-c_1$. Then, for every $0\leq i\leq \ell$ it follows that $f(a_i)=(r,c_1+i)$. Similarly, let $c,r_1,r_2\in \mathbb{N}, r_2>r_1$ and let $u,v\in V(G)$ such that $f(u)=(r_1,c)$ and $f(v)=(r_2,c)$ and let $P=(a_0=u,a_2,\ldots,a_{\ell}=v)$ where $\ell=r_2-r_1$ be a simple path of size $r_2-r_1$. Then, for every $0\leq i\leq \ell$ it follows that $f(a_i)=(r_1+i,c)$. 
\end{lemma}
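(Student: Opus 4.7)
The plan is to prove the row case by induction on $i$, since the column case follows by an identical symmetric argument (swap rows and columns throughout). The base case $i=0$ is immediate from the hypothesis $f(a_0)=f(u)=(r,c_1)$.

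For the inductive step, assume $f(a_{i-1})=(r,c_1+i-1)$. Because $\{a_{i-1},a_i\}\in E(G)$, Definition \ref{def:Grid graph embedding} forces $d_f(a_{i-1},a_i)=1$, so $f(a_i)$ lies in the four-element set
\[
\bigl\{(r,c_1+i-2),\,(r,c_1+i),\,(r-1,c_1+i-1),\,(r+1,c_1+i-1)\bigr\}.
\]
I would then rule out three of these four candidates using the distance bound furnished by Observation \ref{lem:dfLeqd}. Specifically, the suffix $(a_i,a_{i+1},\ldots,a_\ell=v)$ is a path of length $\ell-i$ in $G$, so $d(a_i,v)\leq\ell-i$ and hence, by Observation \ref{lem:dfLeqd}, $d_f(a_i,v)\leq\ell-i$. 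I would then compute $d_f(\cdot\,,(r,c_2))$ at each of the four candidates: the three ``wrong'' candidates all yield $d_f=\ell-i+2$ (since moving off the straight row or backwards along it costs $2$ extra units of grid distance to $v$), violating the bound, while only $(r,c_1+i)$ gives $d_f=\ell-i$. Hence $f(a_i)=(r,c_1+i)$, completing the induction.

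The main subtlety, and the step I would write most carefully, is the case analysis that shows each of the three wrong candidates overshoots the bound $\ell-i$ by exactly $2$; this relies on $c_2=c_1+\ell\geq c_1+i$, so the expressions $|c_2-(c_1+i-2)|$ and $|c_2-(c_1+i-1)|$ can be simplified without absolute values. I would also briefly note that the candidate $(r,c_1+i-2)$ is in fact already excluded for $i\geq 2$ by injectivity of $f$ (it coincides with $f(a_{i-2})$ by the inductive hypothesis), and for $i=1$ it is excluded directly by the distance bound, since then $c_1+i-2=c_1-1$ gives $d_f=\ell+1>\ell-i$.

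Once the row statement is proved, the column statement follows by interchanging the roles of the row and column coordinates throughout: Definition \ref{def:Grid graph distance} is symmetric in the two coordinates, so every inequality used above transfers verbatim.
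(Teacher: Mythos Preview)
Your proposal is correct and follows essentially the same approach as the paper: both arguments use the suffix path $(a_i,\ldots,a_\ell)$ to bound $d(a_i,v)\leq\ell-i$ and then invoke Observation~\ref{lem:dfLeqd} to rule out the three ``wrong'' neighbors of $f(a_{i-1})$. The only stylistic difference is that you phrase it as a direct induction on $i$, while the paper argues by contradiction via a minimal counterexample (which amounts to the same thing); your treatment is in fact slightly more complete, since the paper handles only one of the three bad candidates explicitly and declares the others similar.
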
 

\begin{proof}
We prove the first part of the lemma; the second part can be proved symmetrically. Assume toward a contradiction that there exists $0\leq j\leq \ell$ such that $f(a_j)\neq(r,c_1+j)$. Let $0\leq i\leq \ell$ be the minimal such $j$. Observe that $0< i< \ell$. By the minimality of $i$, it follows that $f(a_{i-1})=(r,c_1+i-1)$. Notice that $\{a_{i-1},a_i\}\in E(G)$, therefore, there are three options: $f(a_i)=(r+1,c_1+i-1)$, $f(a_i)=(r-1,c_1+i-1)$ or $f(a_i)=(r,c_1+i-2)$. Assume that $f(a_i)=(r-1,c_1+i)$ (the other cases are similar). Then, it follows that $d_f(v,a_i)= |r-(r-1)|+|c_2-(c_1+i-1)|=1+|c_2-c_1-i+1|>c_2-c_1-i$. On the other hand, since $(a_i,a_{i+1},\ldots,a_\ell=v)$ is a path from $a_i$ to $v$ of size $c_2-c_1-i$, it follows that $d(v,a_i)\leq c_2-c_1-i$. So, we get that $d_f(v,a_i)>d(v,a_i)$, a contradiction to Observation \ref{lem:dfLeqd}.         
\end{proof}

 Now, because we saw that the embedding of the  $m\times n$-grid frame is fixed, and the ``sides'' of the rectangles are straight lines between two vertices in the $m\times n$-grid frame, then the embedding of the rectangles is also fixed. In addition, observe that each battery gadget has a straight line that seperates the two sides of the battery gadget. By using similar arguments, we show that the embeddings of these lines are also fixed. We prove these insights in the following lemmas. For that purpose, we denote the vertices of the rectangles and the line that separates the sides of the battery gadgets by $\{(6(i-1)+2,j)~|~1\leq i\leq 2\cdot m, 2\leq j\leq 8\cdot n+2\} \cup \{(i,8(j-1)+2)~|~2\leq i\leq 12\cdot m+2, 1\leq j\leq n\}$ (see Figure \ref{fi:fix embgre}).  

\begin{figure}[!t]
\centering
\includegraphics[width=0.5\textwidth, page=9]{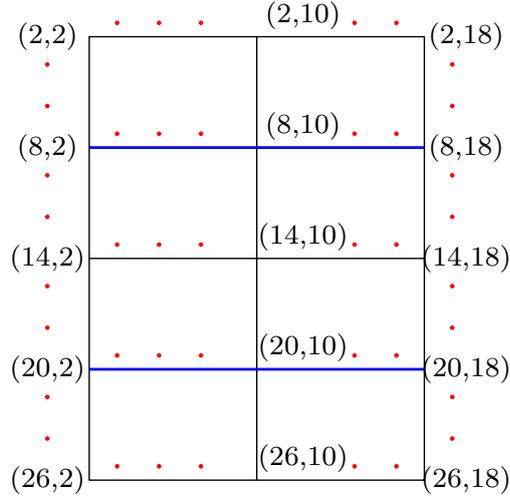}
\caption{Notation for $2\times 2$ rectangles corresponding to rectangles of battery gadgets. The lines that separates the sides of the gadgets are in blue.}\label{fi:fix embgre}
\end{figure}

\begin{lemma} \label{obs:fixemb}
Let $B=\{B_{i,j}=(x_1^{(i,j)},x_2^{(i,j)})~|~1\leq i\leq m, 1\leq j\leq n\}$ be a set of batteries. Let $f$ be a grid graph embedding of $G$. Assume that $f((0,0))=(0,0)$ $f((1,0))=(1,0)$ and $f((1,1))=(1,1)$. Then, for every $1\leq i\leq 2\cdot m, 2\leq j\leq 8\cdot n+2$ it follows that $f((6(i-1)+2,j))=(6(i-1)+2,j)$, and for every $2\leq i\leq 12\cdot m+2, 1\leq j\leq n$ it follows that $f((i,8(j-1)+2))=(i,8(j-1)+2)$.  
\end{lemma}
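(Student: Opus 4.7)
The plan is to leverage Lemma \ref{lem:fixedFrame} together with Lemma \ref{lem:pathFix}. By hypothesis and Lemma \ref{lem:fixedFrame}, every vertex of the outer $m\times n$-grid frame is already embedded by $f$ at its intended integer position. The remaining rectangle sides and middle separating lines are each realized in $G_B$ as simple paths whose two endpoints already lie in the frame, so it suffices to verify the length of each such path and invoke Lemma \ref{lem:pathFix}.

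First I would handle the vertical lines. Fix $j \in [n]$ and consider the sequence $V_j = ((2, 8(j-1)+2), (3, 8(j-1)+2), \ldots, (12m+2, 8(j-1)+2))$. By the construction of $G_B$, consecutive vertices in $V_j$ are adjacent and together they form a simple path of length $12m$ in $G_B$, namely the full left side of the $j$-th column of battery gadgets running from the top frame to the bottom frame. Both endpoints lie in the frame, hence by Lemma \ref{lem:fixedFrame} they are embedded at $(2, 8(j-1)+2)$ and $(12m+2, 8(j-1)+2)$. Since the length of $V_j$ equals the grid distance between these fixed endpoints, Lemma \ref{lem:pathFix} forces $f((i, 8(j-1)+2)) = (i, 8(j-1)+2)$ for every $2 \le i \le 12m+2$.

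Then I would handle the horizontal lines in an analogous way. Fix $i \in [2m]$ and consider $H_i = ((6(i-1)+2, 2), (6(i-1)+2, 3), \ldots, (6(i-1)+2, 8n+2))$. This is a simple path of length $8n$ in $G_B$, and its endpoints $(6(i-1)+2, 2)$ and $(6(i-1)+2, 8n+2)$ lie on the left and right sides of the frame respectively, so are already pinned by Lemma \ref{lem:fixedFrame}. A second application of Lemma \ref{lem:pathFix} then gives $f((6(i-1)+2, j)) = (6(i-1)+2, j)$ for every $2 \le j \le 8n+2$.

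The main obstacle is not conceptual but bookkeeping: one must verify that the two prescribed sequences $V_j$ and $H_i$ are indeed simple paths in $G_B$ of the claimed lengths, that consecutive entries are adjacent in $G_B$, and that the extreme entries land inside the already-fixed frame. This follows directly from the way the $m\cdot n$ battery gadgets are glued along shared rectangle sides and then encircled by an $m\times n$-grid frame, as described in the construction of $G_B$, so no further work beyond citing the two preceding lemmas is required.
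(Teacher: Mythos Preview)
Your proposal is correct and follows essentially the same approach as the paper: invoke Lemma~\ref{lem:fixedFrame} to pin the frame, then apply Lemma~\ref{lem:pathFix} to each horizontal and each vertical segment, using that the endpoints lie in the already-fixed frame and that the path length equals the grid distance between them. The only cosmetic difference is that you treat the vertical lines before the horizontal ones, whereas the paper does the reverse.
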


\begin{proof}
Let $1\leq i\leq 2\cdot m$. From Lemma \ref{lem:fixedFrame} we get that $f((6(i-1)+2,2))=(6(i-1)+2,2)$ and $f((6(i-1)+2,8\cdot n+2))=(6(i-1)+2,8\cdot n+2)$. In addition, we have that $((6(i-1)+2,2),(6(i-1)+2,3),\ldots,(6(i-1)+2,8\cdot n+1),(6(i-1)+2,8\cdot n+2)$ is a path from $(6(i-1)+2,2)$ to $(6(i-1)+2,8\cdot n+2)$ of size $8\cdot n$. Therefore, by Lemma \ref{lem:pathFix}, we get that for every $2\leq j\leq 8\cdot n+2$ it follows that $f((6(i-1)+2,j))=(6(i-1)+2,j)$. 
Similarly, let $1\leq j\leq n$. From Lemma \ref{lem:fixedFrame} we get that $f(2,8(j-1)+2))=(2,8(j-1)+2)$ and $f(12\cdot m+2,8(j-1)+2))=(12\cdot m+2,8(j-1)+2)$. In addition, we have that $((2,8(j-1)+2),(3,8(j-1)+2),\ldots,(12\cdot m+1,8(j-1)+2),(12\cdot m+2,8(j-1)+2))$ is a path from $(2,8(j-1)+2)$ to $(12\cdot m+2,8(j-1)+2)$ of size $12\cdot m$. Therefore, by Lemma \ref{lem:pathFix}, we get that for every $2\leq i\leq 12\cdot m+2$ it follows that $f((i,8(j-1)+2))=(i,8(j-1)+2)$.         
\end{proof}  


Now we want to focus on the positive and the negative sides. For this purpose, we denote the vertex set of the positive side of the battery gadget $H_{i,j}$ by $P_{i,j}=\{P_{i,j}(\ell)~|~1\leq \ell \leq 14\}$, and vertex set of the negative side of the battery gadget $H_{i,j}$ by $N_{i,j}=\{N_{i,j}(\ell)~|~1\leq \ell \leq 9\}$ (see Figure \ref{fi:posNegNot}). For a battery gadget $H$ with a grid graph embedding $f$, we set $p_f(H)=+$ if the positive side of the gadget is embedded to the top of the gadget, that is, $f(P_{i,j}(1))=(12(i-1)+9,8(j-1)+6)$; otherwise the negative side is embedded to the top of the gadget ($f(N_{i,j}(1))=(12(i-1)+9,8(j-1)+6)$), and we set $p_f(H)=-$. Observe that these are the only options in any embedding of $H$. In the following lemmas we show that once we choose which side to embedded in the top of the gadget, the embedding of the vertices of the positive and negative sides are ``almost fixed''.

Notice that the six synchronization vertices in the battery gadget might be embedable inside or outside the rectangle. If we look at two adjacent battery gadget in the same column, say $H_{i,j}$ and $H_{i+1,j}$. If $p_f(H_{i+1,j})=+$, then vertices $1$ and $3$ (see Figure \ref{fi:batgat}) must be embedded outside of $H_{i+1,j}$, therefore they are embedded inside $H_{i,j}$. Thus in $H_{i,j}$ the positive side cannot be embedded at the bottom, and hence, we get that $p_f(H_{i,j})=+$. Similarly, if $p_f(H_{i+1,j})=-$, then vertex $4$ (see Figure \ref{fi:batgat}) must be embedded outside of $H_{i+1,j}$, so it is embedded inside $H_{i,j}$. Thus in $H_{i,j}$ the negative side cannot be embedded at the bottom, so we get that $p_f(H_{i,j})=-$. We see that the battery gadgets are ``synchronized'' like the batteries, i.e. $+$ can only be connected to $-$ and vice versa. In order to prove this claim, first we prove that if $p_f(H)=+$ (resp.~$p_f(H)=-$), then vertices $1$ and $3$ (resp.~vertex $4$) must be embedded outside of $H$. For that purpose, we also denote some vertices of the gadget by $a,b,c,d,e,f,g,h$ (see Figure \ref{fi:posNegNot}). 

\begin{figure}[!t]
\centering
\includegraphics[width=0.5\textwidth, page=10]{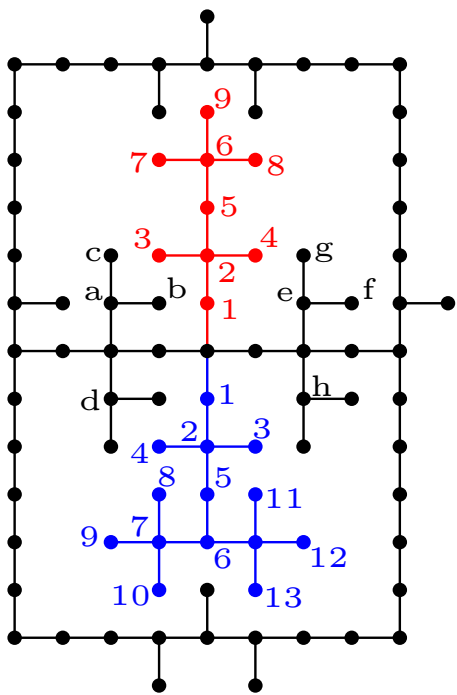}
\caption{A battery gadget $H=(0,0)$ with notations for some of the vertices.}\label{fi:posNegNot}
\end{figure}    

\begin{lemma}\label{lem:synclem}
Let $B=\{B_{i,j}=(x_1^{(i,j)},x_2^{(i,j)})~|~1\leq i\leq m, 1\leq j\leq n\}$ be a set of batteries. Let $f$ be a grid graph embedding of $G$. Let $1\leq i\leq m, 1\leq j\leq n$. Assume that $f((0,0))=(0,0)$ $f((1,0))=(1,0)$ and $f((1,1))=(1,1)$.
\begin{itemize}
\item If $f(P_{i,j}(1))=(12(i-1)+9,8(j-1)+6)$ then there exist $u,v\in P_{i,j}$ such that $f(u)=(12(i-1)+5,8(j-1)+5)$ and $f(u)=(12(i-1)+5,8(j-1)+7)$.
\item If $f(P_{i,j}(1))=(12(i-1)+11,8(j-1)+6)$ then there exist $u,v\in P_{i,j}$ such that $f(u)=(12(i-1)+15,8(j-1)+5)$ and $f(u)=(12(i-1)+15,8(j-1)+7)$.
\item If $f(N_{i,j}(1))=(12(i-1)+9,8(j-1)+6)$ then there exists $u\in N_{i,j}$ such that $f(u)=(12(i-1)+5,8(j-1)+6)$.
\item If $f(N_{i,j}(1))=(12(i-1)+11,8(j-1)+6)$ then there exists $u\in P_{i,j}$ such that $f(u)=(12(i-1)+15,8(j-1)+6)$. 
\end{itemize}
\end{lemma}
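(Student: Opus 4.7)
The plan is to establish the lemma by a case analysis over the four bullets, exploiting two strong rigidity results already proven: Lemma \ref{lem:fixedFrame} pins down the entire $m\times n$-grid frame, and Lemma \ref{obs:fixemb} then pins down every vertex of every gadget rectangle as well as every vertex of the vertical and horizontal ``separator'' lines inside the gadgets. In particular, before looking at the positive/negative sides at all, we may assume that the four sides of the rectangle of $H_{i,j}$, together with the entire middle row $\{(12(i-1)+8,\ell)\}_\ell$ that separates its positive and negative halves, are already embedded at their canonical coordinates.

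The four bullets are essentially equivalent under the obvious horizontal reflection exchanging ``top'' and ``bottom'' (which swaps $12(i-1)+9$ with $12(i-1)+11$ and $12(i-1)+5$ with $12(i-1)+15$) and the obvious exchange of the ``$P$-shape'' with the ``$N$-shape''. I would therefore only prove the first bullet in detail and obtain the other three by symmetry. So assume $f(P_{i,j}(1))=(12(i-1)+9,8(j-1)+6)$. Within the positive side $P_{i,j}$, Figure \ref{fi:batgat} exhibits short paths from the wire vertex $P_{i,j}(1)$ to two distinguished vertices of $P_{i,j}$ that are meant to sit at $(12(i-1)+5,8(j-1)+5)$ and $(12(i-1)+5,8(j-1)+7)$; in graph-distance terms, each such vertex is at distance exactly $5$ from $P_{i,j}(1)$ along $P_{i,j}$. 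Call these vertices $u$ and $v$ and let $P_u,P_v$ be the corresponding paths.

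To force the positions of $u$ and $v$ I would combine the following ingredients: (i) by Observation \ref{lem:dfLeqd} applied to $P_u$ and $P_v$, the grid distance from $P_{i,j}(1)$ to each of $u,v$ is at most $5$; (ii) the middle row and the four sides of the rectangle of $H_{i,j}$ are already occupied by the fixed vertices from Lemma \ref{obs:fixemb}, so every vertex of $P_{i,j}\setminus\{P_{i,j}(1)\}$ must go strictly above the middle row, i.e.~into the $5\times 7$ open region with rows $\{12(i-1)+3,\dots,12(i-1)+7\}$ and columns $\{8(j-1)+3,\dots,8(j-1)+9\}$; (iii) $|P_{i,j}|=14$, which after subtracting the wire vertex leaves exactly $13$ vertices to fit in this $5\times 7$ region together with a handful of forced incidences on the top of the rectangle coming from the synchronization edges. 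A careful counting argument, executed vertex-by-vertex along $P_u$ and $P_v$ and invoking Lemma \ref{lem:pathFix} for the maximal straight sub-paths, shows that any valid packing must place $u$ and $v$ precisely at $(12(i-1)+5,8(j-1)+5)$ and $(12(i-1)+5,8(j-1)+7)$, because no alternative embedding leaves room for the remaining vertices of $P_{i,j}$ without either colliding with fixed rectangle/separator vertices or violating some distance constraint.

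The main obstacle will be the third step, namely verifying that the ``pigeonhole'' on the $5\times 7$ interior truly forces $u,v$ and not merely constrains them. Proving this cleanly requires exhibiting, for every candidate alternative placement of $u$ or $v$, a concrete path of $P_{i,j}$ whose grid distance would be forced to exceed its graph distance, contradicting Observation \ref{lem:dfLeqd}. This is a purely local combinatorial check inside one gadget, so it can be carried out once and for all and then quoted; the analogous checks for bullets 2, 3, 4 are obtained by reflecting coordinates and interchanging the role of $P_{i,j}$ and $N_{i,j}$.
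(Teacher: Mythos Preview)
Your high-level plan (prove one bullet in detail, derive the others by the evident reflections) is exactly what the paper does; in fact the paper chooses the third bullet, the $N$-case, which is a little simpler because $|N_{i,j}|=9$ rather than $14$.

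Where your proposal diverges is in the mechanism you intend to use. You want to combine Observation~\ref{lem:dfLeqd} (grid distance $\le$ graph distance) with a pigeonhole on the $5\times 7$ interior. That is too coarse to succeed as written: thirteen vertices in thirty-five cells leaves far too much slack for counting alone, and the bound $d_f(u,P_{i,j}(1))\le 5$ only confines $u$ to a diamond, not to a single cell. Lemma~\ref{lem:pathFix} would help if both endpoints of a path were already fixed, but here the whole point is that one endpoint is unknown. So step~(iii) of your outline, which you yourself flag as ``the main obstacle'', is not just tedious bookkeeping---it actually requires a different idea than the tools you have listed.

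What the paper does instead is a straightforward edge-by-edge chase that exploits \emph{degree} constraints, not distance constraints. Starting from $N_{i,j}(1)$ at the known cell, one looks at its unique remaining neighbor $N_{i,j}(2)$; three of the four adjacent cells are either already occupied by fixed rectangle/separator vertices (Lemma~\ref{obs:fixemb}) or lead immediately to a contradiction because $N_{i,j}(2)$ has degree four and one of its neighbors would then be forced onto an occupied cell. This pins $N_{i,j}(2)$, and one repeats for $N_{i,j}(5)$, then $N_{i,j}(6)$, each time eliminating alternatives via ``this vertex has $k$ neighbors, so one of them would land on an already-occupied cell''. After three or four such steps one reaches the target row $12(i-1)+5$. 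No global counting or distance inequality is invoked; every step is a local exclusion using injectivity of $f$ and the fixed skeleton from Lemma~\ref{obs:fixemb}. If you rewrite your third ingredient in these terms---tracking neighbors and using degree to force contradictions---you will have essentially the paper's proof.
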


\begin{proof}
We prove that if $f(N_{i,j}(1))=(12(i-1)+9,8(j-1)+6)$ then there exists $u\in N_{i,j}$ such that $f(u)=(12(i-1)+5,8(j-1)+6)$. The other cases can be proved similarly.
Since $\{N_{i,j}(1),N_{i,j}(2)\}\in E(G)$ and by Lemma \ref{obs:fixemb} it follows that $f((12(i-1)+10,8(j-1)+6))= (12(i-1)+10,8(j-1)+6)$ then $f(N_{i,j}(2))=(12(i-1)+8,8(j-1)+6)$ or $f(N_{i,j}(2))=(12(i-1)+9,8(j-1)+5)$ or $f(N_{i,j}(2))=(12(i-1)+9,8(j-1)+7)$. Assume that $f(N_{i,j}(2))=(12(i-1)+9,8(j-1)+5)$. Observe that $N_{i,j}(2)$ has four neighbors, $N_{i,j}(1)$, $N_{i,j}(3)$, $N_{i,j}(4)$ and $N_{i,j}(5)$, therefore one of them is embedded at $(12(i-1)+10,8(j-1)+5)$. This is a contradiction, since from Lemma \ref{obs:fixemb} we get that $f((12(i-1)+10,8(j-1)+5))=(12(i-1)+10,8(j-1)+5)$ and $f$ is an injection. From similar arguments, we get a contradiction also if $f(N_{i,j}(2))=(12(i-1)+9,8(j-1)+7)$, so we get that $f(N_{i,j}(2))=(12(i-1)+8,8(j-1)+6)$.

Now, since $\{N_{i,j}(2),N_{i,j}(5)\}\in E(G)$ it follows that $f(N_{i,j}(5))=(12(i-1)+7,8(j-1)+6)$ or $f(N_{i,j}(5))=(12(i-1)+8,8(j-1)+5)$ or $f(N_{i,j}(5))=(12(i-1)+8,8(j-1)+7)$. Assume that $f(N_{i,j}(5))=(12(i-1)+8,8(j-1)+5)$. Since $\{N_{i,j}(5),N_{i,j}(6)\}\in E(G)$ we get that $f(N_{i,j}(6))=(12(i-1)+7,8(j-1)+5)$ or $f(N_{i,j}(6))=(12(i-1)+9,8(j-1)+5)$ or $f(N_{i,j}(6))=(12(i-1)+8,8(j-1)+4)$. If $f(N_{i,j}(6))=(12(i-1)+7,8(j-1)+5)$ we get that one of $N_{i,j}(7),N_{i,j}(8),N_{i,j}(9)$ is embedded to $(12(i-1)+7,8(j-1)+6)$, but one of $N_{i,j}(3),N_{i,j}(4)$ must be embedded there, so we get a contradiction. Similarly, we get a contradiction also if $f(N_{i,j}(6))=(12(i-1)+9,8(j-1)+5)$.

Now, if $f(N_{i,j}(6))=(12(i-1)+8,8(j-1)+4)$, then one of $N_{i,j}(7),N_{i,j}(8),N_{i,j}(9)$ is embedded at $(12(i-1)+9,8(j-1)+4)$. From Lemma \ref{obs:fixemb} we get that $f((12(i-1)+10,8(j-1)+4))=(12(i-1)+10,8(j-1)+4)$ and $(12(i-1)+10,8(j-1)+4)$ has four neighbors that $N_{i,j}(7),N_{i,j}(8),N_{i,j}(9)$ are not among them. Therefore one of them must be embedded at $(12(i-1)+9,8(j-1)+4)$, a contradiction. So we get that $f(N_{i,j}(5))=(12(i-1)+7,8(j-1)+6)$. By similar arguments, we get that $f(N_{i,j}(6))=(12(i-1)+6,8(j-1)+6)$. Therefore one of $N_{i,j}(7),N_{i,j}(8),N_{i,j}(9)$ is embedded at $(12(i-1)+5,8(j-1)+6)$. This completes the proof.                  
\end{proof}

Now we show that the battery gadgets are vertically ``synchronized'', that is, $p_f(i,j)=p_f(i+1,j)$.  

\begin{lemma}\label{lem:sync}
Let $B=\{B_{i,j}=(x_1^{(i,j)},x_2^{(i,j)})~|~1\leq i\leq m, 1\leq j\leq n\}$ be a set of batteries. Let $f$ be a grid graph embedding of $G$. Let $1\leq i< m, 1\leq j\leq n$. Assume that $f((0,0))=(0,0)$ $f((1,0))=(1,0)$ and $f((1,1))=(1,1)$. Then $p_f(i,j)=p_f(i+1,j)$ for every $1\leq i\leq m$ and $1\leq j\leq n$.  
\end{lemma}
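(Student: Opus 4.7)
The strategy is a proof by contradiction, with a symmetric case split on $p_f(H_{i+1,j})$. Without loss of generality I treat the case $p_f(H_{i+1,j})=+$, in which $f(P_{i+1,j}(1)) = (12i+9,8(j-1)+6)$; the case $p_f(H_{i+1,j})=-$ is handled by an entirely analogous argument that swaps the roles of $P$ and $N$ and invokes the third and fourth bullets of Lemma~\ref{lem:synclem} instead of the first two. Suppose for contradiction that $p_f(H_{i,j})=-$. Then $f(N_{i,j}(1))=(12(i-1)+9,8(j-1)+6)$, and since the paragraph preceding Lemma~\ref{lem:synclem} observes that in any embedding the only two grid cells available for $f(P_{i,j}(1))$ are $(12(i-1)+9,8(j-1)+6)$ and $(12(i-1)+11,8(j-1)+6)$, injectivity of $f$ forces $f(P_{i,j}(1))=(12(i-1)+11,8(j-1)+6)$.

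Next I would apply Lemma~\ref{lem:synclem} twice. Its first bullet, applied to $H_{i+1,j}$, delivers vertices $u,v\in P_{i+1,j}$ with $f(u)=(12i+5,8(j-1)+5)$ and $f(v)=(12i+5,8(j-1)+7)$; its second bullet, applied to $H_{i,j}$, delivers vertices $u',v'\in P_{i,j}$ with $f(u')=(12i+3,8(j-1)+5)$ and $f(v')=(12i+3,8(j-1)+7)$. By Lemma~\ref{obs:fixemb} all four of these grid cells lie strictly in the interior of the rectangle of $H_{i+1,j}$ (whose rows range over $[12i+2,\,12i+14]$), so two disjoint positive-side arms, one from each gadget, are both embedded inside a very narrow vertical band of $H_{i+1,j}$.

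I would then turn this coexistence into a concrete grid-cell collision by iterating the neighbor-chasing reasoning used inside the proof of Lemma~\ref{lem:synclem}. Starting from the tip $u$ at $(12i+5,8(j-1)+5)$, each step examines the four candidate neighbor cells and rules out three of them using either Lemma~\ref{obs:fixemb} (to block cells already claimed by rectangle or frame vertices, in particular along rows $12i+2$ and $12i+8$ and columns $8(j-1)+2$ and $8j+2$) or the injectivity of $f$ together with previously committed positions (including $f(u'),f(v')$ and the $N$-vertex at $(12(i-1)+5,8(j-1)+6)$ supplied by the third bullet of Lemma~\ref{lem:synclem}). Propagating the $P_{i+1,j}$-arm step by step through $(12i+4,8(j-1)+5)$ forces a further $P_{i+1,j}$-vertex at $(12i+3,8(j-1)+5)$; but that cell is already $f(u')$ with $u'\in P_{i,j}$, and $P_{i,j}\cap P_{i+1,j}=\emptyset$, contradicting the injectivity of $f$.

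The hard part will be exactly this final position-tracking step. Lemma~\ref{lem:synclem} analyses the positive side of a single gadget in isolation, whereas here I have to push the enforced path across the shared boundary row $12i+2$ and into the strip of rows $\{12i+2,12i+3,12i+4,12i+5\}$ which is simultaneously being claimed by the arm of $P_{i,j}$. Keeping careful bookkeeping of which grid cells in that strip are already occupied at each step, and in particular of the synchronization vertices (recalling that syncs $3,4,5$ of $H_{i+1,j}$ coincide with syncs $6,7,8$ of $H_{i,j}$, so their grid positions are forced to be shared between the two gadgets), is the main delicate ingredient; mechanically, however, it is a direct extension of the inductive neighbor-by-neighbor argument already developed inside the proof of Lemma~\ref{lem:synclem}.
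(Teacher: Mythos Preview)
Your overall strategy—case split on $p_f(H_{i+1,j})$, apply Lemma~\ref{lem:synclem} to both gadgets, then exhibit a grid-cell collision—matches the paper's. But the collision mechanism you propose is different, and it does not close.

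The gap is the propagation step. Lemma~\ref{lem:synclem} (first bullet, applied to $H_{i+1,j}$) places $P_{i+1,j}$-vertices only at row $12i+5$; it says nothing about $P_{i+1,j}$-vertices at rows $12i+4$ or $12i+3$. The vertices $u,v$ that the lemma pins down at $(12i+5,c+5)$ and $(12i+5,c+7)$ may well be leaves of the positive-side subgraph, in which case there is literally nothing further to chase. Your appeal to ``iterating the neighbor-chasing reasoning used inside the proof of Lemma~\ref{lem:synclem}'' does not help: that proof walks along the arm from $P(1)$ outward and terminates exactly at those tips. To push the arm two more rows you would need extra edges of the gadget that you have not identified. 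Likewise, the $N_{i,j}$-vertex you cite from the third bullet sits at $(12(i-1)+5,c+6)$, far up inside $H_{i,j}$, and is irrelevant to the strip $\{12i+2,\ldots,12i+5\}$.

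The paper's argument obtains the contradiction via the synchronization vertex $S_{i,j}(2)$ rather than by a direct arm-versus-arm collision. Working (symmetrically) with $p_f(i+1,j)=-$, the $N_{i+1,j}$-arm occupies one of the two candidate cells for $S_{i,j}(2)$, forcing $f(S_{i,j}(2))=(12i+3,c+6)$; then the fourth bullet of Lemma~\ref{lem:synclem}, applied to $H_{i,j}$ under the contradictory assumption $p_f(i,j)=+$, places a vertex of $N_{i,j}$ at that very cell. The synchronization vertex is not peripheral bookkeeping—it is the bridging object whose forced position produces the clash. Your sketch relegates it to a side remark, which is precisely why the argument does not go through.
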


\begin{proof}
Assume without loss of generality that $p_f(i+1,j)=-$ (the other case can be shown similarly), so $f(N_{i+1,j}(1))=(12(i)+9,8(j-1)+6)$ . Therefore, by Lemma \ref{lem:synclem}, we get that there exists $u\in N_{i,j}$ such that $f(u)=(12(i)+5,8(j-1)+6)$. Now, $\{S_{i,j}(2),(12(i)+4,8(j-1)+6)\}\in E(G)$, and by Lemma we get that $f((12(i)+4,8(j-1)+6))=(12(i)+4,8(j-1)+6)$, $f((12(i)+3,8(j-1)+6))=(12(i)+3,8(j-1)+6)$ and $f((12(i)+5,8(j-1)+6))=(12(i)+5,8(j-1)+6)$. Therefore, we get that $f(S_{i,j}(2))=(12(i)+3,8(j-1)+6)$. Assume towards a contradiction that $p_f(i,j)=+$. Then, we get that $f(N_{i,j}(1))=(12(i-1)+11,8(j-1)+6)$, so, by Lemma \ref{lem:synclem} we get that there exists $u\in P_{i,j}$ such that $f(u)=(12(i-1)+15,8(j-1)+6)=(12(i)+3,8(j-1)+6)=f(S_{i,j}(2))$. This is a contradiction, since $f$ is an injection.       
\end{proof}


Let $B=\{B_{i,j}=(x_1^{(i,j)},x_2^{(i,j)})~|~1\leq i\leq m, 1\leq j\leq n\}$ be an instance of the {\sc Batteries} problem, and let $f$ be a grid graph embedding of $G$. Let $1\leq j\leq n, 1\leq i\leq m$. We denote by $V_f(H_{i,j})$ the voltage of the side of the battery gadget that is embedded by $f$ at the top of $H_{i,j}$.  Notice that in Lemma \ref{lem:sync}, we stated a property that is, in some sense, analogous to the correctness of placement of a set of batteries. Now we show that if $G$ is a grid graph, then the property analogous to safeness is also preserved. That is, for every $1\leq i\leq m$, there exists $1\leq j\leq n$ such that $V_f(H_{i,j})=0$. Notice that if, in $H_{i,j}$, vertex $1$ (see Figure \ref{fi:batgat}) is embedded inside $G_{i,j}$ and $V_f(G_{i,j})=1$, then it must be that vertex $2$ is embedded outside $G_{i,j}$. We prove this in the next lemma.

\begin{lemma}
Let $B=\{B_{i,j}=(x_1^{(i,j)},x_2^{(i,j)})~|~1\leq i\leq m, 1\leq j\leq n\}$ be a set of batteries. Let $f$ be a grid graph embedding of $G$ and let $1\leq i\leq m, 1\leq j\leq n$. Assume that $f((0,0))=(0,0)$ $f((1,0))=(1,0)$ and $f((1,1))=(1,1)$. If $V_f(i,j)=1$ and $f(W(i,j-1))=(12(i-1)+7,8(j-1)+3)$, then $f(W(i,j))=(12(i-1),8j+3)$.
\end{lemma}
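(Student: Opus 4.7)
The plan is to combine the rigidity results already established (Lemma~\ref{lem:fixedFrame}, Lemma~\ref{obs:fixemb}, and Lemma~\ref{lem:synclem}) with a careful propagation argument along the wire inside $H_{i,j}$, using only the distance inequality of Observation~\ref{lem:dfLeqd} and the path-straightening principle of Lemma~\ref{lem:pathFix}. Under the stated normalization $f((0,0))=(0,0)$, $f((1,0))=(1,0)$, $f((1,1))=(1,1)$, the embedding of the grid frame and of all rectangle sides and interior separators of the battery gadgets is already completely determined; in particular, every cell on the boundary of the rectangle of $H_{i,j}$ and on the vertical separator between the positive and the negative side is pinned. This leaves $f$ free only on the few remaining interior cells of the rectangle of $H_{i,j}$ and on the cells immediately above and below it.

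First, I would use Lemma~\ref{lem:synclem} to pin down where the top side of $H_{i,j}$ is embedded. Since the side at the top has $V_f(i,j)=1$, it carries in addition its voltage pendant, which is a vertex adjacent to the interior of the top boundary. Together with the injection property of $f$ and the fact that the top boundary cells of the rectangle are already occupied by vertices of the frame/rectangle, the voltage pendant must land just \emph{above} the top boundary of the rectangle. A direct cell count then shows that the interior cells of $H_{i,j}$ along the wire row $12(i-1)+7$ are so crowded by the top-side vertices (whose positions Lemma~\ref{lem:synclem} fixes up to a small set of options) that no free cell remains to host $W(i,j)$ at the ``natural'' position $(12(i-1)+7,8j+3)$ or anywhere else inside the rectangle.

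Next, I would trace the unique path $P$ in $G$ that realizes the wire from $W(i,j-1)$ to $W(i,j)$. Starting at $f(W(i,j-1))=(12(i-1)+7,8(j-1)+3)$, each internal vertex of $P$ has degree $2$ in $G$, and Observation~\ref{lem:dfLeqd} forces $d_f\le d$ along every subpath. I would argue by induction along $P$ that at each step the only cell not already forbidden (either by the fixed frame/rectangle embedding, by the previously-pinned vertices of the top side, by the voltage pendant, or by the injection constraint with respect to earlier vertices of $P$) is the one that moves the wire upward out of the rectangle, then straight across the column strip $[8(j-1)+3,\,8j+3]$ of row $12(i-1)$, and finally back down to the claimed endpoint $(12(i-1),8j+3)$. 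Lemma~\ref{lem:pathFix} applied to the horizontal portion of this detour makes the middle stretch rigid once its two endpoints are fixed.

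The main obstacle will be the bookkeeping in the step above: there are two subcases, $p_f(H_{i,j})=+$ and $p_f(H_{i,j})=-$, and in each one the exact shape of the top side and the exact cell occupied by the voltage pendant differ, so the set of ``forbidden'' cells the wire must avoid looks superficially different. I would handle both subcases uniformly by noting that, regardless of which side is on top, $V_f(i,j)=1$ adds one extra pendant directly above the top boundary, and by Lemma~\ref{lem:synclem} at least one of the interior cells on the wire row adjacent to the left boundary of $H_{i,j}$ is occupied by a vertex of the top side. This occupation, together with the injection property, rules out every alternative continuation of the wire and leaves exactly the detour described, which delivers $f(W(i,j))=(12(i-1),8j+3)$.
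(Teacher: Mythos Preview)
Your proposal rests on a misreading of the gadget geometry, and as a result the whole ``detour'' picture is wrong.

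First, the target coordinate in the statement is a typographical slip: the wire lives on a fixed horizontal row inside the gadget (cf.\ Lemma~\ref{lem:wire FirstLast}, where $f(W(i,0))=(12(i-1)+7,3)$), so the intended conclusion is that $W(i,j)$ lands in column $8j+3$ on that same row, not on row $12(i-1)$. You took the row literally and built an argument in which the wire climbs out of the rectangle, runs horizontally along row $12(i-1)$, and drops back in. No such path exists in $G$: each wire vertex $W(i,\cdot)$ is a single pendant attached to a vertex of the vertical side of the rectangle; the only question is whether that pendant is embedded one step \emph{inside} the rectangle or one step \emph{outside} it. There is no long wire path to which Lemma~\ref{lem:pathFix} could be applied, and the voltage pendant (the dashed edge in Figure~\ref{fi:batgat}) is attached to the positive/negative side \emph{inside} the gadget, not ``just above the top boundary of the rectangle'' as you write.

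What the paper actually does is a short local case analysis using the auxiliary vertices $a,b,c,d,e,f,g,h$ of Figure~\ref{fi:posNegNot}. Assuming without loss of generality $p_f(H_{i,j})=+$, the fixed embedding of the rectangle and separator (Lemma~\ref{obs:fixemb}) pins the degree-$2$ branch vertices $a_{i,j},d_{i,j}$ to specific interior cells; the hypothesis that $W(i,j-1)$ sits at the \emph{inside} position then forces one of $b_{i,j},c_{i,j}$ into the cell adjacent to it. The assumption $V_f(i,j)=1$ means $P_{i,j}(1)$ has an extra neighbor, which must occupy the cell to its right. This occupation propagates to the right branch: one of $e_{i,j},h_{i,j}$ is pinned, then one of $f_{i,j},g_{i,j}$ lands in the cell that would otherwise be the \emph{inside} position for $W(i,j)$. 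Since that cell is taken and the other three neighbors of the attachment point are already fixed frame/rectangle vertices, $W(i,j)$ is forced to the \emph{outside} position, i.e.\ column $8j+3$. Lemma~\ref{lem:synclem} is not used here at all; the argument is purely the injectivity of $f$ together with Lemma~\ref{obs:fixemb}.
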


\begin{proof}
Assume without loss of generality that $p_f(H_{i,j})=+$; then it follows that $f(P_{i,j}(1))=(12(i-1)+9,8(j-1)+6)$. Observe that from Lemma \ref{obs:fixemb} we get that $f((12(i-1)+8,8(j-1)+4))=(12(i-1)+8,8(j-1)+4)$. Since we have that $\{(12(i-1)+8,8(j-1)+4),a_{i,j}\},\{(12(i-1)+8,8(j-1)+4),d_{i,j}\}\in E(G)$, and again from Lemma \ref{obs:fixemb} we also have that $f((12(i-1)+8,8(j-1)+5))=(12(i-1)+8,8(j-1)+5)$ and $f((12(i-1)+8,8(j-1)+3))=(12(i-1)+8,8(j-1)+3)$. Therefore, we get that $f(a_{i,j})=(12(i-1)+7,8(j-1)+4))$ or $f(d_{i,j})=(12(i-1)+7,8(j-1)+4))$. Assume without loss of generality that $f(a_{i,j})=(12(i-1)+7,8(j-1)+4))$. Since $\{a_{i,j},b_{i,j}\},\{a_{i,j},c_{i,j}\}\in E(G)$ and $f(W(i,j-1))=(12(i-1)+7,8(j-1)+3)$, we get that one of $b_{i,j},c_{i,j}$ is embedded to $(12(i-1)+7,8(j-1)+5)$. Now, since $f(P_{i,j}(1))=(12(i-1)+9,8(j-1)+6)$ and $V_f(i,j)=1$ it follows that one of the three neighbors of $f(P_{i,j}(1)$ is embedded to $(12(i-1)+9,8(j-1)+7)$. Moving on forward, again from Lemma \ref{obs:fixemb} we get that $f((12(i-1)+10,8(j-1)+8))=(12(i-1)+10,8(j-1)+8)$. Since $\{e_{i,j},(12(i-1)+10,8(j-1)+8)\},\{h_{i,j},(12(i-1)+10,8(j-1)+8)\}\in E(G)$ and from Lemma \ref{obs:fixemb} we have that $f((12(i-1)+10,8(j-1)+9))=(12(i-1)+10,8(j-1)+9)$ and $f((12(i-1)+10,8(j-1)+7))=(12(i-1)+10,8(j-1)+7)$, we get that $f(e_{i,j})=(12(i-1)+9,8(j-1)+8)$ or $f(h_{i,j})=(12(i-1)+9,8(j-1)+8)$. Assume without loss of generality that $f(e_{i,j})=(12(i-1)+9,8(j-1)+8)$. Since $\{e_{i,j},f_{i,j}\},\{e_{i,j},g_{i,j}\}\in E(G)$, we get that one of $f_{i,j},g_{i,j}$ is embedded to $(12(i-1)+9,8(j-1)+9)$. Now, again from Lemma \ref{obs:fixemb}, we get that $f((12(i-1)+9,8(j-1)+10))=(12(i-1)+9,8(j-1)+10)$. Since $\{W(i,j),(12(i-1)+9,8(j-1)+10)\}\in E(G)$ and $f((12(i-1)+10,8(j-1)+10))=(12(i-1)+10,8(j-1)+10)$ and $f((12(i-1)+8,8(j-1)+10))=(12(i-1)+8,8(j-1)+10)$ (by Lemma \ref{obs:fixemb}), then we get that $f(W(i,j))=(12(i-1),8(j-1)+11)$ or $f(W(i,j))=(12(i-1),8(j-1)+9)$. Since we saw that one of $f_{i,j},g_{i,j}$ is embedded to $(12(i-1)+9,8(j-1)+9)$, then it follows that $f(W(i,j))=(12(i-1),8(j-1)+11)=(12(i-1),8j+3)$. This ends the proof.                  
\end{proof}

Moreover, observe that for every $1\leq i\leq m$ it follows that vertex $1$ of $H_{i,1}$ is embedded inside $H_{i,1}$ and vertex $2$ of $H_{i,n}$ is embedded inside $H_{i,n}$, because these two battery gadgets are adjacent to the $m\times n$-grid frame. We prove this in the next lemma.

\begin{lemma}\label{lem:wire FirstLast}
Let $B=\{B_{i,j}=(x_1^{(i,j)},x_2^{(i,j)})~|~1\leq i\leq m, 1\leq j\leq n\}$ be a set of batteries. Let $f$ be a grid graph embedding of $G$ and let $1\leq i\leq m$. Assume that $f((0,0))=(0,0)$ $f((1,0))=(1,0)$ and $f((1,1))=(1,1)$. Then, for every $1\leq i\leq m$ it follows that $f(W(i,0))=(12(i-1)+7,3)$ and $f(W(i,n))=(12(i-1)+7,8n+1)$. 
\end{lemma}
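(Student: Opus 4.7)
The plan is to prove the claim for $W(i,0)$; the argument for $W(i,n)$ is completely symmetric, using the right column strip of the $m\times n$-grid frame (columns $8n+2$ through $8n+4$) in place of the left one.

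First, I would locate, from the explicit battery-gadget construction (Figure~\ref{fi:batgat}), a neighbor of $W(i,0)$ in $G_B$ whose image under $f$ has already been fixed by the previous lemmas. The natural candidate is the vertex on the left side of the rectangle of $H_{i,1}$ sitting in the wire's row; in global coordinates this is the vertex $(12(i-1)+7,2)$, and by Lemma~\ref{obs:fixemb} (with $j=1$) it is embedded at $(12(i-1)+7,2)$.

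Next, I would apply the grid-embedding condition: $f(W(i,0))$ must be at grid distance $1$ from $(12(i-1)+7,2)$, so
\[
f(W(i,0)) \in \bigl\{(12(i-1)+6,2),\ (12(i-1)+8,2),\ (12(i-1)+7,1),\ (12(i-1)+7,3)\bigr\}.
\]
The first two cells already host the rectangle vertices of the same name (column~$2$ is fixed pointwise by Lemma~\ref{obs:fixemb}), and the third already hosts the frame vertex $(12(i-1)+7,1)$ (column~$1$ lies inside the left strip of the frame, hence is fixed by Lemma~\ref{lem:fixedFrame}). Since $W(i,0)$ is distinct from all three of those already-embedded vertices and $f$ is injective, we must have $f(W(i,0))=(12(i-1)+7,3)$.

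The only non-routine step is the identification of the relevant adjacency---that wire vertex~$1$ of $H_{i,1}$ is indeed adjacent in $G_B$ to the rectangle vertex at its own row---but this is immediate from the construction drawn in Figure~\ref{fi:batgat}. For $W(i,n)$, the symmetric argument uses the edge from wire vertex~$2$ of $H_{i,n}$ to the rectangle vertex at $(12(i-1)+7,8n+2)$, which sits on the shared boundary between $H_{i,n}$ and the right strip of the frame, and again three of the four candidate neighboring cells are already occupied by the frame or by rectangle vertices, leaving $(12(i-1)+7,8n+1)$ as the only possibility.
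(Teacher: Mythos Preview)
Your proposal is correct and follows essentially the same argument as the paper: identify the fixed neighbor $(12(i-1)+7,2)$ of $W(i,0)$, list the four candidate grid cells at distance~$1$, and eliminate three of them by injectivity because they already host fixed vertices. Your attribution is in fact slightly more careful than the paper's---you correctly invoke Lemma~\ref{lem:fixedFrame} for the cell $(12(i-1)+7,1)$ in column~$1$ (part of the frame strip) and Lemma~\ref{obs:fixemb} for the two cells in column~$2$, whereas the paper cites Lemma~\ref{obs:fixemb} uniformly for all three.
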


\begin{proof}
We prove that $f(W(i,0))=(12(i-1)+7,3)$, the other case is similar. From Lemma \ref{obs:fixemb} we get that $f((12(i-1)+7,8(j-1)+2))=(12(i-1)+7,8(j-1)+2)$. Since $\{W(i,0),(12(i-1)+7,8(j-1)+2)\}\in E(G)$ it follows that $f(W(i,0))=(12(i-1)+7,3)$ or $f(W(i,0))=(12(i-1)+7,1)$ or $f(W(i,0))=(12(i-1)+6,2)$ or $f(W(i,0))=(12(i-1)+8,2)$. Again, from Lemma \ref{obs:fixemb}, we get that $f((12(i-1)+7,1))=(12(i-1)+7,1)$, $f((12(i-1)+6,2))=(12(i-1)+6,2)$ and $f((12(i-1)+8,2))=(12(i-1)+8,2)$. Therefore, since $f$ is an injection, we get that $f(W(i,0))=(12(i-1)+7,3)$.     
\end{proof}


In the next lemma we show that if $G$ is a grid graph with grid graph embedding $f$ of $G$, then the ``safeness'' is also preserved.               

\begin{lemma} \label{lem:safenessEmb}
Let $G=\{G_{i,j}=(x_1^{(i,j)},x_2^{(i,j)})~|~1\leq i\leq m, 1\leq j\leq n\}$ be a set of battery gadgets. Let $f$ be a grid graph embedding of $G'$ and let $1\leq i\leq m$. Then there exists $1\leq j\leq n$ such that $V_f(G_{i,j})=0$. 
\end{lemma}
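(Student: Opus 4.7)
The plan is to argue by contradiction: fix a row index $1 \le i \le m$ and suppose that $V_f(G_{i,j}) = 1$ for every $1 \le j \le n$. I will then trace the wire vertices $W(i,0), W(i,1), \ldots, W(i,n)$ from left to right using the preceding lemma, and derive a column mismatch with the endpoints pinned down by Lemma~\ref{lem:wire FirstLast}.

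The two ingredients are already in place. On the one hand, Lemma~\ref{lem:wire FirstLast} tells us that $f(W(i,0)) = (12(i-1)+7, 3)$ and $f(W(i,n)) = (12(i-1)+7, 8n+1)$, so the two endpoint wire vertices are fixed and differ in column by exactly $8n-2$. On the other hand, the preceding lemma states that whenever $V_f(G_{i,j}) = 1$ and $f(W(i,j-1))$ sits in column $8(j-1)+3$ (in the appropriate row $12(i-1)+7$), then $f(W(i,j))$ sits in column $8j+3$. In other words, a voltage-$1$ gadget forces the wire to advance by exactly $8$ columns.

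The main step is a straightforward induction on $j$. The base case $j = 0$ holds because $f(W(i,0)) = (12(i-1)+7, 3)$ and $3 = 8\cdot 0 + 3$. For the inductive step, assume $f(W(i,j-1))$ lies in column $8(j-1)+3$; since by hypothesis $V_f(G_{i,j}) = 1$, the preceding lemma yields that $f(W(i,j))$ lies in column $8j+3$. Iterating up to $j = n$, I obtain that $f(W(i,n))$ lies in column $8n+3$.

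This contradicts Lemma~\ref{lem:wire FirstLast}, which forces $f(W(i,n))$ to lie in column $8n+1$, and $8n+3 \neq 8n+1$. Hence the assumption must fail, so there exists some $1 \le j \le n$ with $V_f(G_{i,j}) = 0$, as required. The only subtle point is making sure that the inductive application of the previous lemma is valid; concretely, one needs to verify that at each step the wire vertex $f(W(i,j-1))$ indeed sits in the row $12(i-1)+7$ required by the hypothesis of that lemma, which follows because the preceding lemma's conclusion places $f(W(i,j))$ in that same row (interpreting the stated row coordinate consistently with the row in which the wire of row $i$ lives).
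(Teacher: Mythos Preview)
Your proof is correct and follows essentially the same approach as the paper's: both argue by contradiction, assume every gadget in row $i$ has top voltage $1$, and then propagate the wire vertex positions from left to right by induction, reaching a contradiction with Lemma~\ref{lem:wire FirstLast} at the right boundary. The paper phrases the induction in terms of wire vertex~$1$ being ``inside'' and wire vertex~$2$ being ``outside'' each gadget, while you track the explicit column coordinates $8j+3$; these are the same thing, and your caveat about the row coordinate is just handling what is evidently a typo in the preceding (unlabeled) lemma's conclusion.
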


\begin{proof}
Assume towards a contradiction that there is $1\leq i\leq m$ such that for every $1\leq j\leq n$ it follows that $V_f(G_{i,j})=1$. We show by induction that for every $1\leq j\leq n$ it follows that vertex $2$ of $G_{i,j}$ is embedded outside $G_{i,j}$. For $j=1$ from Observation \ref{lem:safenessEmb} it follows that vertex $1$ of $G_{i,1}$ is embedded inside $G_{i,1}$. As we assume that $V_f(G_{i,1})=1$ we get that vertex $2$ is embedded outside $G_{i,1}$. Now, let $1<j\leq n$. By the inductive hypothesis, we get that vertex $2$ is embedded outside $G_{i,j-1}$. Therefore, it follows that vertex $1$ of $G_{i,j}$ is embedded inside $G_{i,j}$. Moreover, as we assume that $V_f(G_{i,j})=1$ it follows that vertex $2$ is embedded outside $G_{i,j}$. So we get from the induction that vertex $2$ is embedded outside $G_{i,n}$. This contradicts Observation \ref{lem:safenessEmb}.            
\end{proof}

Now we are ready to prove the reverse direction of the correctness of the reduction. In the next lemma we show that if $G$ is a grid graph, then $B$ is a a yes-instance of the {\sc Batteries} problem.

\begin{lemma} \label{lem:EmbtoBat}
Let $B=\{B_{i,j}=(x_1^{(i,j)},x_2^{(i,j)})~|~1\leq i\leq m, 1\leq j\leq n\}$ be an instance of the {\sc Batteries} problem. If $\mathsf{reduce}_2(B)=G_B$ is a yes instance of \gridEm, then $B$ is is a a yes-instance of the {\sc Batteries} problem.  
\end{lemma}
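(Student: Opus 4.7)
The plan is to extract a correct and safe $(m,n)$-placement directly from a given grid graph embedding of $G_B$. Let $f$ be a grid graph embedding of $G_B$. By the normalization already used throughout the section, we may assume $f((0,0))=(0,0)$, $f((1,0))=(1,0)$ and $f((1,1))=(1,1)$, so that Lemmas \ref{lem:fixedFrame}, \ref{obs:fixemb}, \ref{lem:synclem} and \ref{lem:sync} all apply. Define $p:[m]\times[n]\to\{+,-\}$ by $p(i,j)=p_f(H_{i,j})$; the definition of $p_f$ guarantees that this is well defined (the positive side of each battery gadget is embedded either on the top or on the bottom of its rectangle).

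First I would argue that $p$ is correct. By Lemma \ref{lem:sync}, $p_f(H_{i,j})=p_f(H_{i+1,j})$ for every $1\leq i<m$ and every $1\leq j\leq n$, so for each column $j$ the value $p(i,j)$ does not depend on $i$. Hence the condition of Lemma \ref{lem:oneDirection} is satisfied and $p$ is correct.

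Next I would argue that $p$ is safe. The key observation is that $V_p(i,j)=V_f(H_{i,j})$ for every $i,j$: if $p(i,j)=p_f(H_{i,j})=+$, then the positive side is embedded on top of $H_{i,j}$, and by the construction of $\mathcal{H}_B$ the extra ``positive voltage'' edge is present exactly when $x_1^{(i,j)}=1$, matching both the definition of $V_p(i,j)$ and of $V_f(H_{i,j})$; the case $p(i,j)=-$ is symmetric. Now fix any row $1\leq i\leq m$. By Lemma \ref{lem:safenessEmb}, there exists $1\leq j\leq n$ with $V_f(H_{i,j})=0$, hence $V_p(i,j)=0$, so
\[
\sum_{k=1}^{n} V_p(i,k) \;=\; \sum_{\substack{k=1 \\ k\neq j}}^{n} V_p(i,k) \;\leq\; n-1,
\]
which shows that $p$ is safe. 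Therefore $p$ is a correct and safe $(m,n)$-placement, certifying that $B$ is a yes-instance of the {\sc Batteries} problem.

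The only non-routine step is the identification $V_p(i,j)=V_f(H_{i,j})$; I expect this to be the main place where one needs to be careful, since it requires unfolding the construction of the battery gadget and verifying that the ``voltage'' edges of a gadget are present exactly in accordance with the boolean pair encoding the corresponding battery. Once this identification is in place, the argument is a straightforward combination of Lemmas \ref{lem:sync}, \ref{lem:oneDirection} and \ref{lem:safenessEmb}.
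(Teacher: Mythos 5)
Your proposal is correct and takes essentially the same approach as the paper: define $p(i,j)=p_f(H_{i,j})$, derive correctness from Lemma \ref{lem:sync}, and derive safeness from Lemma \ref{lem:safenessEmb}. You spell out two steps the paper leaves implicit — the identification $V_p(i,j)=V_f(H_{i,j})$ and the routine bound $\sum_k V_p(i,k)\leq n-1$ from a single zero entry — and you route correctness through Lemma \ref{lem:oneDirection}, which is harmless but unnecessary since Lemma \ref{lem:sync} already matches the definition of a correct placement verbatim.
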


\begin{proof}
Let $f$ be a grid graph embedding of $G_B$. For every $1\leq i\leq m$ and $1\leq j\leq n$, we set $p(i,j)=p_f(H_{i,j})$. We show that $p$ is correct and safe. Let $1\leq j\leq n, 1\leq i<m$. Then by Lemma \ref{lem:sync}, we get that $p_f(H_{i,j})=p_f(H_{i+1,j})$. Therefore we get that $p(i,j)=p(i+1,j)$. So, as the choice of $i$ and $j$ was arbitrary, $p$ is correct. Now, let $1\leq i'\leq m$. By Lemma \ref{lem:safenessEmb} there exists $1\leq j'\leq n$ such that $V_f(H_{i',j'})=0$. Therefore, we have that $V_p(i',j')=0$ so, as the choice of $i$ and $j$ was arbitrary, $p$ is safe.
We found a correct and safe placement for $B$, so $B$ is a yes-instance of the {\sc Batteries} problem.     
\end{proof}

Now, we turn to prove the other direction of the correctness of the reduction. Given a yes-instance $B=\{B_{i,j}=(x_1^{(i,j)},x_2^{(i,j)})~|~1\leq i\leq m, 1\leq j\leq n\}$ of the {\sc Batteries} problem with an $(m,n)$-placement $p$ that is correct and safe, we construct a grid graph embedding $f_p$ of $G_B$. For every $1\leq i\leq m$ and $1\leq j\leq n$, we will define grid graph embedding $f_p^{i,j}$ of the gadget $H_{i,j}=(x_1^{(i,j)},x_2^{(i,j)})$ such that all the embeddings of the gadgets ``agree''. Intuitively, we can think of our goal as having a puzzle where we show that every piece of the puzzle is placed ``correctly'' and all the pieces together are connected. In our case, we need to make sure that the locations of the embedding of the six synchronization vertices and the two wire vertices are synchronized. That is, for every $1< i\leq m, 1\leq j\leq n$ and $\ell \in \{ 3,4,5\}$ it follows that vertex $\ell$ (see Figure \ref{fi:batgat}) of $H_{i+1,j}$ is embedded inside $H_{i+1,j}$ in $f_p^{i+1,j}$ if and only if the vertex $\ell+3$ is embedded outside $H_{i,j}$ in $f_p^{i,j}$. Notice that in the embedding we give in this section for the battery gadget (see Figures \ref{fi:obs1}, \ref{fi:obs2}, \ref{fi:obs3}), if $p_f(H)=+$, then vertices $3,5$ and $7$ are embedded outside the gadget and vertices $4,6$ and $8$ are embedded inside the gadget. Otherwise, $p_f(H)=-$, and then vertices $3,5$ and $7$ are embedded inside the gadget and vertices $4,6$ and $8$ are embedded outside the gadget. Therefore, for our purpose, we get that $H_{i+1,j}$ and $H_{i,j}$ are ``connected'' or synchronized if and only if $p_f(H_{i+1,j})=p_f(H_{i+1,j})$. In addition, for every $1\leq i\leq m, 1\leq j< n$ it follows that vertex $2$ of $H_{i,j}$ is embedded inside $H_{i,j}$ in $p_d^{i,j}$ if and only if vertex $1$ is embedded outside $H_{i,j+1}$ in $f_p^{i,j+1}$. Moreover, as we saw in Lemma \ref{lem:wire FirstLast}, we also need to make sure that vertex $1$ of $H_{i,1}$ is embedded inside $H_{i,1}$ in $f_p^{i,1}$ and vertex $2$ of $H_{i,n}$ is embedded inside $H_{i,n}$ in $f_p^{i,n}$ for every $1\leq i\leq n$. If we can find such embeddings, then we can find an embedding of $G_B$. We prove this insight in the next lemma. For this purpose, for every $1\leq i\leq m$ and $1\leq j\leq n$ we denote the vertices of $H^{i,j}$ that we must embed inside the rectangle of the gadget by $\mathsf{InH_{i,j}}$ (see Figure \ref{fi:GreenIn}).    

\begin{figure}[!t]
\centering
\includegraphics[width=0.5\textwidth, page=11]{figures/distanceNew.pdf}
\caption{A battery gadget $H=(1,1)$. The set of vertices $\mathsf{InH}$ is in green.}\label{fi:GreenIn}
\end{figure}

\begin{lemma}\label{lem:condForGrid}
Let $B=\{B_{i,j}=(x_1^{(i,j)},x_2^{(i,j)})~|~1\leq i\leq m, 1\leq j\leq n\}$ be a set of batteries. Assume that for every $1\leq i\leq m, 1\leq j\leq n$ there exists a grid graph embedding $f^{i,j}$ of the battery gadget $H^{i,j}=(x_1^{(i,j)},x_2^{(i,j)})$ such that the following conditions are satisfied.
\begin{itemize}  
\item For every $2\leq j\leq 10$ it follows that $f^{i,j}((2,j))=(2,j), f^{i,j}((14,j))=(14,j)$, and for every $2\leq i\leq 14$ it follows that $f^{i,j}((i,2))=(i,2), f^{i,j}((i,10))=(i,10)$.
\item For every $1\leq i< m, 1\leq j\leq n$ it follows that $f^{i+1,j}(S_1)=(1,5)$ if and only if $f^{i,j}(S_4)=(13,5)$ and $f^{i+1,j}(S_1)=(3,5)$ if and only if $f^{i,j}(S_4)=(15,5)$ if. Similarly, $f^{i+1,j}(S_2)=(1,6)$ if and only if $f^{i,j}(S_5)=(13,6)$ and $f^{i+1,j}(S_2)=(3,6)$ if and only if $f^{i,j}(S_5)=(15,6)$. In addition, $f^{i+1,j}(S_3)=(1,7)$ if and only if $f^{i,j}(S_6)=(13,7)$ and $f^{i+1,j}(S_3)=(3,7)$ if and only if $f^{i,j}(S_6)=(15,7)$.  
\item For every $1\leq i\leq m, 1\leq j< n$ it follows that $f^{i,j}(W_2)=(7,9)$ if and only if $f^{i,j+1}(W_1)=(7,1)$, and $f^{i,j}(W_2)=(7,11)$ if and only if $f^{i,j+1}(W_1)=(7,3)$.
\item For every $1\leq i\leq m$ it follows that $f^{i,1}(W_1)=(7,3)$ and $f^{i,n}(W_2)=(7,9)$.    
\end{itemize}
Then, $G$ is a grid graph.
\end{lemma}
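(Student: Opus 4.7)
The plan is to construct an explicit grid graph embedding $f_p$ of $G_B$ by translating each local embedding $f^{i,j}$ to its proper position in the global picture and using the canonical coordinates for the surrounding $m \times n$-grid frame. Concretely, for a vertex $v$ that lies in the battery gadget $H_{i,j}$, I set $f_p(v) = f^{i,j}(v) + (12(i-1), 8(j-1))$, so that the local rectangle of $H_{i,j}$ is mapped into rows $[12(i-1)+2,\,12i+2]$ and columns $[8(j-1)+2,\,8j+2]$. For vertices of the grid frame, I use the coordinates given in Definition~\ref{def:gridFrame}.

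The first step is well-definedness. Several vertices belong simultaneously to two objects: the three pairs of synchronization vertices shared between $H_{i,j}$ and $H_{i+1,j}$, the wire vertices shared between $H_{i,j}$ and $H_{i,j+1}$, the wire vertices shared between $H_{i,1}$ (resp.\ $H_{i,n}$) and the frame, and the boundary lattice of each gadget rectangle shared with the frame. The second bullet of the hypothesis is designed exactly so that, after the translation $(12(i-1), 8(j-1))$ for $H_{i,j}$ and $(12i, 8(j-1))$ for $H_{i+1,j}$, the two local images of each shared synchronization vertex coincide; the third and fourth bullets give the analogous consistency for wire vertices between horizontally adjacent gadgets and at the left/right boundary; and the first bullet, together with the definition of the frame, makes the corners of the local rectangles agree with the frame vertices placed at the same coordinates. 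So $f_p$ is a well-defined function on $V(G_B)$.

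The second step is injectivity. Each $f^{i,j}$ is injective into a $13 \times 9$ rectangle, so after translation the images of distinct gadgets are contained in blocks that intersect only along the shared boundary rows and columns; on those shared rows/columns the only coincidences are exactly the identified vertices handled above. Similarly, the frame's image is disjoint from the interiors of the gadget blocks and meets their boundary blocks only at the shared lattice points from the first bullet. A finite check rules out any further collision.

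The third step is verifying $d_{f_p}(u,v) = 1$ for every $\{u,v\} \in E(G_B)$. Edges split into three kinds: (a) edges interior to a single gadget $H_{i,j}$, for which the condition is inherited from $f^{i,j}$ being a grid graph embedding, since translation preserves the $L_1$ distance; (b) edges of the $m \times n$-grid frame, trivially preserved by its canonical embedding; and (c) edges between a gadget and the frame (along the rectangle boundary of each gadget), which are forced by the first bullet since both endpoints already live at fixed coordinates one apart. I expect the main obstacle to be purely bookkeeping: listing the shared vertices, matching up translates, and confirming that the hypotheses align local coordinates with global ones. Once this alignment is made, it follows from Definition~\ref{def:Grid graph embedding} that $f_p$ is a grid graph embedding of $G_B$, and hence $G_B$ is a grid graph.
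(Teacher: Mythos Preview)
Your proposal is correct and follows essentially the same route as the paper: define the global embedding by translating each local $f^{i,j}$ by $(12(i-1),8(j-1))$, place frame vertices at their canonical coordinates, and then verify well-definedness on shared vertices, injectivity, and the edge condition. One small inaccuracy worth fixing: the image of $f^{i,j}$ is not contained in a $13\times 9$ rectangle, since synchronization vertices may land at rows $1$ or $15$ and wire vertices at columns $1$ or $11$; the injectivity argument therefore really rests on the second and third bullets forcing each protruding vertex of one gadget to coincide with the corresponding shared vertex of the adjacent gadget (rather than on disjointness of blocks), which is exactly the case you handle under well-definedness.
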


\begin{proof}
We define a grid graph embedding $f$ of $G$. First, for every vertex $(i,j)$ of the $m\times n$-grid frame we set $f((i,j))=(i,j)$. Now, for every $1\leq i\leq m, 2\leq j\leq 8\cdot n+2$ we set $f((12(i-1)+2,j))=(12(i-1)+2,j)$, and for every $2\leq i\leq 12\cdot m+2, 1\leq j\leq n$ we set $f((i,8(j-1)+2))=(i,8(j-1)+2)$. For every $1\leq i\leq m$ and $1\leq j\leq n$, for every $u\in \mathsf{InH_{i,j}}$ we set $f(u)=f^{i,j}(u)+(12(i-1),8(j-1))$. For every $1\leq i\leq m$ and $1\leq j\leq n$ we set $f(W(i,j))=f_{i,j}(W_2)+(12(i-1),8(j-1))$ and $f(W(i,0))=f_{i,1}(W_1)+(12(i-1),0)$. For every $1\leq i< m-1$ and $1\leq j\leq$ we set $f(S_{i,j}(1))=f_{i,j}(S_4)+(12(i-1),8(j-1))$,$f(S_{i,j}(2))=f_{i,j}(S_5)+(12(i-1),8(j-1))$ and $f(S_{i,j}(3))=f_{i,j}(S_6)+(12(i-1),8(j-1))$. In addition, for every $1\leq j\leq$, we set $f(S_{m-1,j}(1))=f_{m-1,j}(S_1)+(12(m-2),8(j-1))$,$f(S_{m-1,j}(2))=f_{m-1,j}(S_2)+(12(m-2),8(j-1))$ and $f(S_{m-1,j}(3))=f_{i,j}(S_3)+(12(m-2),8(j-1))$. We show that $f$ is a grid graph embedding of $G$.

First, observe that $f$ is a function from $V(G)$ to $\mathbb{N}\times\mathbb{N}$. We show that $f$ is an injection. Let $u,v\in V(G)$. We show that $f(u)\neq f(v)$ where $u=S_{i-1,j}(1)$ and $v\in InH_{i,j}$ for some $1< i\leq m$ and $1\leq j\leq n$, the other cases are simple, or can be proved similarly. We have that $f(S_{i-1,j}(1))=f_{i-1,j}(S_4)+(12(i-2),8(j-1))$. Assume that $f^{i-1,j}(S_4)=(13,5)$ (the other case is similar). Then, we have that $f^{i,j}(S_1)=(1,5)$. Since $f^{i,j}$ is an injection, we get that $f^{i,j}(v)\neq f^{i,j}(S_1)$.

Therefore we have that $f(u)=f(S_{i-1,j}(1))=(13,5)+(12(i-2),8(j-1))=(1,5)+(12(i-1),8(j-1))=f^{i,j}(S_1)+(12(i-1),8(j-1))\neq f^{i,j}(v)+(12(i-1),8(j-1))=f(v)$. So we get that $f$ is an injection. Now, observe that for every $\{u,v\}\in E(G)$ it follows that $d_f(u,v)=1$ since every such edge is in $H^{i,j}$ for some $i,j$ or from $m\times n$-grid frame. From Definition \ref{def:Grid graph embedding} we get that $f$ is a grid graph embedding of $G$.         
\end{proof}

   
In the next observations, we consider some embeddings of the battery gadget.

\begin{observation}\label{obs:ex1}
There exists a grid graph embedding $f$ of the battery gadget $H=(0,x_2)$ where $x_2\in \{0,1\}$, such that $P_f(H)=+$ and vertices $1$ and $2$ are embedded inside $H$ by $f$. Similarly, there exists a grid graph embedding $f$ of the battery gadget $H=(x_1,0)$ where $x_1\in \{0,1\}$, such that $p_f(G)=-$ and vertices $1$ and $2$ are embedded inside $H$ in $f$.
\end{observation}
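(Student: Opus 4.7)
The plan is to prove this observation by explicit construction, exhibiting for each case an embedding and then verifying by inspection the three required properties: (a) it is a grid graph embedding in the sense of Definition~\ref{def:Grid graph embedding}, (b) the positive (respectively negative) side sits on top, and (c) both wire vertices $1$ and $2$ lie inside the $13\times 9$ rectangle of $H$. Since the statement allows free choice of the embedding, we get to design it, so no combinatorial search is needed.

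For the first case $H=(0,x_2)$, I would take the ``canonical'' embedding of the battery gadget already drawn in Figures~\ref{fi:batgat} and~\ref{fi:posNegNot} (with the positive side drawn on top), placing the rectangle at $\{(i,j)\mid 0\le i\le 12,\ 0\le j\le 8\}$, putting the wire vertices $1$ and $2$ at grid points inside the rectangle (as in the figures), and laying out the positive‑side vertices $P(1),\dots,P(14)$ in the upper half and the negative‑side vertices $N(1),\dots,N(9)$ in the lower half. The only edge whose presence depends on the parameters is the negative voltage edge: since $x_2\in\{0,1\}$ is arbitrary, we simply include or omit this edge according to $x_2$; the placement of its endpoints is unchanged. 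Crucially, the absence of the positive voltage edge ($x_1=0$) is what permits the corresponding vertex on the top of the rectangle to be slotted in without forcing vertex $1$ or $2$ outside the rectangle---this is where the hypothesis $x_1=0$ is used. The second case $H=(x_1,0)$ is handled by the mirror image construction (reflecting through the horizontal midline of the rectangle), which puts the negative side on top and uses $x_2=0$ in the symmetric way.

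Verification of the three properties is routine. For property (a) one checks, edge by edge, that each pair $\{u,v\}\in E(H)$ is mapped to two grid points at $L_1$‑distance exactly $1$, and that the map is injective; this is a finite check on a constant‑size gadget. For property (b), one reads off that $f(P(1))=(9,6)$ (in local coordinates), matching the definition of $p_f(H)=+$ given just before Lemma~\ref{lem:synclem}; the mirrored construction gives $f(N(1))=(9,6)$ in the second case. For property (c), the chosen coordinates for the two wire vertices are interior grid points of the rectangle $[0,12]\times[0,8]$, so both lie inside $H$ by inspection.

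There is essentially no obstacle of substance here: the lemma is a sanity check that the gadget, when one is free to pick the voltages (and specifically when the voltage on the ``top'' side is $0$), admits at least one embedding with the wire vertices interior and the chosen orientation. The only mild care needed is to make sure the optional voltage edge on the \emph{bottom} side (whose presence is governed by $x_2$, resp.\ $x_1$) can always be drawn---but since its two endpoints lie on the boundary of the rectangle at consecutive grid positions in both orientations, this is immediate. The lemma will be used downstream as a ``base case'' that certifies the existence of suitable per‑gadget embeddings, which can then be glued together via Lemma~\ref{lem:condForGrid}.
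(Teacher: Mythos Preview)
Your proposal is correct and follows essentially the same approach as the paper: prove the observation by exhibiting an explicit embedding and verifying the required properties by inspection. The paper's own proof is even terser---it simply points to Figure~\ref{fi:obs1} for the case $H=(0,1)$ with $p_f(H)=+$ and declares the other cases similar---so your write-up is, if anything, more detailed than what the paper provides.
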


\begin{proof}
We present an embedding $f$ of $H=(0,1)$ where $p_f(G)=+$ and vertices $1$ and $2$ are embedded inside $H$ in Figure \ref{fi:obs1}.
The other cases are similar.
\end{proof}

\begin{figure}[!t]
\centering
\includegraphics[width=0.5\textwidth, page=12]{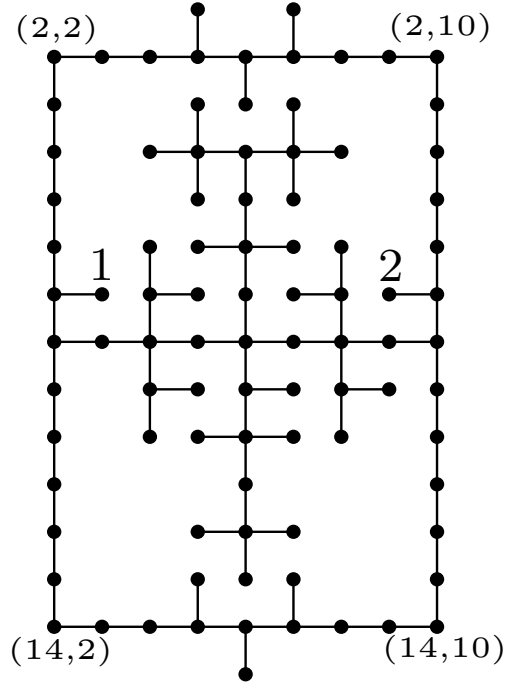}
\caption{Example of an embedding of $H=(0,1)$ where $p_f(G)=+$ and vertices $1$ and $2$ are embedded inside $H$.}\label{fi:obs1}
\end{figure}

\begin{observation}\label{obs:ex2}
There exists a grid graph embedding $f$ of the battery gadget $H=(x_1,x_2)$ where $x_1,x_2\in \{ 0,1\}$ such that either $p_f(G)=+$ or $p_f(H)=-$, and vertex $1$ is embedded inside $H$ in $f$ and vertex $2$ is embedded outside $H$ in $f$. 
\end{observation}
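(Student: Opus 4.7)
The proof plan mirrors that of Observation \ref{obs:ex1}: for each of the four choices of $(x_1,x_2) \in \{0,1\}^2$, I would exhibit one explicit grid graph embedding of the battery gadget $H = (x_1,x_2)$ satisfying the stated conditions, with $p_f(H)$ fixed (say to $+$) throughout. The only difference with respect to Figure \ref{fi:obs1} is that wire vertex $2$ is now required to lie outside the rectangle rather than inside it, which forces a rerouting of the internal wire path through the auxiliary vertices $a,b,c,d,e,f,g,h$ and the wire vertex $W_2$.

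Concretely, I would keep the embedding of the outer rectangle, of the interior ``cross'' imposed by Lemma \ref{obs:fixemb}, and of the chosen positive side exactly as in the baseline figure for Observation \ref{obs:ex1}; the six synchronization vertices are placed as before. Starting from $W_1$ inside the rectangle at $(7,3)$, I would then route the internal wire path through the half of the interior that is not occupied by the positive side, passing through $a,b,c,d$ close to the horizontal separator and continuing through $e,f,g,h$ so that the path exits the rectangle through the right boundary with $W_2$ placed at $(7,11)$, i.e.\ strictly outside $H$. The presence of the optional positive-voltage edge (when $x_1 = 1$) or negative-voltage edge (when $x_2 = 1$) only affects which vertex on the top or bottom boundary of the rectangle is adjacent to the corresponding endpoint of the polarity side, and is absorbed by a purely local adjustment of the embedding near that boundary.

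The main obstacle, and in fact the only real work, is to verify in each of the four cases that the constructed map is an injection from $V(H)$ into $\mathbb{N}\times\mathbb{N}$ and that every edge of $H$ is mapped to a pair of points at grid distance exactly $1$, so that the map satisfies Definition \ref{def:Grid graph embedding}. As in Observation \ref{obs:ex1}, this is a routine combinatorial check done by inspection of the figures. Finally, by the vertical symmetry of the battery gadget (which swaps the roles of the positive and negative sides), the analogous embedding with $p_f(H) = -$ is obtained simply by reflecting the figure across the horizontal separator, so it is enough to exhibit four figures in total.
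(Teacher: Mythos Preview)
Your proposal is correct and follows essentially the same approach as the paper: the paper's proof simply exhibits one explicit embedding (for $H=(1,1)$ with $p_f(H)=+$, given in Figure~\ref{fi:obs2}) and declares that ``the other cases are similar,'' whereas you spell out slightly more of the plan (four figures, plus the reflection argument for $p_f(H)=-$). Both arguments reduce to a direct figure-by-figure verification that the constructed map is a valid grid embedding.
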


\begin{proof}
We present an embedding $f$ of $H=(1,1)$ where $p_f(G)=+$, vertex $1$ is embedded inside $H$ and vertex $2$ is embedded outside $H$ in Figure \ref{fi:obs2}.
The other cases are similar.
\end{proof}

\begin{figure}[!t]
\centering
\includegraphics[width=0.5\textwidth, page=13]{figures/distanceNew.pdf}
\caption{Example of an embedding of $H=(1,1)$ where $p_f(G)=+$, vertex $1$ is embedded inside $H$ and vertex $2$ is embedded outside $H$.}\label{fi:obs2}
\end{figure}

\begin{observation}\label{obs:ex3}
There exists a grid graph embedding $f$ of the battery gadget $H=(x_1,x_2)$ where $x_1,x_2\in \{0,1\}$ such that either $p_f(G)=+$ or $p_f(G)=-$ and vertex $1$ is embedded outside $H$ in $f$ and vertex $2$ is embedded inside $H$ in $f$.
\end{observation}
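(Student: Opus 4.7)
The plan is to follow the template established by the proofs of Observations~\ref{obs:ex1} and~\ref{obs:ex2}: exhibit one explicit grid graph embedding of the battery gadget witnessing the claim, and then invoke symmetry to handle the remaining cases. Concretely, I would pick the representative instance $H=(1,1)$ with target orientation $p_f(H)=+$, and produce a figure (analogous to Figures~\ref{fi:obs1} and~\ref{fi:obs2}) that places the $13\times 9$ boundary rectangle, puts the positive side of the gadget in the upper half and the negative side in the lower half, and arranges the two wire vertices so that vertex~$1$ is pushed outside the bounding rectangle while vertex~$2$ is drawn inside it.

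The construction itself is essentially the horizontal mirror of the embedding used in Observation~\ref{obs:ex2}. Because the battery gadget is symmetric in the roles of its two wire vertices, swapping vertices~$1$ and~$2$ in the embedding already produced for Observation~\ref{obs:ex2} yields an embedding in which it is now vertex~$1$ rather than vertex~$2$ that protrudes out of the rectangle. The six synchronization vertices, the internal vertices of the positive and negative sides, and the auxiliary vertices $a,b,c,d,e,f,g,h$ are placed exactly as in that mirrored layout, and the optional positive-voltage and negative-voltage edges are attached to the top and bottom boundaries of the rectangle whenever $x_1=1$ or $x_2=1$ respectively, which does not conflict with any other vertex.

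The remaining cases---the three other values of $(x_1,x_2)$ and the alternative choice $p_f(H)=-$---are obtained by the obvious variations. Varying $(x_1,x_2)$ only affects whether the two optional voltage edges are drawn on the boundary, which is independent of the rest of the embedding. To realize $p_f(H)=-$ in place of $p_f(H)=+$, one simply reflects the constructed embedding across its central horizontal axis: this swaps the positions of the positive and negative sides while preserving the location of the two wire vertices relative to the bounding rectangle, so vertex~$1$ remains outside and vertex~$2$ remains inside.

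The only real check is the embedding itself, namely that every edge of the gadget is realized as a unit-length axis-parallel segment and that $f$ is an injection; this is a direct inspection of the figure and is the single obstacle. Once this is verified for the chosen representative, all other cases follow from the symmetries described above, completing the observation.
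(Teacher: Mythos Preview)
Your proposal is correct and matches the paper's approach exactly: the paper simply exhibits an explicit embedding of $H=(1,1)$ with $p_f(H)=+$, vertex~$1$ outside and vertex~$2$ inside (via Figure~\ref{fi:obs3}), and declares that ``the other cases are similar.'' Your write-up is in fact more informative than the paper's, since you spell out the mirror/reflection symmetries that justify the phrase ``other cases are similar,'' whereas the paper leaves this entirely to the reader.
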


\begin{proof}
We present an embedding $f$ of $H=(1,1)$ where $p_f(G)=+$, vertex $1$ is embedded outside $H$ and vertex $2$ is embedded inside $H$ in Figure \ref{fi:obs3}.
The other cases are similar.
\end{proof}

\begin{figure}[!t]
\centering
\includegraphics[width=0.5\textwidth, page=14]{figures/distanceNew.pdf}
\caption{Example of an embedding of $H=(1,1)$ where $p_f(G)=+$, vertex $1$ is embedded outside $H$ and vertex $2$ is embedded inside $H$.}\label{fi:obs3}
\end{figure}

Next, we prove the forward direction of the correctness of the reduction. That is, we show that if $B$ is a yes-instance of the {\sc Batteries} problem, then $G_B$ is a yes-instance of \gridEm.

\begin{lemma}\label{lem:BatisGrid}
Let $B=\{B_{i,j}=(x_1^{(i,j)},x_2^{(i,j)})~|~1\leq i\leq m, 1\leq j\leq n\}$ be a yes-instance of the {\sc Batteries} problem. Then $G_B$ is a yes-instance of \gridEm.  

\end{lemma}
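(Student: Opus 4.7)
The plan is to invoke Lemma \ref{lem:condForGrid}, which reduces the task to exhibiting, for each battery gadget $H_{i,j}$, a grid graph embedding $f^{i,j}$ whose boundary rectangle is fixed, whose synchronization vertices are consistent with those of $H_{i-1,j}$ and $H_{i+1,j}$, whose wire vertices are consistent with those of $H_{i,j-1}$ and $H_{i,j+1}$, and whose leftmost (respectively rightmost) wire vertex is inside $H_{i,1}$ (respectively $H_{i,n}$). Our job is therefore to extract these individual embeddings from the correct and safe placement $p$.

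First I would use correctness of $p$: by Lemma \ref{lem:oneDirection}, for every column $j\in[n]$ there is a fixed $p_j\in\{+,-\}$ with $p(i,j)=p_j$ for all $i$. I then choose $f^{i,j}$ so that $p_{f^{i,j}}(H_{i,j})=p_j$; this is possible because each of Observations \ref{obs:ex1}, \ref{obs:ex2}, \ref{obs:ex3} provides embeddings for both top-orientations. Having made this choice uniformly within each column, the vertical synchronization condition of Lemma \ref{lem:condForGrid} is satisfied: in the embeddings built by the observations, when the top is $+$ the synchronization vertices $3,5,7$ sit outside and $4,6,8$ sit inside (and vice versa when the top is $-$), so consecutive gadgets in the same column mate correctly.

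Next I would use safeness of $p$: for every row $i$, since $\sum_j V_p(i,j)\le n-1$, there exists at least one column $j_i\in[n]$ with $V_p(i,j_i)=0$, i.e.\ the voltage on the $p_{j_i}$-side of $H_{i,j_i}$ is zero. I use this distinguished column as a ``pivot'' for the row's wire: for $H_{i,j_i}$ I apply Observation \ref{obs:ex1}, whose hypothesis matches precisely because the top-side voltage is $0$, placing both wire vertices $1$ and $2$ inside; for $H_{i,j}$ with $j<j_i$ I apply Observation \ref{obs:ex2} to place vertex $1$ inside and vertex $2$ outside; for $H_{i,j}$ with $j>j_i$ I apply Observation \ref{obs:ex3} to place vertex $1$ outside and vertex $2$ inside. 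Since vertex $2$ of $H_{i,j}$ and vertex $1$ of $H_{i,j+1}$ are the same vertex $W(i,j)$ of $G_B$, this assignment makes the two sides of each boundary agree, as required. The boundary wire conditions also hold: vertex $1$ of $H_{i,1}$ is inside (either because $1<j_i$ or because $1=j_i$ via Observation \ref{obs:ex1}), and symmetrically for vertex $2$ of $H_{i,n}$.

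Having verified the four hypotheses of Lemma \ref{lem:condForGrid}, I conclude that $G_B$ is a grid graph, hence a yes-instance of \gridEm. The main subtlety—and the only place the argument could fail—is the pivot step: one must check that the column $j_i$ chosen from safeness can host \emph{both} wire vertices simultaneously, and it is precisely the statement of Observation \ref{obs:ex1} (conditioned on a zero voltage on the top side, which $V_p(i,j_i)=0$ supplies) that makes this work. Everything else is bookkeeping ensuring that the observations' orientation freedom matches the column orientation $p_j$ dictated by correctness.
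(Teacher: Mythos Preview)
Your proposal is correct and follows essentially the same approach as the paper: you invoke correctness to fix the column orientation $p_j$ (ensuring the synchronization condition of Lemma~\ref{lem:condForGrid}), then use safeness to pick a pivot column $j_i$ in each row (the paper calls it $k_i$) and apply Observations~\ref{obs:ex1}, \ref{obs:ex2}, \ref{obs:ex3} according to whether $j=j_i$, $j<j_i$, or $j>j_i$, respectively, before concluding via Lemma~\ref{lem:condForGrid}. Your identification of the pivot step as the only nontrivial point, hinging on the zero-voltage hypothesis of Observation~\ref{obs:ex1}, is exactly right.
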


\begin{proof}
Let $p$ be an $(m,n)$-placement that is correct and safe. We construct, for every $1\leq i\leq m$ and $1\leq j\leq n$, a grid graph embedding $f_p^{i,j}$ of the gadget $H_{i,j}=(x_1^{(i,j)},x_2^{(i,j)})$ as follows. First, as $p$ is safe, for every $1\leq i\leq m$ there exists $1\leq k_i\leq n$ such that $V_p(i,k_i)=0$. Observe that if $p(i,k_i)=+$, then $x_1^{(i,k_i)}=0$, and if $p(i,k_i)=-$, then $x_2^{(i,k_i)}=0$. Therefore, from Observation \ref{obs:ex1}, for every $1\leq i\leq m$ there exists an embedding $f_p^{i,k_i}$ of $H_{i,k_i}$ with $f_p(H_{i,j})=p(i,j)$, such that vertices $1$ and $2$ are embedded inside $H_{i,k_i}$ in $f_p^{i,k_i}$. Now, from Observation \ref{obs:ex2} we get that for every $1\leq i\leq m$ and for every $1\leq j<k_i$ there exists an embedding $f_p^{i,j}$ of $H_{i,j}$ with $f_p(H_{i,j})=p(i,j)$ such that vertex $1$ is embedded inside $H_{i,j}$ and vertex $2$ is embedded outside $H_{i,j}$ in $f_p^{i,j}$. Similarly, from Observation \ref{obs:ex3} we get that for every $1\leq i\leq m$ and for every $k_i< j\leq n$ there exists an embedding $f_p^{i,j}$ of $H_{i,j}$ with $f_p(H_{i,j})=p(i,j)$ such that vertex $1$ is embedded outside $H_{i,j}$ and vertex $2$ is embedded inside $H_{i,j}$ in $f_p^{i,j}$.

We now show that the embeddings $\{ f_p^{i,k_i}~|~1\leq i\leq m, 1\leq j\leq n \}$ satisfy the conditions of Lemma \ref{lem:condForGrid}. Observe that the first condition is satisfied by the construction of each embedding. In addition, notice that by the construction of the embeddings, the second condition is satisfied if and only if $p_f(H_{i,j})=p_f(H_{i+1,j})$ for every $1\leq i< m, 1\leq j\leq n$. Let $1\leq i< m, 1\leq j\leq n$. Since $p$ is a correct placement, we get that $p(i,j)=p(i+1,j)$, and therefore $p_f(H_{i,j})=p_f(H_{i+1,j})$, so the second condition is satisfied. 

If $j< k_i$ then vertex $2$ of $H_{i,j}$ is embedded outside $H_{i,j}$ in $f_p^{i,j}$ and vertex $1$ of $H_{i,j+1}$ is embedded inside $H_{i,j+1}$ in $f_p^{i,j+1}$. If $j\geq k_i$ then vertex $2$ of $H_{i,j}$ is embedded inside $H_{i,j}$ in $f_p^{i,j}$ and vertex $1$ of $H_{i,j+1}$ is embedded outside $H_{i,j+1}$ in $f_p^{i,j+1}$. Therefore, we get that the third condition is satisfied.
 
Observe that for every $1\leq i\leq k_i$ vertex $1$ of $H_{i,j}$ is embedded inside $H_{i,j}$ in $f_p^{i,j}$. Since $k_i\geq 1$ we get that vertex $1$ of $H_{i,1}$ is embedded inside $H_{i,1}$ in $f_p^{i,1}$. Similarly, observe that for every $k_i\leq i\leq n$ vertex $2$ of $H_{i,j}$ is embedded inside $H_{i,j}$ in $f_p^{i,j}$. Since $k_i\leq n$ we get that vertex $2$ of $H_{i,n}$ is embedded inside $H_{i,n}$ in $f_p^{i,n}$. Therefore we get that the last condition is satisfied. Every condition of Lemma \ref{lem:condForGrid} is satisfied, therefore $G_B$ is a grid graph, and we get that $G_B$ is a yes-instance of \gridEm.                     
\end{proof}

So far, we have proved the correctness of our construction. It only remains to show is that, if $G_B$ is a grid graph, then the distance approximation of $G_B$ is bounded by a constant. We use the fact that in every grid graph embedding of $G_B$ the embedding of the rectangles of the battery gadgets are fixed, as we saw in Observation \ref{obs:fixemb}. We show in the next lemma that if $G_B$ is a grid graph with grid graph embedding $f$, then necessarily $a_f\leq 234$. 

\begin{lemma} \label{lem:boundedPara}
Let $B=\{B_{i,j}=(x_1^{(i,j)},x_2^{(i,j)})~|~1\leq i\leq m, 1\leq j\leq n\}$ be an instance of the {\sc Batteries} problem.  If $G_B$ is a grid graph, then for every grid graph embedding $f$ of $G_B$ it follows that $a_f\leq 234$. 
\end{lemma}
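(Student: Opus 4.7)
The plan is to exhibit a ``fixed skeleton'' $S \subseteq V(G_B)$ of vertices whose embedding is pinned down by Lemmas~\ref{lem:fixedFrame} and~\ref{obs:fixemb}---namely, the vertices of the $m\times n$-grid frame together with the vertices of all gadget rectangles and the separating columns described in Lemma~\ref{obs:fixemb}. For these vertices, the graph distance and the grid distance coincide, because $S$ induces a rectangular grid subgraph in $G_B$, and the embedding $f$ restricted to $S$ is (a translation of) the identity. Thus $d(u, v) = d_f(u, v)$ for all $u, v \in S$.

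Next I would argue that every vertex $w \in V(G_B) \setminus S$ lies in the interior of some battery gadget $H_{i,j}$ (i.e.~among its positive side, negative side, wire, or synchronization vertices). The entire gadget fits inside a $13 \times 9$ rectangle and hence contains at most $13 \cdot 9 = 117$ vertices. In particular, there is a skeleton vertex $w_S$ (lying on the rectangle of $H_{i,j}$) with $d(w, w_S) \leq 117$, since the gadget is connected and any shortest path inside the gadget uses at most as many edges as there are vertices.

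For arbitrary $u, v \in V(G_B)$, I would combine the three paths $u \leadsto u_S \leadsto v_S \leadsto v$ and apply Observation~\ref{lem:dfLeqd} and the $L_1$ triangle inequality:
\begin{align*}
d(u, v) & \leq d(u, u_S) + d(u_S, v_S) + d(v_S, v) \\
        & = d(u, u_S) + d_f(u_S, v_S) + d(v_S, v) \\
        & \leq d(u, u_S) + d_f(u_S, u) + d_f(u, v) + d_f(v, v_S) + d(v_S, v) \\
        & \leq 2\,d(u, u_S) + 2\,d(v, v_S) + d_f(u, v),
\end{align*}
so that $d(u, v) - d_f(u, v) \leq 2\,d(u, u_S) + 2\,d(v, v_S)$. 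Taking the maximum over all pairs gives $a_f \leq 2\,\max_w d(w, w_S) + 2\,\max_w d(w, w_S)$.

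The main obstacle is to sharpen the crude estimate $d(w, w_S) \leq 117$ to a constant that yields the stated bound $234$. I would choose $w_S$ to be a nearest rectangle vertex (in $G_B$) of the gadget containing $w$, and use a finite case check on the battery gadget of Figure~\ref{fi:batgat}: for each internal vertex (positive side, negative side, wire vertex, synchronization vertex) one verifies that a shortest in-gadget path to the rectangle has length at most roughly $\lfloor 117/2 \rfloor$, independently of how $f$ embeds the gadget's internal vertices; a uniform bound of $d(w, w_S) \leq 58$ already gives $a_f \leq 4 \cdot 58 < 234$. Any slightly looser but still gadget-local estimate delivers the constant $234$ claimed in the statement, and there is no dependence on $m$ or $n$, as required.
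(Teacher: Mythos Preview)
Your overall strategy---pick fixed ``skeleton'' representatives $u_S,v_S$, bound the detour $d(w,w_S)$ for interior vertices, and chain triangle inequalities---is exactly the paper's approach. But there is a genuine gap in your second displayed line, where you replace $d(u_S,v_S)$ by $d_f(u_S,v_S)$.

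The justification you give, that ``$S$ induces a rectangular grid subgraph in $G_B$,'' is not correct. By Lemma~\ref{obs:fixemb} the fixed skeleton consists of the thick outer frame together with horizontal lines at rows $6(i-1)+2$ and vertical lines at columns $8(j-1)+2$. This is a union of axis-parallel lines, not a full grid. Take for instance $u_S=(2,5)$ and $v_S=(8,5)$: both lie on horizontal skeleton lines, and $d_f(u_S,v_S)=6$, but column $5$ carries no vertical skeleton line, so any path in $G_B[S]$ must detour to column $2$ or column $10$, giving length at least $12$. Nothing guarantees that the gadget interiors supply the missing vertical shortcut in $G_B$ either (the positive/negative sides and wires are sparse trees, not a dense patch). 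So in general $d(u_S,v_S)>d_f(u_S,v_S)$, and the equality you use fails.

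The paper's proof avoids this by never asserting equality on the skeleton. Instead it upper-bounds $d(u',v')$ by routing along the rectangle sides (picking up roughly one extra gadget's width/height in each coordinate) and lower-bounds $d_f(u,v)$ directly; the difference between these two bounds, which is $O(1)$ in terms of the $13\times 9$ gadget dimensions, is absorbed into the final constant alongside the $d(u,u')$ and $d(v,v')$ terms. To repair your argument, replace your equality by an inequality $d(u_S,v_S)\le d_f(u_S,v_S)+C$, where $C$ accounts for one detour around a gadget cell; the rest of your chain then goes through with the extra additive $C$ in the final bound.
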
 

\begin{proof}
Let $f$ be a grid graph embedding of $G_B$. Assume that $f((0,0))=(0,0)$ $f((1,0))=(1,0)$ and $f((1,1))=(1,1)$. Let $u,v\in V(G_B)$. For $1\leq i,i'\leq m, 1\leq j,j'\leq n$, let $H_{i,j}$ be a closest battery gadget to $u$ and let $H_{i',j'}$ be a closest battery gadget to $v$ in graph distance. Let $u'$ be a vertex from the rectangle of battery gadget $H_{i,j}$ closest to $u$ and let $v'$ be a vertex from the rectangle of battery gadget $H_{i',j'}$ closest to $v$. Consider that the dimension of each rectangle is $13\times 9$ and each one contains a total of $91$ vertices. Observe that $d(u',v') \leq 9(|i-i'|+1)+13(|j-j'|+1)$. Moreover, from Lemma \ref{obs:fixemb} it follows that the embedding of $u'$ and $v'$ is fixed. Therefore $d_f(u',v')\geq 9(|i-i'|-1)+13(|j-j'|-1)$, since that is the grid graph distance of the closest vertices from $H_{i,j}$ and $H_{i',j'}$. If $u$ and $v$ are from the gadgets $H_{i,j}$ and $H_{i',j'}$ then $d_f(u,v)\geq 9(|i-i'|-1)+13(|j-j'|-1)$. Otherwise, one or both are from the $m\times n$-grid frame embedding. Assume that $u$, without loss of generality, is from the $m\times n$-grid frame embedding. Since $u'$ is a closest vertex to $u$ in $H_{i,j}$ and $H_{i,j}$ is a closest battery gadget to $u$, it follows that $d(u,u')\leq 4$. Therefore, since the grid graph distance is bounded by the graph distance, it follows that in that case $d_f(u,u')\leq 4$. So $u$ might be closer in $f$ to the closest vertex in $H_{i',j'}$ by at most $4$ more than from the closest vertex in $H_{i,j}$ to $H_{i',j'}$. Therefore, in the worst case, both $u$ and $v$ are from the $m\times n$-grid frame embedding, and we get that $d_f(u,v)\geq 9(|i-i'|-1)+13(|j-j'|-1)-4-4$. Now, if $u$ is a vertex from the battery gadget $H_{i,j}$ then $d(u,u')\leq 91$ since there are only $91$ vertices in $H_{i,j}$. If not, then $u$ is a vertex from the $m\times n$ grid frame embedding. In that case, since $u'$ is a closest vertex to $u$ in $H_{i,j}$ and $H_{i,j}$ is a closest battery gadget to $u$, it follows that $d(u,u')\leq 4$. In any case we get that $d(u,u')\leq 91$. Similarly, we get that $d(v,v')\leq 91$. We get from the triangle inequality that $d(u,v)\leq d(u,u')+d(u',v')+d(u',v')$. In conclusion, we get that $d(u,v)-d_f(u,v)\leq d(u,u')+d(u',v')+d(u',v')-d_f(u,v)\leq 91+9(|i-i'|+1)+13(|j-j'|+1)+91-(9(|i-i'|-1)+13(|j-j'|-1)-3-3)=91+9+13+91+9+13+4+4=234$. Since $u$ and $v$ are arbitrary vertices, we get that $a_f\leq 234$.  
\end{proof}

In the proof of the next lemma, we invoke Observation \ref{obs:polytimered2}, and Lemmas \ref{lem:BatisGrid}, \ref{lem:EmbtoBat} and \ref{lem:boundedPara} in order to assert the existence of a polynomial reduction from the {\sc Batteries} problem to \gridEm\ where $a_G$ is bounded by a constant. 
  
\begin{lemma}\label{gridisHard}
There exists a polynomial reduction from the {\sc Batteries} problem to \gridEm\ where $a_G$ is bounded by a constant. 
\end{lemma}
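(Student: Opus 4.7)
The plan is to establish that the function $\mathsf{reduce}_2$ itself, already constructed and analyzed throughout this section, serves as the desired polynomial reduction; the proof amounts to assembling the preceding lemmas. Concretely, on input an instance $B$ of the {\sc Batteries} problem, the reduction outputs $G_B = \mathsf{reduce}_2(B)$, viewed as an instance of \gridEm.

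First I would invoke Observation~\ref{obs:polytimered2} to conclude that $G_B$ is computable in polynomial time in $|B|$, so in particular $|V(G_B)|$ is polynomial in $|B|$. Next, for correctness, I would combine Lemmas~\ref{lem:BatisGrid} and~\ref{lem:EmbtoBat}: the former shows that yes-instances of {\sc Batteries} map to yes-instances of \gridEm, while the latter shows the converse, giving the required ``iff''.

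For the parameter bound, I would apply Lemma~\ref{lem:boundedPara}: whenever $G_B$ is a grid graph, every grid graph embedding $f$ of $G_B$ satisfies $a_f \leq 234$, and hence by Definition~\ref{def:Grid graph distance app} it follows that $a_{G_B} \leq 234$. The only remaining subtlety is the definition of $a_G$ on no-instances of \gridEm (where $a_G = |V(G)|$ and is therefore unbounded); as indicated in the footnote attached to the reduction, this is handled by agreeing to compose $\mathsf{reduce}_2$ with a replacement of $G_B$ by a fixed canonical small non-grid graph (e.g., $K_4$) in the no-instance case, which preserves the yes/no answer while keeping $a_G$ bounded by an absolute constant. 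The only substantive ingredient in the whole argument is the rigidity-based constant bound of Lemma~\ref{lem:boundedPara}; everything else is bookkeeping, and the main obstacle has already been surmounted by the detailed embedding analysis carried out above.
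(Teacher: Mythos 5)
Your proposal assembles exactly the same ingredients as the paper's proof: Observation~\ref{obs:polytimered2} for polynomiality, Lemmas~\ref{lem:BatisGrid} and~\ref{lem:EmbtoBat} for the two directions of correctness, and Lemma~\ref{lem:boundedPara} for the constant bound on $a_{G_B}$ over yes-instances (with the same footnote-based caveat about no-instances). This is the paper's argument essentially verbatim, and it is correct; the only discrepancy is cosmetic, namely that the paper's text says ``bounded by $232$'' while Lemma~\ref{lem:boundedPara} actually gives $234$, which you state correctly.
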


\begin{proof}
We show that $\mathsf{reduce}_2$ is such a reduction. From Observation \ref{obs:polytimered2} we get that $\mathsf{reduce}_2$ is computable in polynomial time. Now, let $B=\{B_{i,j}=(x_1^{(i,j)},x_2^{(i,j)})~|~1\leq i\leq m, 1\leq j\leq n\}$ be an instance of the {\sc Batteries} problem. If $B$ is a yes-instance of the {\sc Batteries} problem, then by Lemma \ref{lem:BatisGrid}, we get that $\mathsf{reduce}_2(B)=G_B$ is a yes-instance of \gridEm. If $\mathsf{reduce}_2(B)=G_B$ is a yes-instance of \gridEm, then by Lemma \ref{lem:EmbtoBat} $B$ is a yes-instance of the {\sc Batteries} problem. In addition, by Lemma \ref{lem:boundedPara} we get that $a_{G_B}$ is bounded by $232$ (if $G_B$ is a grid graph). This completes the proof.     
\end{proof}

In conclusion, in Lemma \ref{BatisHard} we proved the existence of a polynomial reduction from {\sc SAT} to the {\sc Batteries} problem. In Lemma \ref{gridisHard} we proved the existence of a polynomial reduction from the {\sc Batteries} problem to \gridEm\ with $a_G$ that is bounded by a constant. Combining these two results, we conclude the correctness of Theorem $1.4$.   

\distanceHard*

\subsection{\bGridEm\ is \FPT\ with Respect to $k+a_G$ on General Graphs}

We present an \FPT\ algorithm with respect to $a_G$ and $k$ for the \bGridEm\ problem. We remark that we do not need to know the value of $a_G$ in advance, in order to use our algorithm, as we iterate over all the potential values for $a_G$. The idea of the algorithm is as follows. We iterate over every possible value for $a_f$, from $1$ to $|V(G)|-2$. For every such $a_f$, we do the following. We guess one of the leftmost vertices $v$ in the $k\times r$ grid, i.e.~such that $\fc(v)=0$. We divide the $k\times r$ grid into $\left \lceil{\frac{r}{k+a_G}}\right \rceil$ ``small rectangles'' of size $k\times (a_G+k)$ from left to right, except the last rectangle that might be smaller. We find the graph distance of each vertex from $v$, i.e. we compute $d(v,u)$ for every $u\in V$. Then, we sort the vertices into the small rectangles as follows. For each vertex $u$, we put $u$ in the $\left \lceil{\frac{d(v,u)}{k+a_G}}\right \rceil$-th rectangle. If there is no such rectangle (i.e. $\left \lceil{\frac{d(v,u)}{k+a_G}}\right \rceil$ is too large), then we put $u$ in the last rectangle, or we can just conclude that we have a no-instance.

Afterwards, we show that in every $k\times r$ grid graph embedding $f$ of $G$ with $a_f=a_G$ where $v$ is one of the leftmost vertices, every $u$ is embedded either in its sorted rectangle or the previous one. An intuition for this is similar to our former argument about the structure of a grid graph embedding with distance approximation $a_f$: On the one hand, the vertex $u$ cannot be embedded into a farther rectangle as its shortest path (or paths) from $v$ cannot be embedded in that case; on other hand, if $u$ is embedded into a closer rectangle, then the embedding does not respect the distance approximation. We will prove this claim formally; here we just give an overview. Then, after we know the approximate ``location'' of each vertex, we try to find a $k\times r$ grid graph embedding of $G$. 

We denote the set of vertices located between the $s$-th column and the $t$-th column in a $k\times r$ grid graph embedding $f$ of $G$ by $C_f(s,t)$, i.e. $C_f(s,t)=\{u\in V~:~s\leq \fc(u)\leq t \}$. In addition, for any two integers $0\leq s\leq t$, we denote the set of vertices with graph distance at least $s$ and at most $t$ from $v$ by $D_v(s,t)$, i.e. $D_v(s,t)=\{u\in V~|~s\leq d(v,u)\leq t \}$. See Figure~\ref{fi:distanceFPTK}.

\begin{figure}[!ht]
\centering
\includegraphics[width=0.5\textwidth, page=6]{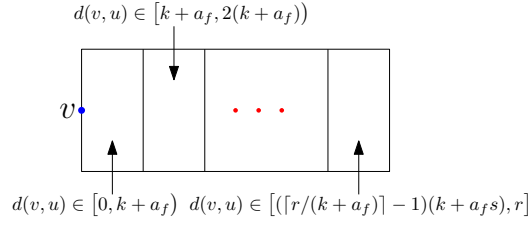}
\caption{The partition of the $k\times r$ grid into small rectangles and the approximate position of any vertex $u$ with respect to $v$.}\label{fi:distanceFPTK}
\end{figure}
 
\begin{lemma}\label{lem:distanceLem1}
Let $G=(V,E)$ be a $k\times r$ grid graph, and let $f$ be a $k\times r$ grid graph embedding of $G$. Let $v\in V$ such that $\fc(v)=0$. Let $u\in V$. Then, $u\in C_f(d(u,v)-a_f-k, d(u,v))$ and $u\in D_v(\fc(u),\fc(u)+a_f+k)$.  
\end{lemma}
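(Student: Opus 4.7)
The plan is to observe that the two set-memberships asserted in the lemma are, after unfolding the definitions of $C_f$ and $D_v$, equivalent reformulations of a single double inequality:
\[
  \fc(u) \;\leq\; d(u,v) \;\leq\; \fc(u) + a_f + k.
\]
Indeed, $u \in C_f(d(u,v)-a_f-k,\, d(u,v))$ amounts to $d(u,v) - a_f - k \leq \fc(u) \leq d(u,v)$, while $u \in D_v(\fc(u),\, \fc(u)+a_f+k)$ amounts to $\fc(u) \leq d(v,u) \leq \fc(u) + a_f + k$; together these are precisely the displayed double inequality. Thus the first task is to unfold the set notation and reduce the lemma to proving these two inequalities.

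The crucial simplification I will use throughout is that, since $\fc(v)=0$, Definition~\ref{def:Grid graph distance} yields
\[
  d_f(u,v) \;=\; |\fr(u)-\fr(v)| + \fc(u),
\]
and in particular $\fc(u) \leq d_f(u,v)$.

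For the lower bound, I combine this with Observation~\ref{lem:dfLeqd} to conclude $\fc(u) \leq d_f(u,v) \leq d(u,v)$. For the upper bound, I invoke the definition of $a_f$ to obtain $d(u,v) \leq d_f(u,v) + a_f$. Substituting the expression for $d_f(u,v)$ above and bounding $|\fr(u)-\fr(v)| \leq k-1 < k$ (valid because $\fr(u),\fr(v) \in [k]$), I get $d(u,v) \leq \fc(u) + a_f + k$, which is exactly the desired upper bound.

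I do not foresee any significant obstacle; after the initial translation of the set-memberships into inequalities, the argument is essentially mechanical, using only the observation that $\fc(v)=0$, the formula for $d_f$, the trivial bound on the row discrepancy, Observation~\ref{lem:dfLeqd}, and the definition of $a_f$.
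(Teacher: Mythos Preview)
Your proposal is correct and follows essentially the same approach as the paper's proof: both reduce to the double inequality $\fc(u)\le d(u,v)\le \fc(u)+a_f+k$, obtaining the lower bound via $\fc(u)\le d_f(u,v)\le d(u,v)$ (Observation~\ref{lem:dfLeqd}) and the upper bound via $d(u,v)\le d_f(u,v)+a_f\le \fc(u)+k+a_f$. Your presentation is in fact slightly cleaner, since you observe upfront that the two set-memberships are equivalent restatements of the same double inequality, whereas the paper verifies each membership separately using identical ingredients.
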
  
\begin{proof}
First, we show that $u\in C_f(d(u,v)-a_f-k, d(u,v))$. Observe that for every $u\in V$, we have that $d(u,v)\geq d_f(u,v)\geq |\fc(u)-\fc(v)|=|\fc(u)-0|=\fc(u)$. In addition, notice that $|\fr(u)-\fr(v)|\leq k$. Therefore, we get that $d_f(u,v)=|\fc(u)-\fc(v)|+|\fr(u)-\fr(v)|\leq |\fc(u)-\fc(v)|+k=\fc(u)+k$. Now, we have that $d(u,v)-d_f(u,v)\leq a_f$. Therefore, $d(u,v)-\fc(u)-k\leq d(u,v)-d_f(u,v)\leq a_f$. Combining these two inequalities, we get that $d(u,v)-a_f-k\leq \fc(u)\leq d(u,v)$. So we have that $u\in C_f(d(u,v)-a_f-k, d(u,v))$.

Second, we show that $u\in D_v(\fc(u),\fc(u)+a_f+k)$. We have that $a_f$ is the distance approximation of $f$, therefore we get that $d(u,v)\leq d_f(u,v)+a_f$. We saw that $d_f(u,v)\leq \fc(u)+k$. Therefore we get that $d(u,v)\leq \fc(u)+a_f+k$. In addition, we also saw that $d(u,v)\geq \fc(u)$. Combining these two inequalities, we get that $\fc(u)\leq d(u,v)\leq \fc(u)+a_f+k$. So we have that $u\in D_v(\fc(u),\fc(u)+a_f+k)$.  
\end{proof}

In Lemma \ref{lem:distanceLem1} we assume that there exists a vertex $v$ with $\fc(v)=0$. Notice that for every $k\times r$ grid graph embedding $f$ of $G$, we can construct a $k\times r$ grid graph embedding $f'$ of $G$ with $a_{f'}=a_f$ and with such a vertex $v$: indeed, we can simply define $f'(u)=(\fr(u),\fc(u)-m)$ where $m=\mathsf{Min}_{v\in V} \{\fc(v)\}$; then, observe that $f'$ is a $k\times r$ grid graph embedding of $G$ with $a_{f'}=a_f$ and with $\mathsf{Min}_{v\in V} \{f'_\mathsf{col}(v)\}=0$. We state this in the next observation.
\begin{observation} \label{obs:VerInfirstColumn}
Let $G=(V,E)$ be a $k\times r$ grid graph, and let $f$ be a $k\times r$ grid graph embedding of $G$. Then, there exists a $k\times r$ grid graph embedding $f'$ of $G$ with $a_{f'}=a_f$ and $\mathsf{Min}_{v\in V} \{f'_\mathsf{col}(v)\}=0$. 
\end{observation}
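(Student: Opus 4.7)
The plan is a simple horizontal translation argument: we shift the embedding $f$ leftward by the minimum column coordinate it uses. Formally, let $m = \min_{v \in V} \fc(v)$, which is well-defined because $V$ is finite, and define $f' : V \to \mathbb{N} \times \mathbb{N}$ by setting $f'(u) = (\fr(u), \fc(u) - m)$ for every $u \in V$. By construction, $\min_{v \in V} f'_{\mathsf{col}}(v) = m - m = 0$, so the last required property is immediate.

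Next I would verify that $f'$ is a valid $k \times r$ grid graph embedding of $G$. Injectivity is inherited from $f$, since the map $(i, j) \mapsto (i, j - m)$ is itself injective. To see that $f'$ still lands inside a $k \times r$ grid, note that the row coordinate is unchanged, and column coordinates are translated so that they lie in $\{0, 1, \ldots, \max_v \fc(v) - m\} \subseteq \{0, 1, \ldots, r - 1\}$ (or the paper's analogous convention used in Lemma~\ref{lem:distanceLem1}). Finally, for any edge $\{u, v\} \in E$, one checks $d_{f'}(u, v) = |\fr(u) - \fr(v)| + |(\fc(u) - m) - (\fc(v) - m)| = d_f(u, v) = 1$, so the edge condition of Definition~\ref{def:Grid graph embedding} is preserved.

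It remains to show $a_{f'} = a_f$. The key observation is that the horizontal translation preserves grid distances: for any $u, v \in V$,
\[
d_{f'}(u, v) = |\fr(u) - \fr(v)| + |(\fc(u) - m) - (\fc(v) - m)| = |\fr(u) - \fr(v)| + |\fc(u) - \fc(v)| = d_f(u, v).
\]
Since the graph distance $d(u, v)$ does not depend on the embedding, it follows that $d(u, v) - d_{f'}(u, v) = d(u, v) - d_f(u, v)$ for every pair $u, v$, and taking the maximum over all pairs gives $a_{f'} = a_f$, as required.

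There is no real obstacle here — the only thing to be careful about is the coordinate convention (the grid is nominally $[k] \times [r]$, but the paper works with $\fc(v) = 0$ as a valid minimum column, e.g.\ in Lemma~\ref{lem:distanceLem1}); once that is fixed consistently, the translation trivially preserves injectivity, edge distances, and pairwise grid distances, so all three desired properties of $f'$ follow directly.
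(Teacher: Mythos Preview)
Your proposal is correct and takes essentially the same approach as the paper: define $f'(u)=(\fr(u),\fc(u)-m)$ where $m=\min_{v\in V}\fc(v)$, and observe that this translation preserves injectivity, edge distances, and all pairwise grid distances, hence $a_{f'}=a_f$. The paper only sketches this in one sentence; your version simply fills in the routine verifications.
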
 

Now we further discuss the idea of the algorithm.  We use dynamic programming in order to find a $k\times r$ grid graph embedding of $G$ in the following manner. First, we get the approximate location for each vertex as explained earlier. Then, starting from left to right, we try to find an embedding for each of the ``small rectangles'' that agrees with one of the embeddings for the previous one. In order to do this, we only need to know which vertices from the current rectangle we have already embedded in the previous one, and what is the embedding in the last column. Notice that, each time, we need to store information whose size depends only on $r$ and $a_G$. In this way, after the last iteration corresponding to the last rectangle, we can get the embedding of $G$ (if such exists), or conclude that there is no such an embedding.   

In what follows, we will need the following notations:
\begin{itemize}
\item For a $U\subseteq V$, we denote the set of vertices of $U$ that have neighbors from $V\setminus U$ by $N_{\mathsf{O}}(U)$, i.e. $N_{\mathsf{O}}(U)=\{ u\in U$: there exists $ \{u,v\}\in E$ where $ v\in V\setminus U \}$.   
\item For a $k\times r$ grid graph embedding $f$ of $G$, we refer to the set $\{ (v,(i,j))\in f~|~j=0\}$ as the {\em left column} of $f$. Similarly, we refer to the set $\{ (v,(i,j))\in f~|~j=r\}$ as the {\em right column} of $f$. Sometimes we refer to these sets simply as columns. In addition, we refer to the column that is immediately ``left'' to the right column (i.e. $\{ (v,(i,j))\in f~|~j=r-1\}$), as the {\em $-1$-right column}.
\item For a column $\mathsf{column}$, we denote the set of vertices within it by $V(\mathsf{column})$, i.e. $V(\mathsf{column})=\{ v\in V~|~(v,(i,j))\in \mathsf{column}$ for some $i,j\}$. 
\item For $U\subseteq V$, we denote the set $\{ f~|~f$ is a $k\times 1$ grid graph embedding of $G[U']$ for some $U'\subseteq U\}$ by $\mathsf{AllColEmbd}(U,k)$.
\end{itemize}

In the next lemma we show that $|\mathsf{AllColEmbd}(U,k)|$ is bounded by $(|U|k)^{\OO(k)}$.

\begin{lemma}\label{lem: AllEmbAna}
Let $G$ be a graph, $k\in \mathbb{N}$ and $U\subseteq V(G)$. Then, $|\mathsf{AllColEmbd}(U,k)|=\OO((|U|k)^{\OO(k)})$.
\end{lemma}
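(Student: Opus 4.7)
The plan is to bound $|\mathsf{AllColEmbd}(U,k)|$ by counting separately the choice of the domain $U'$ of an embedding and the choice of the underlying injection that places $U'$ into the column. The key observation I will use is that the codomain $[k] \times [1]$ contains only $k$ cells, so any $f \in \mathsf{AllColEmbd}(U,k)$ must be an injection defined on some $U' \subseteq U$ with $|U'| \leq k$; therefore I only need to iterate over subsets of $U$ of size at most $k$, ignoring any larger subsets outright.

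First, I would upper-bound the number of admissible domains:
\[
\sum_{i=0}^{k} \binom{|U|}{i} \leq (k+1)\,|U|^{k} = \OO(|U|^{k}).
\]
Next, for a fixed $U'$ with $|U'|=i\leq k$, the number of injections $U' \hookrightarrow [k] \times [1]$ is
\[
k(k-1)\cdots(k-i+1) \leq k^{k}.
\]
Since every element of $\mathsf{AllColEmbd}(U,k)$ is uniquely determined by the pair consisting of its domain $U'$ and the corresponding injection, multiplying these two bounds yields
\[
|\mathsf{AllColEmbd}(U,k)| \leq \OO(|U|^{k}) \cdot k^{k} = \OO\bigl((|U|\cdot k)^{k}\bigr) = \OO\bigl((|U|k)^{\OO(k)}\bigr),
\]
which is exactly the desired estimate.

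I do not anticipate a substantive obstacle here; the lemma is a pure counting statement, and the only mild subtlety is that not every injection $U' \hookrightarrow [k]\times[1]$ actually respects the edges of $G[U']$ in the sense required by Definition~\ref{def:Grid graph embedding}. However, this constraint can only shrink the count further, so the upper bound above holds unconditionally, and no refined analysis of when an injection is genuinely a grid graph embedding is needed to establish the claimed asymptotic.
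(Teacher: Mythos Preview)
Your proof is correct and follows essentially the same counting argument as the paper: bound the number of domains $U'\subseteq U$ of size at most $k$ by $(k+1)|U|^{k}$, bound the number of injections from such a $U'$ into the $k$ column cells by $k^{k}$, and multiply. The paper omits your remark that the edge constraint can only decrease the count, but otherwise the two proofs are the same.
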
                

\begin{proof}
Observe that, for every $0\leq k'\leq k$, the number of possibilities to choose $k'$ elements from $U$ is $\binom{|U|}{k'}\leq |U|^{k'}\leq |U|^{k}$. In addition, the number of possibilities to order $k'$ elements in $k$ places is bounded by $k!\leq k^k$. Therefore the total number of possibilities to choose at most $k$ elements and then order them in $k$ places is bounded by $\sum_{k'=0}^{k}|U|^{k}k^k \leq (k+1)|U|^{k}k^k=(|U|k)^{\OO(k)}$.   
\end{proof}

\begin{algorithm} [t!]
    \SetKwInOut{Input}{Input}
    \SetKwInOut{Output}{Output}
	\medskip
    {\textbf{function} $k\times r \mathsf{Grid Recognition Iteration}$}$(\langle \mathsf{PreviousInf}, \mathsf{CurrentRect}, \mathsf{NextRect},a_f, k, r\rangle)$\;
		 $\mathsf{CurrentInfo}\gets \emptyset$ \;
				\For{every $(\mathsf{CurrentUsed},\mathsf{LeftC},\mathsf{-1LeftC})\in \mathsf{PreviousInf}$}
				{
				\For{every subset $\mathsf{NextRectUse}$ of $\mathsf{NextRect}$ }
				{
				\For{every $(\mathsf{RightC}\in \mathsf{AllColEmbd}((\mathsf{CurrentRect}\setminus \mathsf{CurrentUsed})\cup \mathsf{NextRectUse},k)$}
				{
				Use brute force to find (if  such exists) a $k\times (a_f+k)$ grid graph embedding $f'$ of $G(\mathsf{CurrentRect}\setminus \mathsf{CurrentUsed}\cup \mathsf{NextRectUse}\cup \mathsf{LeftC})$ with left column $\mathsf{LeftC}$ and right column $\mathsf{RightC}$ and $N_{\mathsf{O}}(\mathsf{CurrentRect}\setminus \mathsf{CurrentUsed}\cup \mathsf{NextRectUse}\cup \mathsf{LeftC}\cup \mathsf{-1LeftC})\cap(\mathsf{CurrentRect}\setminus \mathsf{CurrentUsed}\cup \mathsf{NextRectUse}\cup \mathsf{LeftC}\cup \mathsf{-1LeftC})\subseteq V(\mathsf{RightC})$\;
				\If{found such an embedding}
				{
						$\mathsf{-1RightC}\gets$ $-1$-right column of $f'$\; 
					 $\mathsf{CurrentInfo}\gets \mathsf{CurrentInfo}\cup \{(\mathsf{NextRectUse},\mathsf{RightC},\mathsf{-1RightC})\}$\;
				}
					
				}
				}
				}
				\Return $\mathsf{CurrentInfo}$\; 
				\caption{$\mathsf{k\times r Grid Recognition Iteration}$}
    \label{alg:Grid Recognition Iteration}
\end{algorithm}

Algorithm \ref{alg:Grid Recognition Iteration} is where we find embeddings for a current small rectangle, using the information of its previous small rectangle. Each time we use the algorithm, we seek $k\times(a_f+k)$ embeddings for the vertices in the current rectangle and for a subset $U$ of the vertices of the next rectangles. For each such embedding, we only need to store the following information to proceed to the next rectangle: the subset $U$ and the ``right column'' of the embedding (so as to ``glue'' embeddings of adjacent rectangles properly). In the next rectangle, we try to embed (using brute-force) the vertices we did not embedd in the previous rectangle (those outside $U$) and some of the vertices of its next rectangle, such that the left column of the current embedding ``agrees'' with the right column of the embedding of the previous rectangle. Notice that at each step we only store an ``\FPT\ amount'' of information, and so we will eventually achieve a fixed-parameter algorithm. We summarize the information the algorithm returns and analyze its runtime in the next observation.

\begin{observation} \label{obs: runtime iteration}
Let $G=(V,E)$ be a connected graph, $v\in V$, $k,r,a\in \mathbb{N}$ where $a<|V|$, and $U, U',U''\subseteq V$ with $|U|,|U'|,|U''|\leq 2k(k+a)$. Let $\mathsf{PreviousInf}\subseteq \{(A',\mathsf{LeftColumn},\mathsf{-1LeftC})~|~ A'\subseteq U', \mathsf{LeftColumn}\in \mathsf{AllColEmbd}(A'\cup U,k)\}$. Then, Algorithm \ref{alg:Grid Recognition Iteration} on the input $(\langle \mathsf{PreviousInf}, U', U'',a, k, r\rangle)$ runs in time $(ka)^{\OO(ka+k^2)}$, and returns the set $\{(A'',\mathsf{RightColumn},\mathsf{-1RightC})~|~A''\subseteq U'',$ there exist $(A',\mathsf{LeftColumn})\in \mathsf{PreviousInf}$ and a $k\times (a+k)$ grid graph embedding of $G[(U'\setminus A')\cup A'']$ with left column $\mathsf{LeftColumn}$, right column $\mathsf{RightColumn}$ and $-1$ right column $\mathsf{-1RightC}$ and $N_{\mathsf{O}}(\mathsf{NextRectUse})\subseteq \mathsf{RightColumn}\cup \mathsf{LeftColumn}\}$.   
\end{observation}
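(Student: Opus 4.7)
The plan is to verify two things separately: (i) the output set is exactly the one described in the statement, and (ii) the running time is as claimed.

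For (i), correctness follows directly from the structure of Algorithm~\ref{alg:Grid Recognition Iteration}. The triple nested loop enumerates every triple $(\mathsf{CurrentUsed},\mathsf{LeftC},\mathsf{-1LeftC})\in\mathsf{PreviousInf}$, every subset $\mathsf{NextRectUse}\subseteq U''$, and every candidate right column $\mathsf{RightC}$ from the relevant $\mathsf{AllColEmbd}$ set. The inner brute-force step then certifies existence of a $k\times(a+k)$ grid graph embedding of $G[(U'\setminus\mathsf{CurrentUsed})\cup\mathsf{NextRectUse}\cup V(\mathsf{LeftC})]$ whose left column is $\mathsf{LeftC}$, whose right column is $\mathsf{RightC}$, and which satisfies the outer-neighborhood closure condition. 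So, in one direction, every tuple added to $\mathsf{CurrentInfo}$ witnesses the existence of such an embedding. Conversely, if a triple $(A'',\mathsf{RightColumn},\mathsf{-1RightC})$ satisfies the conditions in the statement, then the witnessing triple $(A',\mathsf{LeftColumn},\mathsf{-1LeftC})\in\mathsf{PreviousInf}$, the subset $A''$, and the column $\mathsf{RightColumn}$ are all enumerated by the loops, and the brute-force subroutine finds the corresponding embedding, so the triple is added. Thus the returned set coincides with the one specified.

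For (ii), I would bound the four sources of cost in turn, using $|U|,|U'|,|U''|\le 2k(k+a)$ and Lemma~\ref{lem: AllEmbAna}. The outer loop over $\mathsf{PreviousInf}$ has at most $2^{|U'|}\cdot|\mathsf{AllColEmbd}(U'\cup U,k)|^2\le 2^{2k(k+a)}\cdot\OO((k(k+a)\cdot k)^{\OO(k)})$ iterations. The second loop over subsets of $U''$ contributes a factor of $2^{|U''|}\le 2^{2k(k+a)}$. The third loop has size $\OO((k(k+a)\cdot k)^{\OO(k)})$ by Lemma~\ref{lem: AllEmbAna} applied to a vertex set of size at most $4k(k+a)$. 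Finally, the brute-force search at Line 6 tries at most $(k(a+k))^{k(a+k)}$ injective assignments of at most $k(a+k)$ vertices to the $k(a+k)$ cells of a $k\times(a+k)$ grid, each checked in polynomial time. Multiplying and simplifying (absorbing $2^{\OO(k(k+a))}$ and the polylogarithmic polynomial-time overhead into the exponent base $ka$), all factors combine into $(ka)^{\OO(ka+k^2)}$.

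The only slightly delicate step is absorbing the $2^{2k(k+a)}$ factor coming from subset enumeration into the $(ka)^{\OO(ka+k^2)}$ bound; this is harmless because $2^{2k(k+a)}=2^{\OO(ka+k^2)}\le(ka)^{\OO(ka+k^2)}$ whenever $ka\ge 2$, and the remaining corner cases ($k=1$ or $a=0$) can be handled by a direct verification. I expect this bookkeeping, together with simplifying $(k(k+a))^{\OO(k)}$ into the same form, to be the main place where care is needed; everything else is immediate from the algorithm's pseudocode and from Lemma~\ref{lem: AllEmbAna}.
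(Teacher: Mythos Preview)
Your proposal is correct and follows essentially the same approach as the paper: correctness is read off directly from the pseudocode, and the runtime is obtained by multiplying the bound on $|\mathsf{PreviousInf}|$ (via subset counting and Lemma~\ref{lem: AllEmbAna}), the number of subsets of $U''$, the size of the $\mathsf{AllColEmbd}$ set for the right column, and the brute-force cost of enumerating $k\times(a+k)$ embeddings. Your explicit remark about absorbing the $2^{\OO(k(k+a))}$ factor into $(ka)^{\OO(ka+k^2)}$ is a useful clarification that the paper leaves implicit.
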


\begin{proof}
The proof for the output of the algorithm trivially follows from the pseudocode. As for the runtime of Algorithm \ref{alg:Grid Recognition Iteration}, observe that, since $|U|,|U'|,|U''|\leq 2k(k+a)$, there are at most $2^{2k(k+a)}$ different subsets of $U'$. For every such subset $A'$, $|A'\cup U|\leq 4k(k+a)$. Notice that for a set of vertices $B$ with $|B|\leq 4k(k+a)$, by Lemma \ref{lem: AllEmbAna}, we get that $|\mathsf{AllColEmbd}(B,k)|=(2k(k+a)k)^{\OO(k)}=(k+a)^{\OO(k)}$. Therefore, we get that $|\mathsf{PreviousInf}|=\OO((k+a)^{\OO(k)}a)=(k+a)^{\OO(k)}$. So, we get that the algorithm performs at most $(k+a)^{\OO(k)}\cdot 2^{2k(k+a)}\cdot (k+a)^{\OO(k)}= (k+a)^{\OO(k+a)}$ iterations. At each iteration, we look at all possible $k\times (a+k)$ grid graph embeddings. Notice that the number of these grid graph embeddings is bounded by $(ka+k^2)^{ka+k^2}$ (see a similar analyze in the proof of Lemma \ref{lem: AllEmbAna}). For a given grid graph embedding, the algorithm works with runtime $\OO(k^2+ka)$. Therefore, the algorithm runs in time $(ka)^{\OO(ka+k^2)}$. 
\end{proof}

\begin{algorithm}[t!]
    \SetKwInOut{Input}{Input}
    \SetKwInOut{Output}{Output}
	\medskip
    {\textbf{function} $k\times r \mathsf{ Grid Recognition with }a_f$}$(\langle G=(V,E), k, r, a_f\rangle)$\;
		\For{every $v\in V$  \label{algo3:every v}} 
		{\label{alg3:distance}
			Find $d(u,v)$ for every $u\in V$ \; 
			\If{there exists $u\in V$ with $d(u,v)-a_f>r$ \label{alg3:smallDs}} 
			{\Return $\langle G,k,r,a_f \rangle $ is a no-instance\;} 
			\For{every $i\in [\left \lceil{\frac{r}{k+a_G}}\right \rceil]+1$}
			{
			Create an empty set $D_i$\;
			}
			\For {every $u\in V$}
			{$i\gets \left \lfloor{\frac{d(u,v)}{k+a_f}}\right \rfloor$\;
			$D_i\gets D_i\cup \{ u\}$\;}
			\If {there exists $D_i$ with $|D_i|>2k(k+a_f)$}
			{Break\;}
			Let $j$ be the maximum such that $D_j\neq \emptyset$\;

			$\mathsf{CurrentInfo}\gets (\emptyset,\mathsf{emptyColumn})$\;
			$i\gets 0$
			\While{$\mathsf{CurrentInfo}\neq \emptyset$ and $i\leq j+1$} 
			{\label{algo3: while}
				$\mathsf{CurrentInfo}\gets k\times r \mathsf{Grid Recognition Iteration}(\langle \mathsf{CurrentInfo}, D_i, D_{i+1},a_f, k, r\rangle)$\;
				$i\gets i+1$\;
			}
			\If{$\mathsf{CurrentInfo}= \emptyset$ \label{alg: not empty}} 
			{Break\;}
			\Return $\langle G,k,r,a_f \rangle $ is a yes-instance\; \label{alg: yes}
		}
		\Return $\langle G,k,r,a_f \rangle $ is a no-instance\;

    \caption{$\mathsf{k\times r Grid Recognition with }a_f$}
    \label{alg:Grid Recognition a_f}
\end{algorithm}

Algorithm \ref{alg:Grid Recognition a_f} (presented ahead) is where we guess a leftmost vertex, sort the vertices into the ``small rectangles'' and then use Algorithm \ref{alg:Grid Recognition Iteration} in order to find a $k\times r$ grid graph embedding of $G$. 
        
We denote the set $\mathsf{CurrentInfo}$ at the $i$-th iteration of Algorithm \ref{alg:Grid Recognition Iteration} by $\mathsf{CurrentInfo}_i$.
Similarly, we denote the set $\mathsf{PreviousInfo}$ at the $i$-th iteration of Algorithm \ref{alg:Grid Recognition Iteration} by $\mathsf{PreviousInfo}_i$.
Observe that, for $i>1$, it follows that $\mathsf{PreviousInfo}_{i}=\mathsf{CurrentInfo}_{i-1}$.
At each iteration, we store the information we need from the last iteration in $\mathsf{CurrentInfo}_{i-1}$. Every element in $\mathsf{CurrentInfo}$ represents a way to embed the last rectangle, in a way that agrees with some of the embeddings of the last rectangle. In particular, we show in the next lemma that if $\mathsf{CurrentInfo}_i\neq \emptyset$ then there exists a grid graph embedding of the set of vertices that are in the rectangles we analyzed so far.  

\begin{lemma} \label{lem: notEmptyIsGrid}
Let $G=(V,E)$ be a connected graph, $k,r,a\in \mathbb{N}_0$. When Algorithm \ref{alg:Grid Recognition a_f} is called on the input $<G,k,r,a>$, for every $i$ it follows that for every $(A',\mathsf{rightColumn},\mathsf{-1RightC})\in \mathsf{PreviousInf}_i$ there exists a $k\times i(a+k)$ grid graph embedding of $G[D_1\cup\ldots\cup D_i\cup A']$ with right column $\mathsf{rightColumn}$ and $N_{\mathsf{O}}(D_1\cup\ldots\cup D_i\cup A')\subseteq V(\mathsf{rightColumn})$.
\end{lemma}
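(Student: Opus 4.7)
We proceed by induction on $i$, tracking the iterations of the while loop in Algorithm~\ref{alg:Grid Recognition a_f}. For the base case, $\mathsf{PreviousInf}$ is initialized to $\{(\emptyset,\mathsf{emptyColumn})\}$, which corresponds to the trivially valid empty $k\times 0$ grid embedding of the empty graph; the outer-neighborhood inclusion holds vacuously.

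For the inductive step, fix $i\geq 1$, assume the claim for $i-1$, and take any tuple $(A',\mathsf{rightColumn},\mathsf{-1RightC})\in \mathsf{PreviousInf}_i=\mathsf{CurrentInfo}_{i-1}$. By Observation~\ref{obs: runtime iteration} applied to the $(i-1)$-st call of Algorithm~\ref{alg:Grid Recognition Iteration}, there exist a tuple $(A^{\star},\mathsf{leftColumn},\mathsf{-1LeftC})\in \mathsf{PreviousInf}_{i-1}$ and a $k\times(a+k)$ grid graph embedding $f'$ of $G[(D_{i-1}\setminus A^{\star})\cup A']$ whose left column equals $\mathsf{leftColumn}$, whose right column equals $\mathsf{rightColumn}$, and which satisfies the outer-neighborhood property enforced by the algorithm's internal check. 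Simultaneously, the inductive hypothesis supplies a $k\times(i-1)(a+k)$ grid graph embedding $f^{\star}$ of $G[D_1\cup\cdots\cup D_{i-1}\cup A^{\star}]$ whose right column equals $\mathsf{leftColumn}$ and which satisfies $N_{\mathsf{O}}(D_1\cup\cdots\cup D_{i-1}\cup A^{\star})\subseteq V(\mathsf{leftColumn})$.

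The next step is to glue $f^{\star}$ and $f'$ along the shared column $V(\mathsf{leftColumn})$: define $f$ on $D_1\cup\cdots\cup D_i\cup A'$ by using $f^{\star}$ on its domain and a horizontal translate of $f'$ (placing its left column at column $(i-1)(a+k)$ of the ambient $k\times i(a+k)$ grid) on $(D_{i-1}\setminus A^{\star})\cup A'$. Injectivity of $f$ is immediate from the pairwise disjointness of the buckets $D_j$ (each vertex is assigned to a unique bucket by Algorithm~\ref{alg:Grid Recognition a_f}), the inclusions $A^{\star}\subseteq D_{i-1}$ and $A'\subseteq D_i$, and the agreement of the two sub-embeddings on the shared column.

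The main obstacle, and where the inductive outer-neighborhood invariant is critical, is verifying (a) that every edge of $G[D_1\cup\cdots\cup D_i\cup A']$ is realized as a unit-length segment under $f$ and (b) that $N_{\mathsf{O}}(D_1\cup\cdots\cup D_i\cup A')\subseteq V(\mathsf{rightColumn})$. For (a), edges internal to either sub-embedding's domain are handled by that sub-embedding; any remaining edge would have one endpoint $u\in D_1\cup\cdots\cup D_{i-1}\cup A^{\star}$ and the other outside that set, forcing $u\in N_{\mathsf{O}}(D_1\cup\cdots\cup D_{i-1}\cup A^{\star})\subseteq V(\mathsf{leftColumn})$ by the inductive hypothesis, so both endpoints lie in the domain of $f'$ where $f'$ already realizes the edge as a unit segment. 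For (b), a vertex $u$ in the new domain with a $G$-neighbor outside that domain cannot belong to $(D_1\cup\cdots\cup D_{i-1}\cup A^{\star})\setminus V(\mathsf{leftColumn})$, because the inductive hypothesis would then place all of its outside-neighbors in $V(\mathsf{leftColumn})$, which is inside the new domain; hence $u$ lies in the domain of $f'$, and the outer-neighborhood check built into Algorithm~\ref{alg:Grid Recognition Iteration} forces such a $u$ into $V(\mathsf{rightColumn})$, completing the inductive step.
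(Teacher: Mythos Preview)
Your proof is correct and follows essentially the same approach as the paper: induction on $i$, with the inductive step invoking Observation~\ref{obs: runtime iteration} to extract a predecessor tuple and a $k\times(a+k)$ block embedding, applying the inductive hypothesis to obtain the embedding of the first $i-1$ buckets, gluing the two along the shared column, and then verifying injectivity, edge realization, and the outer-neighborhood containment. Your treatment of the outer-neighborhood invariant in parts (a) and (b) is in fact somewhat more explicit than the paper's. The only caveat is a minor off-by-one in your bucket indices (your glued domain is $D_1\cup\cdots\cup D_{i-1}\cup A'$ rather than $D_1\cup\cdots\cup D_i\cup A'$); the paper writes the block embedding as being on $(D_i\setminus A)\cup A'$ rather than $(D_{i-1}\setminus A^{\star})\cup A'$, but this is a cosmetic indexing shift and does not affect the substance of the argument.
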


\begin{proof}
We prove this lemma by induction on $i$. For $i=1$ the claim is trivial. Let $i>1$ and let $(A',\mathsf{rightColumn},\mathsf{-1RightC}')\in \mathsf{PreviousInf}_i$. From Observation \ref{obs: runtime iteration}, we know that there exists $(A,\mathsf{Column},\mathsf{-1RightC})\in \mathsf{PreviousInf}_{i-1}$ such that $G[(D_i\setminus A)\cup A']$ is a $k\times (a+k)$ grid graph with grid graph embbeding $f''$ with left column $\mathsf{Column}$ and right column $\mathsf{rightColumn}$ and $N_{\mathsf{O}}((D_i\setminus A)\cup A'\cup \mathsf{-1RightC})\cap((D_i\setminus A)\cup A'\cup)\subseteq \mathsf{rightColumn}$. From the induction hypothesis, we get that there exists a $k\times (i-1)(a+k)$ grid graph embedding $f'$ of $G[D_1\cup\ldots\cup D_{i-1}\cup A]$ with right column $\mathsf{Column}$ and $N_{\mathsf{O}}(D_1\cup\ldots\cup D_{i-1}\cup A)\subseteq V(\mathsf{Column})$. We can assume that $f''_\mathsf{col}(\mathsf{Column})=0$, and $f'_\mathsf{col}(v)=0$ for some vertex $v$. Observe that $(D_i-A\cup A')\cap (D_1\cup\ldots\cup D_{i-1}\cup A)=V(\mathsf{Column})$. We define a $k\times i(a+k)$ grid graph embedding $f$ of $G[D_1\cup\ldots\cup D_i\cup A']$. If $v\in (D_1\cup\ldots\cup D_{i-1}\cup A)$, then $f(v)=f'(v)$; otherwise $f(v)=(f''_\mathsf{row}(v),f''_\mathsf{col}(v)+f''_\mathsf{col}(\mathsf{Column}))$. Observe that $f$ is an injection. For $\{u,v\}\in E(G[D_1\cup\ldots\cup D_i\cup A'])$, if $u,v\in D_1\cup\ldots\cup D_{i-1}\cup A$ then $d_f(u,v)=1$ since $f'$ is a $k\times i(a+k)$ grid graph embedding of $G[D_1\cup\ldots\cup D_{i-1}\cup A]$. Similarly, if $u,v\in D_i-A\cup A'$ then $d_f(u,v)=1$ since $f''$ is a $k\times (a+k)$ grid graph embedding of $G[D_i-A\cup A']$. Assume that $u \in D_1\cup\ldots\cup D_{i-1}\cup A$ and $v\in D_i-A\cup A'$. Then, $u \in V(\mathsf{Column})\subseteq D_1\cup\ldots\cup D_{i-1}\cup A$, so $d_f(u,v)=1$ since $f'$ is a $k\times i(a+k)$ grid graph embedding of $G[D_1\cup\ldots\cup D_{i-1}\cup A]$. Now, let $v\in N_{\mathsf{O}}(D_1\cup\ldots\cup D_i\cup A')$. If $v\in D_1\cup\ldots\cup D_{i-1}\cup A$ then $v\in N_{\mathsf{O}}(D_1\cup\ldots\cup D_{i-1}\cup A)$. Therefore, we have that $v\in V(\mathsf{Column})$. Since $v\notin V(\mathsf{rightColumn})$, we have that $v\notin N_{\mathsf{O}}((D_i\setminus A)\cup A'\cup \mathsf{-1RightC})\cap((D_i\setminus A)\cup A')$, therefore $v\notin N_{\mathsf{O}}(D_1\cup\ldots\cup D_i\cup A')$, a contradiction. Thus, $v\in (D_i\setminus A)\cup A'$ and by Observation \ref{obs: runtime iteration} $v\in V(\mathsf{rightColumn})$. This completes the proof.
\end{proof}

In the next lemma, we show the opposite direction of Lemma \ref{lem: notEmptyIsGrid}. That is, we show that if there exists a grid graph embedding of $G$, then for every iteration $i$ we have that $(C_f(i)\setminus D_v(i),\mathsf{rightColumn}(C_f(i))$ $,\mathsf{-1rightColumn}(C_f(i))\in \mathsf{CurrentInfo}_i$, and therefore $\mathsf{CurrentInfo}_i\neq \emptyset$.

\begin{lemma}\label{lem:if grid then not empty}
Let $G=(V,E)$ be a $k\times r$ grid graph with a $k\times r$ grid graph embedding $f$, and let $a\geq a_f$. Assume that $f_j(v)=0$ for some $v\in V$. Then, when Algorithm \ref{alg:Grid Recognition Iteration} is called on the input $(\langle G=(V,E), k, r, a\rangle)$, in the iteration corresponding to $v$, it follows that for every $i$ $(C_f(i)\setminus D_v(i),\mathsf{rightColumn}(C_f(i)),\mathsf{-1rightColumn}(C_f(i))\in \mathsf{CurrentInfo}_i$.
\end{lemma}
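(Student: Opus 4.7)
We argue by induction on $i$, with the invariant that $\mathsf{CurrentInfo}_i$ contains the triple induced by $f$ on its first $i$ small rectangles. The base case $i=0$ holds because before the while-loop we set $\mathsf{CurrentInfo}_0=\{(\emptyset,\mathsf{emptyColumn},\mathsf{emptyColumn})\}$, which is exactly the triple corresponding to ``no rectangle processed yet.'' Note that the outer for-loop in line~\ref{algo3:every v} does visit the vertex $v$ with $\fc(v)=0$, so the relevant iteration of the outer loop is indeed executed.

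The inductive step rests on Lemma~\ref{lem:distanceLem1}, applied to this fixed $v$. Writing $R_t^f:=\{u\in V : (t-1)(k+a)\le \fc(u)\le t(k+a)-1\}$ for the vertex set of the $t$-th small rectangle of $f$, the lemma gives the two inclusions $R_t^f\subseteq D_{t-1}\cup D_t$ and $D_t\subseteq R_t^f\cup R_{t+1}^f$, pinning down how each $D$-bucket splits across $f$-rectangles. Reading the lemma's notation as $C_f(i)=\bigcup_{j\le i}R_j^f$ and $D_v(i)=\bigcup_{j<i}D_j$, these inclusions deliver the identities $C_f(i)\setminus D_v(i)=R_i^f\cap D_i$ and $D_{i-1}\setminus(R_{i-1}^f\cap D_{i-1})=R_i^f\cap D_{i-1}$, which are exactly the two equalities needed to align the algorithmic bookkeeping with $f$.

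Now assume the invariant at step $i-1$, so the triple $(R_{i-1}^f\cap D_{i-1},\mathsf{RightCol}_{i-1},\mathsf{-1RightCol}_{i-1})$ lies in $\mathsf{PreviousInfo}_i=\mathsf{CurrentInfo}_{i-1}$, where $\mathsf{RightCol}_{i-1}$ and $\mathsf{-1RightCol}_{i-1}$ are the $f$-columns at horizontal positions $(i-1)(k+a)-1$ and $(i-1)(k+a)-2$ respectively. Inside Algorithm~\ref{alg:Grid Recognition Iteration} take the particular choices $\mathsf{CurrentUsed}:=R_{i-1}^f\cap D_{i-1}$ and $\mathsf{NextRectUse}:=R_i^f\cap D_i$. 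By the two identities above, $(\mathsf{CurrentRect}\setminus \mathsf{CurrentUsed})\cup \mathsf{NextRectUse}=(R_i^f\cap D_{i-1})\cup(R_i^f\cap D_i)=R_i^f$, and the restriction of $f$ to $R_i^f$ together with the shared $\mathsf{LeftC}=\mathsf{RightCol}_{i-1}$ is a valid $k\times(a+k)$ grid graph embedding whose right column is $\mathsf{RightCol}_i$ and whose $-1$-right column is $\mathsf{-1RightCol}_i$. The only non-routine check is the outgoing-neighbour condition $N_{\mathsf{O}}(\cdots)\subseteq V(\mathsf{RightC})$: because every edge of any grid graph embedding has length exactly $1$, any already-placed vertex with a neighbour outside the first $i$ rectangles must sit in column $i(k+a)-1$, i.e.\ in $\mathsf{RightCol}_i$. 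Thus by Observation~\ref{obs: runtime iteration} the triple $(R_i^f\cap D_i,\mathsf{RightCol}_i,\mathsf{-1RightCol}_i)$ is inserted into $\mathsf{CurrentInfo}_i$, closing the induction.

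The main technical subtlety will be the boundary situations: the first rectangle has an empty (or trivial) left column, and the last rectangle may be strictly narrower than $a+k$ when $r$ is not a multiple of $a+k$. Both are handled uniformly by treating $\mathsf{emptyColumn}$ as the degenerate column and by allowing the brute-force search in Algorithm~\ref{alg:Grid Recognition Iteration} to range over any rectangle width up to $a+k$; apart from these bookkeeping adjustments, the argument is exactly the one above.
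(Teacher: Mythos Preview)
Your proof is correct and follows essentially the same approach as the paper's: induction on $i$, with the inductive step driven by Lemma~\ref{lem:distanceLem1} to pin down how the distance buckets $D_t$ align with the column-based rectangles $R_t^f$, then using the restriction of $f$ to the $i$-th rectangle as the witnessing embedding and invoking Observation~\ref{obs: runtime iteration}. Your write-up is in fact more explicit than the paper's own proof---in particular, you spell out the two key set identities $C_f(i)\setminus D_v(i)=R_i^f\cap D_i$ and $D_{i-1}\setminus(R_{i-1}^f\cap D_{i-1})=R_i^f\cap D_{i-1}$ and verify the outgoing-neighbour condition, whereas the paper leaves these largely implicit; the only cosmetic difference is that the paper starts the induction at $i=1$ rather than $i=0$.
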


\begin{proof}
We prove this lemma by induction on $i$. For $i=1$ the claim is trivial. Now, let $i>1$. By the inductive hypothesis, $(C_f(i-1)\setminus D_v(i-1),\mathsf{rightColumn}(C_f(i-1))\in \mathsf{CurrentInfo}_{i-1}$. From Lemma \ref{lem:distanceLem1}, we conclude that $C_f(i)\subseteq D_v(i)\cap D_v(i+1)$. Therefore, $C_f(i)\setminus D_v(i)\subseteq D_v(i+1)$. Observe that $(D_i\setminus C_f(i-1))\setminus D_v(i-1)\cap (C_f(i)\setminus D_v(i))=C_f(i)$. Since $f$ is a $k\times r$ grid graph embedding of $G$, there exists a $k\times (a+k)$ grid graph embedding of $G(C_f(i))$. Notice that $\mathsf{rightColumn}(C_f(i-1)=\mathsf{leftColumn}(C_f(i))$, and $O(C_f(i)\cup \mathsf{-1rightColumn}(C_f(i-1)))\cap C_f(i)\subseteq \mathsf{rightColumn}(C_f(i))$. Therefore, by Observation \ref{obs: runtime iteration}, we get that $(C_f(i)\setminus D_v(i),\mathsf{rightColumn}(C_f(i)), \mathsf{-1rightColumn}(C_f(i))) \in \mathsf{CurrentInfo}_i$.     
\end{proof}

\begin{algorithm}[t!]
    \SetKwInOut{Input}{Input}
    \SetKwInOut{Output}{Output}
	\medskip
    {\textbf{function} $k\times r \mathsf{Grid Recognition Iteration}$}$(\langle G=(V,E), k, r\rangle)$\;
		 $a_G\gets 0$\;
		\While{$a_G<|V|$\label{alg4:while}}
		{
		\If{$k\times r \mathsf{ Grid Recognition with }a_f(\langle G=(V,E), k, r, a_G\rangle)$ is a yes-instance \label{alg4:checkAg}}
		{
			\Return $\langle G,k,r \rangle $ is a yes-instance with $a_G$\;
		}
		$a_G\gets a_G+1$\;
		}
		\Return $\langle G,k,r \rangle $ is a no-instance\;
    	 
    \caption{$\mathsf{k\times r Grid Recognition}$}
    \label{alg:Grid Recognition}
\end{algorithm}

Now we use Lemmas \ref{lem: notEmptyIsGrid} and \ref{lem:if grid then not empty} to show that for a given $a\in \mathbb{N}_0$, Algorithm \ref{alg:Grid Recognition a_f} reports that $G$ is a $k\times r$ grid graph if and only if there exists a $k\times r$ grid graph embedding $f$ of $G$ with $a_f\leq a$.  

\begin{lemma}\label{lem45}
Let $G=(V,E)$ be a connected graph, let $k,r\in \mathbb{N}$ and let $a\in \mathbb{N}, a<|V|$. Then, Algorithm \ref{alg:Grid Recognition a_f} with input $(\langle G=(V,E), k, r, a\rangle)$ works in runtime $\OO (|V|^2(ka)^{\OO(ka+k^2)})$, and returns ``yes-instance'' if and only if there exists a grid graph embedding $f$ of $G$ with $a_f\leq a$. 
\end{lemma}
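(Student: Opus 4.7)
The plan is to establish the two directions of correctness by leaning on Lemmas~\ref{lem: notEmptyIsGrid} and~\ref{lem:if grid then not empty}, and then to read the runtime directly off the loop structure together with Observation~\ref{obs: runtime iteration}.

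For the forward direction, suppose Algorithm~\ref{alg:Grid Recognition a_f} returns yes-instance at line~\ref{alg: yes} for some chosen vertex $v$. Then the while loop exited with $\mathsf{CurrentInfo}_{j+1}\neq\emptyset$, where $j$ is the maximum index with $D_j\neq\emptyset$. Since $D_{j+1}=\emptyset$, every triple $(A',\mathsf{rightColumn},\mathsf{-1RightC})\in\mathsf{CurrentInfo}_{j+1}$ must have $A'=\emptyset$. Applying Lemma~\ref{lem: notEmptyIsGrid} at index $i=j+1$ to any such triple yields a $k\times(j+1)(a+k)$ grid graph embedding of $G[D_1\cup\cdots\cup D_{j+1}]=G[V]$. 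The check on line~\ref{alg3:smallDs} together with the definition of $j$ guarantees $(j+1)(a+k)\leq r$ up to a constant additive slack absorbable by padding, so $G$ is a $k\times r$ grid graph and moreover the embedding has $a_{f}\leq a$ because each ``small rectangle'' is constructed inside Algorithm~\ref{alg:Grid Recognition Iteration} using $k\times(a+k)$ embeddings, which preserves the distance approximation bound.

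For the reverse direction, suppose $G$ admits a $k\times r$ grid graph embedding $f$ with $a_f\leq a$. By Observation~\ref{obs:VerInfirstColumn}, we may assume there exists $v\in V$ with $f_{\mathsf{col}}(v)=0$. When the outer loop of Algorithm~\ref{alg:Grid Recognition a_f} reaches this $v$, we must first verify that neither early-exit occurs: the distance check on line~\ref{alg3:smallDs} is safe because Lemma~\ref{lem:distanceLem1} yields $d(u,v)\leq\fc(u)+a+k\leq r+a$; and the $|D_i|\leq 2k(k+a)$ check is safe because $D_i\subseteq D_v(i(k+a),(i+1)(k+a)-1)$ which, again by Lemma~\ref{lem:distanceLem1}, forces $\fc(u)\in[(i-1)(k+a),(i+1)(k+a)-1]$, so $D_i$ fits within $2(k+a)$ columns of $k$ vertices each. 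Lemma~\ref{lem:if grid then not empty} then inductively certifies that the triple $(C_f(i)\setminus D_v(i),\mathsf{rightColumn}(C_f(i)),\mathsf{-1rightColumn}(C_f(i)))$ belongs to $\mathsf{CurrentInfo}_i$ at every iteration, so the while loop never empties out and the algorithm reaches line~\ref{alg: yes}.

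For runtime, the outer loop of line~\ref{algo3:every v} iterates at most $|V|$ times. Inside, BFS and bucketing cost $O(|V|+|E|)=O(|V|)$ since any yes-instance has $|E|=O(|V|)$, and (if the graph is not almost a grid graph, the BFS cost is at worst $O(|V|^2)$ when summed over all $v$). The while loop runs at most $O(r/(k+a))=O(|V|/(k+a))$ times, and each call to Algorithm~\ref{alg:Grid Recognition Iteration} costs $(ka)^{\OO(ka+k^2)}$ by Observation~\ref{obs: runtime iteration}. Multiplying gives $\OO(|V|)\cdot\OO(|V|/(k+a))\cdot(ka)^{\OO(ka+k^2)}=\OO(|V|^2(ka)^{\OO(ka+k^2)})$, as claimed. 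The main subtlety is really in verifying the exact fit of the $i(a+k)$-wide embedding into $r$ columns in the forward direction; all other steps are routine once Lemmas~\ref{lem: notEmptyIsGrid} and~\ref{lem:if grid then not empty} are in hand.
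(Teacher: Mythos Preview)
Your approach is the same as the paper's: invoke Lemma~\ref{lem:if grid then not empty} together with Observation~\ref{obs:VerInfirstColumn} for the direction ``embedding with $a_f\le a$ exists $\Rightarrow$ algorithm outputs yes'', invoke Lemma~\ref{lem: notEmptyIsGrid} for the converse, and read off the runtime from Observation~\ref{obs: runtime iteration} and the loop structure. In the reverse direction you are in fact more careful than the paper, since you explicitly verify that neither the distance check on line~\ref{alg3:smallDs} nor the $|D_i|\le 2k(k+a)$ check aborts the iteration for the correct~$v$; the paper skips both checks.

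There is, however, a real gap in your forward direction. Your justification that the glued embedding satisfies $a_f\le a$ ``because each small rectangle is constructed using $k\times(a+k)$ embeddings, which preserves the distance approximation bound'' is not valid: $a_f$ is a global quantity comparing $d(u,v)$ to $d_f(u,v)$ over \emph{all} pairs, and the width of the blocks used to assemble the embedding does not control it. Lemma~\ref{lem: notEmptyIsGrid} only guarantees an embedding of $G$, with no bound on its distance approximation. (The paper simply asserts the same conclusion without argument, so this is a gap in the paper too; what is actually needed downstream in Lemma~\ref{lem:kralgoproof} is only that a yes-answer implies $G$ is a $k\times r$ grid graph.) Likewise, your claim that the $(j{+}1)(a{+}k)$-wide embedding produced by Lemma~\ref{lem: notEmptyIsGrid} fits into $r$ columns does not follow from line~\ref{alg3:smallDs}: that check only yields $d(u,v)\le r+a$, hence $j(k+a)\le r+a$, which is short of $(j{+}1)(a+k)\le r$; ``additive slack absorbable by padding'' is not an argument here.
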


\begin{proof}
First, assume that there exists a grid graph embedding $f$ of $G$ with $a_f\leq a$. From Observation \ref{obs:VerInfirstColumn} we can assume that there exists $v\in V$ with $\fc(v)=0$. Because Algorithm \ref{algo3:every v} iterates over every $u\in V$, there is an iteration with $v$ such that $\fc(v)=0$. In that iteration, from Lemma \ref{lem:if grid then not empty} we get that $\mathsf{CurrentInfo}_i\neq \emptyset$ for every $i$. Therefore, in line \ref{alg: not empty} of Algorithm \ref{alg:Grid Recognition a_f}, the condition is false, and the algorithm skips to line \ref{alg: yes} and returns ``yes-instance''. 
Now, assume that the algorithm returns ``yes-instance''. Thus, in line \ref{alg: not empty}, the condition is false, so we get that $\mathsf{CurrentInfo}_{j+1}\neq \emptyset$. From Lemma \ref{lem: notEmptyIsGrid}, we get that there exists a $k\times r$ grid graph embedding $f$ of $G$ with $a_f\leq a$.

As for the runtime, for a chosen vertex $v\in V$ in line \ref{algo3:every v} of Algorithm \ref{alg:Grid Recognition a_f}, computing $d(u,v)$ in line \ref{alg3:distance} for every $u\in V$ can be done with runtime $\OO (|V|^2)$. From line \ref{alg3:smallDs}, we get that unless the iteration breaks, $|D_i|\leq 2k(k+a_f)$. Therefore, by Observation \ref{obs: runtime iteration}, each iteration in the while loop in line \ref{algo3: while} takes runtime $\OO(ka)^{\OO(ka+k^2)}$, and there are at most $r\leq |V|$ such iterations. The other operations in the algorithm can be done with runtime $\OO(|V|)$. In conclusion, we get that the algorithm works with runtime $\OO (|V|^2(ka)^{\OO(ka+k^2)})$ and returns ``yes-instance'' if and only if there exists a grid graph embedding $f$ of $G$ with $a_f\leq a$.                
\end{proof}

Finally, we are ready to show that Algorithm \ref{alg:Grid Recognition} solves the \bGridEm\ problem with FPT runtime with respect to $r+a_G$.
The algorithm, in every iteration, activates Algorithm \ref{alg:Grid Recognition a_f} with a larger $a_f$ than before. If $G$ is a $k\times r$ grid graph, then by Lemma \ref{lem45}, Algorithm \ref{alg:Grid Recognition a_f} will report this in the $a_G$-th iteration. Otherwise, in every iteration the algorithms fails to prove that $G$ is a $k\times r$ grid graph, and it will report that $(G,k,r)$ is a no-instance. 

\begin{lemma}\label{lem:kralgoproof}
Let $G=(V,E)$ be a connected graph, and $k,r\in \mathbb{N}$. Then, Algorithm \ref{alg:Grid Recognition} with input $\langle G=(V,E), k, r\rangle$ runs in time $\OO (|V|^2(ka_G)^{\OO(ka_G+k^2)})$ and returns ``yes-instance'' if and only if $G$ is a $k\times r$ grid graph.  
\end{lemma}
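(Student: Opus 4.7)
The plan is to reduce the proof to a direct invocation of Lemma~\ref{lem45} together with Observation~\ref{obs:disapp bounded}, and to control the runtime by a simple summation over the while-loop iterations of Algorithm~\ref{alg:Grid Recognition}. The structure of Algorithm~\ref{alg:Grid Recognition} makes the reasoning almost mechanical: it calls Algorithm~\ref{alg:Grid Recognition a_f} with parameter values $a=0,1,2,\ldots$ until either it receives a ``yes'' answer or $a$ reaches $|V|$.

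For correctness, I will argue both directions. In the forward direction, suppose Algorithm~\ref{alg:Grid Recognition} returns ``yes-instance''. Then line~\ref{alg4:checkAg} succeeded for some $a\leq |V|-1$, so by Lemma~\ref{lem45} there exists a $k\times r$ grid graph embedding $f$ of $G$ with $a_f\leq a$; in particular $G$ is a $k\times r$ grid graph. Conversely, suppose $G$ is a $k\times r$ grid graph. By Observation~\ref{obs:disapp bounded}, $a_G\leq |V|-2<|V|$, so the while loop in line~\ref{alg4:while} reaches the iteration in which the parameter equals $a_G$. In that iteration, by the definition of $a_G$ (Definition~\ref{def:Grid graph distance app}) there is a $k\times r$ grid graph embedding $f$ of $G$ with $a_f=a_G\leq a_G$, so Lemma~\ref{lem45} ensures that Algorithm~\ref{alg:Grid Recognition a_f} returns ``yes-instance'' on input $\langle G,k,r,a_G\rangle$, causing Algorithm~\ref{alg:Grid Recognition} to return ``yes-instance''. (It may of course halt and return ``yes-instance'' at some earlier iteration $a<a_G$, but this only strengthens the conclusion.)

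For the runtime, I will bound the number of while-loop iterations and multiply by the per-iteration cost from Lemma~\ref{lem45}. If $G$ is a $k\times r$ grid graph, the loop executes at most $a_G+1$ iterations, and each call to Algorithm~\ref{alg:Grid Recognition a_f} with parameter $a\leq a_G$ runs in time $\OO(|V|^2(ka)^{\OO(ka+k^2)})\leq \OO(|V|^2(ka_G)^{\OO(ka_G+k^2)})$. The total runtime is therefore bounded by $(a_G+1)\cdot\OO(|V|^2(ka_G)^{\OO(ka_G+k^2)})$, and since $a_G+1$ is absorbed into the $(ka_G)^{\OO(ka_G+k^2)}$ factor, the stated bound $\OO(|V|^2(ka_G)^{\OO(ka_G+k^2)})$ follows. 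If $G$ is not a $k\times r$ grid graph, then by Definition~\ref{def:Grid graph distance app} we have $a_G=|V|$, the loop performs up to $|V|$ iterations, and the same absorption argument gives the bound $\OO(|V|^2(ka_G)^{\OO(ka_G+k^2)})$.

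The only mildly delicate point is the appeal to Observation~\ref{obs:disapp bounded} to guarantee that the while loop terminates at or before the correct value of $a_G$ in the ``yes'' case; without that bound one would need to argue separately that the loop range $a_G<|V|$ is sufficient. Since that bound is already established, no genuine obstacle remains, and the proof is essentially a bookkeeping combination of Lemma~\ref{lem45} and Observation~\ref{obs:disapp bounded}.
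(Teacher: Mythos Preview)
Your proof is correct and follows essentially the same approach as the paper: both argue correctness by invoking Lemma~\ref{lem45} for each iteration and Observation~\ref{obs:disapp bounded} to ensure the loop range suffices, and both bound the runtime by multiplying the number of iterations (at most $a_G+1$ in the yes case, $|V|=a_G$ in the no case) by the per-iteration cost from Lemma~\ref{lem45}, absorbing the extra factor into the $(ka_G)^{\OO(ka_G+k^2)}$ term. One minor remark: your parenthetical that the algorithm ``may of course halt \ldots at some earlier iteration $a<a_G$'' cannot actually occur, since by Lemma~\ref{lem45} a yes at parameter $a$ would yield an embedding with $a_f\le a<a_G$, contradicting the minimality in Definition~\ref{def:Grid graph distance app}; this does not affect your argument, however.
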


\begin{proof}
First, notice that every iteration in the while loop in line \ref{alg4:while} of Algorithm \ref{alg:Grid Recognition} runs Algorithm \ref{alg:Grid Recognition a_f} on the input $(\langle G=(V,E), k, r, a_G\rangle)$. Therefore, by Lemma \ref{lem45}, the $a_G$-th iteration runs in time $\OO (|V|^2(ka_G)^{\OO(ka_G+k^2)})$. Assume that $G$ is a $k\times r$ grid graph. Then, there exists a $k\times r$ grid graph embedding $f$ of $G$ with $a_f=a_G$. Therefore, in the $a_f$-th iteration the algorithm activates Algorithm \ref{alg:Grid Recognition a_f} on the input $(\langle G=(V,E), k, r, a_f\rangle)$, and, by Lemma \ref{lem45}, returns yes-instance. Therefore, in that case, the algorithm runs $a_f$ iterations, where each iteration runs in time at most $\OO (|V|^2(ka_f)^{\OO(ka_f+k^2)})$. Therefore the total runtime is $a_f\cdot \OO (|V|^2(ka_f)^{\OO(ka_f+k^2)})=\OO (|V|^2(ka_f)^{\OO(ka_f+k^2)})=\OO (|V|^2(ka_G)^{\OO(ka_G+k^2)})$. Now, assume that Algorithm \ref{alg:Grid Recognition} returns ``no-instance''. Then, for every $0\leq a_G\leq |V|$, it follows that Algorithm \ref{alg:Grid Recognition a_f} returns ``no-instance''. Thus, by Lemma \ref{alg4:while}, there does not exist a $k\times r$ grid graph embedding $f$ of $G$ with $0\leq a_f\leq |V|$. By Observation \ref{obs:disapp bounded}, we get that $G$ is not a $k\times r$ grid graph. Notice that, in this case, we get that $a_G=|V|$. Furthermore, in that case, the algorithm runs $|V|$ iterations, where each iteration runs in time at most $\OO (|V|^2(k|V|)^{\OO(k|V|+k^2)})$. Therefore, the total runtime is $|V|\cdot \OO (|V|^2(k|V|)^{\OO(k|V|+k^2)})=a_G\cdot \OO(|V|^2(ka_G)^{\OO(ka_G+k^2)})=\OO (|V|^2(ka_G)^{\OO(ka_G+k^2)})$. In conclusion, we get that Algorithm \ref{alg:Grid Recognition} on the input $\langle G=(V,E), k, r\rangle$ runs in time $\OO (|V|^2(ka_G)^{\OO(ka_G+k^2)})$ and returns ``yes-instance'' if and only if $G$ is a $k\times r$ grid graph problem.           
\end{proof}

Having proved that Algorithm \ref{alg:Grid Recognition} solves the \bGridEm problem with \FPT\ runtime, we can conclude the correctness of Theorem $1.5$.

\distanceFPT*

\subsection{\bGridEm\ Is \FPT\ with Respect to $a_G$ on Trees}
Let $T$ be a tree that is a grid graph, with a grid graph embedding $f'$ of $T$. First, we show that $f'$ has a special structure, that depends on $a_{f'}$. We begin with a simple example. For this purpose, we present a way to define ``directions'' in a given grid graph embedding. Let $G$ be a connected grid graph, with grid graph embedding $f$ of $G$, and let $\{u,v\}\in E(G)$ be an edge. We say that $\{u,v\}$ is {\em directed up} in $f$ if $\fr(v)=\fr(u)+1$. Similarly, we say that $\{u,v\}$ is {\em directed down} in $f$, {\em directed right} in $f$, or {\em directed left} in $f$, if $\fr(v)=\fr(u)-1$, $\fc(v)=\fc(u)+1$, or $\fc(v)=\fc(u)-1$, correspondingly. Notice that, since $f$ is a grid graph embedding of $G$, exactly one of these four options is valid.

It is easy to see that every path is a grid graph. Consider a path $P=(v_1,v_2,\ldots v_\ell)$, where $\ell\geq 2$, and a grid graph embedding $f$ of $P$ with $a_f$=0. Without loss of generality, assume that $\{v_1,v_2\}$ is directed up in $f$. Assume that there exists $2\leq t\leq \ell$ such that $\{v_{t-1},v_{t}\}$ is not directed up in $f$, and let $s$ be the minimum integer that satisfies this condition. Because $\{v_{s-2},v_{s-1}\}$ is directed up, notice that $\{v_{s-1},v_{s}\}$ cannot be directed down. Assume that $\{v_{s-1},v_{s}\}$ is directed right in $f$. We show that, for every $1\leq t\leq \ell-1$, $\{v_t,v_{t+1}\}$ is directed up or right in $f$. An intuition for that is, every time we ``go forward'' in the path, we increase the graph distance between the starting vertex and the current vertex by one, and therefore we need to increase the grid graph embedding by one, otherwise we get that $a_f>0$. Thus, we cannot ``go back''. For a general grid graph embedding $f$ of $P$, we show that there are at most two ``main directions'' in which we move from each $v_i$ to $v_{i+1}$, and we cannot go in the other directions, more then $a_f$ times. Observe that given a general connected grid graph $G$, and a path $P=(v_1,v_2,\ldots v_\ell), v_i\in V(G)$, such that $P$ is a shortest path between $v_1$ to $v_\ell$, we can argue the same claim. Therefore, we conclude that, given a tree $T$ that is a grid graph, with grid graph embedding $f$, for every path $P\subseteq T$ it follows that there are two ``main directions'' in which the path ``moves''. We start by proving this idea. Towards that, we define the terms of ``directions'':

\begin{definition}[{\bf Grid Direction for Edges}]\label{def:Grid direction}
Let $G$ be a connected grid graph, with grid graph embedding $f$ of $G$. Let $\{u,v\}\in E(G)$. We say that the {\em grid direction of $\{u,v\}$ in $f$} is {\em up}, {\em down}, {\em right}, or {\em left}, if $\fr(v)=\fr(u)+1$, $\fr(v)=\fr(u)-1$, $\fc(v)=\fc(u)+1$, or $\fc(v)=\fc(u)-1$, correspondingly. We refer to $\mathsf{up}, \mathsf{down}, \mathsf{right}$ and $\mathsf{left}$ as {\em directions}. We denote the set of directions by $\mathsf{dir}$, i.e. $\mathsf{dir}=\{ \mathsf{up}, \mathsf{down}, \mathsf{right}, \mathsf{left}\}$.       
\end{definition}
  
We refer to the up and down directions as {\em vertical directions}, and to the right and left directions as {\em horizontal directions}.



Now we prove the claim in the example we saw before. Let $T$ be a tree, and let $f$ be a grid graph embedding of $T$. For a simple path $P=(v_1,\ldots,v_\ell)$, we consider the direction of the edges $\{\{v_i,v_{i+1}\}~|~1\leq i\leq \ell-1\}$. For a set of one or more directions $\Delta \subseteq \mathsf{dir}$, we say that {\em $P$ moves mainly in directions $\Delta$ in $f$} or {\em $P$ moves in directions $\Delta$ in $f$} if the following condition is satisfied: The number of edges with direction $d'\notin \Delta$ in $f$ is equal or less than $a_f+1$. Sometimes, we only refer to the number of direction in $\Delta$. In the next lemma, we show that $P$ moves in two directions, one vertical and one horizontal.   

\begin{lemma} \label{lem: directionsINPath}
Let $T$ be a tree that is a grid graph, with a grid graph embedding $f$ of $T$. Then, every simple path moves in $f$ in at most one vertical direction and at most one horizontal directions.
\end{lemma}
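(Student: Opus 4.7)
The plan is to reduce the statement to an elementary arithmetic inequality on the numbers of edges of each direction along the path. The tree hypothesis is used only to guarantee that any simple path in $T$ is automatically a shortest path, which is crucial for invoking the $a_f$ bound.

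First, I would fix a simple path $P = (v_1, v_2, \ldots, v_\ell)$ in $T$ and introduce direction counts: for each $\delta \in \mathsf{dir}$, let $n_\delta$ be the number of edges of $P$ whose grid direction in $f$ is $\delta$. Then $n_{\mathsf{up}} + n_{\mathsf{down}} + n_{\mathsf{right}} + n_{\mathsf{left}} = \ell - 1$, and telescoping coordinate differences along $P$ gives $\fr(v_\ell) - \fr(v_1) = n_{\mathsf{up}} - n_{\mathsf{down}}$ and $\fc(v_\ell) - \fc(v_1) = n_{\mathsf{right}} - n_{\mathsf{left}}$, so by Definition~\ref{def:Grid graph distance},
\[
d_f(v_1, v_\ell) = |n_{\mathsf{up}} - n_{\mathsf{down}}| + |n_{\mathsf{right}} - n_{\mathsf{left}}|.
\]
Since $T$ is a tree, $P$ is the unique simple path between $v_1$ and $v_\ell$, so $d(v_1, v_\ell) = \ell - 1$. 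By the definition of $a_f$ together with Observation~\ref{lem:dfLeqd}, $d(v_1, v_\ell) - d_f(v_1, v_\ell) \leq a_f$.

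Substituting and using the identity $x + y - |x - y| = 2\min(x, y)$ separately on the vertical and horizontal pairs, I would derive
\[
2\,\min(n_{\mathsf{up}}, n_{\mathsf{down}}) + 2\,\min(n_{\mathsf{right}}, n_{\mathsf{left}}) \;\leq\; a_f.
\]
I would then pick $\Delta$ to consist of whichever of $\{\mathsf{up},\mathsf{down}\}$ has the larger count and whichever of $\{\mathsf{right},\mathsf{left}\}$ has the larger count, breaking ties arbitrarily; this yields a $\Delta$ with at most one vertical and at most one horizontal direction. The edges of $P$ whose direction lies outside $\Delta$ are exactly the ``minority'' vertical edges together with the ``minority'' horizontal edges, so their total number is $\min(n_{\mathsf{up}}, n_{\mathsf{down}}) + \min(n_{\mathsf{right}}, n_{\mathsf{left}}) \leq a_f/2 \leq a_f + 1$. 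By the definition of ``moves mainly in directions $\Delta$'', this is precisely what is required.

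I do not anticipate a genuine obstacle here; the argument is a one-line computation wrapped around the tree property. The only minor care points are the degenerate cases where one of the minority counts is $0$ (then the corresponding element of $\Delta$ can simply be dropped and the conclusion is even stronger), and checking that the slack in the definition ($a_f + 1$ rather than $a_f/2$) is more than sufficient, including when $a_f = 0$. It is also worth noting that the tree hypothesis is only used to ensure $P$ is a shortest path, so the same conclusion would hold for arbitrary shortest paths in any grid graph; this reusability is what makes the lemma useful as a structural building block for the subsequent analysis on trees.
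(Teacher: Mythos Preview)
Your argument is correct and follows essentially the same approach as the paper: both proofs use that a simple path in a tree is a shortest path, telescope the coordinate differences along the path to express $d_f(v_1,v_\ell)$ in terms of direction counts, and compare with $d(v_1,v_\ell)$ via the $a_f$ bound. Your direct computation is slightly cleaner than the paper's proof by contradiction and in fact yields the sharper bound $\min(n_{\mathsf{up}},n_{\mathsf{down}})+\min(n_{\mathsf{right}},n_{\mathsf{left}})\le a_f/2$, comfortably within the $a_f+1$ required by the definition.
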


\begin{proof}
Let $P=(u,w_1\ldots,w_\ell, v)$ be a simple path. $T$ is a tree, therefore, $P$ is a shortest path, from $u$ to $v$, so $|P|-1=d(u,v)$. Assume towards a contradiction, without loss of generality, that $P$ moves in two vertical directions more than $a_f+1$ steps. Observe that $d_f(u,v)=|\fc(u)-\fc(v)|+|\fr(u)-\fr(v)|$. Denote the number of edges in $P$ directed up in $f$, and the number of edges in $P$ directed down in $f$, by $D_U(f,P)$ and $D_D(f,P)$, correspondingly. Notice that $\fc(v)=\fc(u)+D_U(f,P)-D_D(f,P)$. Since there at least $a_f+1$ edges in $P$ directed up, and at least $a_f+1$ edges in $P$ directed down, it follows that $d_f(u,v)\leq |P|-a_f-1=d(u,v)-a_f-1$. Therefore, we get that $d(u,v)-d_f(u,v)\geq a_f+1$. This is a contradiction to the definition of $a_f$.    
\end{proof}

Assume that we have a path $P=(v_1,\ldots,v_n)$, and a positive integer $a$, and we aim to construct a grid graph embedding $f$ of $P$ with $a_f\leq a$ in \FPT\ runtime with respect to $a$. In Lemma \ref{lem: directionsINPath}, we saw that, in every such embedding, $P$ moves in at most one vertical and at most one horizontal directions in $f$. In addition, assume that we construct this grid graph embedding $f$, by induction on $1\leq i\leq n$. Assume that we found an embedding for the first $i$ vertices of $P$, where $a<i<n$, and we wish to find an embedding for the next vertex, $v_{i+1}$. As the path $P$, in every such embedding $f$, ``moves'' most of the time except at most $a$ ``steps'' it ``moves forward'' and does not ``move backward''. Therefore, the vertices that might be ``close'' to $v_{i}$, that is, vertices with graph distance $1$ from $v_i$, in $f$, are only from the set $\{v_{i-a},\ldots,v_{i-1}\}$. Therefore, like in the problem of the previous subsection, as long as we keep track of the numbers of ``wrong directions'', we have so far in our embedding, we only need to remember the location of ``small'' amount of vertices in order to find an embedding for the next vertex. This is the basic intuition for the algorithm we present later on. As a tree is not as simple as a path, we continue to develop the idea of the algorithm. 

Now, we define the term {\em $(P,t)$-path}. A $(P,t)$-path is a tree that has a path $P$ that is its ``skeleton'', and all the other vertices are ``close'' to the skeleton (see Figure~\ref{fi:PtPath}). 

\begin{figure}[!ht]
\centering
\includegraphics[width=0.5\textwidth, page=8]{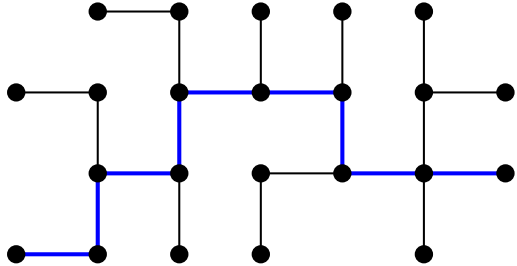}
\caption{Example of a $(P,2)$-path where the path $P$ is marked in blue.}\label{fi:PtPath}
\end{figure}

\begin{definition}[{\bf $(P,t)$-Path}]
Let $T$ be a tree, let $P\subseteq T$ be a path, and let $t\in\mathbb{N}_0$. Then, $T$ is {\em $(P,t)$-path}, if $\{v\in V(T)~|~$there exists $u\in V(P)$ such that $d(u,v)\leq t\}=V(T)$.  
\end{definition}

If $T$ is a $(P,a)$-path where $P=(v_1,\ldots,v_k)$, then we can construct a grid graph embedding $f$ of $T$ where $P$ moves at most one vertical and at most one horizontal directions, by induction on the number of vertices in $P$, in a similar way to our describtion in a previous paragraph. For a vertex $v\in V(T)$, we denote by $P_c(v)$ the closest vertex in $P$ to $v$. For a vertex $v\in V(P)$, we denote by $P(v)$ the set of vertices in $T$, that $v$ is the closest vertex to them in $P$, , that is, $P(v)=\{u\in V~|~P_c(u)=v\}$. In our construction of a grid graph embedding of $T$, at the $i$-th step, we find an embedding for the set of vertices $P(v_i)$. Assume that we are in the $(i+1)$-th step, where we have found an embedding for the set of vertices $P(v_1)\cup P(v_2)\cup \ldots \cup P(v_i)$, where $3a<i<k$. As we saw in the previous example, the vertices from $P$ to whom $v_{i+1}$ might be ``close'', are only from the set $\{v_{i-a},\ldots,v_{i}\}$. Therefore, as long as we keep track of the numbers of ``wrong directions'' of the path $P$, we only need to remember the location of the vertices of $P(v_{i-3a})\cup \ldots \cup P(v_{i})$ we have so far in our embedding, in order to find an embedding for the vertices of $P(v_{i+1})$. Observe that this number of vertices is polynomial in $a$.

In the next lemma, we prove this idea. More precisely, we prove that if $d_f(v,u)\leq a$, then $d(u,v)\leq 6a$.

\begin{lemma} \label{lem: gridDisAppGraphDis}
Let $T$ be a $(P,t)$-path that is a grid graph, with grid graph embedding $f$ of $T$, such that $P$ moves in two directions in $f$ except for at most $t$ edges. Then, for every $v,u\in V$ such that $d_f(v,u)\leq t$, it follows that $d(u,v)\leq 6t$.  
\end{lemma}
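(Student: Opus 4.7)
The plan is to decompose $d(u,v)$ via the ``anchor points'' of $u$ and $v$ on the skeleton $P$. Let $P_c(u), P_c(v) \in V(P)$ denote the vertices of $P$ nearest (in $T$) to $u$ and $v$; since $T$ is a $(P,t)$-path, $d(u, P_c(u)) \leq t$ and $d(v, P_c(v)) \leq t$. Because $T$ is a tree and $P$ is a subpath, the unique shortest $u$--$v$ path in $T$ either stays inside a single subtree hanging off $P$ (in which case $d(u,v) \leq 2t$ already and we are done) or routes through $P_c(u)$ and $P_c(v)$, giving
\[
d(u,v) = d(u, P_c(u)) + \ell + d(P_c(v), v) \leq 2t + \ell,
\]
where $\ell$ is the length of the subpath of $P$ between $P_c(u)$ and $P_c(v)$. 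The task therefore reduces to bounding $\ell = \mathcal{O}(t)$.

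To bound $\ell$ I will combine a grid-distance estimate with the main-direction structure of $P$. First, by Observation~\ref{lem:dfLeqd} (i.e.\ $d_f \leq d$), the hypothesis $d_f(u, v) \leq t$, and the triangle inequality for $d_f$,
\[
d_f(P_c(u), P_c(v)) \leq d_f(P_c(u), u) + d_f(u, v) + d_f(v, P_c(v)) \leq 3t.
\]
Second, by Lemma~\ref{lem: directionsINPath} the two main directions of $P$ in $f$ consist of one vertical and one horizontal direction; WLOG these are $\mathsf{up}$ and $\mathsf{right}$. Let $U, D, R, L$ count the edges of the $P_c(u)$--$P_c(v)$ subpath going up, down, right, left, respectively. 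The hypothesis on $P$ forces $D + L \leq t$, and using $|x| \geq x$ coordinate-wise gives
\[
d_f(P_c(u), P_c(v)) = |U - D| + |R - L| \geq (U - D) + (R - L) = \ell - 2(D + L) \geq \ell - 2t.
\]

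Combining the two estimates yields $\ell \leq 5t$, and therefore $d(u,v) \leq 2t + \ell \leq 7t$, already the required $\mathcal{O}(t)$ bound. To sharpen the constant to the stated $6t$, I will perform a short case split. If $U < D$ and $R < L$ both hold on the subpath then $U + D \leq 2D - 1$ and $R + L \leq 2L - 1$, so $\ell \leq 2(D + L) - 2 < 2t$, trivially giving $d(u,v) < 4t$. Otherwise, WLOG $U \geq D$ (and symmetrically in the horizontal axis), and one carefully examines when the three triangle-inequality steps leading to $d_f(P_c(u), P_c(v)) \leq 3t$ are simultaneously tight; in every such extreme configuration either the ``aligned'' lower bound on $d_f(P_c(u), P_c(v))$ can be sharpened by $t$, or at least one of $u, v$ lies on $P$ (so the corresponding anchor term in the top-level decomposition vanishes), saving the missing $t$. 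I expect the main obstacle to be exactly this bookkeeping to trim the last constant; the qualitative $\mathcal{O}(t)$ estimate, which is all that the subsequent dynamic-programming algorithm on trees actually needs, drops out immediately from the decomposition together with the main-direction property of $P$.
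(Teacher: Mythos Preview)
Your approach is essentially the same as the paper's: anchor $u,v$ at $u'=P_c(u)$, $v'=P_c(v)$, use the $(P,t)$-path property to get $d(u,u'),d(v,v')\le t$, bound $d_f(u',v')\le 3t$ by the triangle inequality for $d_f$ together with $d_f\le d$, and then turn $d_f(u',v')$ into a bound on $\ell=d(u',v')$ via the main-direction hypothesis. Your derivation $\ell\le d_f(u',v')+2t$, hence $d(u,v)\le 7t$, is the correct computation.

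Where you differ from the paper is only in the constant. The paper's proof reaches $6t$ by asserting $d_f(u',v')\ge d(u',v')-t$; your edge-count shows the actual loss is $2(\min(U,D)+\min(R,L))\le 2(D+L)\le 2t$, and this is tight. In fact the stated $6t$ bound is false: embed $P$ as $3t$ steps right, then $t$ up, then $t$ left (exactly $t$ non-main edges), and hang a length-$t$ path at each endpoint of $P$, one going up from $(0,0)$ to $u=(t,0)$ and one going left from $(t,2t)$ to $v=(t,t)$. This $T$ is a $(P,t)$-path with the required direction property, $d_f(u,v)=t$, but $d(u,v)=7t$. Your case split to recover $6t$ therefore cannot be completed and should simply be dropped.

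None of this matters downstream: the lemma is only used to set the window width in the tree dynamic program, which merely needs an $O(t)$ bound. State and prove the clean $7t$ version.
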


\begin{proof}
Let $u,v\in V(T)$ such that $d_f(v,u)\leq t$. Let $u'=P_c(u)$ and $v'=P_c(v)$. Notice that, because $T$ is a $(P,t)$-path, it follows that $d(u,u')\leq t$ and $d(u,u')\leq t$. Therefore, it follows that $d_f(u,u')\leq t$, and $d_f(v,v')\leq t$. Now, because $P$ moves in at most two directions in $f$, it follows that $d_f(v',u')\geq d(u',v')-t$. Therefore, we get that $d(v,u)\leq d(v,v')+d(v',u')+d(u',u)\leq t+ d_f(u',v')+t+t\leq d_f(v,v')+d_f(v,u)+d_f(u',u)+3t\leq d(v,v')+d_f(v,u)+d(u',u)+3t\leq t+t+t+3t=6t$. 
\end{proof} 

Obviously, for a given tree $T$ and a positive integer $a$, it is not sure that there exists a simple path $P$, such that $T$ is a $(P,a)$-path. In order to find such a $P$, we need $P$ to ``get inside'' every ``big'' subtree. Assume that $T$ has a vertex with three or four neighbors that are ``attached'' to some ``big'' subtrees of $T$. Since every simple path can ``get inside'' at most two of them, we cannot find such a $P$. We call these vertices {\em $a$-split vertices}. An {\em $a$-one-split vertex} is a vertex that is ``connected'' to exactly three ``big'' subtrees, that is, each subtree with at least $a$ vertices. An {\em $a$-double-split vertex} is a vertex that is ``connected'' to four ``big'' subtrees, that is, each subtree with at least $a$ vertices. An {\em $a$-split vertiex} is an $a$-one-split vertex or an $a$-double-split vertex. See Figure~\ref{fi:splitVertices}. We define these terms in the next definitions:     

\begin{figure}[!t]
\centering
\includegraphics[width=0.5\textwidth, page=7]{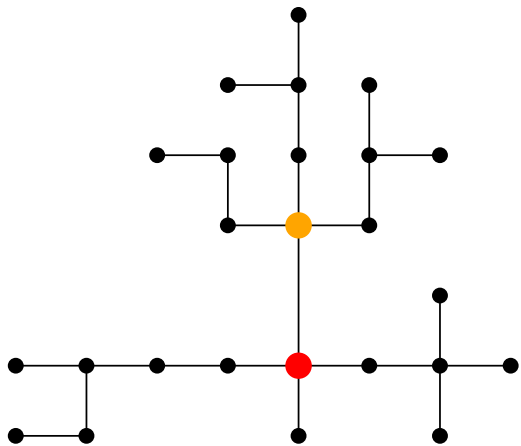}
\caption{Example of a $5$-one-split vertex (in red) and a $3$-double-split vertices (in orange).}\label{fi:splitVertices}
\end{figure}

\begin{definition}[{\bf One-Split Vertex}]\label{def:One Split Vertex}
Let $T$ be a tree, and let $k\in \mathbb{N}$. A vertex $v\in V$ is a {\em $k$-one-split vertex} if $T\setminus \{v\}$ contains at least three connected components, and exactly three of them contain at least $k$ vertices.
\end{definition}

\begin{definition}[{\bf Double-Split Vertex}]\label{def:DoubleSplit Vertex}
Let $T$ be a tree, and let $k\in \mathbb{N}$. A vertex $v\in V$ is a {\em $k$-double-split vertex} if $T\setminus \{v\}$ contains four connected components, each one with at least $k$ vertices.
\end{definition}

\begin{definition}[{\bf Split Vertex}]\label{def:Split Vertex}
Let $T$ be a tree, and let $k\in \mathbb{N}$. A vertex $v\in V$ is a {\em $k$-split vertex} if $v$ is a $k$-one-split vertex or a $k$-double-split vertex.
\end{definition}

Now, we continue with the discussion of the structure of an embedding $f$ of a tree with $a_f\leq a$. In the next lemma, we prove that, if $T$ is a grid graph with a grid graph embedding $f$ of $T$, then $T$ has at most two $a$-split vertices. An intuition for this statement, note that if we have too many $a$-split vertices, then we can find a path $P$ that moves in two horizontal directions or two vertical directions, a contradiction to Lemma \ref{lem: directionsINPath}. For a given path $P=(v_1,\ldots,v_\ell)$, we denote the ``opposite'' path by $P^*$, i.e. $P^*=(v_\ell,\ldots,v_1)$. Observe that, if $P$ moves in $f$ in $\Delta$ directions, then $P^*$ moves in $f$ in ``opposite'' directions.

First, we give a simple observation for a general grid graph. Suppose we have a connected grid graph $G$ with $|V(G)|\geq k^2$. Then, $G$ cannot be embedded into a ``small'' grid graph, that is, a grid graph smaller than a $k\times k$ grid graph. So, we have some vertices with a grid graph distance that is at least $k$, and therefore, we have a simple path with size at least $k$ in $G$.

\begin{observation} \label{numOfVerinGrid}
Let $G$ be a connected grid graph. For any $k\in \mathbb{N}$, if $|V(G)|\geq k^2$, then there exists a simple path $P$ in $G$ with size $k$.  
\end{observation}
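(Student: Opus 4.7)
The plan is to exploit the fact that a grid graph embedding is an injection into a two-dimensional grid, together with Observation~\ref{lem:dfLeqd} which says grid graph distance lower bounds graph distance.

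First, I would fix a grid graph embedding $f$ of $G$, and consider the bounding box of $f(V(G))$. Let
\[
R \;=\; \max_{v \in V(G)} \fr(v) \;-\; \min_{v \in V(G)} \fr(v) \;+\; 1, \qquad C \;=\; \max_{v \in V(G)} \fc(v) \;-\; \min_{v \in V(G)} \fc(v) \;+\; 1.
\]
Since $f$ is an injection into this $R\times C$ region, $|V(G)| \leq R\cdot C$. Combined with the hypothesis $|V(G)| \geq k^2$, this forces $\max(R,C) \geq k$.

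Next, I would assume without loss of generality that $C \geq k$ (the case $R \geq k$ is symmetric). Pick $u,v \in V(G)$ with $\fc(u)$ maximum and $\fc(v)$ minimum. Then
\[
d_f(u,v) \;\geq\; |\fc(u) - \fc(v)| \;=\; C-1 \;\geq\; k-1.
\]
Because $G$ is connected, there exists a shortest path $P'$ from $u$ to $v$ in $G$, and by Observation~\ref{lem:dfLeqd} we have $d(u,v) \geq d_f(u,v) \geq k-1$. Hence $P'$ has at least $k-1$ edges, equivalently at least $k$ vertices. Taking the first $k$ vertices of $P'$ yields a simple path $P$ in $G$ of size exactly $k$, as required.

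There is no real obstacle here: the argument is a direct pigeonhole on the bounding box followed by one application of Observation~\ref{lem:dfLeqd}. The only minor point to get right is the conventions (size means number of vertices), and the wlog step that at least one of the two bounding-box dimensions is at least $k$ because their product is at least $k^2$.
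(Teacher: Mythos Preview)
Your proof is correct and follows essentially the same approach as the paper. The paper only gives an informal sketch before stating the observation (pigeonhole on the bounding box to find two vertices at grid distance at least $k-1$, then use that graph distance dominates grid distance), and your argument is a faithful formalization of that sketch, including the careful handling of the size-versus-length convention.
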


Let $P_1=(v_1,\ldots, v_k)$ and $P_2=(v_k,\ldots,u_\ell)$ be two simple paths in a connected graph graph $G$. We denote by $(P_1,P_2)$ the path $(v_1,\ldots, v_k,\ldots,v_\ell)$.

\begin{lemma}\label{lem:splitVer}
Let $T$ be a tree that is a grid graph, with a grid graph embedding $f$ of $T$. Then, exactly one of the following holds:
\begin{enumerate}
\item $T$ has no $81a_f^2$-split vertices.
\item $T$ has exactly one $81a_f^2$-split vertex that is an $81a_f^2$-one-split vertex.
\item $T$ has exactly two $81a_f^2$-split vertices that are $81a_f^2$-one-split vertices.
\item $T$ has exactly one $81a_f^2$-split vertex that is an $81a_f^2$-double-split vertex.  
\end{enumerate}
\end{lemma}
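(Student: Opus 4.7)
The plan is to prove the lemma by contradiction: assume that none of the four listed cases holds, so $T$ contains either at least three $81a_f^2$-split vertices, or at least two $81a_f^2$-double-split vertices, or one $81a_f^2$-double-split vertex together with at least one other $81a_f^2$-split vertex. The goal in each sub-case is to construct a single simple path in $T$ whose direction pattern violates Lemma \ref{lem: directionsINPath}. The main quantitative input is Observation \ref{numOfVerinGrid}: every connected subgraph of $T$ with at least $81a_f^2 = (9a_f)^2$ vertices contains a simple path of size $9a_f$, hence has diameter at least $9a_f - 1$, so by the standard half-diameter argument the longest simple path from any fixed vertex of this subgraph has length at least $\lceil (9a_f - 1)/2 \rceil \geq 4a_f$.

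The second preparatory step is a path-splicing tool at split vertices. Given any $81a_f^2$-split vertex $v$ and any attached big subtree $T_{v,i}$ with attaching neighbor $u_{v,i}$ in direction $d_{v,i}$, the diameter bound above gives a simple path $Q_{v,i}$ in $\{v\} \cup V(T_{v,i})$ starting at $v$, whose first edge is $\{v, u_{v,i}\}$ (direction $d_{v,i}$), of length at least $4a_f + 1$. Now, given any two split vertices $v, w$ with tree-path $R = (v = r_0, r_1, \ldots, r_s = w)$ of length $s \geq 1$, the subtree of $T \setminus \{v\}$ that contains $w$ accounts for at most one of $v$'s big-subtree slots, so at least two (for a one-split) or three (for a double-split) other big subtrees remain at $v$, and symmetrically at $w$. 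For any such choice $T_{v,i}$ and $T_{w,j}$, I splice $Q_{v,i}$ reversed, then $R$, then $Q_{w,j}$, into a single simple path $P$ in $T$ of length at least $8a_f + s + 1$. At $v$ the two edges of $P$ meeting $v$ have directions $-d_{v,i}$ (entering $v$ from $T_{v,i}$) and $d_R^v$ (leaving $v$ along $R$), and symmetrically at $w$ the directions are $-d_R^w$ and $d_{w,j}$.

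The final step applies Lemma \ref{lem: directionsINPath}: the directions of the edges of $P$ must, outside a budget of at most $a_f + 1$ exceptions, lie in some pair $\Delta$ consisting of one vertical and one horizontal direction. Varying $T_{v,i}$ and $T_{w,j}$ over all eligible big subtrees forces direction constraints on $\Delta$ simultaneously at $v$ and at $w$. When three split vertices are present, I consider the Steiner tree they span (either a path, where one vertex lies between the other two, or a Y-shape with a branching vertex), and by picking the splicing subtrees cleverly at any two of the three split vertices I can force four mutually distinct direction-requirements along one $P$, which cannot be absorbed by the pair $\Delta$ plus the $a_f + 1$ exception budget since the chosen $P$ has length at least $8a_f + s + 1 \gg 2(a_f + 1)$. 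The cases of two double-split vertices and of a double-split coexisting with a one-split are handled similarly but are tighter, since a double-split vertex has big subtrees in all four directions and thus forces the direction set at its endpoint of $P$ to be arbitrary among all four options.

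The hard part will be the case analysis in the last step: the direction of $R$ at each of its endpoints is a priori unrestricted, and for the Y-shape Steiner configuration the branching vertex may or may not itself be a split vertex, so several sub-cases must be enumerated. The constant $81 a_f^2$ is tuned precisely so that $9a_f$, and hence $4a_f$, is large enough that the splicing yields a path whose non-main edge count strictly exceeds $a_f + 1$, closing each case with a direct contradiction to Lemma \ref{lem: directionsINPath}.
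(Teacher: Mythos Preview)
Your proposal is correct and follows essentially the same approach as the paper: obtain long paths in each big subtree at a split vertex via Observation~\ref{numOfVerinGrid}, argue (by splicing two such paths through the split vertex and invoking Lemma~\ref{lem: directionsINPath}) that these paths must go in pairwise distinct directions, and then splice a path from one split vertex through the tree to a second split vertex to force a contradiction with Lemma~\ref{lem: directionsINPath}. The paper writes out only the sub-case of a double-split vertex coexisting with any other split vertex and declares the remaining sub-cases ``similar''; your more explicit case enumeration (including the Steiner-tree/Y-shape analysis for three one-split vertices) is sound but heavier than necessary, since once the per-vertex direction constraint is in place, pigeonhole at two suitably chosen split vertices already closes every case just as in the paper's treated sub-case.
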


\begin{proof}
We show that, if $T$ has one $81a_f^2$-double-split vertex, then $T$ has no other $81a_f^2$-split vertices. The other cases can be shown similarly. Assume that $v\in V(T)$ is an $81a_f^2$-double-split vertex. Let $u_1,u_2,u_3,u_4\in V(T)$ be the four neighbors of $v$. Now, $v$ is an $81a_f^2$-double-split vertex, so, for every $1\leq i\leq 4$, the connected component in $T\setminus \{v\}$ to which $u_i$ belongs, has at least $81a_f^2$ vertices. From Observation \ref{numOfVerinGrid}, we get that there exists a simple path, $P'_i=(v^i_1,v^i_2,\ldots,v^i_{m_i})$, in that connected component with at least $9a_f$ vertices. We show that there exists a path $P_i$, in that connected component where $|V(P_i)|>4a_f$, that starts from $u_i$. We show this for $i=1$. The proof for $i=2,3,4$ is identical. If $u_1\in V(P'_1)$, then, it is trivial. Otherwise, $u_1\notin V(P_1)$. Let $v^1_j$ be the closest vertex to $u_1$ in $P'_1$. Let $P''=(u_1,w_1\ldots,v^1_j)$ be the simple path from $u_1$ to $v^1_j$. Observe that $V(P'')\cap V(P'_1)=\{v^1_j\}$. If $j>4a_f$, then $P_1=(u_1,w_1\ldots,v^1_j,v^1_{j-1}\ldots,v^1_1)$ is a simple path with $|V(P_1)|>4a_f$ that starts from $u_1$. Otherwise, $P_1=(u_1,w_1\ldots,v^1_j,v^1_{j+1}\ldots,v^1_{m_i})$ is such a path. 

Now, assume that there exist $i\neq j\in[4]$ such that $P_i$ and $P_j$ move to the same direction in $f$; without loss of generality, they move to the right in $f$. Then, the path $(P^*_i,v,P_j)$ moves in two horizontal directions in $f$, and by Lemma \ref{lem: directionsINPath}, this is a contradiction. Therefore, each $P_i$ moves in one different direction. Assume that there exists another $81a_f^2$-split vertex $s\in V(T)$. Without loss of generallity, assume that $s$ is in the connected component of $u_1$, in $T\setminus \{v\}$. By using arguments similar to those in the previous paragraph, $s$ has at least three neighbors $t_1,t_2,t_3$, and there exist three simple paths, $Q_1,Q_2,Q_3$ in $T\setminus \{s\}$, such that $Q_i$ starts from $t_i$ and $|V(Q_i)|>4a_f$. Moreover, for every $i,j\in [3], i\neq j$, $Q_i$ and $Q_j$ do not move to the same direction in $f$. Let $B$ be the simple path from $s$ to $u_1$. Observe that for at least two of $Q_1,Q_2,Q_3$, it follows that $V(Q_i)\cap V(B)=\emptyset$. Assume, without loss of generality, that $V(Q_1)\cap V(B)=\emptyset$ and $V(Q_2)\cap V(B)=\emptyset$. So, we get that at least one of $Q_1,Q_2$ moves to the same direction as one of $P_2, P_3, P_4$. Assume, without loss of generality, that $P_2$ and $Q_2$ move to the same direction. Then, we get that, $(P^*_2,B,Q_2)$ is a simple path that moves in two vertical directions or two horizontal directions in $f$, a contradiction due to Lemma \ref{numOfVerinGrid}.                          
\end{proof}


In the next lemma, we prove that, if $T$ has no $81a_f^2$-split vertices, then there exists a path $P$ such that $T$ is a $(P,81a_f^2)$-path. In addition, we prove that such a path $P$ can be found in $\OO(n^2)$ time.

\begin{lemma} \label{lem:noSplitIsPTPath}
Let $T$ be a tree that is a grid graph, with grid graph embedding $f$ of $T$. If $T$ has no $81a_f^2$-split vertices, then there exists a path $P$ such that $T$ is a $(P,81a_f^2)$-path. Moreover, such a $P$ can be found in $\OO(n^2)$ time.
\end{lemma}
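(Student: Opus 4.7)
The plan is to take $P$ to be a longest path (diameter) of $T$, which can be computed in $\OO(n)$ time via the standard two-pass breadth-first search---well within the $\OO(n^2)$ budget---so the algorithmic claim is immediate. The content of the lemma then reduces to showing that this $P$ witnesses $T$ as a $(P, 81 a_f^2)$-path: every vertex of $T$ must lie within graph distance $81 a_f^2$ of $V(P)$.

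To prove this, I would suppose for contradiction that some $v \in V(T) \setminus V(P)$ has its (unique, since $T$ is a tree) closest vertex $u \in V(P)$ satisfying $d(v, u) > 81 a_f^2$, and split into cases based on whether $u$ is an endpoint or an internal vertex of $P = (p_0, \ldots, p_k)$. If $u$ is an endpoint, say $u = p_0$, then concatenating the unique $v$-to-$u$ path in $T$ with $P$ produces a simple path of length $d(v, u) + k > k$ (simplicity holds because $u$ being the closest $P$-vertex forces no other $P$-vertex to lie on the $v$-to-$u$ path), directly contradicting the maximality of $P$.

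If instead $u = p_i$ with $1 \leq i \leq k-1$ is internal, applying the same maximality to the simple paths $p_0 \to u \to v$ and $p_k \to u \to v$ yields $i + d(u, v) \leq k$ and $(k - i) + d(u, v) \leq k$; together with $d(u, v) > 81 a_f^2$ this forces $i > 81 a_f^2$ and $k - i > 81 a_f^2$. Removing $u$ from $T$ then creates three pairwise distinct components: one containing $\{p_0, \ldots, p_{i-1}\}$ of size $\geq i > 81 a_f^2$, one containing $\{p_{i+1}, \ldots, p_k\}$ of size $\geq k - i > 81 a_f^2$, and the component $C_v$ containing $v$ of size $\geq d(u, v) > 81 a_f^2$ (since $C_v$ contains every vertex of the $u$-to-$v$ path other than $u$). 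As $T$ is a grid graph we have $\Delta(T) \leq 4$, so $T \setminus \{u\}$ has at most four components; combined with the above, $u$ is therefore either an $81 a_f^2$-one-split or an $81 a_f^2$-double-split vertex, contradicting the hypothesis on $T$.

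The main obstacle I foresee is keeping the case analysis clean and verifying that the three components in the internal case really are pairwise distinct with the asserted sizes; both follow immediately from $v \notin V(P)$, the uniqueness of paths in a tree, and the longest-path choice of $P$. The specific constant $81$ plays no special role in this step---any split-vertex threshold agreeing with the one in the hypothesis would give the same conclusion.
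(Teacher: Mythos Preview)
Your proof is correct and takes a genuinely different, cleaner route than the paper. The paper does not use a longest path: instead it marks every vertex $v$ for which $T\setminus\{v\}$ has two components each of size $>81a_f^2$, then argues (i) the marked set is connected, (ii) no marked vertex can have three marked neighbours (else it would be a split vertex), hence the marked set $T_M$ is a path, and (iii) every unmarked vertex lies within $81a_f^2$ of $T_M$; a separate case handles the situation where nothing is marked. Your diameter-based argument collapses all of this into a single contradiction: either an endpoint of $P$ can be extended (contradicting maximality), or an internal vertex of $P$ is forced to be a split vertex. As you note, the specific constant is irrelevant here, and as a bonus your construction runs in $\OO(n)$ rather than the paper's $\OO(n^2)$. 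One small remark: in your Case~1 the hypothesis $d(v,u)>81a_f^2$ is never actually used---for $v\notin V(P)$ the closest $P$-vertex can never be an endpoint of a longest path at all---so that case is in fact vacuous, but this only streamlines the argument further.
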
      

\begin{proof}
First, we mark every vertex $v\in V(T)$ such that $T\setminus \{v\}$ has two connected component that have more than $81a_f^2$ vertices, each. Observe that this process takes $\OO(n)$ time for each vertex, so it takes $\OO(n^2)$ time overall.

Now, if we did not mark any vertex, then we look for a vertex $v$ such that the number of vertices of a connected component with the maximal number of vertices in $T\setminus \{v\}$, is minimal. In this case, notice that each of the connected components in $T\setminus \{v\}$ cannot have more than $81a_f^2$ vertices; otherwise, we could take the neighbor of $v$, $u$, in that connected component, and get less vertices in a connected component with the maximal number of vertices in $T\setminus \{u\}$. Therefore, we get that $T$ is a $(P,81a_f^2)$-path where $P=(v)$.

Otherwise, we marked at least one vertex. We claim that the subgraph of $T$ induced by the set of vertices we marked, denoted by $T_M$, is a path. First, we show that $T_M$ is connected. Let $u,v\in V(T_M)$. Let $P'=(u,v_1,\ldots,v_k,v)$ be the simple path from $u$ to $v$. Let $1\leq i\leq k$. Observe that in $T\setminus \{v_i\}$, there is a connected component that contains all the connected components of $T\setminus \{u\}$ except one of them, and there is another connected component that contains all the connected components of $T\setminus \{v\}$ except one of them. Thus, there are at least two connected components in $T\setminus \{v_i\}$ that have more than $81a_f^2$ vertices each, so $v_i\in V(T_M)$. Now, we show that each vertex in $V(T_M)$ has at most two neighbors in $T_M$. Assume, toward a contradiction, that there exists $v\in V(T_M)$ with three neighbors $u_1,u_2,u_3\in V(T_M)$. Then, observe that, for $1\leq i\leq 3$, the connected component to which $u_i$ belongs in $T\setminus \{v\}$ contains at least one of the two connected components in $T\setminus \{u_i\}$ that have at least $81a_f^2$ vertices, and so, it contains at least $81a_f^2$ vertices. Therefore, $T\setminus \{v\}$ has at least three connected components that have at least $81a_f^2$ vertices each, so $v$ is an $81a_f^2$-split vertex, a contradiction. Thus, we get that $T_M$ is a path.

Now, we show that, for every vertex $v$ in $T\setminus V(T_M)$, there exists a vertex $u\in V(T_M)$ such that $d(u,c)\leq 81a_f^2$. Let $v\in V(T)\setminus V(T_M)$. Let $u$ be the vertex in the same connected components as $v$ in $T\setminus V(T_M)$, that has a neighbor, $s$, in $T_M$. If $d(u,v)\geq 81a_f^2$, then observe that in the connected components to which $v$ belongs in $T\setminus \{u\}$, there are at least $81a_f^2$ vertices. In addition, the connected components to which $s$ belongs in $T\setminus \{u\}$, contains at least one of the two connected components that have at least $81a_f^2$ vertices in $T\setminus \{s\}$, so, also has a at least $81a_f^2$ vertices. Therefore, there exist two connected components in $T\setminus \{u\}$, that have at least $81a_f^2$ vertices, so $u\in V(T_M)$, a contradiction. So, $d(u,v)< 81a_f^2$, and we get that $d(s,v)\leq 81a_f^2$. 

In conclusion, we got that $P=T_M$ is a simple path such that for every $v\in V(T)$, there exists $u\in V(P)$ such that $d(u,v)\leq 81a_f^2$. Thus, $T$ is a $(P,81a_f^2)$-path.       
\end{proof}


In Lemma \ref{lem:splitVer}, we proved that any tree $T$, with grid graph embedding $f$ of $T$, has at most two $81a_f^2$-split vertices. Moreover, in Lemma \ref{lem:noSplitIsPTPath}, we proved that, if a tree $T$ has no $81a_f^2$-split vertices, then, $T$ is a $(P,81a_f^2)$-path, for a path $P$. Therefore, the structure of a tree $T$, with a grid graph embedding $f$ of $T$, is simple (see Figures \ref{fi:onesplitVertices}, \ref{fi:twosplitVertices} and \ref{fi:oneDsplitVertices}). The tree $T$ is composed of at most four $(P,81a_f^2)$-paths, each moves in different directions in $f$. Therefore, every such $(P,81a_f^2)$-path is almost ``independent'', since the vertices of a $(P,81a_f^2)$-path might be close to vertices of another $(P,81a_f^2)$-path, only around the split vertices. Therefore, the idea of our algorithm is as follows. First, we identify the split vertices, and find the $(P,81a_f^2)$-paths. For every split vertex, we guess a ``small'' grid graph embedding $f'$ ``around'' it. Then, for every $(P,81a_f^2)$-path we found, we look for a grid graph embedding that ``continues'' our guess $f'$. We define the term {\em subgrid} in the following definition.   


\begin{figure}[!t]
\centering
\includegraphics[width=0.5\textwidth, page=9]{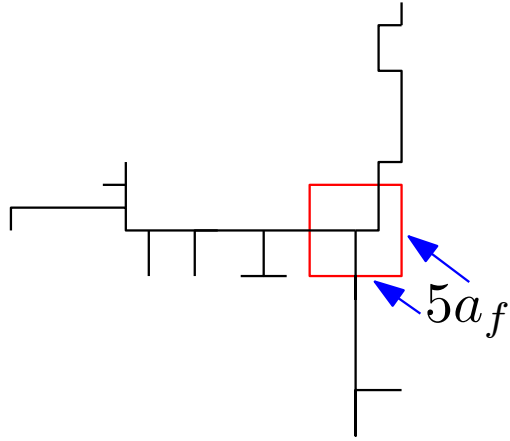}
\caption{An illustration for the construction of a grid graph embedding of a tree with one split-vertex.}\label{fi:onesplitVertices}
\end{figure}

\begin{figure}[!t]
\centering
\includegraphics[width=0.5\textwidth, page=10]{figures/distance.pdf}
\caption{An illustration for the construction of a grid graph embedding of a tree with two split-vertices.}\label{fi:twosplitVertices}
\end{figure}

\begin{figure}[!t]
\centering
\includegraphics[width=0.5\textwidth, page=11]{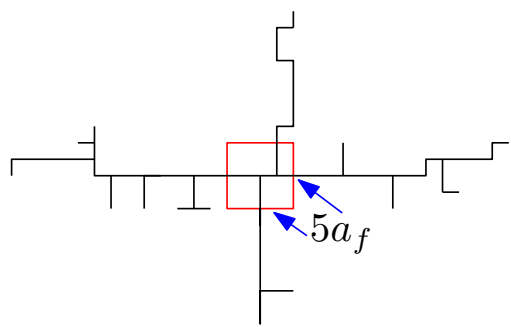}
\caption{An illustration for the construction of a grid graph embedding of a tree with one double-split vertex.}\label{fi:oneDsplitVertices}
\end{figure}

\begin{definition}[{\bf Subgrid}]\label{def:subgrid}
Let $G$ be a graph, let $U\subseteq V(G)$, let $U'\subseteq U$, let $k,r,k',r'\in \mathbb{N}$ such that $k'\leq k$ and $r'\leq r$, let $f$ be a $k\times r$ grid graph embedding of $G[U]$, and let $f'$ be a $k'\times r'$ grid graph embeddings of $G[U']$. We say that {\em $f'$ is a subgrid of $f$} if there exist $a,b\in \mathbb{Z}$, such that, the following conditions are satisfied:
\begin{enumerate}
\item For every $u\in U'$, $f'(u)=(\fr(u)+a,\fc(u)+b)$. \label{cond1:agee}
\item For every $1\leq i\leq k'$ and $1\leq j\leq r'$, $f^{-1}(i+a,j+b)=f'^{-1}(i,j)$. \label{cond2:agee}
\end{enumerate}   
\end{definition}

Observe that Condition \ref{cond1:agee} of Definition \ref{def:subgrid}, simply requests that $f'\subseteq f$ up to ``shifting'' $f'$ in parallel to the axes. Condition \ref{cond2:agee} requests that there are no vertices of $U\setminus U'$ that are embedded in the $k'\times r'$ grid graph embedding of $U'$ in $f$. When we say that $f'$ is a subgrid of $f'$, we can look at this definition as if there exists a $k'\times r'$ grid graph ``inside $f$'' that ``looks exactly'' like $f'$. 

For a $k\times r$ grid graph embedding $f$ of a graph $G$, we have $\mathsf{length}(f)=\max\{\fr(u)-\fr(v)+1~|~u,v\in V(G)\}$ and $\mathsf{width}(f)=\max\{\fc(u)-\fc(v)+1~|~u,v\in V(G)\}$. Observe that $\mathsf{length}(f)\leq k$ and $\mathsf{width}(f)\leq r$. In addition, observe that $G$ is a $k\times r$ grid graph, if and only if, there exists a $k'\times r'$ grid graph $f$ of $G$ such that $\mathsf{length}(f)\leq k$ and $\mathsf{width}(f)\leq r$  

We proceed with the description of our algorithm. Assume that we want to construct, if possible, a $k\times r$ grid graph of $T$ with $a_f\leq a$, for $k,r,a\in \mathbb{N}$. Now, we focus on finding a $k\times r$ grid graph embedding $f$ of a $(P,81a^2)$-path such that $P$ moves in two directions in $f$, except for at most $a$ edges. In addition, we want $f$ to have $f'$ as subgrid where $f'$ is a $(366a^2+1)\times (366a^2+1)$ grid graph embedding of a ``small environment'', $U\subseteq V(T)$, of a split vertex $u$ where $f'(u)=(183a^2+1,183a^2+1)$, i.e., $u$ is located ``in the middle'' of $f'$. As we have explained, $f'$ is the ``problematic area'', where the grid graph embeddings of some $(P,81a^2)$-paths might have ``conflicts''. Therefore, we look for a grid graph embedding, of each $(P,81a^2)$-path separately, while the ``problematic area'', that is, $f'$, is fixed. 

The algorithm we suggest is similar to Algorithm \ref{alg:Grid Recognition a_f} (given in the previous subsection). Here, we only state the changes in the algorithm for the purposes of this part. First, we remind the idea of Algorithm \ref{alg:Grid Recognition a_f}, where we aim to find a $k\times r$ grid graph embedding of a graph $G$ in \FPT\ time with respect to the parameter $a_G+k$. We guessed an initial vertex, to embedded in the first column of the grid, and then we had some ``order'' on the ``appearance'' of the vertices in the grid. Then, by iterations using Algorithm \ref{alg:Grid Recognition Iteration}, we tried to construct the next ``small piece'' of the grid that ``fits'' to one of the previous ``small piece'' possibilities.

Now, in our case, where we want to find a $k\times r$ grid graph embedding $f$ of a $(P,81a^2)$-path such that $P$ moves in two directions in $f$ except for at most $4a$ edges, we can use a similar approach. Assume that $P=(v_1,\ldots,v_\ell)$. Here, in the $i$-th iteration, we want to find an embedding for $P(v_i)$, where $P(v_i)=\{v\in V((P,81a^2))~|~P_c(v)=v_i\}$, and $P_c(v)$ is the closest vertex to $v$ in $P$. Due to Lemma \ref{lem: directionsINPath}, where we show that if $P$ moves in two directions in $f$ except for at most $4a$ edges, then, for every two vertices, $u$ and $v$, such that $d_f(u,v)\leq a$, we get that $d(u,v)\leq 6a$, it is enough to remember the embedding of the vertices of $P(v_{i-6a})\cup \ldots \cup P(v_{i-1})$. Now, observe that, by arguing similar arguments as we did for Observation \ref{numOfVerinGrid}, the number of vertices, in each $P(v_{j})$, is bounded by $(81a^2)^2=6561a^4$. Therefore, we get that $|P(v_{i-6a})\cup \ldots \cup P(v_{i-1})|\leq 6561a^4\cdot 6a=39366a^5= \OO(a^5)$. Denote $39366a^5$ by $q$. So, at each step, we have from the previous step, the $(i-1)$-th step, some $2q+1\times 2q+1$ grid graphs of the graph $T$, induced by the set of vertices $P(v_{i-6a})\cup \ldots \cup P(v_{i-1})$, where $f(v_{i-1})=(q,q)$. For every such embedding, we also have the number of ``wrong directions'', that is, the number of edges that were not directed up or right. In addition, we have $\mathsf{MaxColumn}$, $\mathsf{MinColumn}$, $\mathsf{MaxRow}$ and $\mathsf{MinRow}$, which mark the maximal column, the minimal column, the maximal row and the minimal row, respectively, that we have embedded any vertex. Moreover, we have $\mathsf{startingLocation}$, which marks the ``real'' location in the ``entire'' grid graph embedding we have so far of the point $(1,1)$. At the first iteration, $\mathsf{startingLocation}=(1,1)$, and, using brute force, we construct every possible $2q\times 2q$ grid graph of $P(v_{1})\cup \ldots \cup P(v_{6a})$ such that $P$ moves up and right, except for at most $4a$ edges. We count the number of ``wrong directions''. For each such an embedding we found, we also update $\mathsf{MaxColumn}$, $\mathsf{MinColumn}$, $\mathsf{MaxRow}$ and $\mathsf{MinRow}$. At the $i$-th step, we take each embedding, $f$, from the last step, and we use brute force, in order to find all the possibilities to ``add'' the embedding of $P(v_i)$ to $f$, as long as the number of ``wrong directions'' is still bounded by $4a$. Also we update $\mathsf{MaxColumn}$, $\mathsf{MinColumn}$, $\mathsf{MaxRow}$ and $\mathsf{MinRow}$, according to the vertices we added to the embedding, considering the ``real'' location of the embedding, according to $\mathsf{startingLocation}$. We make sure that $\mathsf{MaxColumn}-\mathsf{MinColumn}+1\leq r$ and $\mathsf{MaxRow}-\mathsf{MinRow}+1\leq k$. Then, we delete the embeddings of $P(v_{i-6a})$, as they are ``irrelevant'' for the next step, as we have explained, and ``shift'' the grid graph embedding, so we get that $f(v_{i})=(q,q)$. We update $\mathsf{startingLocation}$ accordingly. If there exists an iteration where we cannot find any such embedding, we can conclude that we have a no-instance. Observe that the number of possible $2q\times 2q$ grid graph embeddings of the graph $T$ induced by the set of vertices $P(v_{i-6a+1})\cup \ldots \cup P(v_{i})$ is bounded by $(4q^2)!=\OO((4q^2)^{4q^2})=\OO((a^{10})^{\OO(a^{10})})$. In addition, the number of possibilities for $\mathsf{MaxColumn}$, $\mathsf{MinColumn}$, $\mathsf{MaxRow}$ and $\mathsf{MinRow}$ and $\mathsf{startingLocation}$ is bounded by $r\cdot r\cdot k\cdot k\cdot k\cdot r= k^3\cdot r^3=\OO(|V(G)|^6)$. Therefore, the number of grid graph embeddings we have at the end of each iteration, is bounded by $\OO(|V(G)|^6(a^{10})^{\OO(a^{10})})$. At each iteration, with a similar counting arguments, we have at most $\OO(|V(G)|^6(a^{10})^{\OO(a^{10})})$ ``small embeddings'', and we need to check each of them, if there exists some ``small embeddings'' from the last iteration that ``fit''. Therefore, each iteration takes at most $\OO(|V(G)|^{12}(a^{10})^{\OO(a^{10})})$. Observe that the number of iteration is bounded by $|V(G)|$, so the total running time of the algorithm is bounded by $\OO(|V(G)|^{13}(a^{10})^{\OO(a^{10})})$. Now, in order to make sure that $f'$ is a subgraph of the grid graph embedding, $f$, that we construct, we can start with the first iteration, with $f'$, with the constraint that we cannot embed any vertices in the ``area'' of $f'$. 

Therefore, we have the next lemma:

 \begin{lemma}\label{algoExPath}
There exists an algorithm that gets as input, a tree $T$, $a,k,r\in \mathbb{N}$, a $(P,t)$-path in $T$, $U\subseteq V(T)$, a $(366a^2+1)\times (366a^2+1)$ grid graph $f'$ of $U$ and at most two directions $\Delta \subset \mathsf{dir}$, at most one horizontal direction, and at most one vertical direction. The algorithm returns ``yes'' if and only if there exists a $k\times r$ grid graph embedding $f'$ of $T[V(P,t)\cup U]$ such that $P$ moves in $\Delta$ directions in $f'$, except for at most $4a$ edges, and $f'$ agrees with $f$. In addition, the algorithm works in time $\OO(|V(G)|^{13}(a^{10})^{\OO(a^{10})})$.   
\end{lemma}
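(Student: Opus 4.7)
The plan is to formalise the dynamic programming sketched immediately before the lemma statement. Write $P=(v_1,\ldots,v_\ell)$ and, as in the preceding discussion, partition $V(T[V(P,t)])$ into the clusters $P(v_i)=\{u:P_c(u)=v_i\}$. Since each $T[P(v_i)]$ lies within graph distance $t=81a^2$ of $v_i$, a contrapositive of Observation \ref{numOfVerinGrid} gives $|P(v_i)|=\OO(a^4)$; consequently the sliding window $W_i:=P(v_{i-6a})\cup\cdots\cup P(v_{i-1})$ has at most $q=6a\cdot(81a^2)^2=\OO(a^5)$ vertices. The key separation property is supplied by Lemma \ref{lem: gridDisAppGraphDis}: if the target embedding has $P$ moving in $\Delta$ except for at most $4a$ edges, then any grid-neighbour of a vertex of $P(v_i)$ lies within graph-distance $6a$, and hence inside $W_i$. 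Thus, to check whether $P(v_i)$ admits a consistent placement, it suffices to remember only the embedding of $W_i$.

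I would then run a left-to-right dynamic program along $P$ whose state records: (i) a ``normalised'' partial embedding of $T[W_i]$, shifted so that $v_{i-1}$ occupies the centre $(q,q)$ of a $(2q+1)\times(2q+1)$ frame; (ii) the number of $P$-edges already embedded in a direction outside $\Delta$; (iii) the four extremes $\mathsf{MinRow},\mathsf{MaxRow},\mathsf{MinCol},\mathsf{MaxCol}$ of the embedded portion so far; and (iv) a translation vector $\mathsf{startingLocation}$ fixing the absolute position of the centre inside the $k\times r$ strip. A transition from step $i$ to step $i+1$ enumerates, by brute force, every way to place $T[P(v_{i+1})]$ adjacent to the current window, discarding any placement that (a) violates the tree structure of $T[P(v_{i+1})\cup W_i]$, (b) raises the out-of-$\Delta$ counter above $4a$, (c) forces $\mathsf{MaxRow}-\mathsf{MinRow}+1>k$ or $\mathsf{MaxCol}-\mathsf{MinCol}+1>r$, or (d) overwrites a cell occupied by $U$ under $f$ or contradicts the prescribed position of a vertex of $U$. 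The window is then re-centred on $v_i$, the oldest cluster $P(v_{i-6a+1})$ is dropped, and $\mathsf{startingLocation}$ is updated accordingly. The algorithm accepts iff at least one state survives after processing $v_\ell$.

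The input subgrid $f$ is enforced as a boundary condition. If no $P(v_i)$ meets $U$, the DP is seeded by enumerating every legal $\mathsf{startingLocation}$ (at most $kr$ choices) whose window avoids the $(366a^2+1)^2$ cells reserved by $f$, together with every valid normalised embedding of $W_{6a+1}$; otherwise, letting $i_0$ be the smallest index with $P(v_{i_0})\cap U\ne\emptyset$, the window's position relative to $f$ is forced, $\mathsf{startingLocation}$ is determined, and the DP is seeded at step $i_0-1$ by enumerating all partial window-embeddings consistent with $f$ and with $U$'s cells reserved. Correctness follows directly from Lemma \ref{lem: gridDisAppGraphDis}: the restriction of any valid global embedding to the window at step $i$ matches some explored DP-state, and conversely the partial embeddings recorded along a successful run reassemble (thanks to the normalisation and $\mathsf{startingLocation}$) into a $k\times r$ grid embedding of $T[V(P,t)\cup U]$ satisfying all four required conditions.

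For the running time, each normalised window-embedding maps $\OO(a^5)$ vertices into an $\OO(a^{10})$-cell frame, so the number of such embeddings is $(a^{10})^{\OO(a^{10})}$; multiplying by $\OO(a)$ counter values and $\OO(k^3r^3)=\OO(|V(G)|^6)$ choices of extremes and starting location, each DP-layer holds at most $\OO(|V(G)|^6\cdot(a^{10})^{\OO(a^{10})})$ states. A transition compares every state of the previous layer against every candidate extension, costing $\OO(|V(G)|^{12}\cdot(a^{10})^{\OO(a^{10})})$ per layer, and there are at most $|V(G)|$ layers, yielding the claimed bound. The main obstacle in the formalisation is the bookkeeping at the interface with $f$: one must arrange, when $W_i$ first overlaps the region reserved by $f$, that the positions of $V(W_i)\cap U$ are pre-committed to those dictated by $f$, that cells of $U\setminus W_i$ remain forbidden for every future placement, and that tree-edges of $T$ crossing between $V(P,t)$ and $U$ are checked exactly once. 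Since $|U|=\OO(a^4)$, these constraints can be precomputed and absorbed into the per-step enumeration without affecting the stated running time.
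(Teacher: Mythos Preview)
Your proposal is correct and follows essentially the same approach as the paper: the paper does not give a separate proof of this lemma but rather derives it from the dynamic-programming discussion immediately preceding it, and you have reconstructed precisely that argument---the sliding window of $6a$ clusters along $P$, the use of Lemma~\ref{lem: gridDisAppGraphDis} for locality, the $(2q+1)\times(2q+1)$ normalised frame with the wrong-direction counter, the tracking of $\mathsf{MaxRow},\mathsf{MinRow},\mathsf{MaxCol},\mathsf{MinCol},\mathsf{startingLocation}$, and the same running-time analysis. Your treatment of the interface with the fixed subgrid $f'$ is somewhat more explicit than the paper's single sentence (``start with the first iteration, with $f'$, with the constraint that we cannot embed any vertices in the `area' of $f'$''), but the underlying mechanism is the same.
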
       

In Lemma \ref{algoExPath}, we show the existence of an algorithm, that given a $(P,t)$-path, and a ``small grid environment'' around a split vertex, finds a $k'\times r'$ grid graph embeddings, of the $(P,t)$-paths, with the given ``environment'' as a subgrid, if such an embedding exists, in \FPT\ time. Now, we show how to use such an algorithm in order to determine if there exists a $k\times r$ grid graph embedding of $T$. 

Assume that we are given a tree $T$, positive integers $k,r,a\in \mathbb{N}$, and we want to determine if there exists a $k\times r$ grid graph embedding of $T$, $f$, such that $a_f\leq a$. In Lemma \ref{lem:splitVer}, we proved that, if such a grid graph embedding exists, then we have few cases to consider. $T$ has no $81a^2$-split vertices, or it has exactly one split vertex, that is an $81a^2$-one-split vertex, or it has exactly two split vertices, that are some $81a^2$-one-split vertices, or it has exactly one split vertex, that is an $81a^2$-double-split vertex. For the case where $T$ has no $a$-split vertices, we can simply use an algorithm that corresponds to Lemma \ref{algoExPath}. Now, we will prove the existence of an algorithm that solves the case, where has one $81a^2$-double-split vertex, in \FPT\ time. The other cases can be solved similarly, and we will state the differences. 

In this case, we get that, by removing the only split vertex, we have four connected components, each is a $(P,t)$-path, where $t=81a^2$. We show that each such $(P,t)$-path ``moves'' to a different direction in $f$. Then, as we explained and as we will prove, each embedding is ``almost independent'', and there might be ``overlapping'' only ``around'' the split vertex. Therefore, we guess a ``small environment'' around the split vertex, we guess the directions that the $(P,t)$-path moves, and we look for grid graph embeddings of the $(P,t)$-paths, such that, the ``small environment'' is a subgrid of them. We show that there exists a grid graph embedding of $T$, if and only if, we managed to find such embeddings that correspond to a possible guess. We start with the easy direction, that is, if there exists a grid graph embedding $f$ of $T$, then there exist a ``small environment'' around the split vertex, a different ``main direction'' for every $(P,t)$-path, and grid graph embeddings of the $(P,t)$-paths, such that the ``small environment'' is a subgrid of them. We simply can take a ``small environment'' around the split vertex in $f$, and consider the ``parts'' of $f$ where we have the embedding of each $(P,t)$-path. We also consider the dimensions of the grid graph embedding. Since every $(P,t)$-path ``moves'' to a different direction, we have two $(P,t)$-paths that move to vertical directions, and two $(P,t)$-paths that move to horizontal directions. In a sense, as we will prove later, the $(P,t)$-paths that move to vertical directions, determine the number of rows in the embedding, and the $(P,t)$-paths that move to horizontal directions determine the number of columns in the embedding. We prove this direction in the next lemma below. We first present several notations:
 
\begin{itemize}
\item For a given graph $G$, a vertex $v\in V$, and a positive integer $t\in \mathbb{N}$, we denote by $r(v,t)$, the set of vertices, which in a graph distance from $v$, less or equal to $t$, that is, $r(v,t)=\{u\in V~|~d(u,v)\leq t\}$.
\item For a given graph $G$, a grid graph embedding $f$ of $G$, a vertex $v\in V$, and a positive integer $t\in \mathbb{N}$, we denote by $(f,u,t)$, the $t\times t$ subgrid of $G$, where $(f,u,t)(u)=(\lceil \frac{t}{2} \rceil,\lceil \frac{t}{2} \rceil)$.                   
\item For a given graph $G$, a subset of vertices $U\subseteq V$, and a grid graph embedding $f$ of $G[U]$, we denote $U$ by $V(f)$.
\item For a given graph $G$, a subset of vertices $U\subseteq V$, and a grid graph embedding $f$ of $G$, we denote by $f[U]$, the grid graph embedding of $G[U]$, defined as, $f[U](u)=f(u)$, for every $u\in U$. 
\end{itemize}

\begin{lemma} \label{lem:algo1Dir}
Let $T=(V,E)$ be a tree, let $k,r,a\in \mathbb{N}$, let $f$ be a $k\times r$ grid graph embedding of $T$. Assume that $a_f\leq a$, and that $T$ has exactly one $81a^2$-split vertex $u\in V$, that is an $81a^2$-double-split vertex. Let $U_1,U_2,U_3$ and $U_4$ be subsets of $V$ such that $\bigcup_{1\leq i\leq 4}U_i=V\setminus \{u\}$, $U_i$ is a set of vertices of a connected component in $T\setminus \{u\}$ such that $|U_i|\geq 81a^2$, for $i\in [4]$. Then, there exist $i_1,i_2,i_3,i_4\in [4]$, such that $i_j\neq i_\ell$, for $1\leq j< \ell \leq 4$, $r_1,r_2,r_3,r_4,k_1,k_2,k_3,k_4\in \mathbb{N}$, $U\subseteq r(u,367a^2+1)$ such that $u\in U$, and a $(366a^2+1)\times (366a^2+1)$ grid graph embedding $f'$ of $T[U]$ such that $f'(u)=(183a^2+1,183a^2+1)$, and the following conditions are satisfied: 
\begin{enumerate} 
\item For every $1\leq i\leq 3$, $T[U_i]$ is a $(P,81a^2)$-path. \label{lem:algo1DirCon1}
\item There exists a $k_1\times r_1$ grid graph embedding, $f^1$, of $T[U_{i_1}\cup U]$, such that, for every $P=(v_0,\ldots,v_\ell)$ such that $T[U_{i_1}]$ is a $(P,81a^2)$-path, and $P_c(u)=v_0$, $P$ moves up in $f^1$, except for at most $4a$ edges, and $f'$ is a subgrid of $f^1$. \label{lem:algo1DirCon2}
\item There exists a $k_2\times r_2$ grid graph embedding, $f^2$, of $T[U_{i_2}\cup U]$, such that, for every $P=(v_0,\ldots,v_\ell)$ such that $T[U_{i_2}]$ is a $(P,81a^2)$-path, and $P_c(u)=v_0$, $P$ moves left in $f^2$, except for at most $4a$ edges, and $f'$ is a subgrid of $f^2$. \label{lem:algo1DirCon3}
\item There exists a $k_3\times r_3$ grid graph embedding, $f^3$, of $T[U_{i_3}\cup U]$, such that, for every $P=(v_0,\ldots,v_\ell)$ such that $T[U_{i_3}]$ is a $(P,81a^2)$-path, and $P_c(u)=v_0$, $P$ moves down in $f^3$, except for at most $4a$ edges, and $f'$ is a subgrid of $f^3$. \label{lem:algo1DirCon4}
\item There exists a $k_4\times r_4$ grid graph embedding, $f^4$, of $T[U_{i_4}\cup U]$, such that, for every $P=(v_0,\ldots,v_\ell)$ such that $T[U_{i_4}]$ is a $(P,81a^2)$-path, and $P_c(u)=v_0$, $P$ moves right in $f^4$, except for at most $4a$ edges, and $f'$ is a subgrid of $f^4$. \label{lem:algo1DirCon45}
\item $k_1+k_3-\mathsf{length}(f')\leq k$ and $r_2+r_4-\mathsf{width}(f')\leq r$. \label{lem:algo1DirCon5} 
\end{enumerate}  
\end{lemma}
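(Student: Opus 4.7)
The plan is to derive all data in the conclusion from restrictions and relabellings of the given embedding $f$ itself. Throughout, let $u_1,u_2,u_3,u_4$ denote the (unique) neighbours of $u$ in $U_1,\dots,U_4$, respectively.

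\textbf{Step 1 (Condition~\ref{lem:algo1DirCon1}).} I first argue that each $T[U_i]$ has no $81a^2$-split vertex. If $v\in U_i$ were such a vertex of $T[U_i]$, then $T[U_i]\setminus\{v\}$ would contain three components of size $\ge 81a^2$ each; re-attaching $u$ together with the three other $U_j$'s to $T[U_i]$ can only append the single component containing $u$ to whichever of these three components contains $u_i$, so $v$ remains an $81a^2$-split vertex of $T$, contradicting the uniqueness of $u$. Now apply Lemma~\ref{lem:noSplitIsPTPath} to obtain, for every $i\in[4]$, a path $P^i$ witnessing that $T[U_i]$ is a $(P^i,81a^2)$-path.

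\textbf{Step 2 (assigning cardinal directions).} Since the four $f(u_i)$ occupy the four grid cells orthogonally adjacent to $f(u)$, I permute indices so that $f(u_{i_1})=f(u)+(1,0)$, $f(u_{i_2})=f(u)+(0,-1)$, $f(u_{i_3})=f(u)+(-1,0)$, $f(u_{i_4})=f(u)+(0,1)$. I then show that for each $j\in[4]$ and every candidate spine $P=(v_0,\dots,v_\ell)$ with $P_c(u)=v_0$, the path $P$ moves in the assigned cardinal direction except for at most $4a$ edges. The core argument: prepending $u$ to $P$ gives a simple path $Q_j$ in $T$; by Observation~\ref{numOfVerinGrid} each $U_i$ contains a spine of length $>4a$ starting near $u_i$, so I may pair $Q_j$ with the analogous $Q_{j'}$ from another component and concatenate them through $u$. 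Lemma~\ref{lem: directionsINPath} forces the combined path to use at most one vertical and one horizontal direction, with at most $a+1$ edges violating this. Applying this successively with a component moving in the opposite vertical direction, and then with one moving in each horizontal direction, pins down that the number of non-up edges of $P$ in $Q_1$ (down plus left plus right) is at most $(a-1)/2+(a+1)+(a+1)<4a$; the analogous bounds hold for the other three labels.

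\textbf{Step 3 (building $U$, $f'$, and the $f^j$).} Define
\[
U=\bigl\{v\in V(T)\;:\;|\fr(v)-\fr(u)|\le 183a^2 \text{ and } |\fc(v)-\fc(u)|\le 183a^2\bigr\}.
\]
For each $v\in U$, Observation~\ref{lem:dfLeqd} combined with $a_f\le a$ gives $d(u,v)\le d_f(u,v)+a\le 366a^2+a\le 367a^2+1$, so $U\subseteq r(u,367a^2+1)$. Let $f'$ be the restriction of $f$ to $U$, translated so that $f'(u)=(183a^2+1,183a^2+1)$; by construction $f'$ is a $(366a^2+1)\times(366a^2+1)$ grid embedding of $T[U]$. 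For each $j\in[4]$, let $f^j$ be the restriction of $f$ to $U_{i_j}\cup U$, translated by the same shift used for $f'$ on the $U$-part, and then normalised to start at $(1,1)$. The compatibility of translations gives Condition~\ref{def:subgrid}.\ref{cond1:agee} and~\ref{def:subgrid}.\ref{cond2:agee} directly, so $f'$ is a subgrid of each $f^j$. The direction statements in Conditions~\ref{lem:algo1DirCon2}--\ref{lem:algo1DirCon45} are inherited from Step~2 because $f^j$ agrees with $f$ (up to a rigid shift) on every spine of $T[U_{i_j}]$.

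\textbf{Step 4 (Condition~\ref{lem:algo1DirCon5}).} Let $k_j\times r_j$ be the bounding-box dimensions of $f^j$. Because $U_{i_1}$ moves up, the rows of $f^1$ occupy exactly the rows of $U$ plus $k_1-(366a^2+1)$ additional rows above them; similarly $f^3$ occupies the rows of $U$ plus $k_3-(366a^2+1)$ additional rows below, and $f'$ contributes the overlap of height $\mathsf{length}(f')=366a^2+1$. As all of these rows lie within the $k$ rows of $f$, we get $k_1+k_3-\mathsf{length}(f')\le k$; swapping axes with $U_{i_2}, U_{i_4}$ gives $r_2+r_4-\mathsf{width}(f')\le r$.

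\textbf{Main obstacle.} The delicate part is Step~2: merely invoking Lemma~\ref{lem: directionsINPath} on a single component only bounds the minority vertical or minority horizontal count, which would permit many horizontal edges in a spine nominally heading up. Converting this into the stronger ``moves in one direction except for $4a$ edges'' requires the full four-way structure at $u$: each pair of cardinal spines, when concatenated through $u$, would otherwise witness two vertical or two horizontal main directions, contradicting Lemma~\ref{lem: directionsINPath}. Getting the exact constant $4a$ (rather than a larger polynomial in $a$) is where the arithmetic from the $\min(\cdot,\cdot)\le a+1$ consequences of Lemma~\ref{lem: directionsINPath} has to be tracked carefully across all three competing directions.
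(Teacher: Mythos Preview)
Your overall strategy---take $f'$, $U$, and the $f^j$ as restrictions of the given embedding $f$---is exactly what the paper does, and Steps~1 and~3 match the paper closely. Two points deserve attention.

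\textbf{Step~2 (labelling).} You assign $i_1,\dots,i_4$ by the grid position of the neighbour $u_{i_j}$ relative to $u$. The paper instead first argues (as in Lemma~\ref{lem:splitVer}) that the four spines go in four pairwise distinct cardinal directions, and \emph{then} defines $i_1$ to be the one going up, etc. Your labelling is not a priori the same: the first edge out of $u$ could point up while the spine of that component mainly heads left. Your concatenation argument via Lemma~\ref{lem: directionsINPath} does establish distinctness of main directions, but it does not by itself tie the main direction of $U_{i_j}$ to the position of $u_{i_j}$. The clean fix is to label by main direction, as the paper does.

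\textbf{Step~4 (the real gap).} Your sentence ``the rows of $f^1$ occupy exactly the rows of $U$ plus $k_1-(366a^2+1)$ additional rows above them'' hides the entire content of Condition~\ref{lem:algo1DirCon5}. Concretely, you need all four of the following: (a) the global maximum row of $f$ is attained in $U_{i_3}$; (b) the global minimum row is attained in $U_{i_1}$; (c) the minimum row of $f^3=f[U_{i_3}\cup U]$ is attained by a vertex of $U$; (d) the maximum row of $f^1$ is attained by a vertex of $U$. None of these follow from ``$U_{i_1}$ moves up'': a spine that moves up except for $4a$ edges can still start at $v_0$ with $d(u,v_0)\le 81a^2+1$, and vertices of $U_{i_1}$ can be $81a^2$ away from the spine, so a priori vertices of $U_{i_1}$ could sit well below $f(u)$. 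The paper spends most of its proof on exactly these bounds, e.g.\ showing that every $v\in U_{i_3}$ satisfies $\fr(v)>\fr(u)-163a^2$ while exhibiting a vertex of $U_{i_1}\cap U$ with row below $\fr(u)-163a^2$. You also write $\mathsf{length}(f')=366a^2+1$, which need not hold: $\mathsf{length}(f')$ is the actual row-span of the vertices in $U$, not the side of the bounding box, and the paper's identity $\mathsf{length}(f)=k_1+k_3-\mathsf{length}(f')$ uses the true value. Without these quantitative estimates, Step~4 does not go through.
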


\begin{proof}
Since $T$ has exactly one $81a^2$-split vertex $u\in V$, we get that $T[U_i]$ has no split-vertices, and therefore, by Lemma \ref{lem:noSplitIsPTPath}, $T[U_i]$ is a $(P,81a^2)$-path, for every $1\leq i\leq 4$. Thus, Condition \ref{lem:algo1DirCon1} of the lemma is satisfied. Now, it can be proved, similarly to the proof of Lemma \ref{lem:splitVer}, that for every $1\leq i\leq 4$, there exists a direction $d_i\in \mathsf{dir}$, such that, $d_i\neq d_j$, for every $1\leq i<j\leq 4$, and the following condition is satisfied. For all simple paths, $P^1=(v_0^1,\ldots,v_{t_1}^1)$, $P^2=(v_0^2,\ldots,v_{t_2}^2)$, $P^3=(v_0^3,\ldots,v_{t_3}^3)$ and $P^4=(v_0^4,\ldots,v_{t_4}^4)$, such that $T[U_i]$ is a $(P^i,81a^2)$-path, and $P_c^i(u)=v^i_0$, we get that $P^i$ moves to direction $d_i$, except for at most $4a$ edges, in $f^i$, for every $1\leq i\leq 4$. Let $i_1,i_2,i_3,i_4\in \{1,2,3,4\}$ such that $d_{i_1}=\{\mathsf{up}\}$, $d_{i_2}=\{\mathsf{left}\}$, $d_{i_3}=\{\mathsf{down}\}$ and $d_{i_4}=\{\mathsf{right}\}$. For the sake of simplicity, we assume, without loss of generality, that $i_1=1, i_2=2$, $i_3=3$ and $i_4=4$. 

Now, let $f'=(f,u,366a^2+1)$. Observe that $f'(u)=(183a^2+1,183a^2+1)$, and also, for every $v\in V(f')$, $d_f(v,u)\leq 81a^2$. Therefore, since $a_f\leq a$, for every $v\in V(f')$, we get that $d(v,u)\leq 81a^2+a<82a^2$. So, we have that $U=V(f')\subseteq r(u,82a^2)$. Now, for every $1\leq i\leq 4$, we set $f^i=f[U_{i}\cup U]$. In addition, for every $1\leq i\leq 4$, we set $k_{i}=\mathsf{length}(f^i)$ and $r_{i}=\mathsf{width}(f^i)$. Observe that this completes the proof of Conditions \ref{lem:algo1DirCon2}, \ref{lem:algo1DirCon3}, \ref{lem:algo1DirCon4} and \ref{lem:algo1DirCon45} of the lemma.

Now, we show that Condition \ref{lem:algo1DirCon5} is satisfied. We show that $\mathsf{argmax}\{\fr(v)~|~v\in V\}\subseteq U_3$. Let $P^3=(v_0^3,\ldots,v_{t_3}^3)$ such that $T[U_3]$ is a $(P^3,81a^2)$-path, and $P_c^3(u)=v^3_0$. Observe that $d(u,v^3_0)\leq 81a^2+1$, so $d_f(u,v^3_0)\leq 81a^2+1$, and we get that $\fr(v^3_0)\geq \fr(u)-81a^2-1$. Since $P^3$ moves down in $f$, there exist at most $a4$ edges that are not directed down. We assume that $t_3>164a^2$ (otherwise, observe that $|U_3|=\OO(a^{\OO(1)})$ and we can use brute force). Therefore, we get that $\fr(v_{t_3}^3)\geq \fr(v^3_0)+164a^2-4a\geq \fr(u)-81a^2-1+164a^2-4a>\fr(u)+81a^2$. Now, let $v\in V\setminus U_3$; we assume that $v\in U_2$ (the other cases are similar). Let $P^2=(v_0^2,\ldots,v_{t_2}^2)$ such that $T[U_2]$ is a $(P^2,81a^2)$-path, and $P_c^2(u)=v^2_0$. Observe that, $d(u,v^2_0)\leq 81a^2+1$, so $d_f(u,v^2_0)\leq 81a^2+1$, and we get that $\fr(v^2_0)\leq \fr(u)+81a^2+1$. Since $P^2$ moves left in $f$, then, there exist at most $a$ edges that directed down. Therefore, for every $0\leq i\leq t_2$, $\fr(v_i^2)\leq \fr(v^2_0)+a\leq \fr(u)+82a^2$. Since $T[U_2]$ is a $(P^2,81a^2)$-path, there exists $i$ such that $d(v,v_i^2)\leq 81a^2$. Therefore, $d_f(v,v_i^2)\leq 81a^2$, so $\fr(v)\leq \fr(v_i^2)+81a^2\leq \fr(u)+82a^2+81a^2=\fr(u)+163a^2<164a^2<\fr(v_{t_3}^3)$. Thus, we proved that $\mathsf{argmax}\{\fr(v)~|~v\in V\}\subseteq U_3$. Similarly, it can be proved that $\mathsf{argmin}\{\fr(v)~|~v\in V\}\subseteq U_1$, $\mathsf{argmax}\{\fc(v)~|~v\in V\}\subseteq U_4$ and $\mathsf{argmin}\{\fr(v)~|~v\in V\}\subseteq U_2$.

Now, we show that $\mathsf{argmin}\{f^3_\mathsf{row}(v)~|~v\in V(f^3)\}\subseteq U$. Notice that $V(f^3)=U\cup U_3$, and since $f^3=f[U\cup U_3]$, it is enough to prove that $\mathsf{argmin}\{f_\mathsf{row}(v)~|~v\in U\cup U_3\}\subseteq U$. Let $P^3=(v_0^3,\ldots,v_{t_3}^3)$ such that $T[U_3]$ is a $(P^3,81a^2)$-path, and $P_c^3(u)=v^3_0$. Observe that $d(u,v^3_0)\leq 81a^2+1$, so $d_f(u,v^3_0)\leq 81a^2+1$, and we get that $\fr(v^3_0)\geq \fr(u)-81a^2-1$. Since $P^3$ moves down in $f$, there are at most $a$ edges in $P^3$ that are directed up. Therefore, for every $0\leq i\leq t_3$, $\fr(v^3_i)\geq fr(v^3_0)-a\geq \fr(u)-81a^2-1-a>\fr(u)-82a^2$. Let $v\in U_3$. Since $T[U_3]$ is a $(P^3,81a^2)$-path, there exists $i$ such that $d(v,v_i^3)\leq 81a^2$. So, $d_f(v,v_i^3)\leq 81a^2$, and we get that $\fr(v)\geq \fr(v_i^3)-81a^2>\fr(u)-82a^2-81a^2=\fr(u)-163a^2$. Now, we show that there exists $u_1\in U_1$ such that $\fr(u_1<\fr(u)-163a^2)$, and $d(u_1,u)<325a^2$. Let $P^1=(v_0^1,\ldots,v_{t_1}^1)$ such that $T[U_1]$ is a $(P^1,81a^2)$-path, and $P_c^1(u)=v^1_0$. Observe that $d(u,v^1_0)\leq 81a^2+1$, so $d_f(u,v^1_0)\leq 81a^2+1$, and we get that $\fr(v^1_0)\leq \fr(u)+81a^2-1$. Since $P^1$ moves up in $f$, except for at most $a$ edges, $\fr(v^1_{244a^2})\leq \fr(v^1_0)-(244a^2-a)+a\leq \fr(u)+81a^2-1-(244a^2-a)+a=\fr(u)-163a^2-2a-1<\fr(u)-163a^2<\fr(u_3)$. We got that for $u_1=v^1_0$, $\fr(u_1)<\fr(u)-163a^2$, and $d(u_1,u)<325a^2$. Moreover, $d(u_1,u)<325a^2$ implies that $d_f(u_1,u)<325a^2$, so $u_1\in U$. Thus, we proved that $\mathsf{argmin}\{f^3_\mathsf{row}(v)~|~v\in V(f^3)\}\subseteq U$. Similarly, it can be proved that $\mathsf{argmax}\{f^1_\mathsf{row}(v)~|~v\in V(f^1)\}\subseteq U$ and $\mathsf{argmin}\{f^1_\mathsf{col}(v)~|~v\in V(f^1)\}\subseteq U$ and $\mathsf{argmin}\{f^4_\mathsf{col}(v)~|~v\in V(f^4)\}\subseteq U$.

Now, let $v^3_{\mathsf{Rmax}}\in \mathsf{argmax}\{f^3_\mathsf{row}(v)~|~v\in V(f^3)\}$, so, from what we have proved, $v^3_{\mathsf{Rmax}}\in \mathsf{argmax}\{f_\mathsf{row}(v)~|~v\in V\}$. Similarly, let $v^1_{\mathsf{Rmin}}\in \mathsf{argmin}\{f^1_\mathsf{row}(v)~|~v\in V(f^1)\}$, so, from what we have proved, $v^1_{\mathsf{Rmin}}\in \mathsf{argmin}\{f_\mathsf{row}(v)~|~v\in V\}$. Therefore, observe that, $\mathsf{length}(f)=\fr(v^3_{\mathsf{Rmax}})-\fr(v^1_{\mathsf{Rmin}})+1$. Now, let $v^3_{\mathsf{Rmin}}\in \mathsf{argmin}\{f^3_\mathsf{row}(v)~|~v\in V(f^3)\}$, so, from what we have proved, $v^3_{\mathsf{Rmin}}\in \mathsf{argmin}\{f'_\mathsf{row}(v)~|~v\in U\}$. Similarly, let $v^1_{\mathsf{Rmax}}\in \mathsf{argmax}\{f^1_\mathsf{row}(v)~|~v\in V(f^1)\}$, so, from what we have proved, $v^1_{\mathsf{Rmax}}\in \mathsf{argmax}\{f'_\mathsf{row}(v)~|~v\in U\}$. Therefore, we get that $\mathsf{length}(f')=\fr(v^1_{\mathsf{Rmax}})-\fr(v^3_{\mathsf{Rmin}})+1$, $\mathsf{length}(f^1)=\fr(v^1_{\mathsf{Rmax}})-\fr(v^1_{\mathsf{Rmin}})+1$, and $\mathsf{length}(f^3)=\fr(v^3_{\mathsf{Rmax}})-\fr(v^3_{\mathsf{Rmin}})+1$. Thus, we got that $\mathsf{length}(f)=\fr(v^3_{\mathsf{Rmax}})-\fr(v^1_{\mathsf{Rmin}})+1=\fr(v^3_{\mathsf{Rmax}})-\fr(v^1_{\mathsf{Rmin}})+1+\fr(v^3_{\mathsf{Rmin}})-\fr(v^3_{\mathsf{Rmin}})+\fr(v^1_{\mathsf{Rmax}})-\fr(v^1_{\mathsf{Rmax}})+1-1=\fr(v^3_{\mathsf{Rmax}})-\fr(v^3_{\mathsf{Rmin}})+1+\fr(v^1_{\mathsf{Rmax}})-\fr(v^1_{\mathsf{Rmin}})+1-(\fr(v^3_{\mathsf{Rmax}})-\fr(v^1_{\mathsf{Rmin}})+1)=\mathsf{length}(f^3)+\mathsf{length}(f^1)-\mathsf{length}(f')=k_3+k_1-\mathsf{length}(f')$. In conclusion, we got that $\mathsf{length}(f)=k_3+k_1-\mathsf{length}(f')$, and therefore, $k_1+k_3-\mathsf{length}(f')\leq k$. Similarly, it can be proved, that $r_2+r_4-\mathsf{width}(f')\leq r$, so Condition \ref{lem:algo1DirCon5} is satisfied.            
\end{proof}

Now, we turn to prove the opposite direction of Lemma \ref{lem:algo1Dir}. We are given four grid graphs embeddings $f^1$, $f^2$, $f^3$ and $f^4$, of $T[U_1], T[U_2], T[U_3]$ and $T[U_4]$, correspondingly, such that each $T[U_i]$ is a $(P,81a^2)$-path that ``moves'' to a different direction in $f^i$, for $i\in[4]$. In addition, we have ``small environment'', around the split vertex $u$, which is a grid graph embedding $f'$ of $T[U]$, for a subset $u\in U\subseteq V$, such that, $f'$ is subgrid of $f^i$. We show, that there exists a grid graph embedding of $T$. For this, we ``glue'' $f^1$,$f^2$,$f^3$ and $f^4$ together on the subgrid $f'$ to get a grid graph embedding $f$ of $T$. Since each $f^i$ ``moves'' to another direction, we will prove that there is no ``overlap'' in $f$, that is, there are no two vertices that are embedded to the same point, and therefore $f$ is a grid graph embedding. 

First, we want to define a relation between two grid graph embeddings, $f^1$ and $f^2$, such that it is possible that $V(f^1)\cap V(f^2)\neq \emptyset$. We say, that $f_2$ {\em agrees} with $f_1$, if we can ``stick'' $f_2$ on $f_1$, and get no ``overlaps''. We define this term in the next definition:     

\begin{definition}\label{def:agree}
Let $G$ be a graph, let $U_1,U_2\subseteq V(G)$, and let $f^1,f^2$ be grid graph embeddings of $T[U_1]$ and $T[U_2]$, correspondingly. Then, $f^2$ {\em agrees} with $f^1$ if there exist $a,b\in \mathbb{Z}$ such that the following conditions are satisfied:
\begin{enumerate}
\item For every $u\in U_1\cap U_2$, $f^1(u)=(\fr^2(u)+a,\fc^2(u)+b)$. \label{cond1:agree}
\item For every $u_1\in U_1\setminus U_2$, $(f^2)^{-1}(\fr^1(u_1)-a,\fc^1(u_1)-b)=\emptyset$. \label{cond2:agree}
\item For every $u_2\in U_2\setminus U_1$, $(f^1)^{-1}(\fr^2(u_1)+a,\fc^1(u_1)+b)=\emptyset$. \label{cond3:agree}
\end{enumerate}   
\end{definition}

Assume that we have $f^1$ and $f^2$, that are grid graph embeddings of $T[U_1]$ and $T[U_2]$, correspondingly, such that $f^2$ agrees with $f^1$, and $U_1\cap U_2\neq \emptyset$. Then, observe that the integers $a,b\in \mathbb{Z}$ that satisfy the conditions of Definition \ref{def:agree} are unique. Notice that this means that there is only one way to ``glue'' $f^2$ with $f^1$ such that the vertices of $U_1$ do not ``overlap'' any vertex of $U_2$, and vice versa.

Now, assume that we have $U_1,U_2,U_3,U_4,U\subseteq V(G)$, such that, $|U|\geq 1$, $(U_i\cup U)\cap (U_j\cup U)=U$, for every $1\leq i<j\leq 4$, and $f^1,f^2,f^3$ and $f^4$, grid graph embeddings, of $T[U_1\cup U],T[U_2\cup U],T[U_3\cup U]$ and $T[U_4\cup U]$, correspondingly. Moreover, assume that, for every $\{u,v\} \in E(G)$, there exists $i\in \{1,2,3,4\}$, such that, $u,v\in U_i\cup U$. In addition, assume that $f^j$ agrees with $f^i$, for every $1\leq i<j\leq 4$. Then, since there is only one way to ``stick'' $f^j$ on $f^i$, if we ``stick'' them all together, we get that none of the grid graph embeddings ``overlaps'' another. Therefore, this is define a ``natural'' way to define a grid graph embedding of $T[U_1\cup U_2\cup U_3\cup U_4\cup U]$, as we define in the next observation.

\begin{observation}\label{obs:stikgrid}
Let $G$ be a graph, $U_1,U_2,U_3,U_4,U\subseteq V(G)$, such that, $|U|\geq 1$, $V(G)=U\cup U_1\cup U_2\cup U_3\cup U_4$, $(U_i\cup U)\cap (U_j\cup U)=U$, for every $1\leq i<j\leq 4$, and $f^1,f^2,f^3$ and $f^4$, grid graph embeddings, of $G[U_1\cup U],G[U_2\cup U],G[U_3\cup U]$ and $G[U_4\cup U]$, correspondingly. Moreover, assume that, for every $\{u,v\} \in E(G)$, there exists $i\in \{1,2,3,4\}$, such that, $u,v\in U_i\cup U$. In addition, assume that $f^j$ agrees with $f^i$, for every $1\leq i<j\leq 4$. Let $a,b,a',b',a'',b''\in \mathbb{Z}$ be the unique integers, such that $a,b$ are the guaranteed integers, from Definition \ref{def:agree}, correspond to $f^1$ and $f^2$, $a',b'$ are the guaranteed integers, from Definition \ref{def:agree}, corresponding to $f^1$ and $f^3$, and $a'',b''$ are the guaranteed integers, correspond to $f^1$ and $f^4$. Let $f$, denoted also by $f^1*f^2*f^3*f^4$, be the function $f:V(G)\rightarrow \mathbb{Z}\times \mathbb{Z}$, defined as follows. For every $u_1\in U_1\cup U$, $f(u_1)=f^1(u_1)$, for every $u_2\in U_2$, $f(u_2)=(\fr^2(u_2)+a,\fc^2(u_2)+b)$, for every $u_3\in U_3$, $f(u_3)=(\fr^3(u_3)+a',\fc^3(u_3)+b')$, and for every $u_4\in U_4$, $f(u_4)=(\fr^4(u_4)+a'',\fc^4(u_2)+b'')$. Then, $f$ is a grid graph embedding of $G$.       

\end{observation}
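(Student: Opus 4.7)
The plan is to verify the two defining properties of a grid graph embedding for $f = f^1 * f^2 * f^3 * f^4$, namely that $f$ is an injection and that every edge of $G$ is mapped to a pair of points at grid distance exactly one. A final translation of all coordinates into $\mathbb{N} \times \mathbb{N}$ can be appended for free, since both properties are invariant under shifting by a constant vector.

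First, I would consolidate notation. Write $\tau_i := (a_{1i}, b_{1i}) \in \mathbb{Z}^2$ for the integer offset by which the gluing translates $f^i$ to align it with $f^1$, so that $\tau_1 = (0,0)$, $\tau_2 = (a, b)$, $\tau_3 = (a', b')$, $\tau_4 = (a'', b'')$. Because $|U| \geq 1$, we can pick any fixed $w \in U$; condition~(1) of Definition~\ref{def:agree} applied to $f^1$ and $f^i$ then forces $\tau_i$ to be the unique offset such that $f^1(w)$ equals $f^i(w)$ plus $\tau_i$ componentwise. From two such identities at the same $w$, I obtain the transitivity relation $\tau_j - \tau_i = \tau_{ij}$, where $\tau_{ij}$ is the unique offset supplied by the hypothesis that $f^j$ agrees with $f^i$. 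Consequently, the implicit offset between the translates of $f^i$ and $f^j$ inside $f$ is exactly $\tau_{ij}$, which is precisely the offset required to invoke the agreement conditions between $f^i$ and $f^j$. An auxiliary observation to record here is that $(U_i \cup U) \cap (U_j \cup U) = U$ implies $U_i \cap U_j \subseteq U$, hence $(U_i \cup U) \setminus (U_j \cup U) = U_i \setminus U$.

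Next, I would verify injectivity by case analysis on the pair $(u, v)$. If $u, v \in U_\ell \cup U$ for some common $\ell$, then $f$ restricted to this set is a pure translate of $f^\ell$ by $\tau_\ell$, and injectivity follows from injectivity of $f^\ell$. The main case is $u \in U_i \cup U$ and $v \in U_j \cup U$ with no common container, which by the previous paragraph forces $u \in U_i \setminus U$ and $v \in U_j \setminus U$. Assuming $f(u) = f(v)$, translating the identity $f^i(u) + \tau_i = f^j(v) + \tau_j$ by $\tau_i$ yields $f^i(u) = f^j(v) + \tau_{ij}$; but condition~(3) of the agreement of $f^j$ with $f^i$, applied to $v \in (U_j \cup U) \setminus (U_i \cup U)$, states exactly that $(f^i)^{-1}(f^j(v) + \tau_{ij}) = \emptyset$, a contradiction. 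The edge condition is then immediate: by hypothesis, every edge $\{u, v\} \in E(G)$ lies in some $G[U_\ell \cup U]$, and since $f$ on that set is $f^\ell$ shifted by $\tau_\ell$, translation invariance of the grid graph distance yields $d_f(u, v) = d_{f^\ell}(u, v) = 1$.

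I expect the main obstacle to be the bookkeeping around the transitivity of offsets. The hypothesis only directly supplies pairwise agreement between each $f^i$ and $f^j$, whereas the gluing is defined relative to $f^1$ specifically; one must confirm that the offset that the gluing implicitly imposes between $f^i$ and $f^j$ for $i, j \neq 1$ coincides with the unique offset coming from the direct agreement between them. Once this transitivity is pinned down by evaluating at a witness in $U$, the rest of the proof is a clean case analysis invoking the three conditions of Definition~\ref{def:agree}.
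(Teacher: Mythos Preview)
Your proof is correct. The paper itself does not supply a proof for this observation; it is stated as self-evident after an informal discussion that there is ``only one way to `stick' $f^j$ on $f^i$'' and that gluing them all together produces no overlaps. Your argument makes this precise: you correctly identify that the key technical point is transitivity of the offsets (so that the offset the gluing induces between $f^i$ and $f^j$ for $i,j\neq 1$ coincides with the one guaranteed directly by the pairwise agreement hypothesis), you establish it by evaluating at a witness $w\in U$, and then injectivity and the edge condition fall out of a clean case split invoking conditions (1)--(3) of Definition~\ref{def:agree}. This is exactly the argument the paper's informal discussion gestures at, so there is no divergence to comment on.
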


Now, we are ready to prove the opposite direction of Lemma \ref{lem:algo1Dir}. We show that $f^j$ agrees with $f^i$, for every $1\leq i<j\leq 4$. Then, we evoke Observation \ref{obs:stikgrid}, in order to prove that there exists a grid graph embedding of $T$. 

\begin{lemma} \label{lem:algo1Dir2}
Let $T=(V,E)$ be a tree and let $a\in \mathbb{N}$. Assume that $T$ has exactly one $81a^2$-split vertex $u\in V$ which is an $81a^2$-double-split vertex. Let $U_1,U_2,U_3$ and $U_4$ be subsets of $V$ such that $\bigcup_{1\leq i\leq 4}U_i=V\setminus \{u\}$, $U_i$ is a set of vertices of a connected component in $T\setminus \{u\}$and $|U_i|\geq 81a^2$, for $i\in [4]$. Let $U\subseteq r(u,367a^2+1)$ such that $u\in U$, and let $f'$ be a $(366a^2+1)\times (366a^2+1)$ grid graph embedding of $T[U]$ such that $f'(u)=(183a^2+1,183a^2+1)$. Assume that there exist $i_1,i_2,i_3,i_4\in [4]$, such that $i_j\neq i_\ell$ for $1\leq j< \ell \leq 4 $, $r_1,r_2,r_3,r_4,k_1,k_2,k_3,k_4\in \mathbb{N}$ and the following conditions are satisfied: 
\begin{enumerate} 
\item There exists a path $P=(v_0,\ldots,v_\ell)$ such that $T[U_{i_1}]$ is a $(P,81a^2)$-path, and $P_c(u)=v_0$, and a $k_1\times r_1$ grid graph embedding $f^1$ of $T[U_{i_1}\cup U]$ such that $P$ moves up in $f^1$, except for at most $4a$ edges, and $f'$ is a subgrid of $f^1$. 
\item There exists a path $P=(v_0,\ldots,v_\ell)$ such that $T[U_{i_2}]$ is a $(P,81a^2)$-path, and $P_c(u)=v_0$, and a $k_2\times r_2$ grid graph embedding $f^2$ of $T[U_{i_2}\cup U]$ such that $P$ moves left in $f^2$, except for at most $4a$ edges, and $f'$ is a subgrid of $f^2$. 
\item There exists a path $P=(v_0,\ldots,v_\ell)$ such that $T[U_{i_3}]$ is a $(P,81a^2)$-path, and $P_c(u)=v_0$, and a $k_3\times r_3$ grid graph embedding $f^3$ of $T[U_{i_3}\cup U]$ such that $P$ moves down in $f^3$, except for at most $4a$ edges, and $f'$ is a subgrid of $f^3$.
\item There exists a path $P=(v_0,\ldots,v_\ell)$ such that $T[U_{i_4}]$ is a $(P,81a^2)$-path, and $P_c(u)=v_0$, and a $k_4\times r_4$ grid graph embedding $f^4$ of $T[U_{i_4}\cup U]$ such that $P$ moves right in $f^4$, except for at most $4a$ edges, and $f'$ is a subgrid of $f^4$.
\item $k_1+k_3-\mathsf{length}(f')\leq k$ and $r_2+r_4-\mathsf{width}(f')\leq r$. 
\end{enumerate} 
Then, $T$ is a $k\times r$ grid graph. 

\end{lemma}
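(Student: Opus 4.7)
The plan is to invoke Observation \ref{obs:stikgrid} on the four given embeddings $f^1,f^2,f^3,f^4$, using $f'$ as the common overlap. Concretely, since $f'$ is a subgrid of every $f^p$, we can shift each $f^p$ in parallel to the axes so that, in a common coordinate system, the image of $U$ under $f^p$ coincides with the image under $f'$. After this alignment, condition~\ref{cond1:agree} of Definition \ref{def:agree} is automatic on $U = (U_{i_p}\cup U)\cap(U_{i_q}\cup U)$, and the only nontrivial task is to verify the non-overlap conditions~\ref{cond2:agree} and~\ref{cond3:agree} for each pair $(p,q)$, after which the merged map $f^1 * f^2 * f^3 * f^4$ is a grid graph embedding of $T$; finally one checks that its length and width are bounded by $k$ and $r$.

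First I would establish the geometric key lemma: for each $p\in\{1,2,3,4\}$, every vertex of $U_{i_p}\setminus U$ is embedded by $f^p$ in an \emph{open half-plane} on the $d_p$-side of $f'$, where $d_p\in\{\mathsf{up},\mathsf{left},\mathsf{down},\mathsf{right}\}$ is the prescribed main direction. The argument mirrors the final part of the proof of Lemma~\ref{lem:algo1Dir}: for the $p=1$ (up) case, take any $v\in U_{i_1}\setminus U$, let $P^1=(v_0,\ldots,v_\ell)$ witness the $(P,81a^2)$-path structure with $P_c(u)=v_0$, and choose the index $i$ with $d(v,v_i^1)\leq 81a^2$. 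Since $P^1$ moves up in $f^1$ except for at most $4a$ edges, the row coordinate $f^1_{\mathsf{row}}(v_i^1)$ is strictly smaller than $f^1_{\mathsf{row}}(u) - 183a^2$ once $i$ is sufficiently large, and combined with $d_{f^1}(v,v_i^1)\leq 81a^2$ this forces $f^1_{\mathsf{row}}(v) < f^1_{\mathsf{row}}(u) - 183a^2 = \min_{w\in U} f'_{\mathsf{row}}(w)$. Vertices of $U_{i_1}\setminus U$ with small index $i$ already lie in $U$ by the radius-$367a^2+1$ neighborhood assumption, so the remaining case is vacuous. The analogous statements hold for $p=2,3,4$ in the left/down/right directions. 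The hardest part of the proof will be carefully extracting these threshold estimates from the ``moves in direction $d_p$ except for $4a$ edges'' hypothesis while handling the short paths that never leave~$U$.

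With the half-plane lemma in hand, the agreement is immediate. For $p\neq q$, the set $U_{i_p}\setminus U$ lies, in the common coordinate system, in the open $d_p$-halfplane of $f'$, and $U_{i_q}\setminus U$ lies in the open $d_q$-halfplane of $f'$; since $d_p\neq d_q$ these half-planes are disjoint from each other and from the image of $U_{i_q}\cup U$ (resp.\ $U_{i_p}\cup U$) outside $U$. This gives conditions~\ref{cond2:agree} and~\ref{cond3:agree} of Definition~\ref{def:agree}, so each $f^q$ agrees with each $f^p$. Because $\bigcup_{i=1}^{4}U_i = V\setminus\{u\}$ and $u\in U$, we have $V = U\cup U_1\cup U_2\cup U_3 \cup U_4$, and every edge of $T$ lies entirely inside some $T[U_i\cup U]$ (since $u$'s neighbors are distributed among the four components, and each edge not incident to $u$ sits inside a single component). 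Observation~\ref{obs:stikgrid} therefore produces a grid graph embedding $f = f^1*f^2*f^3*f^4$ of $T$.

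It remains to bound the dimensions. By the half-plane lemma, $\mathsf{length}(f)$ equals the vertical extent contributed by $f^1$ above $f'$, plus the extent of $f'$, plus the extent contributed by $f^3$ below $f'$; algebraically this is exactly $k_1 + k_3 - \mathsf{length}(f') \leq k$. The symmetric computation with $f^2,f^4$ in the horizontal direction gives $\mathsf{width}(f) = r_2 + r_4 - \mathsf{width}(f') \leq r$. Hence $f$ is a $k\times r$ grid graph embedding of $T$, so $T$ is a $k\times r$ grid graph.
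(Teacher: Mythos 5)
Your overall plan mirrors the paper's: establish that each pair $(f^p,f^q)$ agrees in the sense of Definition~\ref{def:agree}, glue them by Observation~\ref{obs:stikgrid}, and bound the dimensions. So the route is not genuinely different, but the pivotal ``half-plane lemma'' is not established by the argument you sketch, and the gap is real.

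The step ``Vertices of $U_{i_1}\setminus U$ with small index $i$ already lie in $U$ by the radius-$367a^2+1$ neighborhood assumption, so the remaining case is vacuous'' reverses the stated inclusion: the lemma only assumes $U\subseteq r(u,367a^2+1)$, i.e.\ $U$ is contained in the ball, \emph{not} that the ball is contained in $U$, so a vertex of $U_{i_1}$ at small graph distance from $u$ may well lie outside $U$. What the hypotheses actually yield is much weaker. If $v\in U_{i_1}\setminus U$ then $v\notin V(f')$, so since $f'$ is a subgrid of $f^1$ with $u$ at the centre of the $(366a^2+1)\times(366a^2+1)$ square, $f^1(v)$ lies outside that square and hence $d_{f^1}(v,u)\geq 183a^2+1$; by Observation~\ref{lem:dfLeqd} this gives $d(v,u)\geq 183a^2+1$, and since in a tree $d(v,u)=d(v,v_i)+i+d(v_0,u)$ with $d(v,v_i)\leq 81a^2$ and $d(v_0,u)\leq 81a^2+1$, one only obtains $i\geq 21a^2$. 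That is far too small for your conclusion: substituting into $\fr^1(v)\leq \fr^1(v_0)-(i-4a)+4a+81a^2$ together with $\fr^1(v_0)\leq\fr^1(u)+81a^2+1$ gives $\fr^1(v)\leq\fr^1(u)+162a^2+8a+1-i$, so at $i\approx 21a^2$ you may have $\fr^1(v)$ roughly at $\fr^1(u)+141a^2$, on the wrong side of $u$, let alone strictly above the $f'$-square; you would need roughly $i>345a^2$ for the half-plane statement. (There is also a smaller slip: even if $\fr^1(v_i)<\fr^1(u)-183a^2$, combining with $d_{f^1}(v,v_i)\leq 81a^2$ only gives $\fr^1(v)<\fr^1(u)-102a^2$.) The paper does not attempt the strict half-plane statement; it proves two asymmetric one-sided row bounds for $U_{i_1}\setminus U$ and $U_{i_3}\setminus U$ whose separation already yields disjointness without requiring either side to clear the whole $f'$-square. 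To repair your argument you should move to such a two-threshold formulation and derive the thresholds starting from $d_{f^1}(v,u)\geq 183a^2+1$, not from the unsupported ``small $i$ implies $v\in U$'' step.
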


\begin{proof}
First, we show that $f^j$ agrees with $f^i$, for every $1\leq i<j\leq 4$. We prove that $f^3$ agrees with $f^1$, the other cases can be proved similarly. Since $f'$ is subgrid of $f^1$ and $f^3$,$V(f^1)\cap V(f^3)=V(f')$, and $|V(f')|\geq 1$, it is enough to show that for every $u_1\in U_1\setminus U$, $\fr^1(u_1)\leq\fr^1(u)-202a^2$, and for every $u_3\in U_3\setminus U$, $\fr^3(u_3)\geq\fr^3(u)-163a^2$. Let $u_1\in U_1\setminus U$ and let $P=(v_0,\ldots,v_\ell)$ such that $T[U_{i_1}]$ is a $(P,81a^2)$-path, and $P_c(u)=v_0$. Observe that $\fr^1(v_0)\leq \fr^1(u)+81a^2+1$. Let $0\leq i\leq \ell$ be such that $P_c(u_1)=v_i$. Observe that since $u_1\notin V(f')$, then, $d_{f^1}(u_3,u)>365a^2$, so $d(u_3,u)>284a^2$, and so $d(v_i, u)\geq 284a^2$, thus, $i\geq 284a^2$. Since $P$ moves up in $f^1$ except for at most $a$ edges, $\fr^1(v_i)\leq \fr^1(v_0)-284a^2+a\leq \fr^1(u)-283a^2$. Since $d(u_1,v_i)\leq 81a^2$, so $d_{f^1}(u_1,v_i)\leq 81a^2$, and we get that $\fr^1(u_1)\leq \fr^1(v_i)+81a^2$. Therefore, we get that $\fr^1(u_1)\leq \fr^1(v_i)+81a^2\leq \fr^1(u)-283a^2+81a^2=\fr^1(u)-202a^2$.

Now, let $u_3\in U_3\setminus U$ and let $P=(v_0,\ldots,v_\ell)$ such that $T[U_{i_3}]$ is a $(P,81a^2)$-path, and $P_c(u)=v_0$. Observe that $\fr^3(v_0)\geq \fr^3(u)-81a^2-1$. Since $P$ moves down in $f^3$, accept at most $4a$ edges, we get that, for every $0\leq i\leq \ell$, $\fr^3(v_i)\geq \fr^3(v_0)-4a\geq \fr^3(u)-81a^2-1-4a\geq \fr^3(u)-82a^2$. Let $0\leq i\leq \ell$, such that, $P_c(u_3)=v_i$. Now, since $d(v_i,u_3)\leq 81a^2$, $\fr^3(u_3)\geq \fr^3(v_i)-81a^2\geq \fr^3(u)-82a^2-81a^2=\fr^3(u)-163a^2$.

We prove that $f^2$ agrees with $f^1$. We conclude that, $f^j$ agrees with $f^i$, for every $1\leq i<j\leq 4$. Therefore, we get that, the function, defined by $f^1*f^2*f^3*f^4$, by Observation \ref{obs:stikgrid}, is a grid graph embedding of $T$. The proof that $\mathsf{length}(f^1*f^2*f^3*f^4)\leq k_1+k_3-\mathsf{length}(f')$ and that $\mathsf{width}(f^1*f^2*f^3*f^4)\leq r_2+r_4-\mathsf{width}(f')$, is similar to the proof of Condition \ref{lem:algo1DirCon5} of Lemma \ref{lem:algo1Dir}. This ends the proof.    
\end{proof}

Now, we present an algorithm that implements the terms we described in Lemmas \ref{lem:algo1Dir} and \ref{lem:algo1Dir2}.

\begin{algorithm}[t!]
    \SetKwInOut{Input}{Input}
    \SetKwInOut{Output}{Output}
	\medskip
    {\textbf{function} $k\times r \mathsf{Grid Recognition on Trees with Double-Split Vertex}$}$(\langle T,a,k,r,u,U_1,U_2,U_3,U_4\rangle)$\;
		 
		\For{every $1\leq i\leq 4$} 	
	{\label{algo22:Line1}
	Find a path $P^i$ such that $T[U_i]$ is a $(P^i,81a^2)$-path\;}
		
		\For{every $U\subseteq r(u,367a^2+1)$} 
		{\label{algo22:Line2}
		\For{every $i_1,i_2,i_3,i_4\in [4]$ such that  $i_j\neq i_\ell$ for every $1\leq j<\ell \leq 4$} 
			{\label{algo22:Line3}
				\For{every $(366a^2+1)\times (366a^2+1)$-grid graph embedding $f'$ of $T[U]$ such that $f(u)=(183a+1,183a+1)$} 
				{\label{algo22:Line4}
						\For{every $k_1,k_2,k_3,k_4,r_1,r_2,r_3,r_3,r_4\in \mathbb{N}$ such that $k_1+k_3-\mathsf{length}(f')\leq k$ and $r_2+r_4-\mathsf{width}(f')\leq r$}
						{\label{algo22:Line5}
						Find a $k_1\times r_1$-grid embedding $f$ of $T[U_{i_1}]$ such that $f$ agrees with $f'$ and $P^{i_1}$ moves up in $f$\;
						Find a $k_2\times r_2$-grid embedding $f$ of $T[U_{i_2}]$ such that $f$ agrees with $f'$ and $P^{i_2}$ moves left in $f$\; 
						Find a $k_3\times r_3$-grid embedding $f$ of $T[U_{i_3}]$ such that $f$ agrees with $f'$ and $P^{i_3}$ moves down in $f$\;
						Find a $k_4\times r_4$-grid embedding $f$ of $T[U_{i_4}]$ such that $f$ agrees with $f'$ and $P^{i_4}$ moves right in $f$\;
						\If{found such embeddings}
						{\Return ``yes-instance''\;}
						}
				}
		
		}
	
		}
		
		{\Return ``no-instance''\;}

    \caption{$k\times r \mathsf{Grid Recognition on Trees with Double-Split Vertex}$}
    \label{alg:Grid Recognition IterationTree}
\end{algorithm}

\begin{lemma} \label{lem:caseDouble}
There exists an algorithm that gets as input positive integers $a,k,r\in \mathbb{N}$, a tree $T=(V,E)$ that has exactly one $81a^2$-split vertex $u$, which is an $81a^2$-double-split vertex, the four subsets of vertices of connected components in $T\setminus \{u\}$, $U_1,U_2,U_3$ and $U_4$. If the algorithm returns ``yes-instance'', then $T$ is a $k\times r$ grid graph. Otherwise, there is no $k\times r$ grid graph embedding $f$ of $T$ such that $a_f\leq a$. Moreover, the algorithm works in $\OO(|V(G)|^{\OO(1)}a^{a^{\OO(1)}})$.
\end{lemma}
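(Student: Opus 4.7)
The plan is to analyze Algorithm \ref{alg:Grid Recognition IterationTree} whose outline is already given, invoking the structural results \ref{lem:algo1Dir} and \ref{lem:algo1Dir2} for correctness and Lemma \ref{algoExPath} for each sub-call. The proof naturally decomposes into (i) a preprocessing step, (ii) the two directions of correctness, and (iii) a bound on the running time.

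\textbf{Preprocessing.} Since $u$ is the unique $81a^2$-split vertex of $T$, each induced subgraph $T[U_i]$ has no $81a^2$-split vertex. Applying Lemma \ref{lem:noSplitIsPTPath} to each $T[U_i]$ produces, in time $\OO(|V|^2)$, a path $P^i$ such that $T[U_i]$ is a $(P^i,81a^2)$-path (implementing line \ref{algo22:Line1}); if no such path can be extracted for some $i$, then by Lemma \ref{lem:splitVer} the tree $T$ admits no grid embedding with $a_f\leq a$ and we return ``no''. We may also assume $\Delta(T)\leq 4$, since otherwise $T$ cannot be a grid graph and we return ``no'' immediately.

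\textbf{Correctness.} For the forward direction, assume the algorithm returns ``yes-instance''. Then the last nested loop (line \ref{algo22:Line5}) succeeded, producing a frame $f'$ with $f'(u)=(183a^2+1,183a^2+1)$, a permutation $(i_1,i_2,i_3,i_4)$ of $[4]$, dimensions $k_1,\ldots,k_4,r_1,\ldots,r_4$ satisfying $k_1+k_3-\mathsf{length}(f')\leq k$ and $r_2+r_4-\mathsf{width}(f')\leq r$, and for each $j\in[4]$ a $k_j\times r_j$ grid embedding $f^j$ of $T[U_{i_j}\cup U]$ such that $f'$ is a subgrid of $f^j$ and $P^{i_j}$ moves in the prescribed direction in $f^j$ except for at most $4a$ edges. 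These are exactly the hypotheses of Lemma \ref{lem:algo1Dir2}, which therefore gives that $T$ is a $k\times r$ grid graph. For the converse, suppose $T$ is a $k\times r$ grid graph with a grid embedding $f$ satisfying $a_f\leq a$. By Lemma \ref{lem:algo1Dir} there exist a set $U\subseteq r(u,367a^2+1)$, a $(366a^2+1)\times(366a^2+1)$ embedding $f'$ of $T[U]$ with $f'(u)=(183a^2+1,183a^2+1)$, a permutation $(i_1,i_2,i_3,i_4)$, dimensions $k_j,r_j$, and embeddings $f^j$ of $T[U_{i_j}\cup U]$ meeting all conditions. Since Algorithm \ref{alg:Grid Recognition IterationTree} enumerates over all such $U$, $f'$, permutations, and dimensions, and delegates the search for each $f^j$ to the subroutine of Lemma \ref{algoExPath} (which is complete), the algorithm discovers this choice and returns ``yes''.

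\textbf{Running time.} We bound the number of iterations of each nested loop and the cost per iteration. Since $\Delta(T)\leq 4$, the ball $r(u,367a^2+1)$ contains at most $4^{367a^2+1}=2^{\OO(a^2)}$ vertices. A frame corresponds to an injection of at most $(366a^2+1)^2=\OO(a^4)$ of these vertices into the $(366a^2+1)\times(366a^2+1)$ grid: the number of such injections is at most $\binom{2^{\OO(a^2)}}{\OO(a^4)}\cdot \left(\OO(a^4)\right)!=2^{\OO(a^6)}\cdot a^{\OO(a^4)}=a^{a^{\OO(1)}}$. The permutation loop contributes a factor of $24$, and the dimension loop contributes $|V|^{\OO(1)}$. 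For each combination, four calls are made to the subroutine of Lemma \ref{algoExPath}, each of cost $\OO\bigl(|V|^{13}\cdot (a^{10})^{\OO(a^{10})}\bigr)=|V|^{\OO(1)}\cdot a^{a^{\OO(1)}}$. Multiplying these factors yields the claimed bound $\OO(|V|^{\OO(1)}\cdot a^{a^{\OO(1)}})$. The main technical obstacle is the frame enumeration: we must leverage the bound $\Delta(T)\leq 4$ to keep $|r(u,367a^2+1)|$ at most singly exponential in $a$, and then observe that only subsets of size $\OO(a^4)$ need be considered because $f'$ lives in a grid of that size.
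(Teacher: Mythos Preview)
Your proof is correct and follows essentially the same approach as the paper: correctness via Lemmas \ref{lem:algo1Dir} and \ref{lem:algo1Dir2}, and a running-time analysis based on Lemma \ref{algoExPath} for the inner subroutine.

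The one place where you and the paper diverge is in bounding the frame enumeration. The paper argues that $|r(u,367a^2+1)|\le (367a^2+1)^2=\OO(a^4)$ directly, using the observation that in any grid embedding all vertices at graph distance at most $d$ from $u$ must land at grid distance at most $d$ from $f(u)$, and there are only $\OO(d^2)$ such points; hence if the ball is larger, $T$ is not a grid graph and one rejects. This yields $2^{\OO(a^4)}$ subsets. You instead bound the ball by $4^{367a^2+1}=2^{\OO(a^2)}$ via $\Delta(T)\le 4$, and then restrict attention to subsets of size at most $(366a^2+1)^2$ since larger ones cannot be injected into the frame grid. Your route gives the looser intermediate count $2^{\OO(a^6)}$, but both collapse to $a^{a^{\OO(1)}}$, so the final bound is unaffected. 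The paper's argument is sharper and more natural here (it exploits the grid structure rather than just the degree), but yours is valid.
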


\begin{proof}
We show that Algorithm \ref{alg:Grid Recognition IterationTree} satisfies the condition of the lemma. The correctness of the algorithm is derived from Lemmas \ref{lem:algo1Dir} and \ref{lem:algo1Dir2}.

As for the running time, by Lemma \ref{lem:noSplitIsPTPath}, the computation in Line \ref{algo22:Line1} takes $\OO(|V(T)|^2)$. Now, in Line \ref{algo22:Line2}, notice that the size of $r(u,367a^2+1)$ is bounded by $(367a^2+1)^2$. The reason for that, is that, in a grid graph, all the vertices with graph distance, less or equal to $r\in \mathbb{N}$, must be embedded at locations, with grid distance from $u$, less or equal to $r$, and there are $\OO(r^2)$ such possibilities. So, we get that there are $\OO(2^{367a^4})$ iterations in the loop in Line \ref{algo22:Line2}. The number of iterations in the loop in Line \ref{algo22:Line3} is bounded by $\OO(1)$. In Line \ref{algo22:Line4}, we have at most $(367a^4)!=\OO({a}^{a^{\OO(1)}})$ possibilities for $f'$. In Line \ref{algo22:Line5}, we have at most $k^4r^4=\OO(|V(T)|^8)$ iterations, where each iteration, by Lemma \ref{algoExPath}, takes $\OO(|V(G)|^{13}(a^{10})^{\OO(a^{10})})$. Therefore, the overall time for the algorithm is bounded by $\OO(|V(G)|^{\OO(1)}a^{a^{\OO(1)}})$.   
\end{proof}

Now, Algorithm \ref{alg:Grid Recognition IterationTree} deals with the case where $T$ has exactly one $81a^2$-split vertex $u$, which is an $81a^2$-double-split vertex. By Lemma \ref{lem:splitVer}, there are a few more cases. We state the slight changes to be made for them. Assume that $T$ has two $81a^2$-one-split vertices, $u_1$ and $u_2$. In this case, we do similar operations. We guess a ``small environment'' around each of the split vertices. Then, as in the case of the one double-split vertex, we look for grid graph embeddings, of each of the connected components in $T\setminus \{u_1,u_2\}$, that agrees with our guesses, and ``glue'' the embeddings in order to get a grid graph embedding of $T$.

Now, assume that $T$ has exactly one $81a^2$-split vertex $u$, which is an $81a^2$-one-split vertex. Here, we do the same operations, as in the case of the one double-split vertex, with the following change. Observe that, in this case, we have three ``main'' $(P^i,81a^2)$-paths, such that, as proved in the case of the one double-split vertex, each `moves'' to a different direction. Therefore, since there are four directions, there might be a direction which is ``not in use''. Thus, it might not be clear, as it was clear in the double-split vertex case, where are the ``extreme'' vertices, that is, vertices that are located in a maximum or minimum number of row or column. Therefore, we also need to track on the location of these extremes. In the algorithm we described before Lemma \ref{algoExPath}, we also save the location of these vertices.

Observe that these changes do not change the running time of Algorithm \ref{alg:Grid Recognition IterationTree} by much, and we still have an \FPT\ algorithms.

In Lemma \ref{lem:caseDouble}, we show how to handle one of the cases from Lemma \ref{lem:splitVer}, where we wish to construct a grid graph embedding, $f$, of $T$, where $a_f\leq a$. In addition, we state that, for every other case, there exists also an algorithm that solves that case, in a similar running time. Therefore, for a given $a$, first we need to see in what case we are, that is, how many split-vertices $T$ has, and what kind of split-vertices. This clearly can be done in $|V(G)^{\OO(1)}|$. Then, we only need an algorithm that iterates over the possibilities for $a$, in order to get an \FPT\ that solves the \bGridEm\ problem, with respect to $a_T$, similarly to Algorithm \ref{alg:Grid Recognition} of the previous subsection. We remark that we do not need to know the value of $a_T$ in advance in order to use our algorithm, as we iterate over all the potential values of $a_T$. 

We summarize the result of this subsection in the next theorem.

\distanceTrees*

\section{\textsf{Para-NP}-hardness Results}\label{sec:hardness}
In this section, we show that the \bGridEm\ problem is \textsf{para-NP}-hard with respect to several parameters. We start with a \textsf{para-NP}-hardness result with respect to the parameter $k + \pw$.

\hardnessK*

\begin{proof}
We give a reduction from \partition. In this problem, we are given a multiset $W$ of $3m$ positive integers and the objective is to decide whether $W$ can be partitioned into $m$ triplets $W_1, W_2,\ldots, W_m$ such that the sum of each triplet is the same. 

\smallskip\noindent {\bf Reduction:} Given an instance $W$ of \partition, we construct an instance $G$ of \bGridEm\ as follows. Let $W = \{w_1, w_2 \ldots, w_{3m}\}$ be the multiset of $3m$ positive integers and $B$ be the required sum for each subset of the sought partition, i.e. $B = \sum W/m$. Without loss of generality, we can assume that every element in $W$ is greater than $2$ and that only sums of exactly three elements can be equal to $B$, as otherwise we can get an equivalent instance of \partition\ with this property by adding say, $\sum W$ to each of the elements of $W$ and $3\sum W$ to $B$. For every $i \in [3m]$, we create a path $P_i$ of size $w_i$. Let $V(P_i) = \{p^i_1, p^i_2, \ldots, p^i_{w_i}\}$ and $E(P_i) = \bigcup_{j \in [w_i-1]}\big\{\{p^i_j, p^i_{j+1}\}\big\}$. Let $\cal P$ be the set of all these paths. We create $m$ copies of some graph $G'$ which, intuitively, corresponds to the sum $B$ to be attained by each subset of the partition, and are defined as follows (see Figure~\ref{fi:hardnessK}). For every $j \in [m]$, denote the vertex set $V_j(G')$ of the $j$-th copy of $G'$ by $\{c^j_1, c^j_2,\ldots, c^j_{B+4}, s^j_1, s^j_2, \ldots, s^j_{B+4}\}$. Then, we add the following edges between the vertices of $V_j(G')$ to get the corresponding edge set $E_j(G')$: (i) for all $i \in [B+3], \{c^j_i, c^j_{i+1}\}$, (ii) for all  $i \in [2, B+3], \{c^j_i, s^j_i\}$, (iii) for all $i \in [2, B+2], \{s^j_i, s^j_{i+1}\}$, and (iv) $\{c^j_2, s^j_1\}, \{c^j_{B+3}, s^j_{B+4}\}$.
We finally create a graph $G$ which is the disjoint union of $m$ copies of $G'$ and the path $P$, i.e., $V(G) = \cup_{j \in [m]} V_j(G') \bigcup \cup_{i \in [3m]} V(P_i)$ and $E(G) = \cup_{j \in [m]} E_j(G') \bigcup \cup_{i \in [3m]} E(P_i)$. 

\begin{figure}[!t]
\centering
\includegraphics[width=0.5\textwidth, page=1]{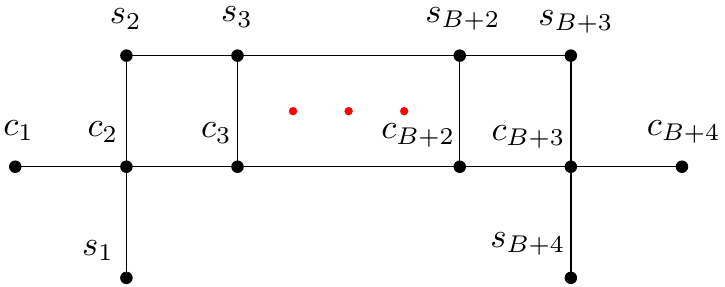}
\caption{Example of the graph $G'$ built in the proof of Theorem~\ref{thm:hardnessK}.}\label{fi:hardnessK}
\end{figure}

We now prove that $\pw(G') = 2$. Consider the sequence $PD = \{\{c_1, s_1, c_2\}, \{c_2, s_2, c_3\},\{s_2,$ $ c_3, s_3\}, \ldots,\{c_i, s_i, c_{i+1}\},$ $\{s_i, c_{i+1}, s_{i+1}\}, \ldots, \{c_{B+2}, s_{B+2}, c_{B+3}\}, \{s_{B+2}, c_{B+3}, s_{B+3}\},\{c_{B+3},$ $s_{B+4}, c_{B+4}\}\}$. Observe that $PD$ is a path decomposition of $G'$, by Definition~\ref{def:pathwidth}, $\pw(G') = 2$. Note that as the pathwidth of $G'$ is $2$ and that of a path is $1$, by Observation~\ref{obs:pathwidthDisjoint}, the pathwidth of $G$ is $2$.

We now prove that $W$ is a \yes\ instance of \partition\ if and only if $G$ is a $3 \times r$ grid graph, where $r=m(B+4)$.

\smallskip\noindent {\bf Forward Direction:} Let $W$ be a \yes\ instance of \partition. Let $W_1, W_2, \ldots, W_m$ be a corresponding partition of $W$. Let $W_j = \{w_{x_j}, w_{y_j}, w_{z_j}\}$, for every $j \in m$ and some $x_j, y_j, z_j \in [3m]$. Let $f:V(G) \rightarrow [3] \times [r$] be a function defined as follows (see Figure~\ref{fi:hardnessKEmbedding}):
\begin{enumerate}[(i)]
	\item for all $j \in [m],$ for all $i \in [B+4],$ let $f(c^j_i) = \big(2, (j-1)(B+4)+i\big)$,
	\item for all $j \in [m],$ for all $i \in [2, B+3],$ let $f(s^j_i) = \big(3, (j-1)(B+4)+i\big)$,
	\item for all $j \in [m],$ let $f(s^j_1)= \big(1, (j-1)(B+4)+2\big), f(s^j_{B+4}) = \big(1, (j-1)(B+4)+(B+3)\big)$,
	\item for all $j \in [m],$ for all $i \in [x_j],$ let $f(p^{x_j}_i) = \big(1, (j-1)(B+4)+2+i\big)$,
	\item for all $j \in [m],$ for all $i \in [y_j],$ let $f(p^{y_j}_i) = \big(1, (j-1)(B+4)+2+w_{x_j}+i\big)$,
	\item for all $j \in [m],$ for all $i \in [z_j],$ let $f(p^{z_j}_i) = \big(1, (j-1)(B+4)+2+w_{x_j}+w_{y_j}+i\big)$.
\end{enumerate}
Note that, as for every $j \in [m], \sum W_j = B$ and hence $w_{x_j}+w_{y_j}+w_{z_j} = B$, it follows that $f$ does not assign same value for two vertices of $V(G)$, i.e.~$f$ is an injective function. Moreover, for every $\{v,u\}\in E(G), d_f(u,v)=1$. So, by Definition~\ref{def:Grid graph embedding}, $G$ is a $3 \times r$ grid graph.

\smallskip\noindent {\bf Reverse Direction:} Let $G$ be a $3 \times r$ grid graph. Let $f$ be a corresponding injective function. As for every $j \in [m], \degr_G(c^j_2) = \degr_G(c^j_{B+3}) = 4$, it must be that $\fr(c^j_2) = \fr(c^j_{B+3}) = 2$. Also, as for every $j \in [m]$ and $i \in [2, B+2],$ $(c^j_2, c^j_3, \ldots, c^j_{B+3}, s^j_{B+3}, s^j_{B+2}, \ldots, s^j_2, c^j_2)$ and $(c^j_i, c^j_{i+1}, s^j_{i+1}, s^j_i, c^j_i)$ are cycles in $G$, it follows that for every $j \in [m]$ and $i \in [3, B+2],$ $\fr(c^j_i) = 2$ and $\fr(s^j_2) = \fr(s^j_3) = \ldots = \fr(s^j_{B+3}) = 1$ or $\fr(s^j_2) = \fr(s^j_3) = \ldots = \fr(s^j_{B+3}) = 3$. Without loss of generality, assume that for every $j \in [m]$ and $ i \in [2, B+3]$, $\fr(s^j_i) = 3$. As for every $j \in [m]$, $\fr(c^j_2) = 2, \fr(c^j_3) = 2$ and $ \fr(s^j_2) = 3$, it must be that either both $\fr(c^j_1) = 2$ and $\fr(s^j_1) =1$ or both $\fr(c^j_1) = 1$ and $\fr(s^j_1) =2$. A similar argument follows for $\fr(c^j_{B+4})$ and $\fr(s^j_{B+4})$. So, without loss of generality, we can also assume that for every $j \in [m]$, $\fr(c^j_1) = \fr(c^j_{B+4}) = 2$ and $ \fr(s^j_1) = \fr(s^j_{B+4}) = 1$. So, for every $j \in [m], |\fc(c^j_{B+4}) - \fc(c^j_1)| = B+3$. As $r = m(B+4)$, without loss of generality, we can assume that for every $j \in [m]$ and $i \in [B+4], \fc(c^j_i) = (j-1)(B+4)+i$. This necessarily implies that for every $j \in [m]$ and $i \in [2, B+3], \fc(s^j_i) = (j-1)(B+4)+i, \fc(s^j_1)= (j-1)(B+4)+2$ and $ \fc(s^j_{B+4}) = (j-1)(B+4)+(B+3)$. Therefore, as every path $P \in \cal P$ is of size more than $2$, $\fc(V(P)) \in [\fc(c^j_3), \fc(c^j_{B+2})]$ for some $j \in [m]$. For every $j \in [m]$, let ${\cal P}_j$ be the set of paths $P \in {\cal P}$ with the property that $\fc(V(P)) \in [\fc(c^j_3), \fc(c^j_{B+2})]$. As $\sum_{P \in \cal P}|V(P)| = mB$, necessarily $\sum_{P \in {\cal P}_j}|V(P)| = B$ for every $j \in [m]$. By the property of $W$ that only three elements can sum up to $B$, $|{\cal P}_j| = 3$ for every $j \in [m]$. So, for every $j \in [m]$, it is well defined to denote ${\cal P}_j = \{P^j_1, P^j_2, P^j_3\}$. Then, we define the set $W_j = \{|P^j_1|, |P^j_2|, |P^j_3|\}$ for every $j \in [m]$. Observe that, $\{W_1, W_2, \ldots, W_m\}$ is a partition of $W$ satisfying the required property, so $W$ is a \yes\ instance of \partition.
\begin{figure}[!t]
\centering
\includegraphics[width=\textwidth, page=2]{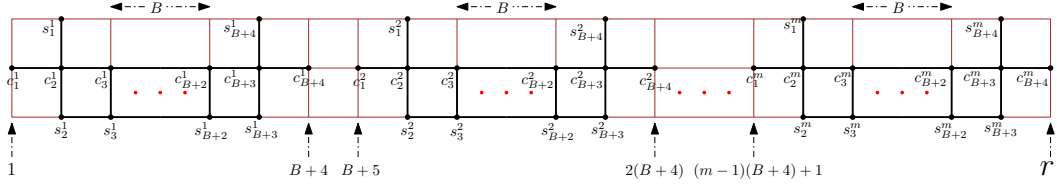}
\caption{Embedding of $m$ copies of $G'$ in a $3 \times r$ embedding of the graph $G$.}\label{fi:hardnessKEmbedding}
\end{figure}
\end{proof}

Next, we prove that the \gridEm\ problem is \paraH\ with respect to the parameter $\pw$. We first define few terms related to a grid graph embedding of a path which will be useful for the next proof. 
Given a path $P=v_1-v_2-\cdots-v_k$ and a grid graph embedding $f$ of it, we say that the path $P$ is \emph{straight} if all the vertices of $P$ are embedded on a straight line, i.e., either $\fr(v)$ or $\fc(v)$ is the same for all the vertices $v \in V(P)$. If $P$ is not straight, it is said of be \emph{bent}. Given a vertex $v_i \in V(P)$, $P$ is said to be \emph{bent at $v_i$} if the subpath $v_{i-1}-v_i-v_{i+1}$ is bent. Moreover, if we turn right (left) while going from $v_{i-1}$ to $v_{i+1}$ through $v_i$, we say the bend at $v_i$ is \emph{towards the right (left)}.

\hardnessUnrestricted*

\begin{proof}
We present a reduction from {\sc Not-All-Equal SAT (NAE-SAT)}. In this problem, we are given a formula in conjunctive normal form (CNF) and the objective is to decide whether there exists a truth assignment to the variables so that each clause has at least one true literal and at least one false literal.

\smallskip\noindent {\bf Reduction:} Given an instance $\varphi$ of {\sc NAE-SAT}, we construct an instance $G$ of \gridEm\ as follows (see also Figure~\ref{fi:hardnessUnR}). Let $X=\{x_1,x_2,\ldots,x_n\}$ be the set of variables of $\varphi$, and let $C=\{c_1,c_2,\ldots,c_m\}$ be the set of clauses of $\varphi$. Then, for every variable $x_i\in X$, we construct two caterpillars, $P_i$ and $\overline{P}_i$, as follows. 
First, let $P_i={v}^1_i-{v}^2_i-\cdots-{v}^{2m+1}_i$ and $\overline{P}_i={\overline{v}}^1_i-{\overline{v}}^2_i-\cdots-{\overline{v}}^{2m+1}_i$ be two paths on $2m+1$ vertices. Now, for every odd $j\in\{1,\ldots,2m-1\}$, if $x_i$ does {\em not} appear in $c_{\lceil j/2\rceil}$, then add the vertex $u^j_i$  and the edge $\{u^j_i,v^j_i\}$  to $P_i$, and if $\overline{x}_i$ does {\em not} appear in $c_{\lceil j/2\rceil}$, then add the vertex $\overline{u}^j_i$ and the edge $\{\overline{u}^j_i,\overline{v}^j_i\}$ to $\overline{P}_i$. Additionally, for every even $j\in\{2,\ldots,2m\}$ and $i \in [1,n-1]$, add the vertex $u^j_i$  and the edge $\{u^j_i,v^j_i\}$  to $P_i$, and add the vertex $\overline{u}^j_i$ and the edge $\{\overline{u}^j_i,\overline{v}^j_i\}$ to $\overline{P}_i$. Moreover, let $B=b_0-b_0'-b_1-b_1'-\cdots-b_{n}-b_n'-b_{n+1}$ be a path on $2n+1$ vertices. We connect $B$ to all aforementioned paths as follows. For every $i\in\{1,\ldots,n\}$, we add the edges $\{b_i,v^1_i\}$ and $\{b_i,\overline{v}^1_i\}$.  

\begin{figure}[!t]
\centering
\includegraphics[width=\textwidth, page=4]{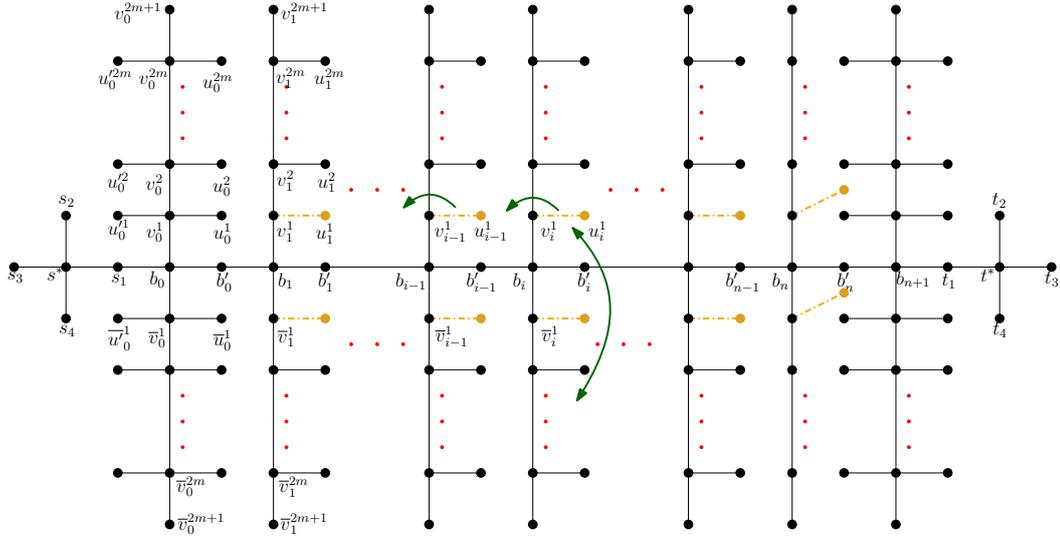}
\caption{Example of the graph $G$ built in the reduction of Theorem~\ref{thm:hardnessUnR}. The dashed edges may or may not be present in the graph. For every $i \in [n]$ and for every odd $j\in [2m-1]$, we add the dashed edge between the vertices $v^j_i$ and $u^j_i$ ($\overline{v}^j_i$ and $\overline{v}^j_i$) if the variable $x_i$ ($\overline{x}_i$) does not appear in clause $c_{\lceil j/2\rceil}$. The green arrows denote the possible embeddings for the respective components.}\label{fi:hardnessUnR}
\end{figure}

We further extend the graph constructed so far to obtain $G$ by adding four additional caterpillars as well as two stars as follows. First, let $P_0={v}^1_0-{v}^2_0-\cdots-{v}^{2m+1}_0$, $\overline{P}_0={\overline{v}}^1_0-{\overline{v}}^2_0-\cdots-{\overline{v}}^{2m+1}_0$, $P_{n+1}={v}^1_{n+1}-{v}^2_{n+1}-\cdots-{v}^{2m+1}_{n+1}$ and $\overline{P}_{n+1}={\overline{v}}^1_{n+1}-{\overline{v}}^2_{n+1}-\cdots-{\overline{v}}^{2m+1}_{n+1}$ be four paths on $2m+1$ vertices. Now, for every $j\in\{1,2,\ldots,2m\}$, add the two vertices $u^j_0,{u'}^j_0$  and the two edges $\{u^j_0,v^j_0\},\{{u'}^j_0,v^j_0\}$  to $P_0$, add the two vertices $\overline{u}^j_0,{\overline{u}'}^j_0$  and the two edges $\{\overline{u}^j_0,\overline{v}^j_0\},\{{\overline{u}'}^j_0,\overline{v}^j_0\}$  to $\overline{P}_0$, add the two vertices $u^j_{n+1},{u'}^j_{n+1}$  and the two edges $\{u^j_{n+1},v^j_{n+1}\},\{{u'}^j_{n+1},v^j_{n+1}\}$  to $P_{n+1}$, and add the two vertices $\overline{u}^j_{n+1},{\overline{u}'}^j_{n+1}$  and the two edges $\{\overline{u}^j_{n+1},\overline{v}^j_{n+1}\},\{{\overline{u}'}^j_{n+1},\overline{v}^j_{n+1}\}$ to $\overline{P}_{n+1}$. Then, we add the four edges $\{b_0,v^1_0\},\{b_0,\overline{v}^1_0\},\{b_{n+1},v^1_{n+1}\}$ and $\{b_{n+1},\overline{v}^1_{n+1}\}$. Lastly, we create a star $S$ with center $s^\star$ and four leaves $s_1,s_2,s_3$ and $s_4$, as well as a star  $T$ with center $t^\star$ and four leaves $t_1,t_2,t_3$ and $t_4$, and add the edges $\{b_0,s_1\}$ and $\{b_{n+1},t_1\}$. 

This completes the construction of $G$. Clearly, $G$ is a tree. Moreover, because $G$ is a collection of caterpillars connected to a base path, it has pathwidth $2$ (this can also be easily verified explicitly). For every caterpillar $P_i$, we call the path ${v}^1_i-{v}^2_i-\cdots-{v}^{2m+1}_i$ as the \emph{main path} of $P_i$ and the set of vertices $\{u^2_i, u^4_i, \cdots, u^{2m}_i\}$ as the \emph{even leaves} of $P_i$. Moreover, for any $t \in [2m]$, we call the set of even leaves $\{u^2_i, u^4_i, \ldots, u^{2\lfloor{t/2}\rfloor}_i\}$ as the \emph{even leaves of $P_i$ until $v^t_i$}. We define these terms for $\overline{P}_i$ as well in a similar fashion as that of $P_i$. Also, for any $t \in [0, n]$, we call the set of vertices $\{b_0, b'_0 , \ldots, b_t\}$ ($\{b_0, b'_0 , \ldots, b'_t\}$) as the \emph{vertices of $B$ until $b_i$ $(b'_i)$}. 

We now prove that $\varphi$ admits a solution if and only if $G$ is a grid graph.

\smallskip\noindent {\bf Forward Direction:} In the forward direction, we suppose that $\varphi$ admits a solution, which is an assignment $\alpha:X\rightarrow\{\mathrm{false},\mathrm{true}\}$ such that every clause has at least one literal assigned true and at least one literal assigned false. Then, we assert that $G$ is a grid graph by embedding it as follows. First, we embed $B$ on a straight horizontal line, flanked by $s_3-s^\star-s_1$ to the left, and $t_1-t^\star-t_3$ to the right, which also fixes the embedding of the rest of the vertices $S$ and $T$ (up to isomorphism, e.g., $s_2$ and $s_4$ can be swapped). Then, we embed the main paths of the caterpillars $P_0$ and $P_{n+1}$ on vertical lines above $B$ and the main paths of the caterpillars $\overline{P}_0$ and $\overline{P}_{n+1}$ on vertical lines below $B$, which also fixes the embedding of the leaves of these four caterpillars up to isomorphism. Now, for every variable $x_i$ assigned true, we embed the main path of the caterpillar $P_i$ on a vertical line above $B$, and the main path of the caterpillar $\overline{P}_i$ on a vertical line below $B$, which also fixes the embedding of the leaves attached to vertices of even indices (up to isomorphism). Symmetrically, for every variable $x_i$ assigned false, we embed the main path of the caterpillar $\overline{P}_i$ on a vertical line above $B$, and the main path of the caterpillar ${P}_i$ on a vertical line below $B$, which also fixes the embedding of the leaves attach to vertices of even indices (up to isomorphism). It remains to fix the embedding of each leaf attached to a vertex of an odd index---specifically, whether it is on the left or the right of the main path.  Think of these leaves attached to vertices of the same odd index $2j-1$ above $B$ as a ``row'', which corresponds to some clause $c_j$, and notice that because $\alpha$ is a solution, there is a variable $x_i$ such that either $x_i$ appears in $c_j$ and is assigned true by $\alpha$, in which case $P_i$ appears above $B$ and $v^j_i$ has no leaves attached to it, or $\overline{x}_i$ appears in $c_j$ and is assigned false by $\alpha$, in which case $\overline{P}_i$ appears above $B$ and $\overline{v}^j_i$ has no leaves attached to it. Thus, the row has a ``free'' position, and so up until this position the leaves can be placed to the right of their main paths, and from that position onwards they can be placed to the left. The symmetric argument holds for ``rows'' below $B$. For example, see Figure~\ref{fi:hardnessUnRExa}.

\begin{figure}[!t]
\centering
\includegraphics[width=0.9\textwidth, page=15]{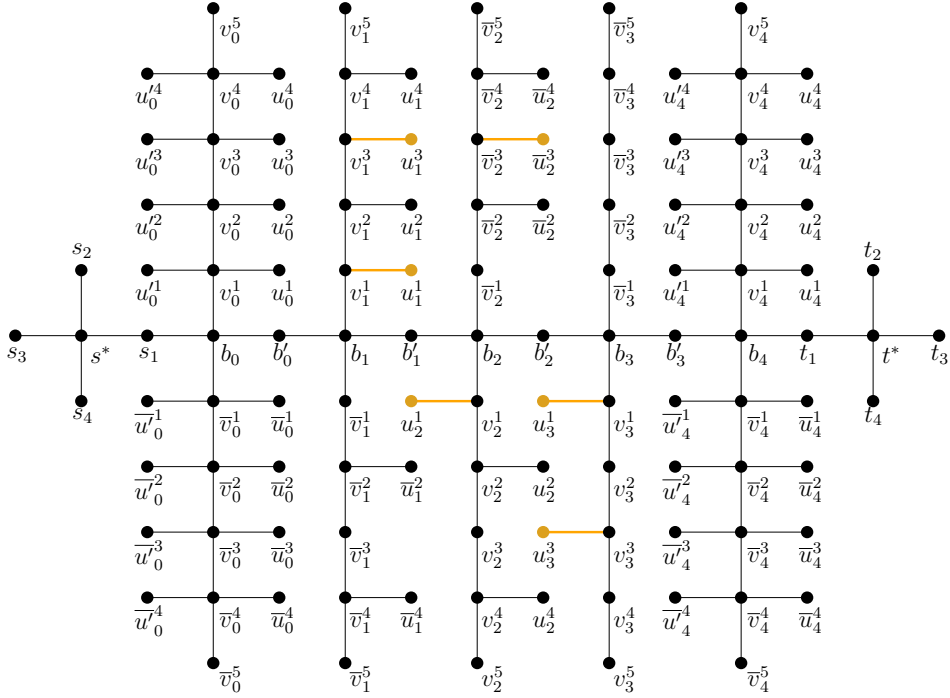}
\caption{Example of the graph $G$ and its grid graph embedding built for a NAE-SAT instance $\varphi = (\overline{x}_1 \vee \overline{x}_2 \vee \overline{x}_3) \wedge (\overline{x}_1 \vee x_2 \vee \overline{x}_3)$ and a corresponding solution \{$x_1$ = true, $x_2$ = false, $x_3$ = false\}.}\label{fi:hardnessUnRExa}
\end{figure}

\smallskip\noindent {\bf Reverse Direction:} In the reverse direction, we suppose that we have an embedding of $G$ in a grid. We first argue that the embedding of $P_0,\overline{P}_0,P_{n+1},\overline{P}_{n+1}$ and the stars is essentially fixed in the sense that the main paths of $P_0$ and $\overline{P}_0$ should be on the same straight line (without loss of generality, say, a vertical one), the star $S$ should be placed, without loss of generality, to the left of them, the main paths of $P_{n+1}$ and $\overline{P}_{n+1}$ should be on the same straight line (though, having fixed the previous pair as vertical, we cannot say, without loss of generality, that these are vertical too yet), and that the star $T$ should be place to the left or right of them. Observe that, the subgraph of $G$ induced by the vertex set $V' = V(P_0) \cup V(\overline{P}_0) \cup \{s_1, b_0, b'_0\}$ is the spine graph used in~\cite{bhatt1987complexity}. As shown in~\cite{bhatt1987complexity}, the embedding of $G[V']$ is essentially fixed as the vertices along the spine, that is the vertices of the joint main path of $P_0$ and $\overline{P}_0$ and the vertex $b_0$ has to lie on the same straight line. Similarly, vertices of the joint main path of $P_{n+1}$ and $\overline{P}_{n+1}$ and the vertex $b_{n+1}$ has to lie on the same straight line. Having shown this, we argue that $B$ should be embedded on a horizontal line, each pair of the main paths of $P_i$ and $\overline{P}_i$ (for any $i\in\{1,\ldots,n\}$) should be embedded on a single vertical line, and that $P_{n+1}$ and $\overline{P}_{n+1}$ as well should be embedded on a single vertical line, with the star $T$ to the right of it. In particular, this argument makes use of the even leaves of the caterpillars $P_i$ and $\overline{P}_i$. To prove this, we first give the following claim.

\begin{claim}\label{cla:hardnessUnRCases}
Let $i \in \mathbb{N}$ such that $i \in [1,n]$. Given any grid graph embedding of the graph $G$, if all the vertices of $B$ until $b'_{i-1}$ are embedded on a horizontal line, the main path of $P_{i-1}$ $(\overline{P}_{i-1})$ is embedded either on a vertical line or the first bend is towards the right (left) and the even leaves of $P_{i-1}$ $(\overline{P}_{i-1})$ until the first bend is towards the right (left), then the following conditions are satisfied.
\begin{enumerate}[(i)]
	\item \label{con:bend1} The path $B$ is bent at neither $b'_{i-1}$ nor $b_i$.
	\item \label{con:bend2} The main path of $P_i$ $(\overline{P}_i)$ is embedded either on a vertical line or the first bend is towards the right (left). 
	\item \label{con:bend3} The even leaves of $P_i$ $(\overline{P}_i)$ until the first bend are towards the right (left) of the main path of $P_i$ $(\overline{P}_i)$.
\end{enumerate}
\end{claim}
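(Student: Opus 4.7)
The plan is to prove the claim by case analysis anchored on the degree-$4$ vertex $b_i$ and its $G$-neighbors. Using the hypothesis, I normalize coordinates so that $b_{i-1}$ is at $(r, c)$ and $b'_{i-1}$ at $(r, c+1)$, so the known portion of $B$ extends rightward. Because $b_{i-1}$ has $G$-degree $4$, its two remaining $G$-neighbors $v^1_{i-1}, \overline{v}^1_{i-1}$ must lie at $(r+1, c)$ and $(r-1, c)$; label them so that $v^1_{i-1} = (r+1, c)$. The hypotheses on the main path and the even leaves of $P_{i-1}$ then force $v^2_{i-1} = (r+2, c)$ and the even leaf $u^2_{i-1} = (r+2, c+1)$, with symmetric placements for $\overline{P}_{i-1}$.

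For part (i), I first show $b_i = (r, c+2)$. Since $b_i$ is grid-adjacent to $b'_{i-1}$, it lies at $(r+1, c+1)$, $(r-1, c+1)$, or $(r, c+2)$. If $b_i = (r+1, c+1)$, then its four $G$-neighbors must occupy its four grid-adjacent cells, but $(r+1, c)$ is already taken by $v^1_{i-1}$, which is not a $G$-neighbor of $b_i$, so only three cells remain for four neighbors, a contradiction; the case $b_i = (r-1, c+1)$ is symmetric via $\overline{v}^1_{i-1}$. Next I show $b'_i = (r, c+3)$: the three grid-adjacent cells of $b_i$ other than $b'_{i-1}$ are $(r, c+3), (r+1, c+2), (r-1, c+2)$, which must host $b'_i, v^1_i, \overline{v}^1_i$. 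If $b'_i = (r+1, c+2)$ (the case $(r-1, c+2)$ is symmetric using $\overline{u}^2_{i-1}$), I use the degree-$4$ vertex $b_{i+1}$, which must be grid-adjacent to $b'_i$: each of the three candidate positions for $b_{i+1}$ is blocked, as $(r+1, c+1)$ has two of its cells occupied by the non-neighbors $b'_{i-1}$ and $v^1_{i-1}$, $(r+1, c+3)$ has its $(r, c+3)$ cell blocked by $v^1_i$ or $\overline{v}^1_i$, and $(r+2, c+2)$ has its $(r+2, c+1)$ cell blocked by $u^2_{i-1}$. Hence no placement works, so $b'_i = (r, c+3)$, completing (i).

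For parts (ii) and (iii), the two remaining $G$-neighbors of $b_i$ give $v^1_i = (r+1, c+2)$ and $\overline{v}^1_i = (r-1, c+2)$, labels chosen consistently with the extension convention of the induction. I then proceed by a sub-induction along the main path of $P_i$, showing at each vertex $v^j_i$ that either the path continues vertically or it effects its first right-bend, and that every even leaf $u^j_i$ placed before the first bend lies on the right. The sub-induction uses the same degree-and-injectivity bookkeeping as above: a left-bend or a left-placed even leaf would force a cell already occupied by one of $b'_{i-1}, v^1_{i-1}, v^2_{i-1}, u^2_{i-1}$ (which together form a barrier on the left/below side relative to $P_i$), yielding a collision. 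A symmetric argument handles $\overline{P}_i$. I expect the main obstacle to lie in this sub-induction rather than in part (i): intermediate vertices of $P_i$ have only degree $2$ or $3$, so a single degree-$4$ conflict is no longer available, and one must carefully exploit the barrier formed by the leaves of $P_{i-1}$ on the left together with the reorientation of ``right'' once the first bend has occurred, propagating the constraint on successive even leaves through the new local orientation.
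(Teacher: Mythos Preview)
Your argument for part~(i) is correct and essentially the same as the paper's: you use the degree-$4$ constraint at $b_i$ to rule out a bend at $b'_{i-1}$, and then the degree-$4$ constraint at $b_{i+1}$ (case-splitting on its three possible positions) to rule out a bend at $b_i$.

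For parts~(ii) and~(iii), however, there is a genuine gap. The four vertices $b'_{i-1}, v^1_{i-1}, v^2_{i-1}, u^2_{i-1}$ that you name as the barrier only obstruct collisions at heights $\leq 2$ above $B$. If $P_i$ is vertical up to $v^j_i=(r+j,c+2)$ and then bends left, the cell it enters is $(r+j,c+1)$, and the collision is with $u^j_{i-1}$ (when $j$ is even) or, via the degree-$3$ vertex $v^{j+1}_i$, with $v^j_{i-1}$ and the even leaves $u^{j-1}_{i-1},u^{j+1}_{i-1}$ (when $j$ is odd). None of these are among your four base vertices; the barrier you actually need is the \emph{entire} main path of $P_{i-1}$ together with all of its even leaves placed to the right. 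Your last sentence hints at this, but the body of the sub-induction as written does not carry it out.

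More seriously, the hypothesis allows $P_{i-1}$ itself to bend right at some $v^j_{i-1}$, and you never address this case. Once $P_{i-1}$ bends, it moves into column $c+1$, immediately adjacent to the (still vertical) $P_i$ in column $c+2$; the paper shows that this forces $P_i$ to bend no later than $v^j_i$, and in the boundary case $j''=j$ one must further check that $j$ is even and that $v^{j+2}_{i-1}$ is forced upward (ruling out several placements for it). Your setup fixes only $v^2_{i-1}=(r+2,c)$ and $u^2_{i-1}=(r+2,c+1)$ and does not track where $P_{i-1}$ goes above that, so the sub-induction as described cannot detect when the barrier shifts columns. The paper handles this by splitting into the two cases ``$P_{i-1}$ fully vertical'' and ``$P_{i-1}$ first bends at $v^j_{i-1}$'' and treating the latter with its own careful case analysis; your sub-induction would need an analogous split (or an invariant that records the current column of $P_{i-1}$ at each height) to go through.
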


\begin{figure}[!t]
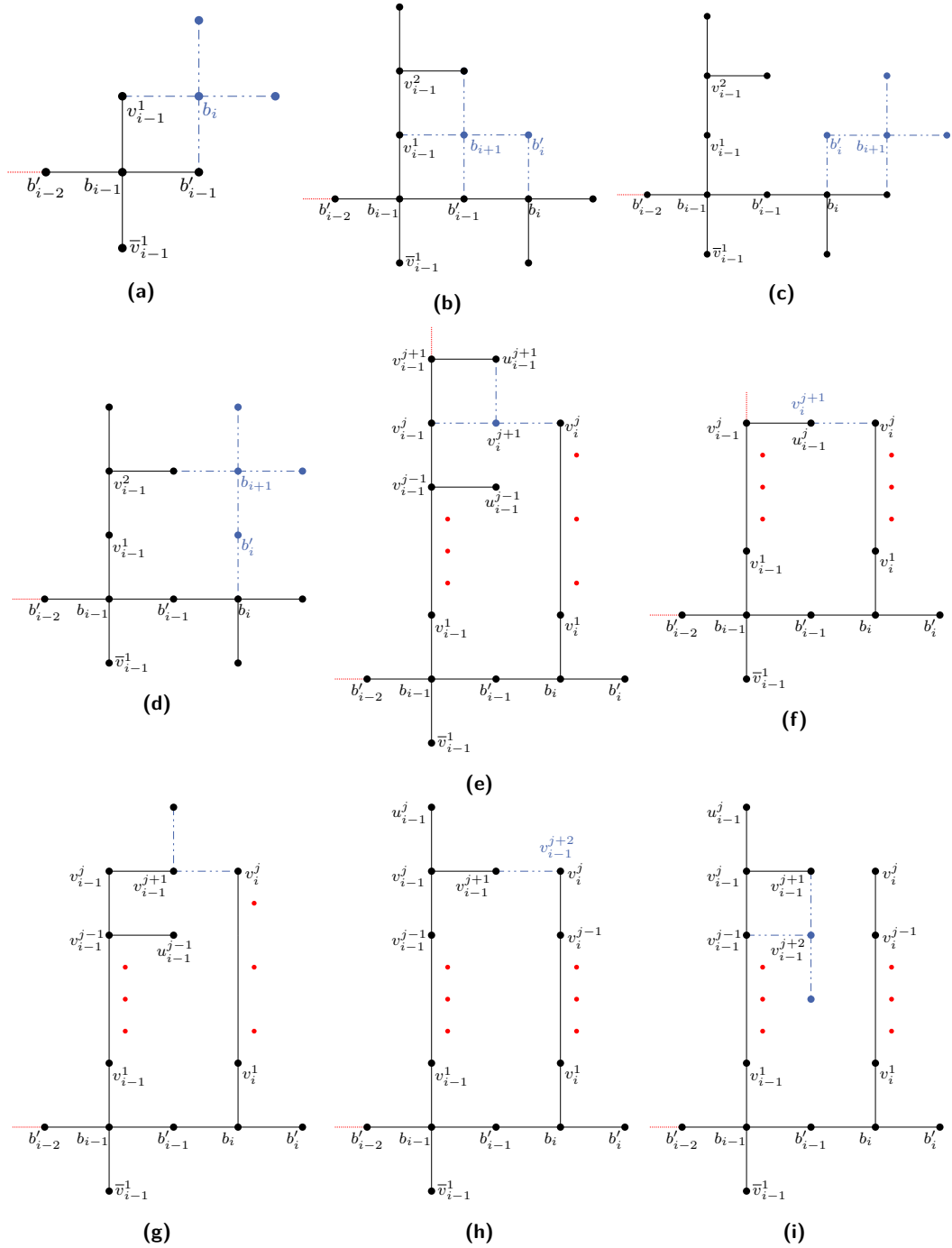

	\centering
	\begin{subfigure}{0.3\textwidth}
		\includegraphics[width = \textwidth, page = 5]{figures/hardness.pdf}
		\subcaption{}
		\label{fi:hardnessUnRCase1}
	\end{subfigure}
	\hfil
	\begin{subfigure}{0.32\textwidth}
		\includegraphics[width = \textwidth, page = 6]{figures/hardness.pdf}
		\subcaption{}
		\label{fi:hardnessUnRCase2}
	\end{subfigure}
	\begin{subfigure}{0.36\textwidth}
		\includegraphics[width = \textwidth, page = 7]{figures/hardness.pdf}
		\subcaption{}
		\label{fi:hardnessUnRCase3}
	\end{subfigure}
	\begin{subfigure}{0.32\textwidth}
		\includegraphics[width = \textwidth, page = 8]{figures/hardness.pdf}
		\subcaption{}
		\label{fi:hardnessUnRCase4}
	\end{subfigure}
	\hfil
	\begin{subfigure}{0.32\textwidth}
		\includegraphics[width = \textwidth, page = 9]{figures/hardness.pdf}
		\subcaption{}
		\label{fi:hardnessUnRCase5}
	\end{subfigure}
	\begin{subfigure}{0.32\textwidth}
		\includegraphics[width = \textwidth, page = 10]{figures/hardness.pdf}
		\subcaption{}
		\label{fi:hardnessUnRCase6}
	\end{subfigure}
	\begin{subfigure}{0.32\textwidth}
		\includegraphics[width = \textwidth, page = 11]{figures/hardness.pdf}
		\subcaption{}
		\label{fi:hardnessUnRCase7}
	\end{subfigure}
	\hfil
	\begin{subfigure}{0.32\textwidth}
		\includegraphics[width = \textwidth, page = 12]{figures/hardness.pdf}
		\subcaption{}
		\label{fi:hardnessUnRCase8}
	\end{subfigure}
	\begin{subfigure}{0.32\textwidth}
		\includegraphics[width = \textwidth, page = 13]{figures/hardness.pdf}
		\subcaption{}
		\label{fi:hardnessUnRCase9}
	\end{subfigure}
	\caption{Various cases considered in the proof of Claim~\ref{cla:hardnessUnRCases}.} 
	\label{fi:hardnessUnRCases}
\end{figure}

\begin{figure}[!t]
	\ContinuedFloat
	\centering
	\begin{subfigure}{0.32\textwidth}
		\includegraphics[width = \textwidth, page = 14]{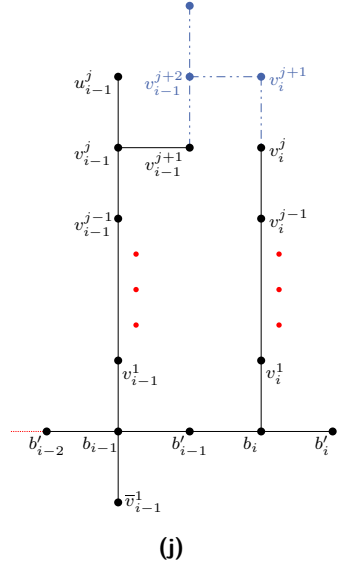}
		\subcaption{}
		\label{fi:hardnessUnRCase10}
	\end{subfigure}
	\caption{Various cases considered in the proof of Claim~\ref{cla:hardnessUnRCases} (cont.)} 
	\label{fi:hardnessUnRCasesCont}
\end{figure}
\begin{proof}
We prove the claim for $P_i$. The proof for $\overline{P}_i$ is similar. Note that the vertices with odd superscript of the main path of $P_i$ may or may not have a leaf attached to them. Thus, without loss of generality, for the proof of the claim we assume that there are no leaves attached to the vertices with odd superscript of the main path of $P_i$ (which is less restrictive for the embedding).

We first prove Condition (\ref{con:bend1}). Assume first, for contradiction, that the path $B$ is bent at $b'_{i-1}$. Then, the vertex $b_i$ is placed above or below $b'_{i-1}$. As $\deg_G(b_i) = 4$, one of its neighbors is embedded at the same position as that of $v^1_{i-1}$ or $\overline{v}^1_{i-1}$; see Figure~\ref{fi:hardnessUnRCase1}. This is a contradiction as in any grid graph embedding, any two distinct vertices are embedded at two different positions.

Assume now, for contradiction, that the path $B$ is bent at $b_i$. Then, the vertex $b'_i$ is placed above or below $b_i$. Assume without loss of generality that $b'_i$ is placed above $b_i$. Then, the vertex $b_{i+1}$ is placed to the left or right or above of $b'_i$. As $\deg_G(b_{i+1}) = 4$, if $b_{i+1}$ is placed to the left of $b'_i$, then one of its neighbors is embedded at the same position at that of $b'_{i-1}$; see Figure~\ref{fi:hardnessUnRCase2}. If $b_{i+1}$ is placed to the right of $b'_i$, then one of its neighbors is embedded at the same position as that of one of the neighbors of $b_i$; see Figure~\ref{fi:hardnessUnRCase3}. If $b_{i+1}$ is placed on the top of $b'_i$, then one of its neighbors is embedded at the same position at that of one of the neighbors of $v^2_{i-1}$ due to the fact that either the main path of $P_{i-1}$ is bent right at $v^2_{i-1}$ or the leaf $u^2_{i-1}$ is towards the right; see Figure~\ref{fi:hardnessUnRCase4}. In all these cases, this results in a contradiction as in any grid graph embedding, any two distinct vertices are embedded at two different positions.

We now prove Condition (\ref{con:bend2}). Assume first that the main path of $P_{i-1}$ is embedded on a vertical line. Towards a contradiction, suppose that the first bend of the main path of $P_i$ is at $v^j_i$ towards the left. If $j$ is odd, then one of the neighbors of $v^{j+1}_i$ is embedded at the same position as that of $v^j_{i-1}$ or $u^{j-1}_{i-1}$ or $u^{j+1}_{i-1}$, a contradiction; see Figure~\ref{fi:hardnessUnRCase5}. If $j$ is even, then $v^{j+1}_i$ is embedded at the same position as that of $u^j_{i-1}$, a contradiction; see Figure~\ref{fi:hardnessUnRCase6}. Thus, in this case, the main path of $P_i$ is embedded either on a vertical line or the first bend is towards the right. 

Assume now that the first bend of main path of $P_{i-1}$ is at $v^j_{i-1}$ towards the right. By the above argument, we can conclude that if the first bend of $P_i$ is at $v^{j'}_i$ where $j' < j$, then $P_i$ is bent at $v^{j'}_i$ towards right. Thus, we assume that if $P_i$ is bent, the first bend is at $v^{j''}_i$ where $j'' \geq j$. We prove that if $j'' \geq j$, then $j$ is even and $j'' = j$. Assume first, by contradiction, that $j$ is odd. Then, one of the neighbors of $v^{j+1}_{i-1}$ is embedded at the same position as that of $v^j_i$, a contradiction; see Figure~\ref{fi:hardnessUnRCase7}. Assume now that $j$ is even. Then, $v^{j+1}_{i-1}$ is placed to the right of $v^j_{i-1}$ and $v^{j+2}_{i-1}$ is placed above or below or to the right of $v^{j+1}_{i-1}$. If $v^{j+2}_{i-1}$ is placed to the right of $v^{j+1}_{i-1}$, then it is embedded at the same position as that of $v^j_i$, a contradiction; see Figure~\ref{fi:hardnessUnRCase8}. Similarly, if $v^{j+2}_{i-1}$ is placed below $v^{j+1}_{i-1}$, then one of the neighbors of $v^{j+2}_{i-1}$ is embedded at the same position as that of $v^{j-1}_{i-1}$ or $u^{j-2}_{i-1}$ or $v^{j-1}_i$, a contradiction; see Figure~\ref{fi:hardnessUnRCase9}. So, $v^{j+2}_{i-1}$ is placed above $v^{j+1}_{i-1}$. If $j'' > j$, one of the neighbors of $v^{j+2}_{i-1}$ is embedded at the same position as that of $v^{j+1}_i$, a contradiction; see Figure~\ref{fi:hardnessUnRCase10}. Thus, $j'' = j$.

We finally prove Condition (\ref{con:bend3}). By above arguments, we know that if the first bend of the main path of $P_{i-1}$ is at $v^j_{i-1}$ towards the right, then the first bend of the main path of $P_{i-1}$ is at $v^{j'}_{i-1}$ towards the right, where $j' \leq j$. Observe that, since the even leaves of $P_{i-1}$ until the first bend are towards the right, we cannot put any even leaf $u^k_i$, for any $k < j$, of $P_i$ towards the left as otherwise it will be embedded at the same position as that of $u^k_{i-1}$, a contradiction. Thus, the even leaves of $P_i$ until the first bend are towards the right. This completes the proof of the claim.
\end{proof}

We now return to the proof of the theorem. As we have already proved that the vertices of the joint main path of $P_0$ and $\overline{P}_0$ and the vertex $b_0$ lie on the same straight line, by inductive use of the above claim, we can conclude that the path $B$ is straight until $b_n$ and if the main path of $P_i$ $(\overline{P}_i)$ is bent towards the right (left), then the main path of every $P_j$ $(\overline{P}_j)$, for every $j \in [i+1, n]$, is bent towards right (left). Observe that, as the vertices of the joint main path of $P_{n+1}$ and $\overline{P}_{n+1}$ and the vertex $b_{n+1}$ lie on the same straight line, the main path of $P_n$ cannot be bent as otherwise one of the vertex of the main path of $P_n$ will be embedded at the same position as that of one of the leaves of $P_{n+1}$. This implies that, each pair of the main paths of $P_i$ and $\overline{P}_i$ (for any $i\in [n]$) are embedded on a single vertical line, and that $P_{n+1}$ and $\overline{P}_{n+1}$ as well are embedded on a single vertical line, with the star $T$ to the right of it.

Having proved this, we now define an assignment $\alpha:X\rightarrow\{\mathrm{false},\mathrm{true}\}$ as follows. For every variable $x_i\in X$, if $P_i$ is embedded above $B$, then we define $\alpha(x_i)$ to be true, and otherwise we define $\alpha(x_i)$ to be false. To argue that this assignment is a solution, consider some arbitrary clause $c_j$. Now, consider the ``row'' above $B$ that consists of the vertices $v_1^{2j-1},v_2^{2j-1},\ldots,v_n^{2j-1}$ possibly with negations above some of them. So that the embedding can be valid, it must be that at least one of these vertices, say, $v^{2k-1}_i$ (possibly negated), has no leaves attached to it, which means that  the literal corresponding to it (which is $x_i$ or $\overline{x}_i$, depending on whether $v^{2k-1}_i$ is negated)  appears in $c_j$, and as it is assigned true by $\alpha$, we get that the clause $c_j$ is satisfied. Symmetrically, by considering the ``row'' below $B$ that consists of the vertices $v_1^{2j-1},v_2^{2j-1},\ldots,v_n^{2j-1}$ possibly with negations above some of them, we can conclude that at least one literal of the clause $c_j$ is assigned false. For example, see Figure~\ref{fi:hardnessUnRExa}.
\end{proof}
\newpage
\bibliographystyle{plainurl}
\bibliography{Refs,MoreRefs}

\end{document}